\theoremstyle{plain}
\newtheorem{theorem}{Theorem}
\newtheorem{lemma}[theorem]{Lemma}
\newtheorem{corollary}[theorem]{Corollary}
\theoremstyle{definition}
\newtheorem{conjecture}{Conjecture}
\begin{document}

\title{Sequence Reconstruction Problem for Ternary Deletion Channels}

\author{Xiang Wang,~Han Li, and ~Fang-Wei Fu
\thanks{X. Wang is with the School of Mathematics, Statistics and Mechanics, Beijing University of Technology, Beijing, 100124, China (e-mail: xwang@bjut.edu.cn). Han Li is with the Chern Institute of Mathematics and LPMC, Nankai University, Tianjin 300071, China (e-mail: hli@mail.nankai.edu.cn).  F.-W. Fu is with the Chern Institute of Mathematics and LPMC, Nankai University, Tianjin 300071, China (e-mail: fwfu@nankai.edu.cn).}}



\maketitle

\begin{abstract}
The sequence reconstruction problem was proposed by Levenshtein in 2001. In this model, a sequence from a code is transmitted over several channels, and the decoder receives the distinct outputs from each channel. The main problem is to determine the minimum number of channels required to reconstruct the transmitted
sequence. In the combinatorial context, the sequence reconstruction problem is equivalent to finding the value of $N_q(n,d,t)$, defined as the size of the largest intersection of two metric balls of radius $t$, where the distance between their centers is at least $d$ and the sequences are $q$-ary sequences of the length $n$. Levenshtein first discussed this problem in the uncoded sequence setting and determined the value of $N_q(n,1,t)$ for any $n\geqslant t$. Moreover, Gabrys and Yaakobi studied this problem in the context of binary one-deletion-correcting codes and determined the value of $N_2(n,2,t)$ for $t\geqslant 2$. 

In this paper we study this problem for $3$-ary sequences of length $n$ over the deletion channel, where the transmitted sequence belongs to a one-deletion-correcting code and there are $t$ deletions in every channel. Specifically, we determine $N_3(n,2,t)$ for $t\geqslant 2$.
\end{abstract}

\begin{IEEEkeywords}
Sequence reconstruction, Ternary deletion channels, deletion correcting codes.
\end{IEEEkeywords}

\section{Introduction}
\IEEEPARstart{L}{evenshtein} first proposed the sequence reconstruction problem in $2000$ \cite{L1}. In this model, a sequence $\mathbf{x}$ from some code $\mathcal{C}$ is transmitted through multiple noisy channels. A decoder receives the distinct outputs from each channel to recover $\mathbf{x}$. The sequence reconstruction problem was initially motivated by the fields of biology and chemistry, but has recently regained interest due to the emergence of certain new data storage media such as racetrack memories~\cite{Parkin,Chee} and DNA-based storage~\cite{Church,Yazdi,Lenz}.

Let $\mathbb{Z}_q^n$ denote the set of all $q$-ary sequences of length $n$ and let $\rho: \mathbb{Z}_q^n\times \mathbb{Z}_q^n\rightarrow \mathbb{N}$ be a metric on $\mathbb{Z}_q^n$. Let $\mathcal{C} \subseteq \mathbb{Z}_q^n$ be a code with minimum distance $d$ under the metric $\rho$. Suppose a sequence from $\mathcal{C}$ is transmitted, and each channel introduces at most $t$ errors. Levenshtein~\cite{L1} established  that the minimum number of transmission channels must be greater than the largest intersection of two balls of radius $t$ centered at distinct codewords with least distance $d$ apart:
\begin{equation}
N_q(n,d,t)=\max\limits_{\mathbf{x}_1,\mathbf{x}_2\in \mathbb{Z}_q^n,\rho(\mathbf{x}_1,\mathbf{x}_2)\geqslant d }\{|B_t(\mathbf{x}_1)\cap B_t(\mathbf{x}_2)|\},\label{eq1}
\end{equation}
where $B_t(\mathbf{x})=\{\mathbf{y}\in \mathbb{Z}_q^n|\rho(\mathbf{x,y})\leqslant t\}$ is the ball of radius $t$ centered at $\mathbf{x}$. We refer to the problem of finding $N_q(n,d,t)$ as the~\emph{sequence reconstruction problem}.

Levenshtein~\cite{L1} studied the sequence reconstruction problem, as outlined in Eq. (\ref{eq1}), over some channels, including those based on Hamming distance, Johnson graphs, and other distance metrics. Subsequently, the problem was examined in the context of permutations~\cite{K1,Yaakobi,Wang1,Wang2,Wang3}, general error graphs~\cite{L3,L4}, the Grassmann graph~\cite{Yaakobi}, and deletions or insertions~\cite{Sala1,Lan,Sun1,Zhang,Song}.

The sequence reconstruction problem over the deletion channel has attracted some interest.   Levenshtein~\cite{L1,L2} determined  $N_q(n,1,t)$ for  $\mathcal{C}=\mathbb{Z}_q^n$ ($q \geqslant 2$). Furthermore, Gabrys and Yaakobi~\cite{Gabrys} solved the problem for a binary one-deletion-correcting code, finding $N_2(n,2,t)$. Recently, Pham, Goyal, and Kiah~\cite{Pham},\cite{Pham2} provided an asymptotic solution for $N_2(n,d,t)$ with $d \geqslant 2$. That is, $N_2(n,d,t)=\frac{\binom{2d}{d}}{(t-d)!}n^{t-d}-O(n^{t-d-1})$ for $0\leq d\leq t<n$. For the case $d=t$, they proved that $N_2(n,d,t)=\binom{2t}{t}$. 
Furthermore, Zhang et al. \cite{Zhang} studied the sequence reconstruction problem for the binary $3$-deletion channel and characterized pairs of distinct binary sequences $(\mathbf{x},\mathbf{y})$ for which $|D_3(\mathbf{x})\cap D_3(\mathbf{y})|\in \{19,20\}$, given that the distance between $\mathbf{x}$ and $\mathbf{y}$ is at least $3$. Here, $D_3(\mathbf{x})$ and $D_3(\mathbf{y})$ denote the deletion balls of radius $3$ centered at $\mathbf{x}, \mathbf{y}\in \{0,1\}^{n}$, respectively.

In this paper, we study the sequence reconstruction problem for $3$-ary sequences under the deletion metric, assuming the Levenshtein distance between distinct codewords is at least $2$. We aim to determine the minimum number of channel outputs required to correct $t$ deletions in each channel. We define the \emph{deletion ball} of a $q$-ary sequence $\mathbf{x}$ with radius $t$ as the set of all $q$-ary sequences that can be derived from $\mathbf{x}$ by making $t$ deletions. Let $D_q(m,s)$ denote the maximum size of a deletion ball for $q$-ary sequences of length $m$ with $s$ deletions ($q\geqslant 2$). Our main result is showing that for $t\geqslant 2$, if $\frac{3t}{2}\geqslant n\geqslant t$, then $$N_3(n,2,t)=3^{n-t};$$ if $5\geqslant t\geqslant 2$ and $n\geqslant \max\{6,\lfloor \frac{3t}{2} \rfloor+1\}$, then
\begin{align*}
N_3(n,2,t)=M_1(n,t);
\end{align*}
if $t\geqslant 6$ and $3t-1\geqslant n\geqslant \lfloor \frac{3t}{2}\rfloor +1$, then
\begin{align*}
N_3(n,2,t)=\max\{M_0(n,t),M_1(n,t)\};
\end{align*}
if $t\geqslant 6$ and $n\geqslant 3t$, then
$$N_3(n,2,t)=M_1(n,t),$$
where 
\begin{align*}
M_0(n,t)\triangleq& D_3(n-4,t-2)+3D_3(n-5,t-2)+4D_3(n-5,t-3)+3D_3(n-6,t-3)+D_3(n-6,t-4)\\
&+2D_3(n-7,t-3)+2D_3(n-7,t-4)+D_3(n-8,t-4)+D_3(n-12,t-7)-D_3(n-10,t-5),
\end{align*}
and
$$ M_1(n,t)\triangleq D_3(n-4,t-2)+5D_3(n-5,t-2)+4D_3(n-5,t-3)+3D_3(n-6,t-3)+D_3(n-6,t-4)+D_3(n-8,t-5).$$
Specifically, when $t$ is small, for example, $t=2, 3, 4,$ or $5$, we have
\begin{align*}
N_3(n,2,t)=M_1(n,t)=
\begin{cases}
 6,&~\text{if~} t=2,\\
6n-22,&~\text{if~} t=3,\\
3n^2-25n+48,&~\text{if~} t=4,\\
n^3-14n^2+55n-30,&~\text{if~} t=5,
\end{cases}
\end{align*}
for $n\geqslant \max\{6,\lfloor \frac{3t}{2} \rfloor+1\}$. Furthermore, when $t=2$ or $3$, we have $N_3(4,2,2)=4$, $N_3(5,2,2)=6$, and $N_3(5,2,3)=8$; when $t=6$, we have $N_3(10,2,6)=M_0(10,6)=74$ since  $M_1(10,6)=73$, and $N_3(n,2,6)=M_1(n,6)=\frac{1}{12}(3n^4-62n^3+381n^2-202n-3204)$ for $n\geqslant 11$.

The structure of the remainder of the paper is as follows. In Section~\ref{sec2}, we introduce our notation and establish some foundational results. In Section~\ref{sec3}, we demonstrate that $M_0(n,t)$ and $M_1(n,t)$ are lower bounds; that is, $N_3(n,2,t)\geq \max\{M_0(n,t),M_1(n,t)\}$. In Section~\ref{sec4}, we show that $\max\{M_0(n,t),M_1(n,t)\}$ is also an upper bound, thereby proving the equality. Section~\ref{sec5} concludes the paper.

\section{Definitions and Preliminaries}
\label{sec2}

Let $\mathbb{Z}_q$ denote the set $\{0,1,...,q-1\}$ and $[n]$ denote the set $\{1,2,...,n-1,n\}$. Let $\mathbb{Z}_q^n$ be the set of all length-$n$ $q$-ary sequences. We use lower-case letters to denote scalars, bold lowercase letters to denote vectors or sequences. For $\mathbf{x}=(x_1,x_2,...,x_n)\in \mathbb{Z}_q^n$, let $\mathbf{x}_{[i,j]}$ denote the projection of $\mathbf{x}$ onto the index interval $[i,j]=\{i,i+1,\ldots,j\}$, i.e., $\mathbf{x}_{[i,j]}=(x_i,\ldots,x_j)$ where $i\leqslant j$ are integers. Let $|\mathbf{x}|$ be the length of $\mathbf{x}$. A \emph{run} of $\mathbf{x}$ is a maximal interval which consists of the same symbol.

For two $q$-ary sequences $\mathbf{u}=(u_1,...,u_m),\mathbf{v}=(v_1,...,v_n)$, we denote $\mathbf{u}\circ\mathbf{v}=(u_1,...,u_m,v_1,...,v_n)$ as the concatenation of $\mathbf{u}$ and $\mathbf{v}$. Let $\mathcal{C} \subseteq \mathbb{Z}_q^n$ be a set and $\mathbf{u}$ be a sequence of length at most $n$. We denote by $\mathcal{C}^{\mathbf{u}}$ the subset of $\mathcal{C}$ consisting of all sequences that begin with the sequence $\mathbf{u}$. Furthermore, we define $\mathcal{C}_{\mathbf{u_2}}^{\mathbf{u_1}}$ as the collection of sequences within $\mathcal{C}$ that begin with $\mathbf{u_1}$ and end with $\mathbf{u_2}$. For a sequence $\mathbf{u} = (u_1, \ldots, u_m)$ and a set $\mathcal{C} \subseteq \mathbb{Z}_q^n$, the set resulting from the concatenation of $\mathbf{u}$ with every sequence in $\mathcal{C}$, denoted $\mathbf{u} \circ \mathcal{C}$, is given by
\begin{equation}
\mathbf{u} \circ \mathcal{C} = \{\mathbf{u} \circ \mathbf{c} = (u_1, \ldots, u_m, c_1, \ldots, c_n) | \mathbf{c} = (c_1, \ldots, c_n) \in \mathcal{C}\}.\nonumber
\end{equation}

A sequence $\mathbf{y} \in \mathbb{Z}_q^{n-t}$ is a \emph{$t$-subsequence} of $\mathbf{x} \in \mathbb{Z}_q^n$ if $\mathbf{y}$ is obtained by deleting $t$ symbols from $\mathbf{x}$. The \emph{deletion ball} of radius $t$ centered at $\mathbf{x} \in \mathbb{Z}_q^n$ is defined as the collection of $t$-subsequences of $\mathbf{x}$, denoted by $D_t(\mathbf{x})$. Formally, for any $\mathbf{x}\in \mathbb{Z}_q^n$, we have
\begin{equation}
D_t(\mathbf{x}) = \{\mathbf{y}\in \mathbb{Z}_q^{n-t} | \mathbf{y} \text{ is a $t$-subsequence of } \mathbf{x}\},\nonumber
\end{equation}
with $D_t(\mathbf{x}) = \emptyset$ for $t < 0$ or $t > n$. For $\mathbf{x}\in \mathbb{Z}_q^{n+k}$ and $\mathbf{y}\in \mathbb{Z}_q^{n}$, define $D_q(\mathbf{x},\mathbf{y};t+k,t)=D_{t+k}(\mathbf{x})\cap D_t(\mathbf{y})$. The \emph{Levenshtein distance} between $\mathbf{x}$ and $\mathbf{y}$ is:
\begin{equation}
d_L(\mathbf{x},\mathbf{y})=\min\{s\geq 0| D_q(\mathbf{x},\mathbf{y};s+k,s) \neq \emptyset\}.\nonumber
\end{equation}
A code $\mathcal{C} \subseteq \mathbb{Z}_q^n$ has \emph{minimum Levenshtein distance} $t$ if $d_L(\mathbf{x},\mathbf{y}) \geq t$ for all distinct $\mathbf{x},\mathbf{y} \in \mathcal{C}$, and is a \emph{$(t-1)$-deletion-correcting code} when its minimum Levenshtein distance is at least $t$.

Define $N_q(n,d,t)$ as the maximum size of the intersection of two deletion balls of radius $t$ centered at sequences $\mathbf{x},\mathbf{y} \in \mathbb{Z}_q^n$ with $d_L(\mathbf{x},\mathbf{y}) \geq d$:
\begin{equation}
N_q(n,d,t)=\max
\{|D_q(\mathbf{x},\mathbf{y};t,t)|:\mathbf{x},\mathbf{y}\in \mathbb{Z}_q^n, d_L(\mathbf{x},\mathbf{y})\geqslant d\}.\nonumber
\end{equation}
Its generalization for different sequence lengths is:
\begin{equation}
N_q(n,t+k,t,d)=\max
\{|D_q(\mathbf{x},\mathbf{y};t+k,t)|:\mathbf{x} \in \mathbb{Z}_q^{n+k},\mathbf{y}\in \mathbb{Z}_q^n, d_L(\mathbf{x},\mathbf{y})\geqslant d\}.\nonumber
\end{equation}
Consider transmitting a sequence from a $(d-1)$-deletion-correcting code $\mathcal{C} \subseteq \mathbb{Z}_q^n$ over $N$ distinct channels, each introducing exactly $t$ deletions, with distinct outputs. Levenshtein~\cite{L1} proved that the minimum number of channels required for guaranteed reconstruction is $N_q(n,d,t)+1$.

Let $\sigma=(\sigma_1,...,\sigma_q)$ be an ordering of $\mathbb{Z}_q$ in some order, and define the periodic sequence $\mathbf{c}_q(n,\sigma)=(c_1,c_2,...,c_n)$ where $c_i=\sigma_i$ for $1\leqslant i \leqslant q$ and $c_i=c_{i-q}$ for $i>q$.  When $\sigma_i=i-1$ for all $i\in [q]$, denote $\mathbf{c}_q(n,I_q)$ as $\mathbf{a}_n$, where $I_q=(0,1,\ldots,q-1)$. Let $\mathbf{c}_q(n)=\{\mathbf{c}_q(n,\sigma)|\sigma=(\sigma_1,...,\sigma_q)~\text{is an ordering of all symbols in $\mathbb{Z}_q$}\}$. For integers $0 < t < n$, denote $D_q(n,t)$ by the maximum cardinality of a deletion ball of radius $t$ in $\mathbb{Z}_q^n$. That is,  $D_q(n,t) = \max\{|D_t(\mathbf{x})| : \mathbf{x} \in \mathbb{Z}_q^n\}$. It is known from \cite{Hirschberg} that
\begin{equation}
D_q(n,t)=|D_t(\mathbf{x})|=\sum\limits_{i=0}^{t}\binom{n-t}{i}D_{q-1}(t,t-i),\label{eq2}
\end{equation}
where $\mathbf{x}\in \mathbf{c}_q(n)$. Specifically, $D_2(n,t)=\sum\limits_{i=0}^{t}\binom{n-t}{i}$ and $D_3(n,t)=\sum\limits_{i=0}^{t}\binom{n-t}{i}\sum\limits_{j=0}^{t-i}\binom{i}{j}$. Moreover, the function $D_2(n,t)$ satisfies the recurrence relation
\begin{equation}
D_2(n,t)=D_2(n-1,t)+D_2(n-2,t-1),\nonumber
\end{equation}
and the function $D_3(n,t)$ satisfies the recurrence relation
\begin{equation}
D_3(n,t)=D_3(n-1,t)+D_3(n-2,t-1)+D_3(n-3,t-2),\label{eq3}
\end{equation}
where $n\geqslant t+1$. Moreover, if $n\geqslant t\geqslant \frac{2n}{3}$, then
\begin{equation}
D_3(n,t)=3^{n-t}.\label{eq4}
\end{equation}
For convenience, we define $D_3(n,t)=0$ in cases where $t>n$, $n<0$, or $t<0$, and define $D_3(n,n)=1$ in case where $n\geqslant 1$.

Some conclusions concerning $N_q(n,d,t)$ have been established, as outlined below. When $d=1$, Levenshtein \cite{L1} has provided the following result for $N_q(n,1,t)$.
\begin{theorem}[Levenshtein \cite{L1}]
For any $n\geqslant t+1$ and $q\geqslant 2$, 
\begin{equation}
N_q(n,1,t)=\sum\limits_{i=1}^{q-1}D_q(n-i-1,t-i)+D_q(n-2,t-1)=D_q(n,t)-D_q(n-1,t)+D_q(n-2,t-1).\nonumber
\end{equation}
In particular, if $q=3$, then
\begin{equation}
N_3(n,1,t)=\sum\limits_{i=1}^{2}D_3(n-i-1,t-i)+D_3(n-2,t-1)=2D_3(n-2,t-1)+D_3(n-3,t-2).\label{eq5}
\end{equation}
For $\mathbf{x} = (c_1,c_2,c_3,\ldots,c_n) \in \mathbf{c}_q(n)$ and $\mathbf{y} = (c_2,c_1,c_3,\ldots,c_n)$, $d_L(\mathbf{x},\mathbf{y}) = 1$ and $|D_t(\mathbf{x}) \cap D_t(\mathbf{y})| = N_q(n,1,t)$.
\label{thm1}
\end{theorem}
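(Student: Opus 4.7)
My plan is to split the proof into a lower bound achieved by the explicit pair in the statement and a matching upper bound valid for any distinct $\mathbf{x}, \mathbf{y} \in \mathbb{Z}_q^n$.

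For the lower bound, I would fix $\mathbf{x} = (c_1, c_2, c_3, \ldots, c_n) \in \mathbf{c}_q(n)$ and $\mathbf{y} = (c_2, c_1, c_3, \ldots, c_n)$, where $(c_1, \ldots, c_q)$ is a permutation of $\mathbb{Z}_q$ extended periodically. Deleting the leading $c_1$ from $\mathbf{x}$ and the $c_1$ in position $2$ of $\mathbf{y}$ both give $(c_2, c_3, \ldots, c_n)$, so $d_L(\mathbf{x}, \mathbf{y}) = 1$. To count $|D_t(\mathbf{x}) \cap D_t(\mathbf{y})|$, I would partition the common $t$-subsequences $\mathbf{z}$ by their first symbol $z_1 = c_{i+1}$ and compare the leftmost embeddings of $\mathbf{z}$ into $\mathbf{x}$ and into $\mathbf{y}$. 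When $i \geq 2$ the two leftmost embeddings coincide at position $i+1$, so the tail ranges over $D_{t-i}$ of the common suffix of length $n - i - 1$, contributing $D_q(n-i-1, t-i)$ options. When $i \in \{0, 1\}$, i.e.\ $z_1 \in \{c_1, c_2\}$, the two leftmost embeddings separate on positions $1, 2$ but the tail is forced to live in $D_{t-1}((c_3, \ldots, c_n))$ in both subcases, contributing $D_q(n-2, t-1)$ each. Summing these pieces yields exactly $2\,D_q(n-2, t-1) + \sum_{i=2}^{q-1} D_q(n-i-1, t-i) = \sum_{i=1}^{q-1} D_q(n-i-1, t-i) + D_q(n-2, t-1)$, matching the claim.

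For the upper bound, for arbitrary $\mathbf{x} \neq \mathbf{y}$ in $\mathbb{Z}_q^n$ I would locate the smallest index $i_0$ at which they disagree and organize each common $t$-subsequence $\mathbf{z}$ by the length of its prefix that embeds identically into the shared prefix $\mathbf{x}_{[1, i_0 - 1]}$, together with the identity of the first symbol of $\mathbf{z}$ whose leftmost embeddings into $\mathbf{x}$ and $\mathbf{y}$ diverge. Each class is then controlled by a deletion-ball size $D_q(\cdot, \cdot)$, and the resulting bound is maximized precisely when the local structure of $\mathbf{x}, \mathbf{y}$ around $i_0$ matches the extremal periodic pair above, yielding the claimed inequality. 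The equivalent formulation $D_q(n, t) - D_q(n-1, t) + D_q(n-2, t-1)$ then follows from the recurrence $D_q(n, t) = D_q(n-1, t) + \sum_{i=1}^{q-1} D_q(n-i-1, t-i)$, which is a direct consequence of the counting identity~(\ref{eq2}). The principal obstacle is verifying the extremality of the periodic pair: one must rule out, via a monotonicity or exchange argument, more complicated disagreement patterns between $\mathbf{x}$ and $\mathbf{y}$ that might conceivably yield a larger intersection but in fact do not, which is where the structural advantage of $\mathbf{c}_q(n)$ (maximizing $|D_t(\cdot)|$ among all length-$n$ sequences) plays a decisive role.
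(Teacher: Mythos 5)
First, a framing note: the paper does not prove Theorem~\ref{thm1} at all --- it is quoted from Levenshtein \cite{L1} as a known result --- so there is no in-paper argument to compare yours against; I can only judge the proposal on its own terms. The lower-bound half of your plan is correct and essentially complete. The pair $\mathbf{x}=(c_1,c_2,c_3,\ldots,c_n)$, $\mathbf{y}=(c_2,c_1,c_3,\ldots,c_n)$ does satisfy $d_L(\mathbf{x},\mathbf{y})=1$, and partitioning the common $t$-subsequences by first symbol (the $q$-ary analogue of Lemmas~\ref{lm1} and~\ref{lm2}) gives $D_q(n-i-1,t-i)$ for $z_1=c_{i+1}$ with $i\geqslant 2$, while for $z_1\in\{c_1,c_2\}$ the relevant intersection is $D_t((c_2,\ldots,c_n))\cap D_{t-1}((c_3,\ldots,c_n))=D_{t-1}((c_3,\ldots,c_n))$ (and symmetrically), contributing $D_q(n-2,t-1)$ twice; the algebraic reformulation via the recurrence $D_q(n,t)=D_q(n-1,t)+\sum_{i=1}^{q-1}D_q(n-i-1,t-i)$ generalizing Eq.~(\ref{eq3}) is also fine.

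The genuine gap is the upper bound, which is the substantive half of the theorem. Your plan asserts that ``each class is then controlled by a deletion-ball size'' and that ``the resulting bound is maximized precisely when the local structure matches the extremal periodic pair,'' but this is the conclusion restated, not an argument. Concretely, once you split $D_t(\mathbf{x})\cap D_t(\mathbf{y})$ by first symbol, at least one of the $q$ resulting terms is again an intersection of two \emph{full-radius} deletion balls of shorter distinct sequences (e.g.\ when $x_1=y_1$, the term $x_1\circ\bigl(D_t(\mathbf{x}_{[2,n]})\cap D_t(\mathbf{y}_{[2,n]})\bigr)$), so it cannot be dominated by a single $D_q(\cdot,\cdot)$ and forces an induction on $n$ --- exactly the pattern used throughout this paper's own upper-bound arguments (Lemmas~\ref{lm6}, \ref{lm7}, Theorem~6). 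Moreover, the naive bound on the remaining terms is too weak: bounding each class by the smaller of the two deletion balls and using only the distinctness of the first-occurrence positions yields a sum that exceeds the target (one gets an extra copy of $D_q(n-2,t-1)$), so a more delicate case analysis on where the first occurrences of each symbol fall in $\mathbf{x}$ versus $\mathbf{y}$, combined with the monotonicity relations (\ref{eq6})--(\ref{eq7}), is unavoidable. You explicitly flag this extremality verification as the ``principal obstacle'' but do not supply the induction, the case analysis, or the exchange argument, so the upper bound remains unproven in your proposal.
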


For the case where $d=2$, Gabrys and Yaakobi \cite{Gabrys} have determined $N_2(n,2,t)$.
\begin{theorem}[Gabrys and Yaakobi \cite{Gabrys}]
For $t \geqslant 2$ and $n \geqslant \max\{8,2t+1\}$,
\begin{align}
N_2(n,2,t)=2D_2(n-4,t-2)+2D_2(n-5,t-2)+2D_2(n-7,t-2)+D_2(n-6,t-3)+D_2(n-7,t-3).\nonumber
\end{align}
\label{thm2}
\end{theorem}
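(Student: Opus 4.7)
The plan is to prove Theorem~\ref{thm2} by establishing matching lower and upper bounds on $N_2(n,2,t)$ in the regime $n\geq\max\{8,2t+1\}$.

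For the lower bound, I would exhibit an explicit pair $(\mathbf{x},\mathbf{y})\in\{0,1\}^n\times\{0,1\}^n$ with $d_L(\mathbf{x},\mathbf{y})=2$ whose intersection $|D_t(\mathbf{x})\cap D_t(\mathbf{y})|$ equals the claimed five-term expression. A natural candidate is built on the binary alternating sequence $0101\ldots$ of length $n$ by perturbing a short window of roughly $4$ to $7$ consecutive coordinates so that $\mathbf{x}$ and $\mathbf{y}$ realize the two closest non-trivial configurations compatible with $d_L=2$, while the remainder of both sequences stays alternating. To evaluate the intersection, I would fix a canonical (say, leftmost) embedding of each $\mathbf{z}\in D_t(\mathbf{x})\cap D_t(\mathbf{y})$ into $\mathbf{x}$ and into $\mathbf{y}$, and classify $\mathbf{z}$ by how the $t$ deletions distribute across the window versus the alternating prefix and suffix. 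Each cell of this partition contributes a factor of the form $D_2(\ell,s)$ counting $s$-deletions of an alternating length-$\ell$ segment, and the particular $(\ell,s)$ values appearing in $2D_2(n-4,t-2)$, $2D_2(n-5,t-2)$, $2D_2(n-7,t-2)$, $D_2(n-6,t-3)$, $D_2(n-7,t-3)$ should pin down which window shape is extremal.

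For the upper bound, I would first reduce to the tight case $d_L(\mathbf{x},\mathbf{y})=2$, since a larger distance can only shrink the intersection, so that $\mathbf{x}$ and $\mathbf{y}$ share a common length-$(n+2)$ supersequence. Then, using the well-known fact that the binary deletion ball is maximized by alternating sequences, I would argue that outside a bounded difference window in which $\mathbf{x}$ and $\mathbf{y}$ actually disagree, replacing common portions by alternating segments cannot decrease $|D_t(\mathbf{x})\cap D_t(\mathbf{y})|$; this leaves a finite catalogue of window configurations. For each configuration, I would decompose $D_t(\mathbf{x})\cap D_t(\mathbf{y})$ according to how many of the $t$ deletions fall inside the window, and collapse the resulting sums via the recurrence $D_2(n,t)=D_2(n-1,t)+D_2(n-2,t-1)$ into sums of $D_2(\cdot,\cdot)$ values.

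The delicate step, and the main obstacle I expect, is the explicit comparison across all window configurations to identify the unique shape that attains the maximum and to verify that its count equals the five-term formula. This comparison must be exact rather than asymptotic, and the hypothesis $n\geq\max\{8,2t+1\}$ enters precisely to ensure that the alternating flanks on each side of the window are long enough that every index shift $n-i$ with $i\in\{4,5,6,7\}$ appearing in the final expression lies in the regime where the canonical-embedding decomposition has no overlap between the window and the surrounding alternating segments. The small-$n$ boundary is where alternative window shapes can temporarily tie or exceed the target and must be ruled out case by case, using monotonicity of $D_2(n,t)$ in $n$ together with the identity $D_2(n,t)=\sum_{i=0}^{t}\binom{n-t}{i}$. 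Once the extremal shape is isolated and verified to match the lower-bound construction, the two bounds coincide and Theorem~\ref{thm2} follows.
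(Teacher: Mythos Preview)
The paper does not prove Theorem~\ref{thm2}; it is stated as a cited result of Gabrys and Yaakobi~\cite{Gabrys} and is used only as background. There is therefore no ``paper's own proof'' to compare your proposal against.

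That said, your high-level strategy (explicit extremal pair for the lower bound; prefix/suffix decomposition via Lemmas~\ref{lm1}--\ref{lm2} plus a case analysis for the upper bound) is precisely the methodology that both~\cite{Gabrys} and the present paper employ for such results, and indeed the ternary analogue proved here in Sections~\ref{sec3}--\ref{sec4} follows exactly this template. Your proposal is a reasonable sketch of that approach, though it remains vague at the crucial point: you do not name the extremal pair, and the ``finite catalogue of window configurations'' plus ``explicit comparison across all window configurations'' is where all the actual work lies. In~\cite{Gabrys} the upper bound is obtained not by enumerating window shapes and comparing, but by an induction on $n$ that repeatedly applies the prefix decomposition and bounds each piece via $N_2(n,1,t)$ and $N_2(n,t{+}1,t,1)$-type quantities, much as the present paper does in Lemma~\ref{lm14} and Theorem~\ref{thm5} for $q=3$. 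Your ``replace common portions by alternating segments and then compare finitely many windows'' idea is plausible but is not how the proof actually proceeds, and making it rigorous would require care you have not supplied.
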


Pham, Goyal, and Kiah \cite{Pham} have found asymptotically exact estimations for $N_2(n,d,t)$ for $0 \leqslant d \leqslant t$.
\begin{theorem}[Pham, Goyal, and Kiah \cite{Pham}]
For $n>t\geqslant d \geqslant 0$,
\begin{align}
N_2(n,d,t)=\frac{\binom{2d}{d}}{(t-d)!}n^{t-d}-O(n^{t-d-1}).\nonumber
\end{align}
Furthermore, when $t=d\geqslant 1$ and $n \geqslant 4t-2$, we have
\begin{equation}
N_2(n,d,d)=\binom{2d}{d}.\nonumber
\end{equation}
\label{thm3}
\end{theorem}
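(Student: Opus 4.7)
The plan is to handle the two claims in the statement separately: first the exact equality $N_2(n,d,d)=\binom{2d}{d}$ for $n\geqslant 4d-2$, then bootstrap it into the asymptotic formula for general $t\geqslant d$. The key reduction is that every element $\mathbf{z}\in D_t(\mathbf{x})\cap D_t(\mathbf{y})$ has length $n-t$ and can be ``lifted'' by re-inserting $t-d$ symbols to a common $(n-d)$-subsequence of $\mathbf{x}$ and $\mathbf{y}$; because $d_L(\mathbf{x},\mathbf{y})\geqslant d$ any such lift is necessarily a longest common subsequence (LCS). Thus both the exact bound and the leading coefficient of the asymptotic bound should come out of a count of the LCSs of a pair of binary strings at Levenshtein distance $d$.

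For the exact part, I would first prove the structural lemma: any two distinct binary sequences $\mathbf{x},\mathbf{y}\in\{0,1\}^n$ with $d_L(\mathbf{x},\mathbf{y})=d$ share at most $\binom{2d}{d}$ distinct length-$(n-d)$ common subsequences. Fixing any one such LCS $\mathbf{z}_0$ together with its alignments into $\mathbf{x}$ and into $\mathbf{y}$ produces two sets $S_{\mathbf{x}},S_{\mathbf{y}}$ of ``unaligned'' positions, each of size $d$, and every other LCS arises from an order-preserving interleaving of these two $d$-chains; the number of such interleavings is exactly $\binom{2d}{d}$. For the matching lower bound I would exhibit an explicit extremal pair: a carefully chosen length-$(4d-2)$ ``difference block'' (generalizing the Gabrys--Yaakobi extremal configuration that realizes $N_2(n,2,2)=6=\binom{4}{2}$) concatenated with a padding $0^{n-(4d-2)}$, and verify by direct enumeration that exactly $\binom{2d}{d}$ common length-$(n-d)$ subsequences arise. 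The hypothesis $n\geqslant 4d-2$ is precisely what is needed so that the difference block does not interact with the long run.

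For the asymptotic statement, the lower bound follows from the Part~(a) construction with a longer pad: replace the padding $0^{n-(4d-2)}$ by a run that still fills out length $n$ and observe that every sequence obtained by (i) picking one of the $\binom{2d}{d}$ length-$(n-d)$ common LCSs and (ii) performing $t-d$ further deletions inside the padded run lies in $D_t(\mathbf{x})\cap D_t(\mathbf{y})$. This yields $\binom{2d}{d}\cdot\binom{n-\Theta(d)}{t-d}=\frac{\binom{2d}{d}}{(t-d)!}n^{t-d}-O(n^{t-d-1})$ elements, once coincidences between the $\binom{2d}{d}$ contributions are shown to account for only $O(n^{t-d-1})$ collisions. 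For the upper bound I would define a canonical-lift map sending each $\mathbf{z}\in D_t(\mathbf{x})\cap D_t(\mathbf{y})$ to a distinguished common LCS $\mathbf{w}\supseteq\mathbf{z}$ (chosen, for instance, by the greedy rule ``match the leftmost available symbol in $\mathbf{x}$ and in $\mathbf{y}$ simultaneously''), then count $(t-d)$-subsequences of each fixed $\mathbf{w}$ by Hirschberg's formula~\eqref{eq2}, and finally multiply by the $\binom{2d}{d}$ bound from Part~(a).

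The main obstacle is the upper bound of the asymptotic claim: the naive counting argument multiplies $\binom{2d}{d}$ by $\binom{n-d}{t-d}$, which already matches the leading order, but it overcounts each $\mathbf{z}$ once per LCS that lifts to it. Closing the gap to an $O(n^{t-d-1})$ error (rather than a constant multiplicative loss) requires showing that the canonical-lift map above is injective up to an $O(n^{t-d-1})$ defect, equivalently, that the $D_t(\mathbf{x})\cap D_t(\mathbf{y})$ can be partitioned into $\binom{2d}{d}$ layers indexed by LCSs, with each layer of size $\binom{n-d}{t-d}+O(n^{t-d-1})$. By comparison, the exact case $t=d$ reduces to the LCS-counting lemma plus a single explicit construction and is comparatively routine.
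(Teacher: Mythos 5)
You should first note that the paper contains no proof of this statement at all: Theorem~\ref{thm3} is quoted verbatim from Pham, Goyal, and Kiah \cite{Pham},\cite{Pham2} as background, so the only comparison possible is between your proposal and the (long, delicate) external proof --- and your proposal has genuine gaps. The most serious one is that your ``key reduction'' is false: it is not true that every $\mathbf{z}\in D_t(\mathbf{x})\cap D_t(\mathbf{y})$ can be lifted, by re-inserting $t-d$ symbols, to a common subsequence of length $n-d$. Take $\mathbf{x}=(0,1,1)$, $\mathbf{y}=(1,1,0)$, so $d_L(\mathbf{x},\mathbf{y})=1$, and let $d=1$, $t=2$: then $(0)\in D_2(\mathbf{x})\cap D_2(\mathbf{y})$, but the unique common subsequence of length $n-d=2$ is $(1,1)$, which does not contain $(0)$. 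Maximal common subsequences of binary strings can be strictly shorter than the LCS, so your canonical-lift map is simply undefined on part of the intersection; showing that the unliftable elements contribute only $O(n^{t-d-1})$ is precisely the hard content of the theorem, which you acknowledge as ``the main obstacle'' but do not supply. The structural lemma for the exact case has the same character: the assertion that every length-$(n-d)$ common subsequence arises as an order-preserving interleaving of the $d$ unaligned positions of one fixed alignment is unsubstantiated --- no well-defined surjection from the $\binom{2d}{d}$ interleavings onto the set of LCSs is given, and nothing rules out LCSs whose alignments are unrelated to the fixed pair $(S_{\mathbf{x}},S_{\mathbf{y}})$. In the actual proof this bound is the heart of the matter and requires a careful induction, not a one-line shuffle count.

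Your asymptotic lower bound is also broken as written: you propose to pad the length-$(4d-2)$ difference block with ``a run that still fills out length $n$'' and to perform the $t-d$ extra deletions inside the pad. But deleting $t-d$ symbols from within a run of a single symbol produces exactly one sequence, so this construction yields $O(1)$ elements of the intersection, not $\frac{\binom{2d}{d}}{(t-d)!}n^{t-d}$. The pad must be deletion-rich --- e.g.\ alternating, so that its $(t-d)$-deletion ball has size $D_2(m,t-d)=\sum_{i=0}^{t-d}\binom{m-(t-d)}{i}=\Theta(n^{t-d})$ --- which is exactly how the extremal pairs in Theorem~\ref{thm2} and the ternary constructions of Section~\ref{sec3} (e.g.\ $\mathbf{x}=(0,1,2,0,1,2,\mathbf{a}_{n-6})$, $\mathbf{y}=(1,0,2,1,0,2,\mathbf{a}_{n-6})$) achieve polynomial growth. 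In summary, the extremal-block idea for part (a) is right in spirit (an alternating difference block of length $4d-2$, which is where the hypothesis $n\geqslant 4t-2$ enters), but the upper bound in the exact case and both directions of the asymptotic claim are not established by the argument you outline.
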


Building on the aforementioned definitions and preliminaries, we aim to determine  $N_3(n,2,t)$ for $t\geqslant 2$ in the following theorem. This paper mainly discusses the combinatorial problem posed in Eq. (\ref{eq1}) for $d=2$ and $q=3$. Our primary contribution is demonstrating that $N_3(n,2,t)$ equals $M_1(n,t)$ if $5\geqslant t\geqslant 2$ and $n\geqslant\max\{6,\lfloor \frac{3t}{2} \rfloor+1\}$, or $t\geqslant 6$ and $n\geqslant 3t$, equals $\max\{M_0(n,t),M_1(n,t)\}$ if $t\geqslant 6$ and
$3t-1\geqslant n\geqslant \lfloor \frac{3t}{2} \rfloor+1$,
and $N_3(n,2,t)=3^{n-t}$ when $\frac{3t}{2} \geqslant n\geqslant t$ and $t\geqslant2$. It is easily verified that if $\frac{3t}{2}\geqslant n\geqslant t$ and $t\geqslant 2$, then $N_3(n,2,t)=3^{n-t}$. To see this, consider the sequences $\mathbf{x}=(0,1,2,0,1,2,\mathbf{a}_{n-6})$ and $\mathbf{y}=(1,0,2,1,0,2,\mathbf{a}_{n-6})$ for $n\geqslant 6$. The Levenshtein distance between $\mathbf{x}$ and 
$\mathbf{y}$ is $2$, as evidenced by the non-empty intersection of their deletion balls of radius $2$ over the first $6$ symbols:
$$D_2(0,1,2,0,1,2)\cap D_2(1,0,2,1,0,2)=\{(0,2,0,2),(0,2,1,2),(1,2,0,2),(1,2,1,2)\}.$$
Moreover, for $\frac{3t}{2} \geqslant n\geqslant t$, $D_t(\mathbf{x})=D_t(\mathbf{y})=\mathbb{Z}_3^{n-t}$ since both of these sequences are concatenations of at least $\lfloor n/3 \rfloor$ triples (e.g., $(0,1,2)$ or $(1,0,2)$), and $n-t\leqslant \lfloor n/3 \rfloor$. Furthermore, when $\max\{\frac{3t}{2},6\} \geqslant n\geqslant t$ and $t\geqslant2$, we can verify that $N_3(n,2,t)=3^{n-t}$ by using a computerized search. 

\begin{theorem}
For $t\geqslant 2$, if $\frac{3t}{2} \geqslant n\geqslant t$, then $$N_3(n,2,t)=3^{n-t};$$
if $5\geqslant t\geqslant 2$ and $ n\geqslant \max\{6,\lfloor \frac{3t}{2} \rfloor+1\}$, then
\begin{align*}
N_3(n,2,t)=M_1(n,t);
\end{align*}
if $t\geqslant 6$ and $3t-1\geqslant n\geqslant \lfloor \frac{3t}{2} \rfloor+1$, then
\begin{align*}
N_3(n,2,t)=\max\{M_0(n,t),M_1(n,t)\};
\end{align*}
if $t\geqslant 6$ and $n\geqslant 3t$, then
\begin{align*}
N_3(n,2,t)=M_1(n,t).
\end{align*}
\label{thm4}
\end{theorem}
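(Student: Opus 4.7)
The plan is to split Theorem \ref{thm4} into its four regimes. The first regime, $\frac{3t}{2}\geq n\geq t$, is already settled in the paragraph preceding the theorem: the pair $\mathbf{x}=(0,1,2,0,1,2,\mathbf{a}_{n-6})$ and $\mathbf{y}=(1,0,2,1,0,2,\mathbf{a}_{n-6})$ realises $|D_t(\mathbf{x})\cap D_t(\mathbf{y})|=3^{n-t}$, while the trivial bound $|D_t(\mathbf{x})|\leq 3^{n-t}$ is tight; the short edge cases $n\leq 6$ are handled by a direct computer search. For the remaining three regimes it suffices to prove the single statement $N_3(n,2,t)=\max\{M_0(n,t),M_1(n,t)\}$ for $t\geq 2$ and $n\geq\max\{6,\lfloor 3t/2\rfloor+1\}$, after which the three listed formulas follow from a direct comparison showing that $M_1(n,t)\geq M_0(n,t)$ whenever $t\leq 5$ or $n\geq 3t$, and $M_1$ can drop below $M_0$ only inside the window $\lfloor 3t/2\rfloor+1\leq n\leq 3t-1$ with $t\geq 6$.

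For the lower bound $N_3(n,2,t)\geq \max\{M_0(n,t),M_1(n,t)\}$ (carried out in Section \ref{sec3}), the plan is to exhibit two explicit families of pairs at Levenshtein distance exactly $2$. For $M_1$ I would take a short ``anchor'' of length around $8$ followed by the periodic tail $\mathbf{a}_{n-8}$, chosen so that the $t$-subsequences of the two sequences decompose according to how many deletions fall inside the anchor; each of the six terms $D_3(n-j,t-i)$ appearing in $M_1(n,t)$ then corresponds to fixing the surviving prefix of the anchor and letting the tail absorb the remaining $t-i$ deletions, and the weights $1,5,4,3,1,1$ will arise by counting surviving prefixes. The formula for $M_0(n,t)$, with its inclusion--exclusion correction $+D_3(n-12,t-7)-D_3(n-10,t-5)$, suggests a second anchor of length roughly $12$ whose two surviving tails overlap in a smaller set of sequences; the correction subtracts the overlap that would otherwise be double-counted. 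That $M_0$ only beats $M_1$ in the narrow window $\lfloor 3t/2\rfloor+1\leq n\leq 3t-1$ with $t\geq 6$ is consistent with the fact that a longer anchor wastes more deletions when the tail is short.

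The matching upper bound (Section \ref{sec4}) is where I expect the main difficulty to lie. Following the strategy of Gabrys and Yaakobi \cite{Gabrys} for the binary case, the plan is to normalise any extremal pair $(\mathbf{x},\mathbf{y})$ with $d_L(\mathbf{x},\mathbf{y})\geq 2$ by stripping the longest common prefix and suffix, then classify the resulting normalised pair by the symbols at the first few differing positions and by the run structure immediately to their right. For each configuration one writes
\[
D_t(\mathbf{x})\cap D_t(\mathbf{y}) \;=\; \bigcup_{i=0}^{t}\bigl(P_i \circ D_{t-i}(S_i)\bigr),
\]
where $P_i$ collects the joint prefixes that arise after exactly $i$ deletions hit the differing head and $S_i$ is the corresponding common tail, then bounds $|P_i|$ by a small constant $c_{i,j}$ and $|D_{t-i}(S_i)|$ by $D_3(n-j,t-i)$ for an appropriate $j$. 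Summing yields a bound of the form $\sum c_{i,j}D_3(n-j,t-i)$. The heart of the argument is to show that the only configurations saturating the bound are the two families used in the lower bound, with all others strictly dominated via monotonicity of $D_3(n,t)$ and the ternary recurrence \eqref{eq3}. The most delicate sub-case will be the window $\lfloor 3t/2\rfloor+1\leq n\leq 3t-1$, where the two families are genuinely comparable and the two sums must be matched term by term; handling the inclusion--exclusion term that distinguishes $M_0$ from $M_1$ without over- or under-counting will be the principal obstacle.

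Finally, the closed forms $6,\ 6n-22,\ 3n^2-25n+48,\ n^3-14n^2+55n-30$ for $t\in\{2,3,4,5\}$, together with the isolated values $N_3(4,2,2)=4$, $N_3(5,2,2)=6$, $N_3(5,2,3)=8$ and $N_3(10,2,6)=74$, should follow either by routine expansion of $M_1(n,t)$ using \eqref{eq2} and the recurrence \eqref{eq3}, or by finite verification at the boundary.
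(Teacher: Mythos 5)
Your outline of the first regime and of the general architecture (explicit lower-bound pairs plus a case-analysis upper bound in the spirit of Gabrys--Yaakobi) matches the paper, but there are two concrete gaps. First, your guess for the $M_0$ extremizer is structurally wrong: you propose a single long anchor of length about $12$ followed by a periodic tail, whereas the actual pair achieving $M_0(n,t)$ is $\mathbf{x}_0=(0,1,2,\mathbf{a}_{n-5},c,b)$ versus $\mathbf{y}_0=(1,0,2,\mathbf{a}_{n-3})$, i.e.\ two sequences that differ \emph{at both ends} --- a transposition in the first two symbols and a transposition in the last two. The inclusion--exclusion term $D_3(n-12,t-7)-D_3(n-10,t-5)$ arises from the interaction between these two separated defects (see Lemma~\ref{lm5}), not from overlap inside one long anchor; a single-anchor construction cannot reproduce it, and this is precisely why $M_0$ can exceed $M_1$ only when $n\leqslant 3t-1$, where deletions are plentiful enough to reach both ends. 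Second, your upper-bound decomposition $\bigcup_i P_i\circ D_{t-i}(S_i)$ presupposes that after the ``differing head'' the two sequences share a common tail $S_i$, which fails exactly for the two-ended configurations that matter most. The paper instead (i) inducts on the length $p$ of the longest common prefix, reducing to the case $x_1\neq y_1$ (Lemma~\ref{lm14}), (ii) decomposes there by the first symbol of the \emph{received} subsequence via Lemmas~\ref{lm1}--\ref{lm2}, and (iii) crucially bounds the resulting asymmetric intersections $|D_{t}(\mathbf{x}_{[2,n]})\cap D_{t-1}(\mathbf{y}_{[3,n]})|$ by the auxiliary quantity $N_3(m,s+1,s,1)$, whose exact value (Lemma~\ref{lm7}, itself a separate induction) is indispensable. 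Your plan contains no counterpart to this mixed-radius, mixed-length quantity, and without it the bound $\sum c_{i,j}D_3(n-j,t-i)$ cannot be made tight.

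A smaller but still substantive omission: identifying the saturating configurations is not a by-product of monotonicity, as you suggest, but requires dedicated arguments (Lemmas~\ref{lm9}--\ref{lm13}) showing that arbitrary sequences can be compared to periodic ones and that the two specific head patterns $(1,0,2,0,1)$/$(0,1,2,0,1)$ and $(1,0,2,1,0)$/$(0,1,2,0,1)$ are the only ones reaching $M_0$ and $M_1$ respectively. Your closing remarks on the closed forms for $t\leqslant 5$ and the isolated values are fine and match the paper's routine verification.
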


\section{Lower Bounds on $N_3(n,2,t)$}
\label{sec3}
The primary objective of this section is to establish the value of $M_i(n,t)$ as a lower bound for $N_3(n,2,t)$ for $i\in\{0,1\}$, where $t \geq 2$. First, when $t\geqslant2$ and $9\geqslant n \geqslant \max\{6,\lfloor \frac{3t}{2}\rfloor+1\}$, we can verify that $N_3(n,2,t)=M_1(n,t)$ by using a computerized search. In the following section, restrict our attention to the case where $n \geqslant \max\{9,\lfloor \frac{3t}{2}\rfloor+1\}$. We demonstrate that for any  $n \geqslant \max\{9,\lfloor \frac{3t}{2}\rfloor+1\}$, the sequences $\mathbf{x}_0=(0,1,2,\mathbf{a}_{n-5},c,b)$ and $\mathbf{y}_0=(1,0,2,\mathbf{a}_{n-3})=(1,0,2,\mathbf{a}_{n-5},b,c)$ have a Levenshtein distance of two. Here, the last two elements of $\mathbf{a}_{n-3}$ is $b$ and $c$. Additionally, the sequences $\mathbf{x}=\mathbf{a}_n=(0,1,2,0,1,2,\mathbf{a}_{n-6})$ and $\mathbf{y}=(1,0,2,1,0,2,\mathbf{a}_{n-6})$ also have a Levenshtein distance of two. We will prove the following equalities: 
\begin{align}
|D_3(\mathbf{x}_0,\mathbf{y}_0;t,t)|=M_0(n,t),\nonumber\\
|D_3(\mathbf{x},\mathbf{y};t,t)|=M_1(n,t).\nonumber
\end{align}
To compute the values of $|D_3(\mathbf{x}_0,\mathbf{y}_0;t,t)|$ and $|D_3(\mathbf{x},\mathbf{y};t,t)|$, and to ensure that $d_L(\mathbf{x}_0,\mathbf{y}_0) \geqslant 2$ and $d_L(\mathbf{x},\mathbf{y}) \geqslant 2$, we will use the following lemmas. Since the cardinality of a set can be computed by splitting it into mutually disjoint sets according to the prefixes and suffixes of its sequences, Lemmas $1$ and $2$ in \cite{Gabrys} also apply to general $q$-ary sequences for $q\geqslant 2$.

\begin{lemma}
Let $n,m_1,m_2,q$ be positive integers such that $m_1+m_2\leq n$. For any $\mathbf{u}\in \mathbb{Z}_q^n$ and $q\geqslant2$,
\begin{equation}
|D_t(\mathbf{u})|=\sum\limits_{\mathbf{u}_1\in \mathbb{Z}_q^{m_1}}\sum\limits_{\mathbf{u}_2\in \mathbb{Z}_q^{m_2}}|D_t(\mathbf{u})_{\mathbf{u}_2}^{\mathbf{u}_1}|.\nonumber
\end{equation}
\label{lm1}
\end{lemma}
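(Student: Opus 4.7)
The plan is a direct partition argument on the set $D_t(\mathbf{u})$. The key idea is that every sequence in this deletion ball has length $n-t$ and therefore a unique prefix and suffix of any prescribed lengths (provided those lengths fit), so we can simply split $D_t(\mathbf{u})$ according to these two projections.

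Explicitly, I would proceed in three short steps. First, observe that each $\mathbf{v}\in D_t(\mathbf{u})$ has length $n-t$, so (in the regime $m_1,m_2\leqslant n-t$ in which the lemma is intended to be applied) it admits a well-defined prefix $\mathbf{v}_{[1,m_1]}\in\mathbb{Z}_q^{m_1}$ and suffix $\mathbf{v}_{[n-t-m_2+1,\,n-t]}\in\mathbb{Z}_q^{m_2}$. Second, define the projection $\phi:D_t(\mathbf{u})\to\mathbb{Z}_q^{m_1}\times\mathbb{Z}_q^{m_2}$ by $\phi(\mathbf{v})=(\mathbf{v}_{[1,m_1]},\mathbf{v}_{[n-t-m_2+1,n-t]})$, and note that, by the notation introduced in Section~\ref{sec2}, the fibre $\phi^{-1}(\mathbf{u}_1,\mathbf{u}_2)$ is precisely the subset $D_t(\mathbf{u})_{\mathbf{u}_2}^{\mathbf{u}_1}$ of $t$-subsequences of $\mathbf{u}$ beginning with $\mathbf{u}_1$ and ending with $\mathbf{u}_2$. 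Third, since the fibres of any function on a finite domain partition that domain, summing cardinalities over all $(\mathbf{u}_1,\mathbf{u}_2)\in\mathbb{Z}_q^{m_1}\times\mathbb{Z}_q^{m_2}$ yields the claimed identity
\[
|D_t(\mathbf{u})|=\sum_{\mathbf{u}_1\in\mathbb{Z}_q^{m_1}}\sum_{\mathbf{u}_2\in\mathbb{Z}_q^{m_2}}\bigl|\phi^{-1}(\mathbf{u}_1,\mathbf{u}_2)\bigr|=\sum_{\mathbf{u}_1\in\mathbb{Z}_q^{m_1}}\sum_{\mathbf{u}_2\in\mathbb{Z}_q^{m_2}}|D_t(\mathbf{u})_{\mathbf{u}_2}^{\mathbf{u}_1}|.
\]

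There is no substantive obstacle: the argument is essentially a one-line fibre decomposition, and the hypothesis $m_1+m_2\leqslant n$ in the statement is the natural companion bound inherited from the original sequence $\mathbf{u}\in\mathbb{Z}_q^n$. In the borderline sub-case $m_1+m_2>n-t$ (which can still occur under the stated hypothesis), the prefix and suffix overlap inside a length-$(n-t)$ sequence, so the only non-empty fibres correspond to pairs $(\mathbf{u}_1,\mathbf{u}_2)$ consistent on the overlapping positions; the partition structure, and hence the identity, remains unaffected. The same reasoning, applied with prefixes/suffixes taken directly from $\mathbf{u}$ rather than from $\mathbf{v}$, would furnish the companion Lemma~2 of \cite{Gabrys} in the $q$-ary setting.
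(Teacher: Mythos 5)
Your fibre-decomposition argument is exactly the partition by prefixes and suffixes that the paper invokes (it gives no detailed proof, merely noting that Lemma~1 of Gabrys--Yaakobi extends verbatim to the $q$-ary case by "splitting into mutually disjoint sets according to the prefixes and suffixes"), so your proposal is correct and takes essentially the same approach, just written out explicitly. Your side remark that well-definedness of the projections really requires $m_1,m_2\leqslant n-t$ (and your handling of the overlapping borderline case) is a fair observation about the looseness of the stated hypothesis, which the paper does not address either.
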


\begin{lemma}
Let $n,m_1,m_2,t,q$ be positive integers such that $m_1+m_2\leq n$, $\mathbf{u}=(u_1,...,u_n)\in \mathbb{Z}_q^n, \mathbf{u}_1\in \mathbb{Z}_q^{m_1}, \mathbf{u}_2\in \mathbb{Z}_q^{m_2}$, $q\geqslant 2$.
Assume that $k_1$ is the smallest integer such that $\mathbf{u}_1$ is a subsequence of $(u_1,...,u_{k_1})$ and $k_2$ is the largest integer where $\mathbf{u}_2$ is a subsequence of $(u_{k_2},...,u_n)$. If $k_1<k_2$, then
\begin{equation}
D_t(\mathbf{u})_{\mathbf{u}_2}^{\mathbf{u}_1}=\mathbf{u}_1\circ D_{t^{*}}(u_{k_1+1},...,u_{k_2-1})\circ \mathbf{u}_2,\nonumber
\end{equation}
where $t^*=t-(k_1-m_1)-(n-k_2+1-m_2)$. Consequently, $|D_t(\mathbf{u})_{\mathbf{u}_2}^{\mathbf{u}_1}|=|D_{t^{*}}(u_{k_1+1},...,u_{k_2-1})|$.
\label{lm2}
\end{lemma}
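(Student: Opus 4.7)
The plan is to prove Lemma~\ref{lm2} by double inclusion. The guiding insight is that the extremal parameters $k_1$ (the earliest position at which a leftmost/greedy embedding of $\mathbf{u}_1$ terminates) and $k_2$ (the latest position at which a rightmost embedding of $\mathbf{u}_2$ begins) force any realization of an element of $D_t(\mathbf{u})_{\mathbf{u}_2}^{\mathbf{u}_1}$ to confine its ``middle block'' to the index window $(k_1,k_2)$. The hypothesis $k_1<k_2$ is exactly what makes the prefix, middle, and suffix ranges compatible.

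For the inclusion $\mathbf{u}_1\circ D_{t^{*}}(u_{k_1+1},\ldots,u_{k_2-1})\circ\mathbf{u}_2\subseteq D_t(\mathbf{u})_{\mathbf{u}_2}^{\mathbf{u}_1}$, I would pick any $\mathbf{w}'\in D_{t^{*}}(u_{k_1+1},\ldots,u_{k_2-1})$ and check that $\mathbf{u}_1\circ\mathbf{w}'\circ\mathbf{u}_2$ is a $t$-subsequence of $\mathbf{u}$ by partitioning the deletions. The prefix $\mathbf{u}_1$ is obtained from $(u_1,\ldots,u_{k_1})$ by $k_1-m_1$ deletions (using the embedding guaranteed by the definition of $k_1$), the suffix $\mathbf{u}_2$ from $(u_{k_2},\ldots,u_n)$ by $n-k_2+1-m_2$ deletions, and $\mathbf{w}'$ from $(u_{k_1+1},\ldots,u_{k_2-1})$ by $t^{*}$ deletions. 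These three ranges are disjoint because $k_1<k_2$, and the three counts sum to $t$ by the very definition of $t^{*}$. The concatenation clearly begins with $\mathbf{u}_1$ and ends with $\mathbf{u}_2$.

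For the reverse inclusion, I would start with any $\mathbf{w}=(w_1,\ldots,w_{n-t})\in D_t(\mathbf{u})_{\mathbf{u}_2}^{\mathbf{u}_1}$ and fix an arbitrary embedding $1\leq i_1<i_2<\cdots<i_{n-t}\leq n$ with $w_s=u_{i_s}$. The crucial step is: the first $m_1$ entries give $(u_{i_1},\ldots,u_{i_{m_1}})=\mathbf{u}_1$, which witnesses $\mathbf{u}_1$ as a subsequence of $(u_1,\ldots,u_{i_{m_1}})$; minimality of $k_1$ then forces $i_{m_1}\geq k_1$. By the symmetric (rightmost) argument, $i_{n-t-m_2+1}\leq k_2$. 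Therefore $k_1<i_{m_1+1}<\cdots<i_{n-t-m_2}<k_2$, so the middle block $\mathbf{w}':=(w_{m_1+1},\ldots,w_{n-t-m_2})$ is a subsequence of $(u_{k_1+1},\ldots,u_{k_2-1})$. A length check gives $|\mathbf{w}'|=n-t-m_1-m_2$ and $(k_2-k_1-1)-|\mathbf{w}'|=t^{*}$, so $\mathbf{w}'\in D_{t^{*}}(u_{k_1+1},\ldots,u_{k_2-1})$, and by construction $\mathbf{w}=\mathbf{u}_1\circ\mathbf{w}'\circ\mathbf{u}_2$. The cardinality equality is immediate because the map $\mathbf{w}'\mapsto\mathbf{u}_1\circ\mathbf{w}'\circ\mathbf{u}_2$ is injective.

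The main obstacle, and the only place where the proof is not pure bookkeeping, is the greedy/leftmost argument that converts the \emph{global} extremal property defining $k_1$ (and symmetrically $k_2$) into a \emph{local} constraint on the $m_1$-th index of any valid embedding of $\mathbf{w}$. Once this step is in hand, the disjointness enforced by $k_1<k_2$ and the arithmetic of $t^{*}$ make both inclusions routine.
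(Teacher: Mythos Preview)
Your proof is correct and follows the standard double-inclusion argument. The paper itself does not prove this lemma; it simply observes that Lemmas~1 and~2 of Gabrys and Yaakobi~\cite{Gabrys} carry over verbatim from the binary to the general $q$-ary setting, so there is no paper-side argument to compare against beyond that citation. Your write-up is precisely the argument one finds in~\cite{Gabrys}: the minimality of $k_1$ forces $i_{m_1}\geq k_1$, the maximality of $k_2$ forces $i_{n-t-m_2+1}\leq k_2$, and the disjointness of the three index ranges (ensured by $k_1<k_2$) makes the deletion-count arithmetic with $t^{*}$ go through.
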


We first establish that $d_L(\mathbf{x}_0,\mathbf{y}_0)=2$ and $d_L(\mathbf{x},\mathbf{y})=2$.

\begin{lemma}
Let $n\geq 6$. Define sequences $\mathbf{x}_0=(0,1,2,\mathbf{a}_{n-5},c,b)$, $\mathbf{y}_0=(1,0,2,\mathbf{a}_{n-3})=(1,0,2,\mathbf{a}_{n-5},b,c)$ (where the last two elements of $\mathbf{a}_{n-3}$ is $b$ and $c$), $\mathbf{x}=(0,1,2,0,1,2,\mathbf{a}_{n-6})$, and  $\mathbf{y}=(1,0,2,1,0,2,\mathbf{a}_{n-6})$. Then
$d_L(\mathbf{x}_0,\mathbf{y}_0)=2$ and 
$d_L(\mathbf{x},\mathbf{y})=2$.
\label{lm3}
\end{lemma}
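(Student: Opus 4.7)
The plan is to establish each of the two equalities by splitting into an upper bound and a lower bound: for each pair, exhibit an explicit common $(n-2)$-length subsequence (giving $d_L\leqslant 2$), and exclude every common $(n-1)$-length subsequence (giving $d_L\geqslant 2$). Since $\mathbf{x}_0$ and $\mathbf{x}$ have the same length as $\mathbf{y}_0$ and $\mathbf{y}$ respectively, by the definition of $d_L$ in the preliminaries these two facts together force $d_L=2$.

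For the upper bounds, the witnesses I have in mind are: $(1,2,\mathbf{a}_{n-5},b)\in D_2(\mathbf{x}_0)\cap D_2(\mathbf{y}_0)$, obtained from $\mathbf{x}_0$ by deleting the leading $0$ at position $1$ together with the $c$ at position $n-1$, and from $\mathbf{y}_0$ by deleting the $0$ at position $2$ together with the $c$ at position $n$; and $(0,2,0,2,\mathbf{a}_{n-6})\in D_2(\mathbf{x})\cap D_2(\mathbf{y})$, obtained from $\mathbf{x}$ by deleting the two $1$'s at positions $2$ and $5$, and from $\mathbf{y}$ by deleting the two $1$'s at positions $1$ and $4$. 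These verifications are immediate.

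For the lower bounds, assume for contradiction that some common $(n-1)$-length subsequence $\mathbf{z}$ exists, necessarily realized by deleting a single position $i$ from the first sequence and a single position $j$ from the second. For $(\mathbf{x}_0,\mathbf{y}_0)$, I first note via the three residues $n\bmod 3$ that the last two entries $b,c$ of $\mathbf{a}_{n-3}$ are always distinct (the three possibilities are $(b,c)\in\{(1,2),(2,0),(0,1)\}$). Since $\mathbf{x}_0$ and $\mathbf{y}_0$ agree on positions $3,\ldots,n-2$ and differ only in the initial pattern ($01$ versus $10$) and the final pattern ($cb$ versus $bc$), matching the first entry of $\mathbf{z}$ forces exactly one of $i,j$ to equal $1$ (else the first entries are $0$ and $1$), and matching the last entry, using $b\neq c$, forces exactly one of $i,j$ to equal $n$. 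Because $n\geqslant 6$, this narrows the options to $(i,j)=(1,n)$ and $(i,j)=(n,1)$; in each, a direct comparison at the second position produces the contradiction $\mathbf{x}_0[3]=2\neq 0=\mathbf{y}_0[2]$ (or the symmetric one).

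For $(\mathbf{x},\mathbf{y})$, I divide into four sub-cases according to whether each of $i,j$ lies in the prefix block $\{1,\ldots,6\}$ or the shared $\mathbf{a}_{n-6}$-suffix block $\{7,\ldots,n\}$. If $i,j>6$, the length-$6$ prefixes $(0,1,2,0,1,2)$ and $(1,0,2,1,0,2)$ survive intact and disagree at position $1$. If exactly one of $i,j$ exceeds $6$, then the sixth entry of $\mathbf{z}$ is simultaneously required to equal $\mathbf{x}[7]=0$ on one side and $\mathbf{y}[6]=2$ on the other (or vice versa), again a contradiction. The remaining sub-case $i,j\leqslant 6$ requires enumerating the six single-deletions of $(0,1,2,0,1,2)$, namely $(1,2,0,1,2)$, $(0,2,0,1,2)$, $(0,1,0,1,2)$, $(0,1,2,1,2)$, $(0,1,2,0,2)$, $(0,1,2,0,1)$, and the six single-deletions of $(1,0,2,1,0,2)$, namely $(0,2,1,0,2)$, $(1,2,1,0,2)$, $(1,0,1,0,2)$, $(1,0,2,0,2)$, $(1,0,2,1,2)$, $(1,0,2,1,0)$, then observing (by splitting on the first symbol: only the first member of each list starts with the minority symbol) that the two lists are disjoint. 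The main obstacle is this finite case analysis, especially the disjointness of the two six-element lists; it is elementary once organized by first-symbol value, while the $(\mathbf{x}_0,\mathbf{y}_0)$ analysis is lighter because its boundary asymmetries force $\{i,j\}=\{1,n\}$ almost immediately.
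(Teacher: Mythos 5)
Your proof is correct, and for the lower bounds it takes a genuinely different (and more explicit) route than the paper. The paper establishes $d_L\geqslant 2$ by setting $\mathcal{X}=D_1(\mathbf{x})\cap D_1(\mathbf{y})$, invoking Lemmas~\ref{lm1} and~\ref{lm2} to decompose $\mathcal{X}$ by the first symbol of the output into $\mathcal{X}^0,\mathcal{X}^1,\mathcal{X}^2$, and asserting that each piece is empty; it then disposes of the pair $(\mathbf{x}_0,\mathbf{y}_0)$ with ``similarly,'' so the details are largely left to the reader. You instead track the deleted positions directly: for $(\mathbf{x}_0,\mathbf{y}_0)$ the boundary asymmetries (first symbols $0$ versus $1$, last symbols $b$ versus $c$ with $b\neq c$) force $\{i,j\}=\{1,n\}$, after which the second coordinate kills both orderings; for $(\mathbf{x},\mathbf{y})$ you split on whether the deletions land in the length-$6$ prefix or the shared suffix and finish with a finite disjointness check of the two six-element deletion lists. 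Your argument is elementary, self-contained, and fully written out for both pairs, which is exactly what the paper's sketch omits; the paper's decomposition approach has the advantage of reusing the machinery (Lemmas~\ref{lm1}--\ref{lm2}) that drives all the later counting arguments, so it is more uniform with the rest of the text but no shorter here. Your upper-bound witness for $(\mathbf{x},\mathbf{y})$ is the same sequence $(0,2,0,2,\mathbf{a}_{n-6})$ the paper uses (your deletion indices, positions $2$ and $5$ of $\mathbf{x}$ and $1$ and $4$ of $\mathbf{y}$, are the correct ones), and you additionally supply an explicit witness for $(\mathbf{x}_0,\mathbf{y}_0)$, which the paper does not. One cosmetic remark: in the mixed case for $(\mathbf{x},\mathbf{y})$ you appeal to $x_7$, which exists only when $n\geqslant 7$; for $n=6$ that case is vacuous, so nothing breaks, but it is worth a clause saying so.
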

\begin{proof}
First, we prove $d_L(\mathbf{x},\mathbf{y})=2.$ Let $\mathbf{x}=(x_1,x_2,...,x_n)$ and $\mathbf{y}=(y_1,y_2,...,y_n)$. The subsequence $(0,2,0,2,\mathbf{a}_{n-6})$ is obtained by deleting the second and fourth elements from $\mathbf{x}$, or by deleting the first and third elements from $\mathbf{y}$. Thus, $d_L(\mathbf{x},\mathbf{y})\leqslant 2$.

Let $\mathcal{X}=D_1(\mathbf{x})\cap D_1(\mathbf{y})$. By Lemmas \ref{lm1} and \ref{lm2}, we have $\mathcal{X}=\bigcup\limits_{i=0}^{2}\mathcal{X}^{i}$, and can prove that  $\mathcal{X}=\emptyset$ because of $\mathcal{X}^{0}=\mathcal{X}^{1}=\mathcal{X}^{2}=\emptyset$. Thus, $d_L(\mathbf{x},\mathbf{y})=2$. Similarly, $d_L(\mathbf{x}_0,\mathbf{y}_0)=2.$

\end{proof}

\begin{lemma}
For $n\geqslant 4$ and $n\geqslant t+1$,
\begin{equation}
|D_t(0,2,\mathbf{a}_{n-2})|=D_3(n-1,t)+D_3(n-4,t-3)+D_3(n-2,t-1).\nonumber
\end{equation}
\label{lm4}
\end{lemma}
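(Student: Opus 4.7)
The plan is to apply Lemma~\ref{lm1} to decompose $D_t(\mathbf{u})$ with $\mathbf{u}=(0,2,\mathbf{a}_{n-2})=(u_1,\ldots,u_n)$ by the first symbol of each subsequence. Here $u_1=0$, $u_2=2$, $u_3=0$, $u_4=1$, and in general $(u_3,u_4,u_5,\ldots)=(0,1,2,0,1,2,\ldots)$. Taking a prefix of length $m_1=1$ and an empty suffix constraint, Lemma~\ref{lm1} gives
\begin{equation*}
|D_t(\mathbf{u})|=|D_t(\mathbf{u})^{(0)}|+|D_t(\mathbf{u})^{(1)}|+|D_t(\mathbf{u})^{(2)}|.
\end{equation*}

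For each $a\in\{0,1,2\}$ I would read off the smallest index $k_1(a)$ such that $(a)$ is a subsequence of $(u_1,\ldots,u_{k_1(a)})$: direct inspection yields $k_1(0)=1$, $k_1(2)=2$, and $k_1(1)=4$, since the first $1$ in $\mathbf{u}$ sits at position $4$. Lemma~\ref{lm2} then identifies each block with a deletion ball on the suffix $(u_{k_1(a)+1},\ldots,u_n)$ of adjusted radius $t-(k_1(a)-1)$, giving
\begin{align*}
|D_t(\mathbf{u})^{(0)}|&=|D_t(u_2,\ldots,u_n)|,\\
|D_t(\mathbf{u})^{(2)}|&=|D_{t-1}(u_3,\ldots,u_n)|,\\
|D_t(\mathbf{u})^{(1)}|&=|D_{t-3}(u_5,\ldots,u_n)|.
\end{align*}

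The key observation is that each of the three remaining suffix sequences is a cyclic shift of the $3$-ary periodic pattern and therefore lies in $\mathbf{c}_3(\cdot)$: explicitly, $(u_2,\ldots,u_n)=(2,0,1,2,0,\ldots)\in\mathbf{c}_3(n-1)$, $(u_3,\ldots,u_n)=\mathbf{a}_{n-2}\in\mathbf{c}_3(n-2)$, and $(u_5,\ldots,u_n)=(2,0,1,2,0,\ldots)\in\mathbf{c}_3(n-4)$. By Eq.~(\ref{eq2}) their deletion balls have cardinalities $D_3(n-1,t)$, $D_3(n-2,t-1)$, and $D_3(n-4,t-3)$ respectively, and summing the three contributions recovers the stated identity. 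I do not expect a genuine obstacle; the only care needed is in the boundary behavior when $t<3$ or $t<1$ (making the $a=1$ or $a=2$ term vanish), which is transparently handled by the convention $D_3(n,t)=0$ for $t<0$, while the hypothesis $n\geqslant t+1$ keeps the main $a=0$ contribution well-defined.
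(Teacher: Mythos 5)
Your proposal is correct and follows essentially the same route as the paper: decompose $D_t(0,2,\mathbf{a}_{n-2})$ by the first symbol via Lemmas~\ref{lm1} and~\ref{lm2}, note that each resulting suffix $(2,0,1,\ldots)$, $\mathbf{a}_{n-2}$, $(2,0,1,\ldots)$ is a periodic sequence in $\mathbf{c}_3(\cdot)$, and apply Eq.~(\ref{eq2}) to get $D_3(n-1,t)+D_3(n-2,t-1)+D_3(n-4,t-3)$. Your explicit identification of the positions $k_1(0)=1$, $k_1(2)=2$, $k_1(1)=4$ and the remark on the degenerate cases $t<3$ match the paper's computation exactly.
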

\begin{proof}
Let $\mathbf{x}=(0,2,\mathbf{a}_{n-2})$. By lemmas \ref{lm1} and \ref{lm2},  $|D_t(\mathbf{x})|=|D_t(\mathbf{x})^{0}|+|D_t(\mathbf{x})^{1}|+|D_t(\mathbf{x})^{2}|$, where
$$|D_t(\mathbf{x})^{0}|=|0\circ D_{t}(2,\mathbf{a}_{n-2})|=D_3(n-1,t), $$
$$|D_t(\mathbf{x})^{1}|=|1\circ D_{t-3}(2,\mathbf{a}_{n-4})|=D_3(n-4,t-3),$$
$$|D_t(\mathbf{x})^{2}|=|2\circ D_{t-1}(\mathbf{a}_{n-2})|=D_3(n-2,t-1).$$
Thus, $|D_t(0,2,\mathbf{a}_{n-2})|=D_3(n-1,t)+D_3(n-2,t-1)+D_3(n-4,t-3)$ for $n\geqslant t+1$ and $n\geqslant 4$.
\end{proof}

\begin{lemma}
Let $n\geqslant 9$, $n\geqslant t+4$, and let $a, b, c$ be the last three elements of $\mathbf{a}_{n-6}$. Then
\begin{align*}
|D&_{t-2}(1,2,\mathbf{a}_{n-9},a,c,b)\cap D_{t}(2,0,1,2,\mathbf{a}_{n-9},a,b,c)|=\\
&=D_3(n-6,t-3)+2D_3(n-7,t-3)+D_3(n-7,t-4)+D_3(n-8,t-4)+D_3(n-12,t-7)-D_3(n-10,t-5).
\end{align*}
\label{lm5}
\end{lemma}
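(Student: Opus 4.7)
The approach is to apply Lemmas \ref{lm1} and \ref{lm2} iteratively, decomposing the common deletion set by examining successive leading symbols of the shared subsequences. Write $\mathbf{u}=(1,2,\mathbf{a}_{n-9},a,c,b)$ (of length $n-4$) and $\mathbf{v}=(2,0,1,2,\mathbf{a}_{n-9},a,b,c)$ (of length $n-2$); every element of $D_{t-2}(\mathbf{u})\cap D_{t}(\mathbf{v})$ has length $n-t-2$ and is determined by its first symbol $z_1\in\{0,1,2\}$ together with how it continues.

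First I would split by $z_1$ using Lemma \ref{lm1}. For each of the three choices, Lemma \ref{lm2} identifies the earliest occurrence of $z_1$ in $\mathbf{u}$ and in $\mathbf{v}$, namely at positions $(3,2)$, $(1,3)$, $(2,1)$ when $z_1=0,1,2$, respectively. These positions fix how many deletions are spent in each prefix and reduce the problem to $|D_{t'_1}(\mathbf{u}')\cap D_{t'_2}(\mathbf{v}')|$ on shorter tails. Since $\mathbf{u}$ begins with $(1,2,0,1,2,\ldots)$ and $\mathbf{v}$ begins with $(2,0,1,2,0,\ldots)$, the reduced pairs typically still have mismatched heads, and I would iterate the same peeling a second or third time; once the two heads finally coincide, Lemma \ref{lm2} collapses what remains into a deletion ball on a suffix of the periodic sequence $\mathbf{a}_m$, whose size is $D_3(m,s)$ by \eqref{eq2}.

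Collecting contributions, I expect the main terms $D_3(n-6,t-3)$, $2D_3(n-7,t-3)$, $D_3(n-7,t-4)$ and $D_3(n-8,t-4)$ to correspond to the branches in which the decomposition terminates after one or two peels (the factor $2$ arising from two distinct branches that both land on $\mathbf{a}_{n-7}$ with deletion budget $t-3$). The smaller term $D_3(n-12,t-7)$ should arise from the deepest branch, in which matching the long extra prefix $(2,0,1,2)$ of $\mathbf{v}$ against $\mathbf{u}$ forces peeling several extra symbols from the periodic middle before the two heads align; note that the tail twist $(a,c,b)$ versus $(a,b,c)$ must also be absorbed in this branch, which explains the drop by $12$ rather than by $8$.

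Finally, the subtracted term $-D_3(n-10,t-5)$ is an inclusion-exclusion correction: some shared subsequences admit the ``first-occurrence'' description of Lemma \ref{lm2} in more than one of the branches above and are therefore double-counted by the sum. I would pinpoint this overlap explicitly by describing the scenario in which one canonical embedding into $\mathbf{u}$ and one into $\mathbf{v}$ are simultaneously consistent with two different $z_1$-cases, and verify that the number of such ambiguous elements is exactly $D_3(n-10,t-5)$.

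The hard part will be the bookkeeping. The asymmetry of the deletion radii ($t-2$ vs.\ $t$) and of the lengths ($n-4$ vs.\ $n-2$), combined with the prefix twist $(1,2)$ vs.\ $(2,0,1,2)$ and the tail twist $(a,c,b)$ vs.\ $(a,b,c)$, produces a fairly deep case tree whose branches share common suffixes. Guaranteeing that every branch is enumerated exactly once, that the correction $-D_3(n-10,t-5)$ captures precisely the overlap, and that no additional small correction is forgotten, is the delicate step; I would cross-check the resulting identity against the recurrence \eqref{eq3} for $D_3(n,t)$ as a sanity check, and against small cases (e.g.\ $n=9,10$) by direct computation.
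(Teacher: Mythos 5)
Your opening move matches the paper's: split by the first symbol $z_1\in\{0,1,2\}$ via Lemmas \ref{lm1} and \ref{lm2}, with the first occurrences at positions $(3,2)$, $(1,3)$, $(2,1)$ exactly as you identify. But there is a genuine conceptual error in how you plan to obtain the term $-D_3(n-10,t-5)$, and it would derail the execution. The three sets $\mathcal{X}^0,\mathcal{X}^1,\mathcal{X}^2$ produced by Lemma \ref{lm1} partition the intersection according to the first symbol of the \emph{output} subsequence, so they are disjoint by construction; there is no double-counting across branches and hence no inclusion--exclusion correction to hunt for. If you went looking for "ambiguous elements counted in two $z_1$-cases," you would find none. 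In the paper the negative term arises entirely \emph{inside} single branches: the $z_1=0$ and $z_1=2$ branches each reduce to a single deletion ball (because the longer, more periodic sequence with the larger radius has a deletion ball containing the other's, e.g.\ $\mathcal{X}^2$ collapses to $|D_{t-3}(\mathbf{a}_{n-9},a,c,b)|$), and the sizes of these balls of "periodic sequence with a transposed tail" are then evaluated by a further split on the \emph{last} symbol together with Lemma \ref{lm4} and the recurrence (\ref{eq3}); the deficit of such a ball relative to the maximal $D_3(\cdot,\cdot)$ is what produces $-D_3(n-10,t-5)$ (after a $\pm D_3(n-11,t-6)$ cancellation between the two branches).

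Two further points you would need and do not have. First, your picture that "once the two heads finally coincide, Lemma \ref{lm2} collapses what remains into a deletion ball on a suffix of $\mathbf{a}_m$" is not accurate: the two sequences never coincide, since one ends in $(a,c,b)$ and the other in $(a,b,c)$; the collapse in branches $0$ and $2$ is a containment of one ball in the other, not an equality of tails, and what remains to be counted is the deletion ball of a \emph{non}-periodic sequence, which is the actual computational content of the lemma. Second, the $z_1=1$ branch gives $|D_{t-2}(2,\mathbf{a}_{n-9},a,c,b)\cap D_{t-2}(2,\mathbf{a}_{n-9},a,b,c)|$, two equal-length sequences at Levenshtein distance $1$ differing by the extremal adjacent transposition, so it is exactly $N_3(n-5,1,t-2)=2D_3(n-7,t-3)+D_3(n-8,t-4)$ by Theorem \ref{thm1}; this is the source of the factor $2$, not "two distinct branches landing on $\mathbf{a}_{n-7}$." With the inclusion--exclusion idea removed and these reductions in place, your skeleton becomes the paper's proof; as written, the plan for the correction term is based on a false premise and the bookkeeping you flag as "the hard part" would not close.
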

\begin{proof}
Let $\mathcal{X}=D_{t-2}(1,2,\mathbf{a}_{n-9},a,c,b)\cap D_{t}(2,0,1,2,\mathbf{a}_{n-9},a,b,c)$.  First, consider $n \geqslant 12$ and $n \geqslant t+5$. By Lemmas \ref{lm1} and \ref{lm2}, $|\mathcal{X}|=|\mathcal{X}^{0}|+|\mathcal{X}^{1}|+|\mathcal{X}^{2}|$,
where
\begin{align*}
|\mathcal{X}^0|=&|D_{t-4}(1,2,\mathbf{a}_{n-12},a,c,b)\cap D_{t-1}(1,2,\mathbf{a}_{n-12},a,b,c,a,b,c)|=|D_{t-4}(1,2,\mathbf{a}_{n-12},a,c,b)|,\\
|\mathcal{X}^1|=&|D_{t-2}(2,\mathbf{a}_{n-9},a,c,b)\cap D_{t-2}(2,\mathbf{a}_{n-9},a,b,c)|,\\
|\mathcal{X}^2|=&|D_{t-3}(\mathbf{a}_{n-9},a,c,b)\cap D_{t}(\mathbf{a}_{n-9},a,b,c,a,b,c)|=|D_{t-3}(\mathbf{a}_{n-9},a,c,b)|.
\end{align*}
Decomposing further: 
\begin{align*}
|\mathcal{X}^0|&=|\mathcal{X}_a^0|+|\mathcal{X}_b^0|+|\mathcal{X}_c^0|\\
&=|D_{t-6}(1,2,\mathbf{a}_{n-12})|+|D_{t-4}(1,2,\mathbf{a}_{n-12},a,c)|+|D_{t-5}(1,2,\mathbf{a}_{n-12},a)|\\
&\overset{(a)}{=}D_3(n-10,t-6)+D_3(n-9,t-4)+D_3(n-10,t-5)+D_3(n-12,t-7)+D_3(n-9,t-5)\\
&\overset{(b)}{=}D_3(n-7,t-4)+D_3(n-12,t-7)-D_3(n-11,t-6),
\end{align*}
where $(a)$ follows from Lemma $\ref{lm4}$, and $(b)$ follows from Eq. $(\ref{eq3})$ since $n\geqslant t+5$. Moreover,
\begin{align*}
|\mathcal{X}^1|=N_3(n-5,1,t-2)=2D_3(n-7,t-3)+D_3(n-8,t-4),    
\end{align*}
from Eq. $(\ref{eq5})$ under the condition $n\geqslant t+4$.
Similarly, using the decomposition method for $\mathcal{X}^0$, for $n\geqslant t+5$, we have
\begin{align*}
|\mathcal{X}^2|=D_3(n-6,t-3)+D_3(n-11,t-6)-D_3(n-10,t-5).
\end{align*}

When $n\geqslant 12$ and $n=t+4$, then $|D_{t-4}(1,2,\mathbf{a}_{n-12},a,c)|=1=D_3(n-8,t-4)$ and $|D_{t-3}(\mathbf{a}_{n-9},a,c)|=1=D_3(n-7,t-3)$. When $n=9, 10,$ or $11$ and $n\geqslant t+4$, it can be verified that $|\mathcal{X}^0|=D_3(n-7,t-4)-D_3(n-11,t-6)$ and $|D_{t-3}(\mathbf{a}_{n-9},a,c,b)|=D_3(n-6,t-3)+D_3(n-11,t-6)-D_3(n-10,t-5)$. Hence, cases $n=t+4$ and $n\geqslant 9$ yield identical expressions after boundary verification.

Thus, when $n\geqslant 9$ and $n\geqslant t+4$, 
\begin{align*}
|\mathcal{X}|=D_3(n-6,t-3)+2D_3(n-7,t-3)+D_3(n-7,t-4)+D_3(n-8,t-4)+D_3(n-12,t-7)-D_3(n-10,t-5).
\end{align*}
\end{proof}

By Lemmas $\ref{lm1}$-$\ref{lm5}$, we have the following theorem which provides  $|D_t(\mathbf{x}_0)\cap D_t(\mathbf{y}_0)|$ and $|D_t(\mathbf{x})\cap D_t(\mathbf{y})|$ for $t\geq 2$ and $n\geqslant \max\{9,\lfloor\frac{3t}{2}\rfloor+1\}$.

\begin{theorem}
For $t\geqslant 2$ and $n\geqslant \max\{6,\lfloor\frac{3t}{2}\rfloor+1\}$,
\begin{equation}
N_3(n,2,t)\geqslant \max\{M_0(n,t),M_1(n,t)\}.\nonumber
\end{equation}
\label{thm5}
\end{theorem}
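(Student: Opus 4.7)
The plan is to exhibit two explicit pairs of sequences at Levenshtein distance exactly $2$ whose radius-$t$ deletion-ball intersections realise $M_1(n,t)$ and $M_0(n,t)$ respectively, so that the lower bound follows directly from the definition of $N_3(n,2,t)$. The pairs $(\mathbf{x},\mathbf{y})$ and $(\mathbf{x}_0,\mathbf{y}_0)$ have been constructed in the run-up to the statement, and Lemma~\ref{lm3} already guarantees $d_L(\mathbf{x},\mathbf{y}) = d_L(\mathbf{x}_0,\mathbf{y}_0) = 2$. What remains is to verify the two cardinality equalities $|D_t(\mathbf{x})\cap D_t(\mathbf{y})| = M_1(n,t)$ and $|D_t(\mathbf{x}_0)\cap D_t(\mathbf{y}_0)| = M_0(n,t)$.

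My approach for each equality is a prefix/suffix decomposition driven by Lemmas~\ref{lm1} and~\ref{lm2}. I partition each intersection by the leading symbol $\alpha \in \{0,1,2\}$ of the common subsequence. On each branch Lemma~\ref{lm2} replaces the intersection by the intersection of the deletion balls of the two interior segments of $\mathbf{x}$ and $\mathbf{y}$ obtained after consuming a shortest prefix that produces $\alpha$ on each side, with the radius reduced by the number of symbols thus consumed. Because $\mathbf{x}$ and $\mathbf{y}$ share the common suffix $\mathbf{a}_{n-6}$ (and $\mathbf{x}_0,\mathbf{y}_0$ agree on $\mathbf{a}_{n-5}$ up to the last two positions), in most branches the two interior centres coincide, so the intersection collapses to a single deletion ball whose size is given by the $D_3(\cdot,\cdot)$ formulas (\ref{eq2})--(\ref{eq4}) or by Lemma~\ref{lm4}. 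A few branches have centres that differ only in position one, and there I invoke Theorem~\ref{thm1}, i.e.\ Eq.~(\ref{eq5}). For $(\mathbf{x},\mathbf{y})$ the three resulting contributions, after simplification by the recurrence~(\ref{eq3}), will sum to $M_1(n,t)$. For $(\mathbf{x}_0,\mathbf{y}_0)$ the $\alpha=0$ and $\alpha=1$ branches reduce in the same way to terms of the form $D_3(n-5,t-2)$, $D_3(n-6,t-3)$, etc., while the delicate $\alpha=2$ branch produces exactly the intersection evaluated in Lemma~\ref{lm5}; combining the three branches and tidying by~(\ref{eq3}) yields $M_0(n,t)$.

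The main obstacle is careful combinatorial bookkeeping rather than any single hard step. Concretely, I must (i) verify that the smallest-prefix/largest-suffix condition $k_1 < k_2$ of Lemma~\ref{lm2} is satisfied on every branch, which is precisely why the hypothesis $n \geqslant \max\{9,\lfloor 3t/2\rfloor+1\}$ is imposed (so that the interior centres are long enough for all nested decompositions); (ii) in the $\alpha=2$ branch for $(\mathbf{x}_0,\mathbf{y}_0)$, carry through the further first-and-last symbol decomposition inside Lemma~\ref{lm5}, whose negative term $-D_3(n-10,t-5)$ reflects a secondary inclusion-exclusion correction; and (iii) check the boundary regimes (for instance $n=t+4$, or $n$ in the range $9\leq n \leq 11$) where several $D_3$ terms collapse to $1$ or $0$ and must be reconciled with the generic formulas. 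The very small range $n \leqslant 9$ has already been dispatched by the computer search mentioned in the excerpt, so the analytic argument only has to cover $n \geqslant \max\{9,\lfloor 3t/2\rfloor+1\}$. Once the two equalities are established, the conclusion $N_3(n,2,t) \geqslant \max\{M_0(n,t),M_1(n,t)\}$ is immediate from the definition of $N_3(n,2,t)$.
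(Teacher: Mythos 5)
Your proposal follows essentially the same route as the paper: exhibit the pairs $(\mathbf{x},\mathbf{y})$ and $(\mathbf{x}_0,\mathbf{y}_0)$, use Lemma~\ref{lm3} for the distance, decompose the intersections by leading symbol via Lemmas~\ref{lm1} and~\ref{lm2}, evaluate the branches with Lemma~\ref{lm4}, Theorem~\ref{thm1}, and Lemma~\ref{lm5}, and simplify with the recurrence~(\ref{eq3}), with small $n$ handled by computer search. The only slip is cosmetic: in the paper the term governed by Lemma~\ref{lm5} arises inside the $\alpha=1$ branch of $(\mathbf{x}_0,\mathbf{y}_0)$ (the set $\mathcal{Y}^{(1,0)}$), not the $\alpha=2$ branch, which instead reduces to $N_3(n-3,1,t-2)$.
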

\begin{proof}
When $9\geqslant n\geqslant \max\{6,\lfloor\frac{3t}{2}\rfloor+1\}$, we can verify that $N_3(n,2,t)=M_1(n,t)$ by using a computerized search. If $n\geqslant \max\{9,\lfloor\frac{3t}{2}\rfloor+1\}$ and  $t\geqslant 2$, then we have  $n\geqslant t+4$.  Let $\mathbf{x}=(0,1,2,0,1,2,\mathbf{a}_{n-6})$, $\mathbf{y}=(1,0,2,1,0,2,\mathbf{a}_{n-6})$, and $\mathcal{X}=D_t(\mathbf{x})\cap D_t(\mathbf{y})$. By Lemmas $\ref{lm1}$ and $\ref{lm2}$, we have $|\mathcal{X}|=|\mathcal{X}^{0}|+|\mathcal{X}^{1}|+|\mathcal{X}^{2}|$,
where
\begin{align*}
|\mathcal{X}^0|=&|D_{t}(1,2,0,1,2,\mathbf{a}_{n-6})\cap D_{t-1}(2,1,0,2,\mathbf{a}_{n-6})|,\\
|\mathcal{X}^1|=&|D_{t-1}(2,0,1,2,\mathbf{a}_{n-6})\cap D_{t}(0,2,1,0,2,\mathbf{a}_{n-6})|,\\
|\mathcal{X}^2|=&|D_{t-2}(0,1,2,\mathbf{a}_{n-6})\cap D_{t-2}(1,0,2,\mathbf{a}_{n-6})|.
\end{align*}

Decomposing $\mathcal{X}^0$, we have  $|\mathcal{X}^0|=|\mathcal{X}^{(0,0)}|+|\mathcal{X}^{(0,1)}|+|\mathcal{X}^{(0,2)}|$ with
\begin{align*}
|\mathcal{X}^{(0,0)}|=&|D_{t-2}(1,2,\mathbf{a}_{n-6})\cap D_{t-3}(2,\mathbf{a}_{n-6})|=|D_{t-3}(2,\mathbf{a}_{n-6})|\overset{(a)}{=}D_3(n-5,t-3),\\
|\mathcal{X}^{(0,1)}|=&|D_{t}(2,0,1,2,\mathbf{a}_{n-6})\cap D_{t-2}(0,2,\mathbf{a}_{n-6})|=|D_{t-2}(0,2,\mathbf{a}_{n-6})|\overset{(b)}{=}
D_3(n-5,t-2)+D_3(n-6,t-3)+D_3(n-8,t-5),\\
|\mathcal{X}^{(0,2)}|=&|D_{t-1}(0,1,2,\mathbf{a}_{n-6})\cap D_{t-1}(1,0,2,\mathbf{a}_{n-6})|\overset{(c)}{=}N_3(n-3,1,t-1)=2D_3(n-5,t-2)+D_3(n-6,t-3),
\end{align*}
where $(a)$ follows from Eq. $(\ref{eq2})$, $(b)$ follows by Lemma $\ref{lm4}$, $(c)$ follows from Theorem $\ref{thm1}$ since $n\geqslant t+3$. Thus,
\begin{align}
|\mathcal{X}^0|=3D_3(n-5,t-2)+D_3(n-5,t-3)+2D_3(n-6,t-3)+D_3(n-8,t-5).\nonumber
\end{align}

Similarly, using the decomposition method for $\mathcal{X}^0$, it follows that 
\begin{align}
|\mathcal{X}^1|&=D_3(n-4,t-2)+2D_3(n-5,t-2)+D_3(n-5,t-3)+D_3(n-6,t-3),\nonumber
\end{align}
for $n\geqslant t+3$.

By Theorem $\ref{thm1}$, for $n\geqslant t+3$, we have
\begin{equation}
|\mathcal{X}^2|=|D_{t-2}(0,1,2,\mathbf{a}_{n-6})\cap D_{t-2}(1,0,2,\mathbf{a}_{n-6})|=N_3(n-3,1,t-2)=2D_3(n-5,t-3)+D_3(n-6,t-4).\nonumber
\end{equation}

Thus, for any $t\geqslant 2$, and $n\geqslant \max\{9,\lfloor\frac{3n}{2}\rfloor+1\}$, 
\begin{align}
|\mathcal{X}|&=|\mathcal{X}^{0}|+|\mathcal{X}^{1}|+|\mathcal{X}^{2}|\nonumber\\
&=D_3(n-4,t-2)+5D_3(n-5,t-2)+4D_3(n-5,t-3)+3D_3(n-6,t-3)+D_3(n-6,t-4)+D_3(n-8,t-5)\nonumber\\
&=M_1(n,t).\nonumber
\end{align}

Similarly, let $\mathbf{x}_0=(0,1,2,\mathbf{a}_{n-6},a,c,b)=(0,1,2,0,1,2,\mathbf{a}_{n-9},a,c,b)$,  $\mathbf{y}_0=(1,0,2,\mathbf{a}_{n-6},a,b,c)=(1,0,2,0,1,2,\mathbf{a}_{n-9},\\a,b,c)$, $\mathcal{Y}=D_t(\mathbf{x}_0)\cap D_t(\mathbf{y}_0)$, and let $a, b, c$ be the last three elements of $\mathbf{a}_{n-3}$ such that $\{a,b,c\}=\mathbb{Z}_3$. By Lemmas $\ref{lm1}$ and $\ref{lm2}$, we have $|\mathcal{Y}|=|\mathcal{Y}^{(0,0)}|+
|\mathcal{Y}^{(0,1)}|+|\mathcal{Y}^{(0,2)}|+|\mathcal{Y}^{(1,0)}|+
|\mathcal{Y}^{(1,1)}|+|\mathcal{Y}^{(1,2)}|+|\mathcal{Y}^{2}|$ with
\begin{align*}
|\mathcal{Y}^{(0,0)}|=&|D_{t-2}(1,2,\mathbf{a}_{n-9},a,c,b)\cap D_{t-2}(1,2,\mathbf{a}_{n-9},a,b,c)|=N_3(n-4,1,t-2)=2D_3(n-6,t-3)+D_3(n-7,t-4),\\
|\mathcal{Y}^{(0,1)}|=&|D_{t}(2,\mathbf{a}_{n-6},a,c,b)\cap D_{t-3}(2,\mathbf{a}_{n-9},a,b,c)|=|D_{t-3}(2,\mathbf{a}_{n-6})|=D_3(n-5,t-3),\\
|\mathcal{Y}^{(0,2)}|=&|D_{t-1}(\mathbf{a}_{n-6},a,c,b)\cap D_{t-1}(\mathbf{a}_{n-6},a,b,c)|=N_3(n-3,1,t-1)=2D_3(n-5,t-2)+D_3(n-6,t-3),\\
|\mathcal{Y}^{(1,0)}|\overset{(a)}{=}&|D_{t-2}(1,2,\mathbf{a}_{n-9},a,c,b)\cap D_{t}(2,0,1,2,\mathbf{a}_{n-9},a,b,c)|=D_3(n-6,t-3)+2D_3(n-7,t-3)+D_3(n-7,t-4)\\
&+D_3(n-8,t-4)+D_3(n-12,t-7)-D_3(n-10,t-5),\\
|\mathcal{Y}^{(1,1)}|=&|D_{t-3}(2,\mathbf{a}_{n-9},a,c,b)\cap D_{t-3}(2,\mathbf{a}_{n-9},a,b,c)|=N_3(n-5,1,t-3)=2D_3(n-7,t-4)+D_3(n-8,t-5),\\
|\mathcal{Y}^{(1,2)}|=&|D_{t-1}(\mathbf{a}_{n-6},a,b,c)\cap D_{t-1}(\mathbf{a}_{n-6},a,c,b)|=N_3(n-3,1,t-1)=2D_3(n-5,t-2)+D_3(n-6,t-3),\\
|\mathcal{Y}^2|=&|D_{t-2}(\mathbf{a}_{n-6},a,b,c)\cap D_{t-2}(\mathbf{a}_{n-6},a,c,b)|=N_3(n-3,1,t-2)=2D_3(n-5,t-3)+D_3(n-6,t-4),
\end{align*}
where $(a)$ follows from Lemma $\ref{lm5}$ under the condition $n\geqslant \max\{9,t+4\}$.

Thus, for any $t\geqslant 2$, and $n\geqslant \max\{9,\lfloor\frac{3n}{2}\rfloor+1\}$, 
\begin{align}
|\mathcal{Y}|&=|\mathcal{Y}^{(0,0)}|+|\mathcal{Y}^{(0,1)}|+|\mathcal{Y}^{(0,2)}|+|\mathcal{Y}^{(1,0)}|+
|\mathcal{Y}^{(1,1)}|+|\mathcal{Y}^{(1,2)}|+|\mathcal{Y}^{2}|\nonumber\\
&=D_3(n-4,t-2)+3D_3(n-5,t-2)+4D_3(n-5,t-3)+3D_3(n-6,t-3)+D_3(n-6,t-4)+\nonumber\\
&~~~+2D_3(n-7,t-3)+2D_3(n-7,t-4)+D_3(n-8,t-4)+D_3(n-12,t-7)-D_3(n-10,t-5)\nonumber\\
&=M_0(n,t).\nonumber
\end{align}
So, the theorem follows.
\end{proof}

\section{The upper bound}
\label{sec4}
In this section, we prove that the lower bound for $N_3(n,2,t)$ is also an upper bound for $n \geqslant \max\{9,\lfloor\frac{3t}{2}\rfloor+1\}$ and $t\geqslant2$ since $N_3(n,2,t)=M_1(n,t)$ for  $9\geqslant n\geqslant \max\{6,\lfloor\frac{3t}{2}\rfloor+1\}$ and $t\geqslant 2$. We analyze different cases for $\mathbf{x},\mathbf{y}\in \mathbb{Z}_3^n$ and establish the result for all scenarios. To prove the upper bound, we require the following lemmas and identities for $D_3(n,t)$:
\begin{equation}
D_3(n,t)\leqslant D_3(n+1,t+1),\label{eq6}
\end{equation}
and
\begin{equation}
D_3(n,t)\leqslant D_3(n+1,t).\label{eq7}
\end{equation}

The following corollary is a consequence of  Eqs. $(\ref{eq3})$, $(\ref{eq6})$, and $(\ref{eq7})$.
\begin{corollary}
For $n\geqslant t+1$ and $i\in \{0,1\}$, 
\begin{align}
M_i(n,t)&=M_i(n-1,t)+M_i(n-2,t-1)+M_i(n-3,t-2),\label{eq8}\\
D_3(n,t)&\leqslant 3D_3(n-1,t).\label{eq9}
\end{align}
\label{cor1}
\end{corollary}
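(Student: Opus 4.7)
The plan is to derive both claims mechanically from the three inputs cited just above: the recurrence (3) and the monotonicities (6) and (7). The overarching observation is that $M_i(n,t)$ is a fixed $\mathbb{Z}_{\geq 0}$-linear combination $\sum_k c_k^{(i)}\, D_3(n-a_k,t-b_k)$ in which the offsets $(a_k,b_k)$ and the coefficients $c_k^{(i)}$ do not depend on $n$ or $t$, so the operations $(n,t)\mapsto(n-1,t)$, $(n-2,t-1)$, and $(n-3,t-2)$ commute with evaluating $M_i$.

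For the identity (8), I would first use that linearity to rewrite
\[
M_i(n-1,t)+M_i(n-2,t-1)+M_i(n-3,t-2)=\sum_k c_k^{(i)}\bigl[D_3(n-a_k-1,t-b_k)+D_3(n-a_k-2,t-b_k-1)+D_3(n-a_k-3,t-b_k-2)\bigr].
\]
Each bracket is exactly the right-hand side of recurrence (3) applied to $D_3(n-a_k,t-b_k)$, so each bracket collapses to $D_3(n-a_k,t-b_k)$ whenever $n-a_k\geq (t-b_k)+1$, and summing over $k$ recovers $M_i(n,t)$. For the inequality (9), I would apply (3) to split $D_3(n,t)$ into three summands and then use (6) twice to obtain the chain $D_3(n-3,t-2)\leq D_3(n-2,t-1)\leq D_3(n-1,t)$; bounding each of the three summands by $D_3(n-1,t)$ then yields the factor of three immediately. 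Eq.~(7) is not strictly needed here but provides an alternative monotonicity that makes the chain robust.

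The main obstacle is purely a boundary issue for (8): the recurrence (3) requires $n-a_k\geq (t-b_k)+1$, and the terms in $M_0,M_1$ have offsets as large as $(12,7)$ in $M_0$ and $(8,5)$ in $M_1$. For $n\geq t+4$ — the regime in which the corollary is actually used in Section~\ref{sec4} — every term satisfies the hypothesis and the term-by-term argument goes through without comment. For the narrow range $t+1\leq n<t+4$, many of the $D_3$ arguments have $m<s$ and vanish by the stated convention, so I would verify the few surviving identities by direct inspection using $D_3(m,s)=0$ when $s>m$ and $D_3(n,n)=1$ for $n\geq 1$; in each sub-case the claim reduces to a trivial equality of small integers.
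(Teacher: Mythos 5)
Your overall strategy --- termwise application of the recurrence~(\ref{eq3}) for the identity~(\ref{eq8}), and the chain $D_3(n-3,t-2)\leqslant D_3(n-2,t-1)\leqslant D_3(n-1,t)$ from~(\ref{eq6}) for the inequality~(\ref{eq9}) --- is exactly what the paper intends (it offers no proof beyond citing Eqs.~(\ref{eq3}), (\ref{eq6}), (\ref{eq7})), and your treatment of~(\ref{eq9}) is complete. The gap lies in your boundary analysis of~(\ref{eq8}). First, the claim that $n\geqslant t+4$ legitimises every term is true for $M_1$, whose offsets $(a_k,b_k)$ all satisfy $a_k-b_k\leqslant 3$, but false for $M_0$: the terms $2D_3(n-7,t-3)$ and $D_3(n-8,t-4)$ require $n\geqslant t+5$, and $D_3(n-12,t-7)$, $-D_3(n-10,t-5)$ require $n\geqslant t+6$ (at $n=t+5$ these last two fail by $+1$ and $-1$ respectively and cancel, but at $n=t+4$ the first two do not cancel anything).

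Second, and more seriously, your plan to ``verify the few surviving identities by direct inspection'' on $t+1\leqslant n<t+4$ cannot succeed, because the identity is genuinely false there. For example, $M_1(8,5)=26$ while $M_1(7,5)+M_1(6,4)+M_1(5,3)=6+6+5=17$; and for $i=0$ at $n=t+4$ one has $M_0(10,6)=74$ (the value the paper itself reports for $N_3(10,2,6)$) while $M_0(9,6)+M_0(8,5)+M_0(7,4)=26+25+20=71$. So the hypothesis $n\geqslant t+1$ in the corollary is too weak as stated; the identity~(\ref{eq8}) holds only for $n\geqslant t+4$ when $i=1$ and $n\geqslant t+5$ when $i=0$, and a correct proof must restrict to these ranges rather than assert trivial equalities at the boundary. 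This does no harm downstream, since the paper only ever invokes~(\ref{eq8}) for $M_1$ in the regime $n\geqslant\max\{9,\lfloor 3t/2\rfloor+1\}\geqslant t+4$, but your write-up should say so explicitly. A final small point: the termwise collapse at $m=s+1$ silently needs $D_3(0,0)=1$, a value the paper's stated conventions leave undefined, so you should fix that convention before invoking it.
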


Let $\mathbf{x}=(x_1,..,x_{n+k})\in \mathbb{Z}_3^{n+k}$, $\mathbf{y}=(y_1,...,y_n)\in \mathbb{Z}_3^n$, and let $\mathcal{X}=D_{k+t}(\mathbf{x})\cap D_t(\mathbf{y})$ ($n \geqslant t \geqslant 1$, $k \geqslant 0$), Lemmas \ref{lm1} and \ref{lm2} imply that for $a \in \mathbb{Z}_3$,
$$|\mathcal{X}^a|=|a\circ \big(D_{t+k-\ell}(x_{2+\ell},...,x_{n+k})\cap D_{t-\ell^*}(y_{2+\ell^*},...,y_n)\big)|=|D_{t+k-\ell}(x_{2+\ell},...,x_{n+k})\cap D_{t-\ell^*}(y_{2+\ell^*},...,y_n)|,$$
where the subscripts $\ell,\ell^*$ are some non-negative integers such that $x_{1+\ell}$ and $y_{1+\ell^*}$ are the first occurrence of the symbol $a$ in $\mathbf{x}$ and $\mathbf{y}$, respectively. Unless otherwise stated, $\ell$ and $\ell^*$ are defined this way during the decomposition of $\mathcal{X}$, with $\ell, \ell^* \geqslant 0$ for simplicity. First, we give some results of $N_3(n,t+1,t,1)$ for some $n$ and $t$ in Lemmas $\ref{lm6}$ and $\ref{lm7}$ which will be used later. The proofs of these two lemmas
is given in Appendix $\ref{APP-A}$.

\begin{lemma}
\label{lm6}
For $n\geqslant 4$, we have 
$$N_3(n,2,1,1)\leqslant 3.$$ 
\end{lemma}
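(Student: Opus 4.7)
The plan is to decompose $\mathcal{X}=D_2(\mathbf{x})\cap D_1(\mathbf{y})$ by the first symbol of its elements and to exploit the hypothesis $d_L(\mathbf{x},\mathbf{y})\geq 1$, which asserts that $\mathbf{y}$ is not a $1$-subsequence of $\mathbf{x}$. By Lemmas~\ref{lm1} and~\ref{lm2} I would write $\mathcal{X}=\mathcal{X}^0\cup\mathcal{X}^1\cup\mathcal{X}^2$ with $\mathcal{X}^a = a\circ\bigl(D_{2-\ell_a}(x_{\ell_a+2},\ldots,x_{n+1})\cap D_{1-\ell_a^*}(y_{\ell_a^*+2},\ldots,y_n)\bigr)$, where $\ell_a,\ell_a^*$ denote the offsets to the first occurrence of $a$ in $\mathbf{x}$ and $\mathbf{y}$ respectively. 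Immediately $\mathcal{X}^a=\emptyset$ when $\ell_a\geq 3$ or $\ell_a^*\geq 2$, and $|\mathcal{X}^a|\leq 1$ when $\ell_a=2$ or $\ell_a^*=1$, since a zero-radius deletion ball is a singleton. Only $a=y_1$ can have $\ell_a^*=0$, and only $a=y_2$ (when $y_2\neq y_1$) can have $\ell_a^*=1$; any remaining symbol has $\ell_a^*\geq 2$ and contributes nothing.

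I would proceed by induction on $n$, with base case $n=4$ handled by a finite enumeration. For the inductive step I split on $\ell_{y_1}$. When $\ell_{y_1}\geq 2$, $|\mathcal{X}^{y_1}|\leq 1$ and together with $|\mathcal{X}^{y_2}|\leq 1$ the total is at most $2$. When $\ell_{y_1}=1$ (so $x_1\neq y_1$ and $x_2=y_1$), the piece $\mathcal{X}^{y_1}$ reduces to an intersection $D_1(x_3,\ldots,x_{n+1})\cap D_1(y_2,\ldots,y_n)$ of two radius-one deletion balls on length-$(n-1)$ ternary sequences; the hypothesis $d_L(\mathbf{x},\mathbf{y})\geq 1$ precludes these two sequences from being identical (else deleting $x_1$ from $\mathbf{x}$ would produce $\mathbf{y}$), so Theorem~\ref{thm1} bounds this intersection by $N_3(n-1,1,1)=2$; adding the at-most-$1$ contribution of $\mathcal{X}^{y_2}$ yields $|\mathcal{X}|\leq 3$.

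The crucial case is $\ell_{y_1}=0$, i.e.\ $x_1=y_1$. Stripping this common prefix preserves $d_L\geq 1$, so $|\mathcal{X}^{y_1}|=|D_2(x_2,\ldots,x_{n+1})\cap D_1(y_2,\ldots,y_n)|$ is itself an instance of $N_3(n-1,2,1,1)$ and is at most $3$ by the inductive hypothesis. If $y_1=y_2$ there is no further contribution, so $|\mathcal{X}|\leq 3$. If $y_1\neq y_2$, the key claim is that $\mathcal{X}^{y_2}$ must actually be empty, ruling out the naive total $3+1=4$. I would verify this by sub-cases on $\ell_{y_2}\in\{1,2\}$ (the case $\ell_{y_2}\geq 3$ being trivial): if $\ell_{y_2}=1$, nonemptiness of $\mathcal{X}^{y_2}$ would require $(y_3,\ldots,y_n)$ to be a $1$-subsequence of $(x_3,\ldots,x_{n+1})$, and then $x_1=y_1$, $x_2=y_2$ would exhibit $\mathbf{y}$ as a $1$-subsequence of $\mathbf{x}$; if $\ell_{y_2}=2$, nonemptiness would force $(y_3,\ldots,y_n)=(x_4,\ldots,x_{n+1})$, and deleting $x_2$ from $\mathbf{x}$ would then produce $\mathbf{y}$; both contradict $d_L\geq 1$. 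Hence $|\mathcal{X}|=|\mathcal{X}^{y_1}|\leq 3$ in this case as well.

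The main obstacle is exactly this last step: the naive union bound $|\mathcal{X}^{y_1}|+|\mathcal{X}^{y_2}|\leq 3+1$ is too weak by one unit, and the missing unit must be recovered from the $d_L\geq 1$ hypothesis itself via the observation that it forces $\mathcal{X}^{y_2}=\emptyset$ whenever $\ell_{y_1}=0$. Without this structural collapse the induction would not close, so isolating it is the heart of the argument.
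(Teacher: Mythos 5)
Your proposal is correct and follows essentially the same route as the paper's proof: induction on $n$ with the base case $n=4$ checked by enumeration, decomposition by the first symbol via Lemmas~\ref{lm1} and~\ref{lm2}, the bound $N_3(n-1,1,1)=2$ for the piece where $x_2=y_1\neq x_1$, and the key observation that when $x_1=y_1$ the hypothesis $d_L(\mathbf{x},\mathbf{y})\geqslant 1$ forces the non-$x_1$ pieces to vanish so the induction closes at $3$. Your bookkeeping (noting that only the symbols $y_1$ and $y_2$ can contribute at all) is a slightly tidier way of excluding the total $4$ than the paper's argument that the maximal contributions cannot occur simultaneously, but the content is the same.
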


For convenience, define $N_3(n,t)=D_3(n-2,t-1)+2D_3(n-3,t-1)+D_3(n-3,t-2)+D_3(n-4,t-2)$ for $\frac{2n}{3}>t\geqslant 1$.

\begin{lemma}
\label{lm7}
For $n\geqslant \max\{4,t\}$ and $t\geqslant 1$:
\begin{itemize}
\item If $2n/3>t$, then $N_3(n,t+1,t,1) = N_3(n,t)$.
\item If $2n/3 \leqslant t$, then $N_3(n,t+1,t,1)=3^{n-t}$.
\end{itemize}
\end{lemma}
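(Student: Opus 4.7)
The plan is to handle the two regimes separately.

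For the case $2n/3\leqslant t$, the upper bound $N_3(n,t+1,t,1)\leqslant 3^{n-t}$ is immediate from $D_t(\mathbf{y})\subseteq \mathbb{Z}_3^{n-t}$. For the matching lower bound, take $\mathbf{y}=\mathbf{a}_n$ and $\mathbf{x}\in\mathbb{Z}_3^{n+1}$ chosen so that $\mathbf{y}$ is not a 1-deletion of $\mathbf{x}$ but $D_{t+1}(\mathbf{x})=\mathbb{Z}_3^{n-t}$; since $n\leqslant 3t/2$ gives $n+1\leqslant 3(t+1)/2$, such an $\mathbf{x}$ exists by Eq.~(\ref{eq4}) together with a short direct check of 1-deletions, so $|D_{t+1}(\mathbf{x})\cap D_t(\mathbf{y})|=3^{n-t}$.

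For the case $2n/3>t$, I would proceed by strong induction on $t$, with Lemma~\ref{lm6} supplying the base case $t=1$ (where $N_3(n,1)=3$). For the lower bound, take $\mathbf{y}=\mathbf{a}_n$ and $\mathbf{x}$ of length $n+1$ constructed so that after extracting the first occurrence of one symbol from each sequence, the residual suffixes form Levenshtein's extremal pair from Theorem~\ref{thm1} at length $n-1$. The precise form of $\mathbf{x}$ depends on $n\bmod 3$ (so that the periodic pattern of the suffix of $\mathbf{y}$ is matched by a swap-first-two partner on the $\mathbf{x}$ side). Decomposing $\mathcal{X}=D_{t+1}(\mathbf{x})\cap D_t(\mathbf{y})$ via Lemmas~\ref{lm1} and \ref{lm2} into three first-symbol components, the three contributions evaluate to $N_3(n-1,1,t)$ (from Theorem~\ref{thm1} applied to the suffix pair), $D_3(n-2,t-1)$ (from Lemma~\ref{lm4}), and $D_3(n-3,t-2)$ respectively; a routine algebraic check then gives $N_3(n-1,1,t)+D_3(n-2,t-1)+D_3(n-3,t-2)=N_3(n,t)$.

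For the matching upper bound, given arbitrary $\mathbf{x}\in\mathbb{Z}_3^{n+1}$ and $\mathbf{y}\in\mathbb{Z}_3^n$ with $d_L(\mathbf{x},\mathbf{y})\geqslant 1$, decompose $|\mathcal{X}|=\sum_{a\in\mathbb{Z}_3}|\mathcal{X}^a|$ via Lemmas~\ref{lm1} and \ref{lm2}, giving $|\mathcal{X}^a|=|D_{t+1-\ell_a}(\mathbf{x}_{[\ell_a+2,n+1]})\cap D_{t-\ell_a^*}(\mathbf{y}_{[\ell_a^*+2,n]})|$. The case analysis splits on whether $x_1=y_1$: if so, then $d_L(\mathbf{x}_{[2,n+1]},\mathbf{y}_{[2,n]})\geqslant 1$ (otherwise prepending $x_1$ to a 1-deletion of the suffix would realize $\mathbf{y}$ as a 1-deletion of $\mathbf{x}$), so the induction hypothesis applies to $|\mathcal{X}^{x_1}|$; otherwise one first-symbol subcase realizes a length-balanced pair bounded by Theorem~\ref{thm1}, and the remaining two admit the trivial bound $D_3(n-1-\ell_a^*,t-\ell_a^*)$. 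Combining these candidate bounds via Eqs.~(\ref{eq3}), (\ref{eq6}), and (\ref{eq7}) shows $\sum_a |\mathcal{X}^a|\leqslant N_3(n,t)$ in every configuration.

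The main obstacle will be the case analysis in the upper bound: confirming that across all admissible patterns $(\ell_0,\ell_1,\ell_2;\ell_0^*,\ell_1^*,\ell_2^*)$, the appropriate choice of candidate bound on each $|\mathcal{X}^a|$ assembles to a tight total. The delicate configurations are those in which $\ell_a^*$ is small for multiple symbols, since there the trivial $D_3$ bound loses to the sharper $N_3(\cdot,1,\cdot)$ bound obtainable from Theorem~\ref{thm1}; one must therefore carefully track which symbol plays which role in each subcase, and verify that the hypothesis $d_L(\mathbf{x},\mathbf{y})\geqslant 1$ always forces a transfer of the $d_L$ condition to at least one suffix pair that permits the sharp bound to be invoked.
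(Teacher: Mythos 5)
Your overall strategy matches the paper's: an explicit extremal pair $\mathbf{x}=(1,0,2,1,\mathbf{a}_{n-3})$, $\mathbf{y}=(0,1,2,\mathbf{a}_{n-3})$ whose intersection decomposes as $N_3(n-1,1,t)+D_3(n-2,t-1)+D_3(n-3,t-2)=N_3(n,t)$ for the lower bound (your arithmetic here is exactly the paper's, modulo a misattribution to Lemma~\ref{lm4} of what is really just Eq.~(\ref{eq2})), and a first-symbol decomposition with a case split on $x_1=y_1$ for the upper bound. However, there is a genuine gap in the induction scheme. You propose strong induction on $t$ with base case $t=1$, but when $x_1=y_1$ the component $\mathcal{X}^{x_1}$ is an instance $|D_{t+1}(\mathbf{x}_{[2,n+1]})\cap D_t(\mathbf{y}_{[2,n]})|$ with the \emph{same} $t$ and length reduced by one; strong induction on $t$ alone gives you nothing here. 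The paper inducts on $n$ and $t$ jointly, assuming the bound for $(m,k+1)$, $(m-1,k)$, and $(m-2,k-1)$ to derive it for $(m+1,k+1)$.

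Once the induction descends in $n$ at fixed $t$, it must bottom out at the smallest $n$ in the regime, namely $n=\lfloor 3t/2\rfloor+1$, and this boundary requires its own base case: the paper proves $N_3(\lfloor 3t/2\rfloor+1,t+1,t,1)\leqslant D_3(m,t)\leqslant N_3(m,t)$ by a separate computation splitting on the parity of $t$ and using the closed forms $D_3(3k,2k)=3^k$, etc. Your proposal omits this entirely, and it cannot be absorbed into the generic inductive step, because at the boundary the subproblem $(n-1,t)$ falls into the other regime where the answer is $3^{n-1-t}$ and a different comparison is needed. Relatedly, your base case claim ``$N_3(n,1)=3$'' conflates Lemma~\ref{lm6} (which bounds $N_3(n,2,1,1)$) with the quantity $N_3(n,1)=D_3(n-2,0)+2D_3(n-3,0)+D_3(n-3,-1)+D_3(n-4,-1)=3$; this happens to be consistent, but the two base-case families ($t=1$ and the diagonal boundary) both need to be stated and verified for the double induction to close. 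The remaining concern you flag yourself — the delicate bookkeeping over the patterns of $\ell_a,\ell_a^*$ in the $x_1=y_1$ case — is indeed where the bulk of the paper's Appendix A effort lies (its Cases A1--A3 with several sub-subcases), and is left unresolved in your sketch.
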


By Lemma $\ref{lm7}$, we have the following result.
\begin{corollary}
For $n\geqslant \lfloor \frac{3t}{2} \rfloor +1$ and $t\geqslant 1$, $N_3(n-2,t,t-1,1)=N_3(n-2,t-1)$.
\label{cor2}
\end{corollary}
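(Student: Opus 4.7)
The plan is to obtain Corollary~\ref{cor2} as a direct specialization of Lemma~\ref{lm7} via the parameter substitution $m = n-2$ and $s = t-1$. Under this renaming, the left-hand side $N_3(n-2,t,t-1,1)$ becomes $N_3(m,s+1,s,1)$, which is precisely the quantity evaluated by Lemma~\ref{lm7}, while the right-hand side $N_3(n-2,t-1)$ becomes $N_3(m,s)$. So the task reduces to checking which branch of Lemma~\ref{lm7} applies under the hypothesis $n \geqslant \lfloor 3t/2 \rfloor +1$ and, at boundary values where the branch changes, confirming that both branches still yield the same number.

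First I would verify that the side conditions of Lemma~\ref{lm7} transfer. The requirement $m \geqslant \max\{4,s\}$ together with $s \geqslant 1$ translates to $n \geqslant \max\{6,t+1\}$ and $t \geqslant 2$, which follow from $n \geqslant \lfloor 3t/2 \rfloor +1$ apart from a handful of tiny values of $t$ that can be dispatched by hand. I would then examine the main inequality $2m/3 > s$, equivalent to $2n > 3t+1$: splitting by parity, even $t=2k$ gives $2n \geqslant 6k+2 > 3t+1$ strictly, while odd $t=2k+1$ gives $2n \geqslant 6k+4 = 3t+1$, so the boundary $2m/3 = s$ is attained in the odd-$t$ case.

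The main obstacle is this odd-$t$ boundary, where Lemma~\ref{lm7}'s first branch does not literally apply; instead its second branch delivers $N_3(m,s+1,s,1) = 3^{m-s}$. To finish, I would show that the explicit sum defining $N_3(m,s)$ also collapses to $3^{m-s}$ in this regime. Each of the four summands $D_3(m-2,s-1)$, $D_3(m-3,s-1)$, $D_3(m-3,s-2)$, $D_3(m-4,s-2)$ sits inside the range $n \geqslant t \geqslant 2n/3$ covered by Eq.~(\ref{eq4}), so each equals an appropriate power of $3$, and a short arithmetic check collapses their weighted sum to $3 \cdot 3^{m-s-1} = 3^{m-s}$. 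Combined with the straightforward application of Lemma~\ref{lm7}'s first branch away from the boundary, this yields the claimed equality for every $(n,t)$ with $n \geqslant \lfloor 3t/2 \rfloor +1$ and $t \geqslant 1$.
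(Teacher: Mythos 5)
Your proposal is correct and follows essentially the same route as the paper: away from the boundary you invoke the first branch of Lemma~\ref{lm7} directly, and at the boundary $2(n-2)=3(t-1)$ (which, as you note, occurs exactly when $2n=3t+1$, i.e.\ $t$ odd) you use the second branch to get $3^{(n-2)-(t-1)}$ and check via Eq.~(\ref{eq4}) that the four-term formula for $N_3(n-2,t-1)$ collapses to the same power of $3$. The paper's proof does precisely this, merely asserting the final arithmetic collapse that you spell out.
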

\begin{proof}
When $2(n-2)>3(t-1)$, the result holds by Lemma \ref{lm7}. If $2(n-2) \leqslant 3(t-1)$, then $n \geqslant \lfloor 3t/2 \rfloor + 1$ implies $2n \geqslant 3t+1$. Thus, $2n=3t+1$ (i.e., $t=2k+1$, $n=3k+2$, $k \geqslant 0$). When $k=0$ then $N_3(n-2,t,t-1,1)=N_3(n-2,t-1)=1$. When $k\geqslant 1$ then $3(t-1)=2(n-2)$. By Lemma $\ref{lm7}$, it follows that $N_3(n-2,t,t-1,1)=3^{n-t-1}=3^k=N_3(n-2,t-1)$. 
So, the lemma follows.
\end{proof}

Let $\mathbf{x}=(x_1,x_2,...,x_n), \mathbf{y}=(y_1,y_2,...,y_n)\in \mathbb{Z}_3^n$ with $d_L(\mathbf{x},\mathbf{y})\geqslant 2$. For $c\in \mathbb{Z}_3$, we define $c+\mathbf{x}=(c+x_1,c+x_2,...,c+x_n)$, where $c+x_i$ represents the addition operation performed on $c$ and $x_i$ in the ring of integers modulo $3$ (i.e., $c+x_i \pmod{3}$) for any $i\in [n]$. To prove that $|D_3(\mathbf{x},\mathbf{y};t,t)|\leqslant \max\{M_0(n,t),M_1(n,t)\}$ for $x_1\neq y_1$, we need some properties of $|D_3(\cdot)|$ in Lemmas $\ref{lm8}$-$\ref{lm11}$. In order to establish this result, we first state the following three useful lemmas. The proof of Lemma $\ref{lm9}$ appears in Appendix $\ref{APP-B}$. Appendix $\ref{APP-C}$ contains the proofs of Lemmas $\ref{lm10}$ and $\ref{lm11}$.

\begin{lemma}
\label{lm8}
For $\mathbf{x}\in \mathbb{Z}_3^{n}$ and $c\in \{1,2\}$,
\begin{equation}
|D_t(\mathbf{x})|=|D_t(c+\mathbf{x})|.\nonumber
\end{equation}
\end{lemma}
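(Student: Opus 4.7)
The plan is to exhibit an explicit bijection between $D_t(\mathbf{x})$ and $D_t(c+\mathbf{x})$, induced by the alphabet-relabelling map $z \mapsto c+z \pmod 3$. For any positive length $m$, define $\phi_c : \mathbb{Z}_3^m \to \mathbb{Z}_3^m$ componentwise by $\phi_c(z_1,\ldots,z_m) = (c+z_1,\ldots,c+z_m)$, where all arithmetic is modulo $3$. Since $c \in \{1,2\}$ has an additive inverse $-c$ in $\mathbb{Z}_3$, the map $\phi_{-c}$ is a two-sided inverse of $\phi_c$, so $\phi_c$ is a bijection on $\mathbb{Z}_3^m$ for every $m$.

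The core step I would then verify is that $\phi_c$ commutes with the deletion operation. Concretely, if $\mathbf{y} = (x_{i_1},\ldots,x_{i_{n-t}}) \in D_t(\mathbf{x})$ is the subsequence obtained from $\mathbf{x}$ by retaining positions $i_1 < \cdots < i_{n-t}$, then by the componentwise definition of $\phi_c$, the tuple $\phi_c(\mathbf{y}) = (c+x_{i_1},\ldots,c+x_{i_{n-t}})$ is precisely the subsequence of $\phi_c(\mathbf{x}) = c+\mathbf{x}$ obtained by retaining the same positions. Hence $\phi_c(\mathbf{y}) \in D_t(c+\mathbf{x})$, so $\phi_c$ restricts to a map $D_t(\mathbf{x}) \to D_t(c+\mathbf{x})$. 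Applying the symmetric argument to $-c$ and $c+\mathbf{x}$ shows that $\phi_{-c}$ restricts to a map $D_t(c+\mathbf{x}) \to D_t(\mathbf{x})$, and these two restricted maps are mutually inverse. Consequently $|D_t(\mathbf{x})| = |D_t(c+\mathbf{x})|$.

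There is no real obstacle in this lemma: it is a symmetry statement reflecting that adding a fixed constant modulo $3$ is simply a relabelling of the ternary alphabet, which preserves the combinatorial notion of a subsequence. The only minor subtlety to record carefully in the write-up is that $\phi_c$ must be applied at two different lengths (on $\mathbf{x}$ of length $n$ and on its $t$-subsequences of length $n-t$), but this is immediate from the componentwise definition since $\phi_c$ is defined uniformly for all lengths.
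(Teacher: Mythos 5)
Your proposal is correct and follows essentially the same route as the paper: the paper likewise applies the componentwise relabelling $z\mapsto c+z \pmod 3$, observes that it sends $t$-subsequences of $\mathbf{x}$ to $t$-subsequences of $c+\mathbf{x}$, and deduces the two inequalities $|D_t(\mathbf{x})|\leqslant |D_t(c+\mathbf{x})|$ and the reverse. Your write-up is merely a bit more explicit about the map being a bijection at every length and commuting with deletion.
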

\begin{proof}
Without loss of generality, we only consider $c=1$. We transform the vector $\mathbf{x}$ into the vector $1+\mathbf{x}$ by applying the function $f$ to each component, where $f(0)=1,f(1)=2,f(2)=0$. If $\mathbf{y}\in D_t(\mathbf{x})$, then $1+\mathbf{y}\in D_t(1+\mathbf{x})$. Thus, $|D_t(\mathbf{x})|\leqslant |D_t(1+\mathbf{x})|$. Similarly, $|D_t(1+\mathbf{x})|\leqslant |D_t(\mathbf{x})|$. So, the lemma follows.
\end{proof}

\begin{lemma}
\label{lm9}
For $\mathbf{x}\in \mathbb{Z}_3^{m}$, $\mathbf{y}\in \mathbb{Z}_3^n$, and $a\in \mathbb{Z}_3$, there exists some element $c\in \{1,2\}$ such that
\begin{equation}
|D_t(\mathbf{x},a,a,\mathbf{y})|\leqslant |D_t(\mathbf{x},a,c+a,c+\mathbf{y})|. \nonumber
\end{equation}
\end{lemma}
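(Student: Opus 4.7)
By Lemma~\ref{lm8}, a uniform shift by $-a$ preserves the cardinality of each deletion ball on either side, so we may normalize $a=0$ and prove $|D_t(A)|\leqslant |D_t(B_c)|$ for some $c\in\{1,2\}$, where $A=(\mathbf{x},0,0,\mathbf{y})$ and $B_c=(\mathbf{x},0,c,c+\mathbf{y})$. The target is an explicit injection $\phi_c\colon D_t(A)\hookrightarrow D_t(B_c)$, set up by induction on $m+n$.

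I would partition $D_t(A)=\bigsqcup_{\alpha\in\mathbb{Z}_3}D_t(A)^{(\alpha)}$ via Lemma~\ref{lm1} according to the first symbol of the $t$-subsequence, and likewise for $D_t(B_c)$. Lemma~\ref{lm2} identifies each piece with a single symbol concatenated with a deletion ball of a strictly shorter tail. For the pieces $D_t(A)^{(1)}$ and $D_t(A)^{(2)}$, the first occurrences of $1$ and of $2$ either coincide between $A$ and $B_c$ (when they lie inside $\mathbf{x}$) or live in the suffix, where the $A$-tail and $B_c$-tail differ by a shift controlled by Lemma~\ref{lm8}; in each sub-case the residual comparison is a strictly smaller instance of the same inequality, handled inductively on $m+n$. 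For $D_t(A)^{(0)}$, I would further split by whether the first $0$ in $A$ lies in $\mathbf{x}$ (reducing by shortening $\mathbf{x}$) or sits at position $m+1$; in the latter case the tail comparison is $|D_{t'}(0,\mathbf{y})|$ versus $|D_{t'}(c,c+\mathbf{y})|$, which are equal by Lemma~\ref{lm8}.

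The main obstacle is the residual refinement in which a $t$-subsequence of $A$ begins with two consecutive $0$'s drawn from positions $m+1$ and $m+2$. Because $B_c[m+2]=c\neq 0$, such a subsequence cannot be realized at the same positions of $B_c$, so the injection must re-route one of the two leading $0$'s to another occurrence of $0$---either inside $\mathbf{x}$ or inside $c+\mathbf{y}$. This is precisely where the existential freedom in $c\in\{1,2\}$ is exploited: if $\mathbf{x}$ already contains a $0$, either choice works; otherwise I would choose $c$ so that $c+\mathbf{y}$ contains a $0$, which is possible whenever $\mathbf{y}$ contains a $1$ or a $2$, with the residual purely-$0$ case settled by a direct count. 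Verifying that the rerouted map is injective and that its image is disjoint from the images of the other first-symbol classes then closes the bound $|D_t(A)|\leqslant |D_t(B_c)|$.
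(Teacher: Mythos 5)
Your plan differs from the paper's proof in two structural ways, and both differences open genuine gaps. The paper does not build an injection at all: it decomposes both sides by the first symbol of the subsequence (Lemmas~\ref{lm1} and~\ref{lm2}) and compares cardinalities piecewise, inducting only on $\mathbf{x}$. In the base case $|\mathbf{x}|=0$ the piece of $D_t(a,a,\mathbf{y})$ beginning with $a$ equals $|D_t(a,\mathbf{y})|$ and is matched exactly by $|D_t(c+a,c+\mathbf{y})|$ via Lemma~\ref{lm8}, while the pieces beginning with $a+1$ and $a+2$ are dominated because $D_{t-2-\ell}(y_{2+\ell},\ldots,y_n)\subseteq D_{t-1}(\mathbf{y})$; so the ``two consecutive $a$'s'' configuration that you identify as the main obstacle never requires any re-routing. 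Your injection $\phi_c$, by contrast, is exactly the part you do not construct: you would have to specify the re-routed image, prove injectivity, and prove disjointness from the other first-symbol classes, and none of that is sketched beyond the assertion that it can be done.

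The more serious gap is your rule for choosing $c$. You choose $c$ according to whether $\mathbf{x}$ contains the symbol $a$ and, failing that, according to which symbols occur in $\mathbf{y}$. But your induction on $m+n$ reduces to instances with $\mathbf{x}$ truncated from the front (and to suffixes of $\mathbf{y}$), and your criterion is not preserved under these reductions: a truncation can change whether ``$\mathbf{x}$ contains $a$'' and which symbols $\mathbf{y}$ contains, so the inductive hypothesis hands you the inequality for a possibly different $c$ than the one you are trying to establish, and the argument does not close. The paper's choice of $c$ is a function of the \emph{tail} of $\mathbf{x}$ only ($c=2$ if $x_m=a+1$, $c=1$ if $x_m=a+2$, and a special rule distinguishing $\mathbf{x}=\mathbf{u}\circ(a+2)\circ a^t$ when $x_m=a$), precisely so that $\mathbf{x}_{[k,m]}$ yields the same $c$ for every $k$; the paper verifies this invariance explicitly before running the induction. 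Relatedly, your claim that ``if $\mathbf{x}$ already contains a $0$, either choice works'' is unsubstantiated and is not what governs the paper's case analysis --- there the relevant condition is that $x_m$, $a$, and $c+a$ be pairwise distinct, which depends on the last symbol of $\mathbf{x}$, not on mere containment. Without a proof of that claim and without a truncation-stable choice of $c$, the proposal does not constitute a proof.
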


\begin{lemma}
\label{lm10}
Let $\mathbf{w} = (\mathbf{x}, a, b, a, \mathbf{y}) \in \mathbb{Z}_3^n$ with $a \neq b$, $n\geqslant 4$, $(\mathbf{x}, a, b) \in \mathbf{c}_3(s+2)$, and $\{a,b,c\} = \mathbb{Z}_3$. Define $\mathbf{x}'=(x'_1,...,x'_s)$ by interchanging $a$ and $c$ in $\mathbf{x}$.  That is, 
\begin{equation*}
x_i'=
\begin{cases}
c & \text{if } x_i = a, \\
a & \text{if } x_i = c, \\
b & \text{if } x_i = b,
\end{cases}
\end{equation*}
for all $i\in [s]$. Then
\begin{equation}
|D_t(\mathbf{w})|\leqslant |D_t(\mathbf{x}',c,b,a,\mathbf{y})|,\nonumber
\end{equation}
where $(\mathbf{x}',c,b,a)\in \mathbf{c}_3(s+3)$.
\end{lemma}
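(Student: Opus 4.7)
The plan is to induct on $s = |\mathbf{x}|$, decomposing both $|D_t(\mathbf{w})|$ and $|D_t(\mathbf{x}', c, b, a, \mathbf{y})|$ by the first output symbol via Lemmas \ref{lm1} and \ref{lm2} and matching the resulting components pairwise.

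\textbf{Base case $s = 0$.} I would verify $|D_t(a, b, a, \mathbf{y})| \leqslant |D_t(c, b, a, \mathbf{y})|$ directly. On both sides the $b$-component evaluates to $|D_{t-1}(a, \mathbf{y})|$ and cancels; on the LHS the $a$-component equals $|D_t(b, a, \mathbf{y})|$ and matches the $c$-component on the RHS (also $|D_t(b, a, \mathbf{y})|$, since $c$ sits at position $1$ of $\mathbf{w}' = (c,b,a,\mathbf{y})$). After these cancellations the inequality reduces to $|D_t(a, b, a, \mathbf{y})^c| \leqslant |D_{t-2}(\mathbf{y})|$: if $c \notin \mathbf{y}$ the left side is $0$, and otherwise, writing $k_0$ for the position of the first $c$ in $\mathbf{y}$, the left side equals $|D_{t-2-k_0}(\mathbf{y}[k_0+1:])|$, and every element of this set is also obtained from $\mathbf{y}$ by deleting the first $k_0$ positions plus the same remaining $t-2-k_0$ symbols, yielding the subset containment $D_{t-2-k_0}(\mathbf{y}[k_0+1:]) \subseteq D_{t-2}(\mathbf{y})$ (both sides have equal length).

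\textbf{Inductive step $s \geqslant 1$.} Because $\mathbf{x}'$ is the $a$-$c$ swap of $\mathbf{x}$ and $(\mathbf{x}', c, b, a) \in \mathbf{c}_3(s+3)$, the first $a$ in $\mathbf{w}$ and the first $c$ in $\mathbf{w}'$ occupy the same position, the first $c$ in $\mathbf{w}$ and the first $a$ in $\mathbf{w}'$ occupy the same position, and the first $b$ appears at the same position in both. Decomposing by first symbol then yields three matching pairs of the form $|D_{t'}(\tilde{\mathbf{x}}, a, b, a, \mathbf{y})|$ versus $|D_{t'}(\tilde{\mathbf{x}}', c, b, a, \mathbf{y})|$ for $t' \in \{t, t-1, t-2\}$, where $\tilde{\mathbf{x}}$ is a suffix of $\mathbf{x}$ of length $s - k$ with $k \in \{1, 2, 3\}$, $(\tilde{\mathbf{x}}, a, b) \in \mathbf{c}_3(s+2-k)$ (dropping an initial segment of a $\mathbf{c}_3$ sequence that still terminates with $a, b$ remains a $\mathbf{c}_3$ sequence), and $\tilde{\mathbf{x}}'$ is the $a$-$c$ swap of $\tilde{\mathbf{x}}$ (swap commutes with taking suffixes). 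Each pair is either trivially equal, when the suffix is too short to contain the swapped symbols, or a strictly smaller instance of the lemma, so the inductive hypothesis provides the term-wise inequalities that sum to the desired conclusion.

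The main obstacle is the case analysis on $s \bmod 3$: the cyclic rotation of $(a, b, c)$ acting as period of $(\mathbf{x}, a, b)$ depends on $s \bmod 3$ and determines the positions of the first $a$, $b$, $c$ within the first three positions of $\mathbf{x}$, changing which $k \in \{1,2,3\}$ pairs with which symbol. The boundary cases $s \in \{1, 2\}$ also need separate verification because $\mathbf{x}$ does not yet contain every symbol of $\mathbb{Z}_3$; there some matched pairs are equal outright and the remaining pair(s) invoke the inductive hypothesis with $s \in \{0, 1\}$, terminating the induction at the base case.
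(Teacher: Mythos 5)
Your proposal follows essentially the same route as the paper's proof: induction on $|\mathbf{x}|$, a base case for $s=0$ in which the $b$- and ($a$ vs.\ $c$)-components cancel and the remaining component is bounded by the containment $D_{t-2-k_0}(\mathbf{y}_{[k_0+1,p]})\subseteq D_{t-2}(\mathbf{y})$, and an inductive step that decomposes both sides by the first output symbol into the three "drop $1$, $2$, $3$ prefix symbols" components and applies the induction hypothesis term-wise to the pairs $(\tilde{\mathbf{x}},a,b,a,\mathbf{y})$ versus $(\tilde{\mathbf{x}}',c,b,a,\mathbf{y})$. The details you flag as obstacles (the $s\bmod 3$ bookkeeping and the short suffixes for $s\in\{1,2\}$) resolve exactly as you anticipate, since the sum of the three components is independent of which symbol labels which component and the too-short suffixes give identical sequences on both sides.
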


\begin{lemma}
\label{lm11}
For $n\geqslant t+2$ and $\mathbf{x}\notin \mathbf{c}_3(n)$,
\begin{equation}
|D_t(\mathbf{x})|\leqslant D_3(n-2,t)+D_3(n-2,t-1)+D_3(n-3,t-1)+D_3(n-3,t-2)+D_3(n-5,t-3).\nonumber
\end{equation}
\end{lemma}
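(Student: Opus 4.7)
The proof is by strong induction on $n\geqslant t+2$. Denote the claimed upper bound by
\[
R(n,t)\triangleq D_3(n-2,t)+D_3(n-2,t-1)+D_3(n-3,t-1)+D_3(n-3,t-2)+D_3(n-5,t-3).
\]
A direct expansion via the recurrence \eqref{eq3} applied to each of the five $D_3$-summands gives the companion recurrence
\[
R(n,t)=R(n-1,t)+R(n-2,t-1)+R(n-3,t-2),
\]
mirroring the recursion for $D_3(n,t)$; this is what makes the induction close.

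By Lemma \ref{lm8} combined with the full-alphabet symmetry of $|D_t(\cdot)|$, I may assume $x_1=0$. Write
\[
|D_t(\mathbf{x})|=|\mathcal{X}^0|+|\mathcal{X}^1|+|\mathcal{X}^2|,\qquad |\mathcal{X}^a|=|D_{t-\ell_a}(x_{2+\ell_a},\ldots,x_n)|,
\]
where $x_{1+\ell_a}$ is the first occurrence of $a$ in $\mathbf{x}$. The ``generic'' configuration is $(\ell_0,\ell_1,\ell_2)=(0,1,2)$, i.e.\ prefix $(0,1,2)$: if each of the three tails $(x_{2+\ell_a},\ldots,x_n)$ lies outside its respective $\mathbf{c}_3(\cdot)$, the inductive hypothesis provides $|\mathcal{X}^a|\leqslant R(n-1-\ell_a,t-\ell_a)$, and the three terms sum to $R(n,t)$ by the companion recurrence. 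Non-generic prefixes $(0,0,\ast)$, $(0,1,0)$, $(0,1,1)$ force larger values of $\ell_1$ or $\ell_2$ and are reduced to the generic case (or bounded directly) using Lemma \ref{lm9} to eliminate an initial repeat or Lemma \ref{lm10} to eliminate a $(0,1,0)$ period-2 anomaly, together with the monotonicities \eqref{eq6}, \eqref{eq7} and the estimate $D_3(n,t)\leqslant 3D_3(n-1,t)$ from Corollary \ref{cor1}.

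The most delicate situation is the generic prefix $(0,1,2)$ with at least one tail periodic---the canonical example being $\mathbf{x}=(0,1,2,\mathbf{y})$ with $\mathbf{y}\in\mathbf{c}_3(n-3)\setminus\{\mathbf{a}_{n-3}\}$. Here I would invoke Lemma \ref{lm10} to transform $\mathbf{x}$ into a sequence whose first defect has strictly advanced without decreasing $|D_t|$; iterating pushes the defect to the end and leaves a short list of canonical end-defect shapes whose deletion balls can be computed directly via the prefix/suffix decomposition, in the style of the calculation of $|D_t(\mathbf{x}_0)\cap D_t(\mathbf{y}_0)|$ in Section \ref{sec3}. The main obstacle is twofold: (i) confirming at every stage of the iteration that Lemma \ref{lm10}'s hypothesis is satisfied and that the defect genuinely moves forward, so the process terminates in $O(n)$ steps at a tractable boundary configuration; and (ii) handling the base case $n=t+2$ by hand, since several $D_3$-terms of $R(n,t)$ collapse to $0$ or $1$ and the companion recurrence ceases to be available there.
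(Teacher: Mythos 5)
Your overall strategy matches the paper's in outline: reduce via Lemma \ref{lm9} to sequences with $n$ runs, use Lemma \ref{lm10} to normalize the position of the ``defect,'' and exploit the fact that the right-hand side $R(n,t)$ satisfies the same three-term recurrence as $D_3(n,t)$. The base-case caveat at $n=t+2$ is also correctly flagged. However, your handling of the one case you yourself identify as delicate --- the single-defect sequences $\mathbf{x}=(0,1,2,\mathbf{y})$ with a periodic tail, equivalently the concatenations of two periodic pieces --- does not work, and this case is the entire content of the lemma.

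The problem is the claim that iterating Lemma \ref{lm10} ``pushes the defect to the end and leaves a short list of canonical end-defect shapes.'' For a sequence with exactly one occurrence of $(a,b,a)$, i.e.\ a concatenation $\mathbf{c}_3(i+1,\sigma)\circ\mathbf{c}_3(n-i-1,\gamma)$ with a phase shift at the junction, a single application of Lemma \ref{lm10} does not advance the defect: it annihilates it. Concretely, for $\mathbf{w}=(1,0,1,2,0)$ the lemma gives $|D_t(\mathbf{w})|\leqslant|D_t(2,0,1,2,0)|$, and $(2,0,1,2,0)\in\mathbf{c}_3(5)$; in general the transformed sequence $(\mathbf{x}',c,b,a,\mathbf{y})$ already continues periodically into $\mathbf{y}$, so the iteration terminates at a fully periodic sequence and yields only $|D_t(\mathbf{x})|\leqslant D_3(n,t)$. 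But $D_3(n,t)-R(n,t)=D_3(n-4,t-2)-D_3(n-5,t-3)\geqslant 0$ (and typically strictly positive), so this bound is too weak to prove the lemma. There is no stable ``end-defect shape'' for this process to stop at. The paper avoids this by \emph{not} applying Lemma \ref{lm10} to the single-defect sequences $\mathbf{c}_{n,i}$: it instead computes $|D_t(\mathbf{c}_{n,i})|=f_i(n,t)$ exactly, via the recursion $f_i(n,t)=f_{i-1}(n-1,t)+f_{i-2}(n-2,t-1)+f_{i-3}(n-3,t-2)$ seeded by explicit formulas for $f_1,f_2,f_3$, and then proves $f_i(n,t)\leqslant f_2(n,t)=R(n,t)$ by a separate induction on $i$. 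That exact computation and comparison is the essential step, and your proposal replaces it with an argument that collapses to the trivial bound.
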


Furthermore,  in order to establish the result that $|D_3(\mathbf{x},\mathbf{y};t,t)|\leqslant \max\{M_0(n,t),M_1(n,t)\}$ for $x_1\neq y_1$, the result holds for some $\mathbf{x}$ and $\mathbf{y}$ in Lemmas $\ref{lm12}$ and $\ref{lm13}$. Appendices $\ref{APP-D}$ and $\ref{APP-E}$ contain the proofs of Lemmas $\ref{lm12}$ and $\ref{lm13}$, respectively.

\begin{lemma}
\label{lm12}
For $n\geqslant \max\{9,\lfloor \frac{3t}{2}\rfloor +1\}$, $t\geqslant 2$, $\mathbb{Z}_3=\{a,b,c\}$, and sequences $\mathbf{x}=(b,a,c,a,b,x_6,...,x_n),\mathbf{y}=(a,b,c,a,b,y_6,...,\\y_n)\in \mathbb{Z}_3^n$ with $d_L(\mathbf{x},\mathbf{y})\geqslant 2$,
\begin{equation}
|D_t(\mathbf{x})\cap D_t(\mathbf{y})|\leqslant M_0(n,t).\nonumber
\end{equation}   
\end{lemma}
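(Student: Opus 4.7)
The plan is to mimic the decomposition used in the proof of Theorem~\ref{thm5}, but now to bound $|\mathcal{X}|=|D_t(\mathbf{x})\cap D_t(\mathbf{y})|$ rather than compute it exactly. First, I would apply Lemmas~\ref{lm1} and~\ref{lm2} to split $\mathcal{X}$ according to the first letter of the common $t$-subsequence. Because the first $a$ in $\mathbf{x}$ and $\mathbf{y}$ occurs at positions $2$ and $1$ respectively, the first $b$ at positions $1$ and $2$, and the first $c$ at position $3$ in both, this yields
\begin{align*}
|\mathcal{X}^a|&=|D_{t-1}(c,a,b,x_6,\ldots,x_n)\cap D_t(b,c,a,b,y_6,\ldots,y_n)|,\\
|\mathcal{X}^b|&=|D_t(a,c,a,b,x_6,\ldots,x_n)\cap D_{t-1}(c,a,b,y_6,\ldots,y_n)|,\\
|\mathcal{X}^c|&=|D_{t-2}(a,b,x_6,\ldots,x_n)\cap D_{t-2}(a,b,y_6,\ldots,y_n)|.
\end{align*}

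Next, I would perform one or two further rounds of decomposition by the next common symbol on each of these three pieces. After these rounds, every sub-piece takes one of the following forms: (i) a single deletion ball $|D_{t'}(\mathbf{u})|$, bounded by $D_3(|\mathbf{u}|,t')$ when $\mathbf{u}\in\mathbf{c}_3(|\mathbf{u}|)$ via Eq.~(\ref{eq2}) and by Lemma~\ref{lm11} otherwise; (ii) an equal-radius intersection with $d_L(\mathbf{u},\mathbf{v})=1$, bounded via Theorem~\ref{thm1}; or (iii) a length-mismatched intersection of the form $|D_{t'+1}(\mathbf{u})\cap D_{t'}(\mathbf{v})|\leqslant N_3(|\mathbf{v}|,t'+1,t',1)$, bounded by Lemma~\ref{lm7} and Corollary~\ref{cor2}. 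At several points I expect to need Lemmas~\ref{lm8},~\ref{lm9},~\ref{lm10} to transform an underlying sub-piece into one of these canonical forms, in particular to move a prefix of the underlying sequence into $\mathbf{c}_3(\cdot)$ before applying the standard deletion-ball bound.

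Third, I would assemble the totals. Each sub-bound is a polynomial in the values $D_3(n-k,t-j)$ for small $k,j$. Applying the recurrence $D_3(n,t)=D_3(n-1,t)+D_3(n-2,t-1)+D_3(n-3,t-2)$ from Eq.~(\ref{eq3}) together with its companion $M_0(n,t)=M_0(n-1,t)+M_0(n-2,t-1)+M_0(n-3,t-2)$ from Eq.~(\ref{eq8}) and the monotonicity inequalities (\ref{eq6})-(\ref{eq7}), the sum of the bounds should collapse into $M_0(n,t)$. Boundary cases $n\leqslant \max\{9,\lfloor 3t/2\rfloor+1\}$ are handled by the computerized search already invoked in Section~\ref{sec3}.

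The main obstacle will be recovering the specific correction $D_3(n-12,t-7)-D_3(n-10,t-5)$ that distinguishes $M_0$ from $M_1$. Parallel to Lemma~\ref{lm5}, this correction should emerge from a delicate analysis of the final portion of the common subsequence, where the hypothesis $d_L(\mathbf{x},\mathbf{y})\geqslant 2$ is used to rule out the ``fully periodic'' configurations of $(x_6,\ldots,x_n)$ and $(y_6,\ldots,y_n)$ that would otherwise push the total up to $M_1(n,t)$. Keeping track of which sub-pieces remain tight and which acquire slack under this constraint is the crux of the argument.
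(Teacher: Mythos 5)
Your plan follows the same strategy as the paper's proof in Appendix~\ref{APP-D}: set $a=0,b=1,c=2$, decompose by the first one or two symbols of the common subsequence via Lemmas~\ref{lm1} and~\ref{lm2}, bound the resulting pieces by $N_3(\cdot,1,\cdot)$, single deletion balls, and Lemma~\ref{lm11}, and handle the small cases $(n,t)=(9,5),(10,6),(11,7)$ by computer search. But the proposal stops exactly where the proof becomes nontrivial, and the step you defer is one that does not go through in the way you describe. The crude sum of the nine sub-bounds does \emph{not} collapse to $M_0(n,t)$ under Eqs.~(\ref{eq3}), (\ref{eq6})--(\ref{eq8}); even in the most generic branch (where $x_6\neq 2$ or $y_6\neq 2$) it evaluates to $M_0(n,t)+\big(D_3(n-10,t-5)-D_3(n-12,t-7)\big)-\big(D_3(n-8,t-5)-D_3(n-10,t-7)\big)$, and one must separately prove that the second difference dominates the first. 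In every other branch the overshoot $D_3(n-10,t-5)-D_3(n-12,t-7)$ must be cancelled by locating a compensating deficit in a \emph{specific} piece --- in $\mathcal{X}^{(1,0)}$, $\mathcal{X}^{(0,1)}$, $\mathcal{X}^{(2,2)}$, or $\mathcal{X}^{(0,2)}$ depending on whether $x_6=y_6=2$, whether $x_7=y_7=0$, whether $\mathbf{x}_{[6,n]}\in\mathbf{c}_3(n-5)$, and whether $\mathbf{y}_{[8,n]}\in\mathbf{c}_3(n-7)$. This branching, which is where the hypothesis $d_L(\mathbf{x},\mathbf{y})\geqslant 2$ and Lemma~\ref{lm11} actually do their work (the paper's Cases A through C2-2-2), is the entire content of the lemma; acknowledging it as ``the crux'' without carrying it out leaves the proof unestablished.

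Two smaller inaccuracies: Lemmas~\ref{lm8}--\ref{lm10} are not invoked in this argument (they are only used to prove Lemma~\ref{lm11}, which is what you actually need to bound $|D_{t-3}(\mathbf{x}_{[6,n]})|$ and $|D_{t-3}(\mathbf{y}_{[7,n]})|$ when those tails are aperiodic); and the exceptional cases requiring computer search here are not ``$n\leqslant\max\{9,\lfloor 3t/2\rfloor+1\}$'' (which is outside the lemma's hypothesis) but the three pairs with $n<t+5$ inside it.
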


\begin{lemma}
\label{lm13}
For $n\geqslant \max\{9,\lfloor \frac{3t}{2}\rfloor +1\}$, $t\geqslant 2$,  $\mathbb{Z}_3=\{a,b,c\}$, and sequences $\mathbf{x}=(b,a,c,b,a,x_6,...,x_n),\mathbf{y}=(a,b,c,a,b,y_6,...,\\y_n)\in \mathbb{Z}_3^n$,
\begin{equation}
|D_t(\mathbf{x})\cap D_t(\mathbf{y})|\leqslant M_1(n,t).\nonumber
\end{equation}
\end{lemma}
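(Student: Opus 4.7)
The plan is to decompose $\mathcal{X} = D_t(\mathbf{x}) \cap D_t(\mathbf{y})$ by the first (and, where needed, the second) output symbol, bound each resulting sub-intersection either by a single deletion ball (via $|D_s(\mathbf{u})| \leq D_3(|\mathbf{u}|,s)$) or by Theorem~\ref{thm1} when two equal-length centers at Levenshtein distance at least $1$ are present, and finally absorb a small residual surplus through a case analysis on the tails $\mathbf{x}' = (x_6,\ldots,x_n)$ and $\mathbf{y}' = (y_6,\ldots,y_n)$. Using Lemmas~\ref{lm1} and~\ref{lm2}, the first-symbol decomposition yields
\begin{align*}
\mathcal{X}^a &= D_{t-1}(c,b,a,\mathbf{x}') \cap D_t(b,c,a,b,\mathbf{y}'),\\
\mathcal{X}^b &= D_t(a,c,b,a,\mathbf{x}') \cap D_{t-1}(c,a,b,\mathbf{y}'),\\
\mathcal{X}^c &= D_{t-2}(b,a,\mathbf{x}') \cap D_{t-2}(a,b,\mathbf{y}').
\end{align*}

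For $\mathcal{X}^c$, the two centers have common length $n-3$ and disagree in their first two symbols, so Theorem~\ref{thm1} yields $|\mathcal{X}^c| \leq N_3(n-3,1,t-2) = 2D_3(n-5,t-3)+D_3(n-6,t-4)$, matching the corresponding terms of $M_1(n,t)$. For $\mathcal{X}^a$ and $\mathcal{X}^b$, I would further decompose by the second output symbol, producing six sub-intersections. Two of them (the ``$\cdot,c$'' ones) again have equal-length centers of length $n-3$ differing in their first two symbols, and Theorem~\ref{thm1} bounds each by $N_3(n-3,1,t-1) = 2D_3(n-5,t-2)+D_3(n-6,t-3)$. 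The remaining four are intersections of unequal-length balls and admit the trivial estimates $|\mathcal{X}^{a,a}|,|\mathcal{X}^{b,b}| \leq D_3(n-5,t-3)$ and $|\mathcal{X}^{a,b}|,|\mathcal{X}^{b,a}| \leq D_3(n-4,t-2)$. Summing all contributions and applying the recurrence~(\ref{eq3}) from Corollary~\ref{cor1} shows that this tally exceeds $M_1(n,t)$ by exactly $D_3(n-7,t-4) - D_3(n-8,t-5)$, which is non-negative by~(\ref{eq6}).

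The main obstacle of the proof is therefore to close this gap, which I plan to do by sharpening the trivial bounds on $|\mathcal{X}^{a,b}|$ and $|\mathcal{X}^{b,a}|$. Equality $|\mathcal{X}^{a,b}| = D_3(n-4,t-2)$ forces $(a,\mathbf{x}')$ to lie in $\mathbf{c}_3(n-4)$ (so that the shorter ball attains maximum size) and to be a $2$-subsequence of $(c,a,b,\mathbf{y}')$ (so that the shorter ball is contained in the longer one); analogous conditions are required for $|\mathcal{X}^{b,a}| = D_3(n-4,t-2)$. I would carry out a case analysis over the six orderings that generate $\mathbf{c}_3(n-4)$, invoking Lemma~\ref{lm11} to provide a strictly tighter bound whenever $(a,\mathbf{x}')$ or $(b,\mathbf{y}')$ fails to be periodic, Lemma~\ref{lm4} for the exact size of periodic-prefix deletion balls, and Lemmas~\ref{lm8}--\ref{lm10} to reduce general prefixes to canonical periodic form. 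The conclusion of this analysis is that the two saturation conditions cannot be simultaneously satisfied under the fixed prefix structures of $\mathbf{x}$ and $\mathbf{y}$, so the sum $|\mathcal{X}^{a,b}| + |\mathcal{X}^{b,a}|$ falls short of $2D_3(n-4,t-2)$ by at least $D_3(n-7,t-4) - D_3(n-8,t-5)$. Combining this saving with the earlier bounds and simplifying via~(\ref{eq3}) gives $|\mathcal{X}| \leq M_1(n,t)$, in parallel with the treatment of Lemma~\ref{lm12}.
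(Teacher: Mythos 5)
Your decomposition and bookkeeping up to the surplus are exactly the paper's: the same first-two-symbol split, the same bounds ($N_3(n-3,1,t-2)$ for $\mathcal{X}^c$, $N_3(n-3,1,t-1)$ for the two ``$\cdot,c$'' terms, $D_3(n-5,t-3)$ and $D_3(n-4,t-2)$ for the others), and the same residual surplus $D_3(n-7,t-4)-D_3(n-8,t-5)$. The gap is in how you propose to absorb it. Your central claim --- that the entire surplus can always be recovered from $\mathcal{X}^{a,b}$ and $\mathcal{X}^{b,a}$ alone --- is unsubstantiated and almost certainly false in some configurations. Concretely, when $x_6=y_6=c$, $x_7=b$, $y_7=a$ and both tails are periodic, both $(a,\mathbf{x}')$ and $(b,\mathbf{y}')$ lie in $\mathbf{c}_3(n-4)$, so neither of your periodicity-based savings applies; what you would then need is a \emph{quantitative} lower bound on how far $D_{t-2}(a,\mathbf{x}')\cap D_t(c,a,b,\mathbf{y}')$ falls below $D_3(n-4,t-2)$ when the shorter center is periodic but not suitably embedded in the longer one, and none of Lemmas~\ref{lm4}, \ref{lm8}--\ref{lm11} provides such a bound. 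The paper instead recovers the surplus in that configuration from a refinement of $\mathcal{X}^c$: since $\mathbf{x}_{[7,n]}\neq\mathbf{y}_{[7,n]}$, its $(c,c)$ sub-block is bounded by $N_3(n-6,1,t-4)$, which saves exactly $D_3(n-7,t-4)-D_3(n-8,t-5)$. Likewise, when $x_6\neq c$ or $y_6\neq c$ the paper draws the saving from the $c$-indexed terms ($\mathcal{X}^{a,c}$, $\mathcal{X}^{b,c}$, $\mathcal{X}^c$), not from $\mathcal{X}^{a,b}+\mathcal{X}^{b,a}$.

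There is also a quantitative defect in the one branch where your tool does apply. If, say, $(a,\mathbf{x}')$ is not periodic, Lemma~\ref{lm11} applied to a length-$(n-4)$ sequence with $t-2$ deletions saves only
\begin{equation*}
D_3(n-4,t-2)-\bigl(D_3(n-6,t-2)+D_3(n-6,t-3)+D_3(n-7,t-3)+D_3(n-7,t-4)+D_3(n-9,t-5)\bigr)=D_3(n-8,t-4)-D_3(n-9,t-5),
\end{equation*}
and by Eqs.~(\ref{eq3}) and (\ref{eq6}) this is in general \emph{strictly smaller} than the required $D_3(n-7,t-4)-D_3(n-8,t-5)$ (their difference is $D_3(n-9,t-5)-D_3(n-11,t-7)\geqslant 0$ in the wrong direction). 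So even in the favorable non-periodic case you would still need to extract additional savings from other terms, which is precisely what the paper's case analysis on $x_6,y_6,x_7,y_7$ and the periodicity of $\mathbf{x}_{[7,n]},\mathbf{y}_{[7,n]}$ is designed to do. As written, your proof does not close.
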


By Lemmas $\ref{lm11}$-$\ref{lm13}$, we begin by restricting our attention to sequences $\mathbf{x}=(x_1,...,x_n), \mathbf{y}=(y_1,...,y_n)$ that differ on the first bit. That is, $x_1\neq y_1$. In Lemma $\ref{lm14}$, we will prove that $|D_3(\mathbf{x},\mathbf{y};t,t)|\leqslant \max\{M_0(n,t),M_1(n,t)\}$ for $n\geqslant 9$ and $\frac{2n}{3}>t\geqslant 2$.

\begin{lemma}
\label{lm14}
For $\mathbf{x}=(x_1,x_2,...,x_n), \mathbf{y}=(y_1,y_2,...,y_n)\in \mathbb{Z}_3^n$ with $d_L(\mathbf{x},\mathbf{y})\geqslant 2$ and $x_1\neq y_1$, $$|D_3(\mathbf{x},\mathbf{y};t,t)|\leqslant \max\{M_0(n,t),M_1(n,t)\}$$ 
holds when $n\geqslant \max\{9,\lfloor \frac{3t}{2}\rfloor +1\}$ and $t\geqslant 2$.
\end{lemma}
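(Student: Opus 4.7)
The plan is to prove the bound via a structured case analysis on the initial segments of $\mathbf{x}$ and $\mathbf{y}$, reducing each configuration either to one already handled by Lemmas~\ref{lm12} and~\ref{lm13} or to a direct estimate coming from the auxiliary Lemmas~\ref{lm8}--\ref{lm11}. By Lemma~\ref{lm8}, deletion-ball sizes are invariant under cyclic shifts of the alphabet, so without loss of generality I may fix $x_1 = b$ and $y_1 = a$ for some distinct $a, b \in \mathbb{Z}_3$, letting $c$ denote the remaining symbol. Writing $\mathcal{X} = D_t(\mathbf{x}) \cap D_t(\mathbf{y})$, I decompose it according to the first symbol of its elements as $\mathcal{X} = \mathcal{X}^0 \cup \mathcal{X}^1 \cup \mathcal{X}^2$ via Lemmas~\ref{lm1} and~\ref{lm2}. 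This reduces each $|\mathcal{X}^s|$ to an intersection of smaller deletion balls taken on the suffixes of $\mathbf{x}$ and $\mathbf{y}$ past their respective first occurrences of $s$, opening the door to either direct comparison against $M_0$ and $M_1$ via the recursion in Corollary~\ref{cor1} or to reduction to a previously established bound.

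Next, I would carry out a case analysis on the prefixes $\mathbf{x}_{[1,5]}$ and $\mathbf{y}_{[1,5]}$, organised around whether each sequence begins with a prefix in $\mathbf{c}_3(n)$. The central dichotomy splits on whether the pattern $(x_1,x_2) = (b,a)$ with $(y_1,y_2) = (a,b)$ (the anti-symmetric start) holds, together with the location of the first occurrence of $c$ in each sequence. When both sequences begin with genuinely periodic prefixes in $\mathbf{c}_3$, the situation reduces (possibly after an alphabet relabeling justified by Lemma~\ref{lm8}) to the exact hypotheses of Lemma~\ref{lm12} or Lemma~\ref{lm13}. For non-periodic prefixes, I would first apply Lemma~\ref{lm9} to replace repeated leading symbols by non-repeated pairs and then Lemma~\ref{lm10} to transform configurations of the form $(\ldots, a, b, a, \ldots)$ into periodic ones, in each step only enlarging $|D_t(\cdot)|$, so the bound on the enlarged sequence suffices to control $|\mathcal{X}|$.

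For the remaining sub-cases in which at least one of $\mathbf{x}$ or $\mathbf{y}$ fails to be in $\mathbf{c}_3(n)$, Lemma~\ref{lm11} directly bounds $|D_t(\mathbf{x})|$ (or $|D_t(\mathbf{y})|$) by a quantity which, using the trivial estimate $|\mathcal{X}| \leq \min\{|D_t(\mathbf{x})|, |D_t(\mathbf{y})|\}$ together with the monotonicity inequalities~(\ref{eq6})--(\ref{eq9}) and the recursive identity~(\ref{eq8}), already lies below $M_1(n,t)$. When the first occurrences of certain symbols in $\mathbf{x}$ and $\mathbf{y}$ coincide, each sub-intersection $|\mathcal{X}^s|$ reduces to an instance of $N_3(\cdot, 1, \cdot)$ or $N_3(\cdot, \cdot, \cdot, 1)$; these are already known from Theorem~\ref{thm1} and Corollary~\ref{cor2}, and the resulting sum can be matched term-by-term against the $M_1$ expansion.

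The main obstacle is the combinatorial proliferation of sub-cases that arises once one enumerates the possible prefix pairs $(\mathbf{x}_{[1,5]}, \mathbf{y}_{[1,5]})$: naively there are on the order of $3^{10}$ such pairs, and one must simultaneously keep track of the constraint $d_L(\mathbf{x}, \mathbf{y}) \geq 2$ (which rules out certain pairs but is delicate to check case-by-case). The crucial simplification is that the alphabet symmetry of Lemma~\ref{lm8}, together with the reductions furnished by Lemmas~\ref{lm9}--\ref{lm11}, collapses these to a small number of genuinely distinct canonical prefixes, each of which either falls under Lemma~\ref{lm12} or Lemma~\ref{lm13} or admits a direct bound. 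Verifying that each canonical case yields an intersection of size at most $\max\{M_0(n,t), M_1(n,t)\}$ is then a routine but technical calculation using Corollary~\ref{cor1} and the recursion~(\ref{eq3}) for $D_3$.
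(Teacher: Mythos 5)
Your overall skeleton — fix $x_1\neq y_1$, decompose $\mathcal{X}=D_t(\mathbf{x})\cap D_t(\mathbf{y})$ by the first symbol via Lemmas~\ref{lm1} and~\ref{lm2}, run a case analysis on the prefixes, and reduce the two hard prefix configurations to Lemmas~\ref{lm12} and~\ref{lm13} — is exactly the paper's strategy. However, two of the steps you rely on to dispose of the remaining cases would fail. First, you propose to use Lemmas~\ref{lm9} and~\ref{lm10} to replace $\mathbf{x}$ by a modified sequence $\mathbf{x}'$ with $|D_t(\mathbf{x})|\leqslant|D_t(\mathbf{x}')|$ and then bound the intersection for $\mathbf{x}'$. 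This does not control $|D_t(\mathbf{x})\cap D_t(\mathbf{y})|$: those lemmas compare cardinalities of single deletion balls, not the balls as sets, so nothing forces $|D_t(\mathbf{x})\cap D_t(\mathbf{y})|\leqslant|D_t(\mathbf{x}')\cap D_t(\mathbf{y})|$; and even if it did, the pair $(\mathbf{x}',\mathbf{y})$ need not satisfy $d_L(\mathbf{x}',\mathbf{y})\geqslant 2$, so you could not invoke the target bound for it. In the paper these lemmas are used only inside the proof of Lemma~\ref{lm11} (and Lemma~\ref{lm11} inside Lemmas~\ref{lm12} and~\ref{lm13}), never to transform one member of the pair in the main case analysis.

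Second, your fallback for non-periodic sequences — bounding $|\mathcal{X}|$ by $\min\{|D_t(\mathbf{x})|,|D_t(\mathbf{y})|\}$ and then by the Lemma~\ref{lm11} estimate — cannot work quantitatively. The Lemma~\ref{lm11} bound is dominated by $D_3(n-2,t)$, which grows like $n^t/t!$, whereas $M_1(n,t)$ is dominated by $D_3(n-4,t-2)$, of order $n^{t-2}/(t-2)!$; the former exceeds the latter by a factor of order $n^2$. Lemma~\ref{lm11} is only useful when applied to \emph{suffixes with a reduced deletion radius} (e.g.\ to $D_{t-3}(\mathbf{x}_{[6,n]})$ inside the proofs of Lemmas~\ref{lm12} and~\ref{lm13}). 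What actually closes the remaining cases in the paper is that every sub-intersection $\mathcal{X}^s$ either loses at least two units of deletion radius on one side (giving a term $D_3(n-k,t-j)$ with $j\geqslant 2$ that is comparable to the summands of $M_1(n,t)$), or reduces to a known quantity $N_3(\cdot,1,\cdot)$ from Theorem~\ref{thm1} or $N_3(\cdot,\cdot+1,\cdot,1)$ from Lemma~\ref{lm7} (via the distance hypothesis $d_L(\mathbf{x},\mathbf{y})\geqslant 2$ propagating to the suffixes); the resulting sums are then matched term-by-term against $M_1(n,t)$ using Eqs.~(\ref{eq3}) and~(\ref{eq6}). Your proposal is missing this radius-reduction mechanism, which is the actual engine of the proof.
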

\begin{proof}
Since $n\geqslant \max\{9,\lfloor \frac{3t}{2}\rfloor +1\}$, we have $n\geqslant t+4$. Let $\mathcal{X}=D_t(\mathbf{x})\cap D_t(\mathbf{y})$. Without loss of generality, let $x_1=1$ and $y_1=0$. By Lemmas $\ref{lm1}$ and $\ref{lm2}$, $|\mathcal{X}|=|\mathcal{X}^0|+|\mathcal{X}^1|+|\mathcal{X}^2|$ with
\begin{align}
\mathcal{X}^0&=0\circ \big(D_{t-1-\ell_0}(\mathbf{x}_{[3+\ell_0,n]})\cap D_t(\mathbf{y}_{[2,n]})\big),\nonumber\\
\mathcal{X}^1&=1\circ \big(D_{t}(\mathbf{x}_{[2,n]})\cap D_{t-1-\ell_1^*}(\mathbf{y}_{[3+\ell_1^*,n]})\big),\nonumber\\
\mathcal{X}^2&=2\circ \big(D_{t-1-\ell_2}(\mathbf{x}_{[3+\ell_2,n]})\cap D_{t-1-\ell_2^*}(\mathbf{y}_{[3+\ell_2^*,n]})\big),\nonumber
\end{align}
where $\ell_0,\ell_1^*,\ell_2,\ell_2^*$ are non-negative integers such that $x_{2+\ell_0}=0$, $x_{2+\ell_2}=2$, $y_{2+\ell_1^{*}}=1$, and $y_{2+\ell_2^{*}}=2$. Moreover, if $\ell_0=0$, then $x_2=0$; if $\ell_1^*=0$, then $y_2=1$; if $\ell_2=0$ or $\ell_2^*=0$, then $x_2=2$ or $y_2=2$, respectively. Specially, based on the values of $\ell_0$ and $\ell_1^*$, we discuss the value of $|\mathcal{X}|$ in the following four cases: \textbf{Case A:} $\ell_0=\ell_1^*=0$; \textbf{Case B:} $\ell_0,\ell_1^*\geqslant 1$; \textbf{Case C:} $\ell_0=0, \ell_1^*\geqslant 1$; \textbf{Case D:} $\ell_0\geqslant 1, \ell_1^*=0$.

\textbf{Case A:} Given that $\ell_0=\ell_1^*=0$, it follows that $x_2=0$, $y_2=1$, and $\ell_2,\ell_2^*\geqslant 1$. In \textbf{Case A}, based on the values of $\ell_2$ and $\ell_2^*$, we consider the following two subcases: \textbf{Case A1:} $\ell_2\geqslant 2$ or $\ell_2^*\geqslant 2$; \textbf{Case A2:} $\ell_2=\ell_2^*=1$.

\textbf{Case A1:} Given that $\ell_2\geqslant 2$ or $\ell_2^*\geqslant 2$, we have 
\begin{align}
|\mathcal{X}^2|\leqslant \min\{|D_{t-1-\ell_2}(\mathbf{x}_{[3+\ell_2,n]})|,|D_{t-1-\ell_2^*}(\mathbf{y}_{[3+\ell_2^*,n]})|\}\leqslant D_3(n-4,t-3),\nonumber
\end{align}
from Eq. $(\ref{eq3})$ under the condition $n\geqslant t+2$. Since $d_L(\mathbf{x,y})\geqslant 2$, $(x_1,x_2)=(1,0)$, and $(y_1,y_2)=(0,1)$, we have $\mathbf{x}_{[3,n]}\notin D_1(\mathbf{y}_{[2,n]})$ and $\mathbf{y}_{[3,n]}\notin D_1(\mathbf{x}_{[2,n]})$. By Lemma $\ref{lm7}$, it follows that
\begin{align}
|\mathcal{X}^0|=|D_{t-1}(\mathbf{x}_{[3,n]})\cap D_{t}(\mathbf{y}_{[2,n]})|&\leqslant N_3(n-2,t,t-1,1)\nonumber\\
&=D_3(n-4,t-2)+2D_3(n-5,t-2)+D_3(n-5,t-3)+D_3(n-6,t-3),\nonumber\\
|\mathcal{X}^1|=|D_{t}(\mathbf{x}_{[2,n]})\cap D_{t-1}(\mathbf{y}_{[3,n]})|&\leqslant N_3(n-2,t,t-1,1)\nonumber\\
&=D_3(n-4,t-2)+2D_3(n-5,t-2)+D_3(n-5,t-3)+D_3(n-6,t-3).\nonumber
\end{align}
Thus, 
\begin{align*}
|\mathcal{X}|&=|\mathcal{X}^{0}|+|\mathcal{X}^{1}|+|\mathcal{X}^2|\\
&\leqslant 2D_3(n-4,t-2)+D_3(n-4,t-3)+4D_3(n-5,t-2)+2D_3(n-5,t-3)+2D_3(n-6,t-3)\\
&\overset{(a)}{=}M_1(n,t)-\big(D_3(n-5,t-3)+D_3(n-8,t-5)-D_3(n-7,t-4)-D_3(n-7,t-5)\big)\\
&\overset{(b)}{=} M_1(n,t)-\big(D_3(n-6,t-3)+D_3(n-7,t-4)+D_3(n-8,t-5)+D_3(n-8,t-5)-D_3(n-7,t-4)\\
&~~~~-D_3(n-8,t-5)-D_3(n-9,t-6)-D_3(n-10,t-7)\big)\\
&=M_1(n,t)-\big(D_3(n-6,t-3)-D_3(n-9,t-6)\big)+\big(D_3(n-8,t-5)-D_3(n-10,t-7)\big)\\
&\overset{(c)}{\leqslant} M_1(n,t),
\end{align*}
where $(a), (b)$ follow from Eq. $(\ref{eq3})$ since $n\geqslant t+3$, $(c)$ follows from Eq. $(\ref{eq6})$.

\textbf{Case A2:} Given that $\ell_2=\ell_2^*=1$, it follows that $x_3=y_3=2$. Thus, $\mathbf{x}_{[4,n]}\neq \mathbf{y}_{[4,n]}$ because $d_L(\mathbf{x},\mathbf{y})\geqslant 2$. Furthermore, for $n\geqslant t+3$,
\begin{align}
|\mathcal{X}^2|=|D_{t-2}(\mathbf{x}_{[4,n]})\cap D_{t-2}(\mathbf{y}_{[4,n]})|\leqslant N_3(n-3,1,t-2)=2D_3(n-5,t-3)+D_3(n-6,t-4).\label{eq10} 
\end{align}
We further decompose $\mathcal{X}^0$ and $\mathcal{X}^1$ with $(x_1,x_2,x_3)=(1,0,2)$ and $(y_1,y_2,y_3)=(0,1,2)$ as follows:
$|\mathcal{X}^0|=|\mathcal{X}^{(0,0)}|+|\mathcal{X}^{(0,1)}|+|\mathcal{X}^{(0,2)}|$ and $|\mathcal{X}^1|=|\mathcal{X}^{(1,0)}|+|\mathcal{X}^{(1,1)}|+|\mathcal{X}^{(1,2)}|$ with
\begin{align}
|\mathcal{X}^{(0,0)}|&=|D_{t-2-\ell_{(0,0)}}(x_{5+\ell_{(0,0)}},...,x_n)\cap D_{t-2-\ell_{(0,0)}^*}(y_{5+\ell_{(0,0)}^*},...,y_n)|,\nonumber\\
|\mathcal{X}^{(0,1)}|&=|D_{t-2-\ell_{(0,1)}}(x_{5+\ell_{(0,1)}},...,x_n)\cap D_{t}(2,y_{4},...,y_n)|,\nonumber\\
|\mathcal{X}^{(0,2)}|&=|D_{t-1}(x_{4},...,x_n)\cap D_{t-1}(y_{4},...,y_n)|,\nonumber\\
|\mathcal{X}^{(1,0)}|&=|D_{t}(2,x_{4},...,x_n)\cap D_{t-2-\ell_{(1,0)}^*}(y_{5+\ell_{(1,0)}^*},...,y_n)|,\nonumber\\
|\mathcal{X}^{(1,1)}|&=|D_{t-2-\ell_{(1,1)}}(x_{5+\ell_{(1,1)}},...,x_n)\cap D_{t-2-\ell_{(1,1)}^*}(y_{5+\ell_{(1,1)}^*},...,y_n)|,\nonumber\\
|\mathcal{X}^{(1,2)}|&=|D_{t-1}(x_{4},...,x_n)\cap D_{t-1}(y_{4},...,y_n)|,\label{eq11}
\end{align}
where 
$\ell_{(0,0)},\ell_{(0,0)}^*,\ell_{(0,1)},\ell_{(1,0)}^*,\ell_{(1,1)}^*,\ell_{(1,1)}\geqslant 0$ and $\mathbf{x}_{[4,n]}\neq \mathbf{y}_{[4,n]}$. Since $\mathbf{x}_{[4,n]}\neq \mathbf{y}_{[4,n]}$ and $n\geqslant t+3$, we have
\begin{align}
|\mathcal{X}^{(0,2)}|\leqslant N_3(n-3,1,t-1)=2D_3(n-5,t-2)+D_3(n-6,t-3),\label{eq12}\\
|\mathcal{X}^{(1,2)}|\leqslant N_3(n-3,1,t-1)=2D_3(n-5,t-2)+D_3(n-6,t-3).\label{eq13}
\end{align}
Since $\ell_{(0,0)}\neq \ell_{(0,1)}$ and $\ell_{(0,0)}, \ell_{(0,1)}\geqslant 0$, by Eq. $(\ref{eq11})$ we have
\begin{equation}
|\mathcal{X}^{(0,0)}|+|\mathcal{X}^{(0,1)}|\leqslant |D_{t-2-\ell_{(0,0)}}(x_{5+\ell_{(0,0)}},...,x_n)|+|D_{t-2-\ell_{(0,1)}}(x_{5+\ell_{(0,1)}},...,x_n)|\leqslant D_3(n-5,t-3)+D_3(n-4,t-2).\label{eq14}
\end{equation}
Since $\ell_{(1,0)}^*\neq \ell_{(1,1)}^*$ and $\ell_{(1,0)}^*,\ell_{(1,1)}^*\geqslant 0$, by Eq. $(\ref{eq11})$  we have
\begin{equation}
|\mathcal{X}^{(1,0)}|+|\mathcal{X}^{(1,1)}|\leqslant |D_{t-2-\ell_{(1,0)}^*}(y_{5+\ell_{(1,0)}^*},...,y_n)|+|D_{t-2-\ell_{(1,1)}^*}(y_{5+\ell_{(1,1)}^*},...,y_n)|\leqslant D_3(n-5,t-3)+D_3(n-4,t-2).\label{eq15}
\end{equation}
In \textbf{Case A2}, based on the values of $\ell_{(0,0)},\ell_{(0,1)},\ell_{(1,0)}^*,$ and $\ell_{(1,1)}^*$, we consider the value of $|\mathcal{X}|$ by the following three cases: \textbf{Case A2-1:} $\ell_{(0,0)},\ell_{(0,1)}\geqslant 1$ or $\ell_{(1,0)}^*,\ell_{(1,1)}^*\geqslant 1$; \textbf{Case A2-2:} $\{\ell_{(0,0)},\ell_{(0,1)}\}=\{0,a\}$ or $\{\ell_{(1,0)}^*,\ell_{(1,1)}^*\}=\{0,a\}$ with $a\geqslant 2$; \textbf{Case A2-3:} $\ell_{(0,0)}=\ell_{(1,0)}^*=0,\ell_{(0,1)}=\ell_{(1,1)}^*= 1$; \textbf{Case A2-4:} $\ell_{(0,0)}=\ell_{(1,0)}^*=1,\ell_{(0,1)}=\ell_{(1,1)}^*=0$; \textbf{Case A2-5:} $\ell_{(0,0)}=0,\ell_{(0,1)}=1$ and $\ell_{(1,1)}^*=0,\ell_{(1,0)}^*=1$; \textbf{Case A2-6:} $\ell_{(0,0)}=1,\ell_{(0,1)}=0$ and $\ell_{(1,0)}^*=0,\ell_{(1,1)}^*=1$.

\textbf{Case A2-1:} Given that $\ell_{(0,0)},\ell_{(0,1)}\geqslant 1$ or $\ell_{(1,0)}^*,\ell_{(1,1)}^*\geqslant 1$, without loss of generality, we consider the case where $\ell_{(0,0)},\ell_{(0,1)}\geqslant 1$. Since $\ell_{(0,0)},\ell_{(0,1)}\geqslant 1$ and $\ell_{(0,0)}\neq \ell_{(0,1)}$, by Eq. $(\ref{eq11})$ we have
\begin{equation}
|\mathcal{X}^{(0,0)}|+|\mathcal{X}^{(0,1)}|\leqslant |D_{t-2-\ell_{(0,0)}}(x_{5+\ell_{(0,0)}},...,x_n)|+|D_{t-2-\ell_{(0,1)}}(x_{5+\ell_{(0,1)}},...,x_n)|\leqslant D_3(n-5,t-3)+D_3(n-6,t-4).\label{eq16}
\end{equation}
Thus, by Eqs. $(\ref{eq10}),(\ref{eq12}),(\ref{eq13}),(\ref{eq15}),(\ref{eq16})$, we have
\begin{align*}
|\mathcal{X}|&=\big(|\mathcal{X}^{(0,0)}|+|\mathcal{X}^{(0,1)}|\big)+|\mathcal{X}^{(0,2)}|+\big(|\mathcal{X}^{(1,0)}|+|\mathcal{X}^{(1,1)}|\big)+|\mathcal{X}^{(1,2)}|+|\mathcal{X}^2|\\
&\leqslant D_3(n-5,t-3)+D_3(n-6,t-4)+2D_3(n-5,t-2)+D_3(n-6,t-3)+D_3(n-5,t-3)+D_3(n-4,t-2)\\
&~~~~+2D_3(n-5,t-2)+D_3(n-6,t-3)+2D_3(n-5,t-3)+D_3(n-6,t-4)\\
&\overset{(a)}{=}M_1(n,t)+\big(D_3(n-7,t-4)-D_3(n-5,t-2)\big)+(D_3(n-9,t-6)-D_3(n-6,t-3))\overset{(b)}{\leqslant} M_1(n,t),
\end{align*}
where $(a)$ follows from Eq. $(\ref{eq3})$ under the condition $n\geqslant t+3$, and $(b)$ follows from Eq. $(\ref{eq6})$.

\textbf{Case A2-2:} Given that $\{\ell_{(0,0)},\ell_{(0,1)}\}=\{0,a\}$ or $\{\ell_{(1,0)}^*,\ell_{(1,1)}^*\}=\{0,a\}$ with $a\geqslant 2$, without loss of generality, we consider the case where $\{\ell_{(0,0)},\ell_{(0,1)}\}=\{0,a\}$. By Eq. $(\ref{eq11})$ we have
\begin{equation}
|\mathcal{X}^{(0,0)}|+|\mathcal{X}^{(0,1)}|\leqslant |D_{t-2-\ell_{(0,0)}}(x_{5+\ell_{(0,0)}},...,x_n)|+|D_{t-2-\ell_{(0,1)}}(x_{5+\ell_{(0,1)}},...,x_n)|\leqslant D_3(n-4,t-2)+D_3(n-6,t-4).\nonumber
\end{equation}
Thus, combining with Eqs. $(\ref{eq10}),(\ref{eq12}),(\ref{eq13}),(\ref{eq15})$, we have
\begin{align*}
|\mathcal{X}|&=\big(|\mathcal{X}^{(0,0)}|+|\mathcal{X}^{(0,1)}|\big)+|\mathcal{X}^{(0,2)}|+\big(|\mathcal{X}^{(1,0)}|+|\mathcal{X}^{(1,1)}|\big)+|\mathcal{X}^{(1,2)}|+|\mathcal{X}^2|\\
&\leqslant D_3(n-4,t-2)+D_3(n-6,t-4)+2D_3(n-5,t-2)+D_3(n-6,t-3)+D_3(n-5,t-3)+D_3(n-4,t-2)\\
&~~~~+2D_3(n-5,t-2)+D_3(n-6,t-3)+2D_3(n-5,t-3)+D_3(n-6,t-4)\\
&\overset{(a)}{=}M_1(n,t)+\big(D_3(n-7,t-4)-D_3(n-6,t-3)\big)+\big(D_3(n-9,t-6)-D_3(n-8,t-5)\big)\overset{(b)}{\leqslant }M_1(n,t),
\end{align*}
where $(a)$ follows from Eq. $(\ref{eq3})$ under the condition $n\geqslant t+3$, and $(b)$ follows from Eq. $(\ref{eq6})$.

\textbf{Case A2-3:} Since $\ell_{(0,0)}=\ell_{(1,0)}^*=0,\ell_{(0,1)}=\ell_{(1,1)}^*= 1$, it follows that $\mathbf{x}_{[1,5]}=(1,0,2,0,1),\mathbf{y}_{[1,5]}=(0,1,2,0,1)$; and $\mathbf{x}_{[6,n]}\neq \mathbf{y}_{[6,n]}$ because  $d_L(\mathbf{x,y})\geqslant 2$. By setting $a=0,b=1,c=2$ in Lemma $\ref{lm12}$, we have $|\mathcal{X}|\leqslant M_0(n,t)$.

\textbf{Case A2-4:} Given that $\ell_{(0,0)}=\ell_{(1,0)}^*=1,\ell_{(0,1)}=\ell_{(1,1)}^*=0$, we have $\mathbf{x}_{[1,5]}=(1,0,2,1,0),\mathbf{y}_{[1,5]}=(0,1,2,1,0)$; and $\mathbf{x}_{[6,n]}\neq \mathbf{y}_{[6,n]}$ because  $d_L(\mathbf{x,y})\geqslant 2$. By swapping $0$ and $1$, we can obtain the same result of \textbf{Case A2-3}. Thus, we have $|\mathcal{X}|\leqslant M_0(n,t).$

\textbf{Case A2-5:} Since $\ell_{(0,0)}=0,\ell_{(0,1)}=1$ and $\ell_{(1,1)}^*=0,\ell_{(1,0)}^*=1$, we have $\mathbf{x}_{[1,5]}=(1,0,2,0,1)$ and $\mathbf{y}_{[1,5]}=(0,1,2,1,0)$. By Eq. $(\ref{eq11})$, we have
\begin{align}
|\mathcal{X}^{(0,0)}|&\leqslant |D_{t-3}(y_6,...,y_n)|\leqslant D_3(n-5,t-3),~
|\mathcal{X}^{(0,1)}|\leqslant |D_{t-3}(x_6,...,x_n)|\leqslant D_3(n-5,t-3),\nonumber\\
|\mathcal{X}^{(1,0)}|&\leqslant |D_{t-3}(y_6,...,y_n)|\leqslant D_3(n-5,t-3),~
|\mathcal{X}^{(1,1)}|\leqslant |D_{t-3}(x_6,...,x_n)|\leqslant D_3(n-5,t-3).\label{eq17}
\end{align}
Thus, by Eqs. $(\ref{eq10}),(\ref{eq12}),(\ref{eq13}),(\ref{eq17})$ we have
\begin{align*}
|\mathcal{X}|&=|\mathcal{X}^{(0,0)}|+|\mathcal{X}^{(0,1)}|+|\mathcal{X}^{(0,2)}|+|\mathcal{X}^{(1,0)}|+|\mathcal{X}^{(1,1)}|+|\mathcal{X}^{(1,2)}|+|\mathcal{X}^2|\\
&\leqslant D_3(n-5,t-3)+D_3(n-5,t-3)+2D_3(n-5,t-2)+D_3(n-6,t-3)+D_3(n-5,t-3)+D_3(n-5,t-3)\\
&~~~~+2D_3(n-5,t-2)+D_3(n-6,t-3)+2D_3(n-5,t-3)+D_3(n-6,t-4)\\
&\overset{(a)}{=} M_1(n,t)+D_3(n-7,t-4)+D_3(n-8,t-5)-2D_3(n-5,t-2)\overset{(b)}{\leqslant} M_1(n,t),
\end{align*}
where $(a)$ follows from Eq. $(\ref{eq3})$, and $(b)$ follows from Eq. $(\ref{eq6})$.

\textbf{Case A2-6:} Given that $\ell_{(0,0)}=1,\ell_{(0,1)}=0$ and $\ell_{(1,0)}^*=0,\ell_{(1,1)}^*=1$, we have $\mathbf{x}_{[1,5]}=(1,0,2,1,0),\mathbf{y}_{[1,5]}=(0,1,2,0,1)$. 
By setting $a=0,b=1,c=2$ in Lemma $\ref{lm13}$, we have $|\mathcal{X}|\leqslant M_1(n,t)$.

\textbf{Case B:} Since $\ell_0,\ell_1^*\geqslant 1$, it follows that
\begin{align}
|\mathcal{X}^0|\leqslant |D_{t-1-\ell_0}(\mathbf{x}_{[3+\ell_0,n]})|\leqslant D_3(n-3,t-2),\nonumber\\
|\mathcal{X}^1|\leqslant |D_{t-1-\ell_1^*}(\mathbf{y}_{[3+\ell_1^*,n]})|\leqslant D_3(n-3,t-2)\nonumber.
\end{align}
If $\ell_2=\ell_2^*=0$, then $\mathbf{x}_{[3,n]}\neq \mathbf{y}_{[3,n]}$ because $d_L(\mathbf{x,y})\geqslant 2$. Thus,
\begin{align}
|\mathcal{X}^2|=|D_{t-1}(\mathbf{x}_{[3,n]})\cap D_{t-1}(\mathbf{y}_{[3,n]})|\leqslant N_3(n-2,1,t-1)=2D_3(n-4,t-2)+D_3(n-5,t-3).\nonumber
\end{align}
If $\ell_2>0$ or $\ell_2^*>0$, then
\begin{align}
|\mathcal{X}^2|\leqslant\max\{|D_{t-1-\ell_2}(\mathbf{x}_{[3+\ell_2,n]})|,|D_{t-1-\ell_2^*}(\mathbf{y}_{[3+\ell_2^*,n]})|&\leqslant D_3(n-3,t-2)\leqslant 2D_3(n-4,t-2)+D_3(n-5,t-3).\nonumber
\end{align}
Thus, given that $\ell_0,\ell_1^*\geqslant 1$, it follows that
\begin{align}
|\mathcal{X}|&=|\mathcal{X}^0|+|\mathcal{X}^1|+|\mathcal{X}^2|\nonumber\\
&\leqslant 2D_3(n-3,t-2)+2D_3(n-4,t-2)+D_3(n-5,t-3)\nonumber\\
&\overset{(a)}{=}M_1(n,t)+D_3(n-9,t-6)+3D_3(n-7,t-4)-2D_3(n-5,t-2)-D_3(n-6,t-3)-D_3(n-8,t-5),\nonumber\\
&\overset{(b)}{\leqslant}M_1(n,t),\nonumber
\end{align}
where $(a)$ follows from Eq. $(\ref{eq3})$, and $(b)$ follows from Eq. $(\ref{eq6})$.

\textbf{Case C:} Given that $\ell_0=0,\ell_1^*\geqslant 1$, it follows that $x_2=0$, $y_2\neq 1$, and $\ell_2\geqslant 1$. Since $d_L(\mathbf{x,y})\geqslant 2$, we have $\mathbf{x}_{[3,n]}\notin D_1(\mathbf{y}_{[2,n]})$. By Lemma $\ref{lm7}$, it follows that
\begin{align}
|\mathcal{X}^0|=|D_{t-1}(\mathbf{x}_{[3,n]})\cap D_{t}(\mathbf{y}_{[2,n]})|&\leqslant N_3(n-2,t,t-1,1)\nonumber\\
&\overset{(a)}{=}D_3(n-4,t-2)+2D_3(n-5,t-2)+D_3(n-5,t-3)+D_3(n-6,t-3),\nonumber
\end{align}
where $(a)$ follows from Corollary $\ref{cor2}$. Since $\ell_1^*\geqslant 1$, we have
\begin{align}
|\mathcal{X}^1|\leqslant |D_{t-1-\ell_1^*}(\mathbf{y}_{[3+\ell_1^*,n]})|\leqslant D_3(n-3,t-2)\nonumber.
\end{align}
If $\ell_2=1$ and $\ell_2^*=0$, then $x_3=y_2=2$ and $\mathbf{x}_{[4,n]}\notin D_1(\mathbf{y}_{[3,n]})$ because $d_L(\mathbf{x,y})\geqslant 2$. Since $2n>3t$, we have $2(n-3)>3(t-2)$ which satisfies the conditions of $n,t$ in Lemma $\ref{lm7}$. By Lemma $\ref{lm7}$ we have
\begin{align}
|\mathcal{X}^2|=|D_{t-2}(\mathbf{x}_{[4,n]})\cap D_{t-1}(\mathbf{y}_{[3,n]})|&\leqslant N_3(n-3,t-1,t-2,1)\nonumber\\
&=D_3(n-5,t-3)+2D_3(n-6,t-3)+D_3(n-6,t-4)+D_3(n-7,t-4).\nonumber
\end{align}
If $\ell_2\geqslant 2$ or $\ell_2^*\geqslant 1$, then we only consider $\ell_2\geqslant 2, \ell_2^*\geqslant 2$, or $\ell_2=\ell_2^*=1$ since $\ell_2\geqslant 1$. When $\ell_2\geqslant 2~\text{or}~\ell_2^*\geqslant 2$,
we have
\begin{align}
|\mathcal{X}^2|&=|D_{t-1-\ell_2}(\mathbf{x}_{[3+\ell_2,n]})\cap D_{t-1-\ell_2^*}(\mathbf{y}_{[3+\ell_2^*,n]})|\leqslant D_3(n-4,t-3)\nonumber\\
&\overset{(a)}{\leqslant} D_3(n-5,t-3)+2D_3(n-6,t-3)+D_3(n-6,t-4)+D_3(n-7,t-4),\nonumber
\end{align}
where $(a)$ follows from Eqs. $(\ref{eq3})$ and $(\ref{eq6})$ under the condition $n\geqslant t+3$.
When $\ell_2=\ell_2^*=1$, then $x_3=y_3=2$ and  $\mathbf{x}_{[4,n]}\neq \mathbf{y}_{[4,n]}$ because $d_L(\mathbf{x,y})\geqslant 2$. By Theorem $\ref{thm1}$
we have
\begin{align}
|\mathcal{X}^2|&=|D_{t-2}(\mathbf{x}_{[4,n]})\cap D_{t-2}(\mathbf{y}_{[4,n]})|\leqslant N_3(n-3,1,t-2)=2D_3(n-5,t-3)+D_3(n-6,t-4)\nonumber\\
&\overset{(a)}{\leqslant} D_3(n-5,t-3)+2D_3(n-6,t-3)+D_3(n-6,t-4)+D_3(n-7,t-4),\nonumber
\end{align}
where $(a)$ follows from Eqs. $(\ref{eq3})$ and $(\ref{eq6})$ since $n\geqslant t+3$.

Given that $\ell_0=0,\ell_1^*\geqslant 1$, by computing the value of $|\mathcal{X}^i|$ for $i\in\{0,1,2\}$, we have
\begin{align}
|\mathcal{X}|&=|\mathcal{X}^0|+|\mathcal{X}^1|+|\mathcal{X}^2|\nonumber\\
&\leqslant D_3(n-4,t-2)+2D_3(n-5,t-2)+D_3(n-5,t-3)+D_3(n-6,t-3)+D_3(n-3,t-2)+D_3(n-5,t-3)\nonumber\\
&~~~~+2D_3(n-6,t-3)+D_3(n-6,t-4)+D_3(n-7,t-4)\nonumber\\
&\overset{(a)}{=}M_1(n,t)+D_3(n-6,t-3)+D_3(n-6,t-4)+2D_3(n-7,t-4)-2D_3(n-5,t-2)\nonumber\\
&~~~~-D_3(n-5,t-3)-D_3(n-8,t-5)\nonumber\\
&\overset{(b)}{\leqslant}M_1(n,t)+D_3(n-6,t-3)+D_3(n-6,t-4)-D_3(n-5,t-3)-D_3(n-8,t-5)\nonumber\\
&\overset{(c)}{=}M_1(n,t)+D_3(n-9,t-6)-D_3(n-8,t-5)\overset{(d)}{\leqslant} M_1(n,t),\nonumber
\end{align}
where $(a),(c)$ follow from Eq. $(\ref{eq3})$, and $(b),(d)$ follow from Eq. $(\ref{eq6})$ under the condition $n\geqslant t+3$.

\textbf{Case D:} Since $\ell_1^*=0,\ell_0\geqslant 1$, it follows that $y_2=1$, $x_2\neq 0$, and $\ell_2^*\geqslant 1$. Similarly, by using the method of proving \textbf{Case 3}, we also prove that
$$|\mathcal{X}|\leqslant M_1(n,t),$$
for $\ell_1^*=0,\ell_0\geqslant 1$.

By the above discussion, we have $|D_3(\mathbf{x},\mathbf{y};t,t)|\leqslant M_1(n,t)$ for $n\geqslant 9$, $\frac{2n}{3}>t\geqslant 2$, $d_L(\mathbf{x},\mathbf{y})\geqslant 2,$ and $x_1\neq y_1$.
\end{proof}

To compare the values of $M_0(n,t)$ and $M_1(n,t)$,  define $f(n,t)=M_1(n,t)-M_0(n,t)$ for  $n\geqslant \max\{9,\lfloor\frac{3t}{2}\rfloor+1\}$ and $t\geqslant 2$. By the definitions of $M_0(n,t)$ and $M_1(n,t)$, it is easily verified that $$f(n,t)=D_3(n-5,t-2)+D_3(n-6,t-2)+D_3(n-8,t-5)+D_3(n-10,t-5)-D_3(n-7,t-3)-2D_3(n-7,t-4)-D_3(n-12,t-7)$$ for $n\geqslant \max\{9,\lfloor\frac{3t}{2}\rfloor+1\}$ and $t\geqslant 2$.
When $t=2,3,4,$ or $5$, we have $f(n,t)\geqslant 0$ for $n\geqslant \max\{9,\lfloor\frac{3t}{2}\rfloor+1\}$. Next, we consider only the case where $t\geqslant 6$ and $n\geqslant \lfloor \frac{3t}{2}\rfloor +1$. In Lemma $\ref{lm15}$, we give some results of $f(n,t)$. The proof of Lemma $\ref{lm15}$ is given in Appendix $\ref{APP-F}$.

\begin{lemma}
\label{lm15}
Let $t\geqslant 6$ and $n\geqslant \max\{9,\lfloor \frac{3t}{2}\rfloor +1\}$. Then we have $f(3k+1,2k)=-1$ and $f(3k+2,2k+1)=0$ for any $k\geqslant 3$. Furthermore, if $n\geqslant 3t$, then $f(n,t)>0$.
\end{lemma}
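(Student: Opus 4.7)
I would split the lemma into two parts: the boundary identities $f(3k+1,2k) = -1$ and $f(3k+2,2k+1) = 0$ (corresponding to $n = \lfloor 3t/2 \rfloor + 1$), and the strict positivity $f(n,t) > 0$ for $n \geqslant 3t$. The first part reduces to a closed-form computation using Eq.~$(\ref{eq4})$, while the second requires a careful manipulation of the recurrence Eq.~$(\ref{eq3})$ together with the monotonicity inequalities Eqs.~$(\ref{eq6})$ and $(\ref{eq7})$.

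For the boundary identities, I would evaluate each of the seven $D_3(n-c, t-d)$ terms appearing in the expression for $f(n,t)$. For $(n,t) = (3k+2, 2k+1)$, a direct check shows every argument satisfies $2(n-c) \leqslant 3(t-d)$, so by Eq.~$(\ref{eq4})$ each $D_3(n-c, t-d) = 3^{(n-c)-(t-d)}$; the signed sum of these seven powers of $3$ is exactly $0$. For $(n,t) = (3k+1, 2k)$, the same applies to six of the seven arguments, but $D_3(3k-7, 2k-5)$ lies just one unit above the threshold since $2(3k-7) = 3(2k-5) + 1$. I would evaluate it via Eq.~$(\ref{eq2})$: with $m - s = k - 2$, the inner sum $\sum_{j=0}^{s-i}\binom{i}{j}$ equals $2^i$ for $i \leqslant k - 3$ and drops to $2^{k-2} - 1$ at $i = k-2$, so the binomial theorem yields $D_3(3k-7, 2k-5) = 3^{k-2} - 1$. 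Substituting this value alongside the other six pure powers of $3$ produces exactly $-1$. For the base case $k = 3$, some indices fall to $0$ or become negative, but the convention $D_3(m,s) = 0$ for invalid $(m,s)$ makes the otherwise-undefined $3^{-1}$ contributions cancel in symmetric pairs, preserving both identities.

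For Part 2, my plan is to rewrite $f(n,t)$ via repeated applications of Eq.~$(\ref{eq3})$ so that every negative contribution is absorbed by a positive one. Substituting $D_3(n-5, t-2) = D_3(n-6, t-2) + D_3(n-7, t-3) + D_3(n-8, t-4)$ cancels $-D_3(n-7, t-3)$, reducing $f(n,t)$ to $2 D_3(n-6, t-2) + D_3(n-8, t-4) - 2 D_3(n-7, t-4) + D_3(n-8, t-5) + D_3(n-10, t-5) - D_3(n-12, t-7)$. A further expansion of $D_3(n-7, t-4)$ and of one copy of $D_3(n-6, t-2)$ via the recurrence exposes a positive residual term such as $D_3(n-7, t-2)$; the remaining negative contributions can then be paired against larger positive siblings using Eqs.~$(\ref{eq6})$ and $(\ref{eq7})$, for example $D_3(n-12, t-7) \leqslant D_3(n-10, t-5)$ by two applications of Eq.~$(\ref{eq6})$. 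Since $n \geqslant 3t$ ensures every argument $(n-c, t-d)$ appearing in the computation satisfies $n-c \geqslant 3(t-d)$, each $D_3(n-c, t-d)$ lies deep in the polynomial-growth regime, which makes the required pairings feasible.

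The principal obstacle is Part 2: the coefficient $-2$ on $D_3(n-7, t-4)$ forces multiple passes of the recurrence, and the monotonicity inequalities available (Eqs.~$(\ref{eq6})$ and $(\ref{eq7})$) only compare $(m,s)$ to $(m+1,s+1)$ or $(m+1,s)$, so chains of several applications are needed to absorb terms like $-2 D_3(n-10, t-6)$ against descendants of $D_3(n-6, t-2)$. For small $t$ near the threshold some $D_3$ terms vanish (e.g.\ $D_3(n-12, t-7) = 0$ when $t = 6$), which alters the cancellation pattern, so I would verify $t = 6, 7$ by direct computation (as with the explicit quartic formula for $M_1(n,6)$ given in the paper) before invoking the general algebraic argument for $t \geqslant 8$.
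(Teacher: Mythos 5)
Your treatment of the two boundary identities is correct and is essentially the paper's own argument: all but one of the seven terms are evaluated by Eq.~(\ref{eq4}), the exceptional term $D_3(3k-7,2k-5)$ sits one unit above the threshold $2m\leqslant 3s$ and is computed as $3^{k-2}-1$ from the explicit formula, and the signed sums come out to $-1$ and $0$ respectively. That part needs no changes.

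The gap is in Part 2. Your plan is to prove $f(n,t)>0$ by expanding with the recurrence~(\ref{eq3}) and then absorbing every negative term into a larger positive sibling via the monotonicity inequalities~(\ref{eq6}) and~(\ref{eq7}). But those inequalities hold throughout the entire range $n\geqslant \lfloor 3t/2\rfloor+1$ and carry no information about the hypothesis $n\geqslant 3t$; if such a pairing existed, it would prove $f(n,t)\geqslant 0$ already at $n=\lfloor 3t/2\rfloor+1$, contradicting the first half of the very lemma you are proving, namely $f(3k+1,2k)=-1$. Concretely, after your own reduction and one more expansion of $D_3(n-7,t-4)$, the surviving obstruction is to show $2D_3(n-6,t-2)+D_3(n-8,t-5)\geqslant D_3(n-8,t-4)+2D_3(n-9,t-5)+2D_3(n-10,t-6)$; at $(n,t)=(3k+1,2k)$ the left side equals $2\cdot 3^{k-3}+3^{k-2}-1$ and the right side equals $2\cdot 3^{k-3}+3^{k-2}$, so the desired pairing fails by exactly one. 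Asserting that $n\geqslant 3t$ puts every term ``deep in the polynomial-growth regime'' does not supply a usable inequality. The paper closes this gap differently: it lower-bounds $f(n,t)$ by $\big(D_3(n-5,t-2)-D_3(n-6,t-3)\big)-\big(D_3(n-7,t-4)-D_3(n-8,t-5)\big)$, rewrites each difference via the explicit formula as $\sum_{j}\binom{n-t-3}{t-2-j}\binom{t-2-j}{j}$ (respectively with $t-4$ in place of $t-2$), and then invokes $\binom{n-t-3}{t-2-j}>\binom{n-t-3}{t-4-j}$ --- a comparison that is valid precisely because $n\geqslant 3t$ keeps both lower indices on the increasing side of $\binom{n-t-3}{\,\cdot\,}$. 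Some quantitative step of this kind, sensitive to $n\geqslant 3t$, is indispensable; the recurrence together with Eqs.~(\ref{eq6})--(\ref{eq7}) alone cannot deliver the conclusion.
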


In order to obtain the result of $|D_t(\mathbf{x})\cap D_t(\mathbf{y})|\leqslant \max\{M_0(n,t),M_1(n,t)\}$ for any $\mathbf{x},\mathbf{y}\in \mathbb{Z}_3^n$ with $d_L(\mathbf{x},\mathbf{y})\geqslant 2$, we require some properties of $M_i(n,t)$ for $i\in\{0,1\}$ in Lemmas $\ref{lm16}$-$\ref{lm18}$. Appendix $\ref{APP-G}$ contains the proofs of Lemmas $\ref{lm16}$-$\ref{lm18}$.

\begin{lemma}
\label{lm16}
Let $n\geqslant \max\{9,\lfloor \frac{3t}{2} \rfloor +1\}$ and $t \geqslant 2$. If $2(n-i) \leqslant 3(t-i+1)$, then $M_1(n-i,t-i+1)=3^{n-t-1}$ for $i\in\{1,2\}$. 
\end{lemma}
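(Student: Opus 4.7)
The plan is to reduce the claim to a direct term-by-term application of Eq.~\eqref{eq4} to the six summands in the definition of $M_1(n-i,t-i+1)$.

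First, I would combine the two standing hypotheses $n\geqslant \lfloor 3t/2\rfloor+1$ and $2(n-i)\leqslant 3(t-i+1)$ (the latter rewrites as $2n\leqslant 3t-i+3$) to pin $n$ down. Splitting into the cases $t=2k$ and $t=2k+1$, I expect to find three admissible configurations: $(i,t,n)=(1,2k,3k+1)$, $(1,2k+1,3k+2)$, and $(2,2k+1,3k+2)$; the case $(i,t,n)=(2,2k,\cdot)$ is ruled out because $(3(2k-1)+4)/2 = 3k+\tfrac12 < 3k+1 = \lfloor 3t/2\rfloor+1$. In all three admissible triples, $n-t-1=k$, and the additional bound $n\geqslant 9$ forces $k\geqslant 3$. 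This is what makes every index appearing below non-negative.

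Next, I would substitute $(n,t)\mapsto(n-i,t-i+1)$ into the definition of $M_1$ to obtain the six-term sum
\begin{align*}
M_1(n-i,t-i+1) =\;& D_3(n-i-4,t-i-1) + 5D_3(n-i-5,t-i-1) \\
& + 4D_3(n-i-5,t-i-2) + 3D_3(n-i-6,t-i-2) \\
& + D_3(n-i-6,t-i-3) + D_3(n-i-8,t-i-4),
\end{align*}
and check each $D_3(m,s)$ against the hypothesis $m\geqslant s\geqslant 2m/3$ of Eq.~\eqref{eq4} using the explicit form of $n$ from Step~1. The upshot is that every term qualifies and the resulting ``gap'' $m-s$ evaluates to either $k-2$ or $k-3$. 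Specifically, the first, third, and fifth summands (coefficients $1$, $4$, $1$) all produce gap $k-2$, while the second, fourth, and sixth summands (coefficients $5$, $3$, $1$) all produce gap $k-3$.

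Finally, I would collect: the total is $6\cdot 3^{k-2}+9\cdot 3^{k-3}=2\cdot 3^{k-1}+3^{k-1}=3^k=3^{n-t-1}$, which is the claim. There is no real mathematical obstacle; the only delicate point is the boundary $k=3$, where the smallest term appearing is $D_3(1,1)$. This equals $1=3^0=3^{k-3}$ by the stated convention $D_3(n,n)=1$, so the uniform formula still applies. Beyond this bookkeeping across the three subcases, the argument is routine.
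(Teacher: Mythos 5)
Your proposal is correct and follows essentially the same route as the paper's proof: both pin down the admissible configurations to $(i,t,n)=(1,2k,3k+1)$, $(1,2k+1,3k+2)$, and $(2,2k+1,3k+2)$ with $k\geqslant 3$, then evaluate each of the six summands of $M_1$ via Eq.~(\ref{eq4}) and sum $6\cdot 3^{k-2}+9\cdot 3^{k-3}=3^k=3^{n-t-1}$. Your explicit check of the boundary term $D_3(1,1)$ at $k=3$ is a small point the paper leaves implicit, but the argument is otherwise identical.
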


\begin{lemma}
\label{lm17}
For $n\geqslant t+4$ and $t\geqslant 2$, we have
$$D_3(n-2,t-2)\leqslant \min\{M_0(n,t),M_1(n,t)\}.$$
Moreover, for $n\geqslant \max\{9,\lfloor \frac{3t}{2} \rfloor +1\}$ and $t\geqslant 2$, we have 
$$M_i(n-1,t)+2N_3(n-3,t-1,t-2,1)\leqslant M_1(n,t)$$ for $i\in \{0,1\}$.
\end{lemma}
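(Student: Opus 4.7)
The plan is to prove the two inequalities separately, each by induction on $n+t$ driven by the three-term recurrences in Corollary~1 and Eq.~(3). For the first inequality $D_3(n-2,t-2)\leq M_i(n,t)$, I apply Eq.~(3) to expand
\[
D_3(n-2,t-2)=D_3(n-3,t-2)+D_3(n-4,t-3)+D_3(n-5,t-4),
\]
and apply Eq.~(8) to expand $M_i(n,t)=M_i(n-1,t)+M_i(n-2,t-1)+M_i(n-3,t-2)$. The inductive hypotheses applied at the three pairs $(n-1,t)$, $(n-2,t-1)$, $(n-3,t-2)$ yield $D_3(n-3,t-2)\leq M_i(n-1,t)$, $D_3(n-4,t-3)\leq M_i(n-2,t-1)$, and $D_3(n-5,t-4)\leq M_i(n-3,t-2)$, whose sum is the desired bound. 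Base cases with $t\in\{2,3\}$ reduce to the trivial estimates $1\leq M_i(n,2)$ and $n-2\leq M_i(n,3)$, while the boundary $n=t+4$ with $t\geq 4$ is handled by computing the $D_3$ values directly; these often saturate to powers of $3$ via Eq.~(4).

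For the second inequality I would first use Corollary~2 to rewrite $N_3(n-3,t-1,t-2,1)$ as $N_3(n-3,t-2)$, and Eq.~(8) to write $M_1(n,t)-M_1(n-1,t)=M_1(n-2,t-1)+M_1(n-3,t-2)$. For $i=1$ this reduces the problem to the key inequality
\[
2N_3(n-3,t-2)\leq M_1(n-2,t-1)+M_1(n-3,t-2),\qquad(\ast)
\]
and a direct substitution using Eq.~(3) in the definition of $N_3$ verifies that $N_3$ satisfies the same three-term recurrence $N_3(n,t)=N_3(n-1,t)+N_3(n-2,t-1)+N_3(n-3,t-2)$ as $M_1$. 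Applying this recurrence to the left-hand side of $(\ast)$ and Eq.~(8) to both summands on the right-hand side, $(\ast)$ at $(n,t)$ becomes exactly the sum of $(\ast)$ at the three smaller pairs $(n-1,t)$, $(n-2,t-1)$, $(n-3,t-2)$, so an induction on $n+t$ closes the argument. For $i=0$, I use $M_0(n-1,t)=M_1(n-1,t)-f(n-1,t)$ with $f(n-1,t)\geq -1$ by Lemma~15: the $i=0$ bound follows from $(\ast)$ except at the exceptional pairs $(n-1,t)=(3k+1,2k)$, where $f=-1$ forces one to verify $(\ast)$ with slack at least $1$.

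The main obstacle is pinning down the base cases of the induction that proves $(\ast)$. Numerical evidence shows that $(\ast)$ becomes an equality at the boundary $n=\lfloor 3t/2\rfloor+1$ (for instance at $(n,t)=(10,6)$), so the base cases must be checked exactly, with no slack to spare. The $i=0$ strengthening at the exceptional parameters where $f=-1$ adds a further layer of delicacy, since the additional unit of required slack has to be extracted directly from a term-by-term comparison at those specific parameter values.
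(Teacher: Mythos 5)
Your strategy is genuinely different from the paper's: the paper proves both inequalities by brute force, expanding every quantity into $D_3$ terms via Eq.~(\ref{eq3}), regrouping the difference into pairs $D_3(a,b)-D_3(a-k,b-k)\geqslant 0$ justified by Eq.~(\ref{eq6}), and checking the low-codimension cases $n=t+4$ and $n=t+5$ by direct substitution. Your recurrence-driven induction is structurally cleaner (and your observation that $N_3(n,t)$ satisfies the same three-term recurrence as $M_1$ and $D_3$ is correct and does reduce the $i=1$ case to the single inequality $(\ast)$), and the first inequality $D_3(n-2,t-2)\leqslant M_i(n,t)$ does go through this way once the boundary family $n=t+4$ is actually computed. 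However, as written the argument for the second inequality has two genuine gaps.

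First, the base cases of the induction for $(\ast)$ are not a finite set: the reduction of $(\ast)$ at $(n,t)$ to $(n-1,t)$ leaves the admissible region whenever $n=\lfloor 3t/2\rfloor+1$, so the entire infinite boundary line $n=\lfloor 3t/2\rfloor+1$ (one pair for every $t$) must be verified, and you yourself note that $(\ast)$ holds with equality there. You flag this as ``the main obstacle'' but do not resolve it; since the whole induction collapses without these cases, the proof is incomplete at its crux. (These cases are in fact tractable in closed form --- along that line everything saturates to powers of $3$ by Eq.~(\ref{eq4}) and Lemma~\ref{lm16}, much as in the proof of Corollary~\ref{cor2} --- but that computation has to be carried out.) Second, your treatment of $i=0$ rests on the claim $f(n-1,t)\geqslant -1$ ``by Lemma~\ref{lm15}.'' Lemma~\ref{lm15} does not say this: it evaluates $f$ only on the two families $(3k+1,2k)$ and $(3k+2,2k+1)$ and proves $f(n,t)>0$ for $n\geqslant 3t$; it gives no lower bound on $f$ for general $(n,t)$ with $\lfloor 3t/2\rfloor+1\leqslant n<3t$ off those two families. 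Nothing elsewhere in the paper supplies such a bound, so the step $M_0(n-1,t)\leqslant M_1(n-1,t)+1$ is unjustified. The paper avoids this entirely by bounding $M_1(n,t)-M_0(n-1,t)-2N_3(n-3,t-1,t-2,1)$ directly as a sum of non-negative pairs, without ever comparing $M_0$ to $M_1$.
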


\begin{lemma}
\label{lm18}
For $n\geqslant \max\{9,\lfloor \frac{3t}{2} \rfloor +1\}$ and $t\geqslant 2$, we have
\begin{align*}
D_3(n-4,t-3)&\leqslant N_3(n-3,t-1,t-2,1),\\
N_3(n-4,t-2,t-3,1)&\leqslant M_1(n-3,t-2),\\
M_0(n-2,t-1)+N_3(n-4,t-2,t-3,1)&\leqslant M_1(n-3,t-2)+M_1(n-2,t-1),\\
M_i(n-3,t-1)+2N_3(n-5,t-2,t-3,1)+M_0(n-3,t-2)&\leqslant M_1(n-3,t-2)+M_1(n-2,t-1)~\text{for}~i\in\{0,1\}.
\end{align*}
\end{lemma}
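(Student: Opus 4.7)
The plan is to prove each of the four inequalities by reducing them to term-by-term comparisons of $D_3$ values and then dispatching those comparisons with the recurrence (\ref{eq3}) together with the monotonicity relations (\ref{eq6})--(\ref{eq7}). The standing hypothesis $n\geqslant\max\{9,\lfloor 3t/2\rfloor+1\}$ implies $2(n-3)>3(t-2)$ and $2(n-4)>3(t-3)$, so Lemma~\ref{lm7} and Corollary~\ref{cor2} yield $N_3(n-3,t-1,t-2,1)=N_3(n-3,t-2)$ and $N_3(n-4,t-2,t-3,1)=N_3(n-4,t-3)$ in the closed form preceding Lemma~\ref{lm7}. The companion quantity $N_3(n-5,t-2,t-3,1)$ is also polynomial except on the isolated line $(t,n)=(2k+1,3k+2)$ with $k\geqslant 3$, where it equals $3^{n-t-2}$; this boundary case is handled at the end by direct substitution and Eq.~(\ref{eq4}).

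For (a), the recurrence gives $D_3(n-4,t-3)=D_3(n-5,t-3)+D_3(n-6,t-4)+D_3(n-7,t-5)$; matching the first two summands with the expansion of $N_3(n-3,t-2)$ and bounding $D_3(n-7,t-5)\leqslant D_3(n-7,t-4)$ via (\ref{eq7}) finishes the argument. For (b), I would split $D_3(n-6,t-4)$ once in the expansion of $N_3(n-4,t-3)$ via (\ref{eq3}) so that every resulting index pair appears in $M_1(n-3,t-2)$, and then dispatch the remaining comparisons using (\ref{eq6})--(\ref{eq7}); careful bookkeeping shows the comparison carries a strict surplus bounded below by $D_3(n-8,t-4)+D_3(n-9,t-5)+D_3(n-11,t-7)$, which will be used in (c).

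For (c), the point is that $M_1(n-2,t-1)-M_0(n-2,t-1)=f(n-2,t-1)$ is nonnegative by Lemma~\ref{lm15} outside the single exceptional line $(n,t)=(3k+3,2k+1)$ with $k\geqslant 3$, where $f=-1$; in the generic case (c) is immediate from (b), while on the exceptional line $t-4=2k-3\geqslant 3$ guarantees the surplus from (b) is at least $1$, so (b) sharpens to $N_3(n-4,t-3)+1\leqslant M_1(n-3,t-2)$ and (c) follows. For (d), I would apply the recurrence (\ref{eq8}) in the form $M_1(n-2,t-1)=M_1(n-3,t-1)+M_1(n-4,t-2)+M_1(n-5,t-3)$, bound $M_i(n-3,t-1)\leqslant M_1(n-3,t-1)$ in the bulk, and expand everything in terms of $D_3$. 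After applying the recurrence to $D_3(n-6,t-3)$ and $D_3(n-7,t-3)$ on the right and to $D_3(n-8,t-5)$ and $D_3(n-9,t-5)$ on the left, the difference $\mathrm{RHS}-\mathrm{LHS}$ reduces to
\[
2D_3(n-9,t-4)+2D_3(n-10,t-5)-3D_3(n-11,t-6)-D_3(n-12,t-7)+D_3(n-13,t-7)-D_3(n-15,t-9).
\]
This is nonnegative because (\ref{eq6}) applied twice gives $3D_3(n-11,t-6)\leqslant D_3(n-9,t-4)+2D_3(n-10,t-5)$, leaving the residual $D_3(n-9,t-4)-D_3(n-12,t-7)+D_3(n-13,t-7)-D_3(n-15,t-9)\geqslant 0$, which is a pair of telescoping (\ref{eq6})-bounds. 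The $i=0$ case then follows from the $i=1$ case in the bulk, and on the line $(n,t)=(3k+4,2k+1)$ with $k\geqslant 3$ (where $f(n-3,t-1)=-1$) one uses that $D_3(n-9,t-4)=3^{k-2}$ is strictly larger than the polynomial-regime value $D_3(n-12,t-7)$, so the residual is at least $1$ as required.

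The main obstacle will be the delicate accounting on these two exceptional lines: at $(n,t)=(3k+3,2k+1)$ for (c) and $(3k+4,2k+1)$ for (d), the inequalities become tight to within $O(1)$, so the surplus from (b) and from the recurrence reduction above must be tracked precisely, using Eq.~(\ref{eq4}) to compute the $D_3$ values that collapse to $3^{n-t}$ and verifying that the remaining nonnegative terms still dominate after cancellation. Beyond these discrete exceptional families, the inequalities (a)--(d) each hold with positive slack that scales like a $D_3$-value, so the arguments in the generic regime are essentially mechanical.
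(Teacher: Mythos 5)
Your overall strategy (expand everything in $D_3$, apply the recurrence (\ref{eq3}), and dispatch the comparison term-by-term with (\ref{eq6})) is the same as the paper's, but two of your concrete steps do not go through.

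First, in part (a) you bound $D_3(n-7,t-5)\leqslant D_3(n-7,t-4)$ ``via (\ref{eq7})''. Eq.~(\ref{eq7}) is monotonicity in the \emph{length}, not in the radius, so it does not give this; worse, the inequality itself fails on the boundary line $2n=3t+1$ of the lemma's range. For $(n,t)=(11,7)$ (which satisfies $n=\lfloor 3t/2\rfloor+1=11$) one has $D_3(n-7,t-5)=D_3(4,2)=6$ while $D_3(n-7,t-4)=D_3(4,3)=3^{1}=3$. The inequality (a) is still true there ($N_3(8,6,5,1)=26\geqslant 23=D_3(7,4)$), but only because of the surplus $2D_3(n-6,t-3)$; the paper's proof absorbs $D_3(n-7,t-5)$ by expanding it once more with (\ref{eq3}) and pairing each piece against a larger term via (\ref{eq6}), i.e.\ using only differences of the form $D_3(a,b)-D_3(a-j,b-j)\geqslant 0$. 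Your single-step bound cannot be repaired by citing (\ref{eq6}) or (\ref{eq7}).

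Second, your reductions of (c) and (d) hinge on the claim that $f(m,s)=M_1(m,s)-M_0(m,s)\geqslant 0$ for all parameters in range except the single line $2m=3s+1$ (resp.\ $2m=3s+2$ for even $s$), ``by Lemma~\ref{lm15}''. Lemma~\ref{lm15} does not say this: it evaluates $f$ only on the two boundary lines $m=\lfloor 3s/2\rfloor+1$ and proves $f>0$ for $m\geqslant 3s$; it is silent on the sign of $f$ for $\lfloor 3s/2\rfloor+2\leqslant m\leqslant 3s-1$ (which is exactly why the main theorem keeps $\max\{M_0,M_1\}$ over that whole range). So deducing (c) from (b) plus ``$f\geqslant 0$ generically'', and likewise bounding $M_0(n-3,t-1)\leqslant M_1(n-3,t-1)$ ``in the bulk'' for (d), is unsupported. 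The paper avoids this entirely: it proves (c) and (d) by writing out $M_1(n-3,t-2)+M_1(n-2,t-1)-\text{LHS}$ explicitly, regrouping into nonnegative pairs via (\ref{eq3}) and (\ref{eq6}), and checking the small boundary cases $n\in\{t+4,t+5,t+6\}$ by direct computation. You would need to either supply a proof that $f\geqslant 0$ off the exceptional lines (not currently in the paper) or switch to the direct expansion.
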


The main result follows:

\begin{theorem}
For $n\geqslant \max\{6,\lfloor \frac{3t}{2} \rfloor +1\}$ and $t\geqslant 2$,  $$N_3(n,2,t)\leqslant \max\{M_0(n,t),M_1(n,t)\}.$$
\label{thm5}
\end{theorem}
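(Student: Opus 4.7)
The plan is to establish the upper bound by strong induction on $n$. The base case $9\geqslant n\geqslant \max\{6,\lfloor 3t/2\rfloor+1\}$ is supplied by the computer-verified identity $N_3(n,2,t)=M_1(n,t)$, so I will assume $n\geqslant \max\{10,\lfloor 3t/2\rfloor+2\}$. Fix $\mathbf{x},\mathbf{y}\in\mathbb{Z}_3^n$ with $d_L(\mathbf{x},\mathbf{y})\geqslant 2$, set $\mathcal{X}:=D_t(\mathbf{x})\cap D_t(\mathbf{y})$, and aim to show $|\mathcal{X}|\leqslant \max\{M_0(n,t),M_1(n,t)\}$. Simultaneous application of the shift of Lemma \ref{lm8} to both sequences preserves $|\mathcal{X}|$, so we may normalize $x_1$ freely.

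If $x_1\neq y_1$, Lemma \ref{lm14} gives the bound directly: the hypothesis $n\geqslant \lfloor 3t/2\rfloor+1$ is equivalent to $2n/3>t$, placing us in its range of applicability. Otherwise $x_1=y_1$, and by Lemma \ref{lm8} I may take $x_1=y_1=0$. Lemmas \ref{lm1} and \ref{lm2} then decompose
\[
|\mathcal{X}|=|\mathcal{X}^0|+|\mathcal{X}^1|+|\mathcal{X}^2|,\quad \mathcal{X}^0=0\circ\bigl(D_t(\mathbf{x}_{[2,n]})\cap D_t(\mathbf{y}_{[2,n]})\bigr).
\]
Since stripping the shared leading symbol from both sequences preserves Levenshtein distance, $d_L(\mathbf{x}_{[2,n]},\mathbf{y}_{[2,n]})\geqslant 2$. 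The inductive hypothesis (supplemented by Lemma \ref{lm16} in the boundary case $n-1=\lfloor 3t/2\rfloor+1$, which is exactly the regime where $N_3(n-1,2,t)=3^{n-1-t}=M_1(n-1,t)$) yields $|\mathcal{X}^0|\leqslant \max\{M_0(n-1,t),M_1(n-1,t)\}$.

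For $a\in\{1,2\}$, the first $a$ in each of $\mathbf{x},\mathbf{y}$ occurs at position $\geqslant 2$, so Lemma \ref{lm2} expresses $|\mathcal{X}^a|$ as an intersection of deletion balls of two suffixes of length $\leqslant n-2$ with radii $\leqslant t-1$. When the two first-$a$ positions coincide, the two suffixes share a common length and inherit $d_L\geqslant 1$, so $|\mathcal{X}^a|\leqslant N_3(n-2,1,t-1)$ by Theorem \ref{thm1}; when the positions differ by one, Lemma \ref{lm7} together with Corollary \ref{cor2} yields $|\mathcal{X}^a|\leqslant N_3(n-3,t-1,t-2,1)$; and when one position lies far to the right, a crude single-ball size bound is already negligible. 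In every sub-case, Lemma \ref{lm17}'s inequality $M_i(n-1,t)+2N_3(n-3,t-1,t-2,1)\leqslant M_1(n,t)$ (or a sharper variant from Lemma \ref{lm18}) closes the estimate to $|\mathcal{X}|\leqslant M_1(n,t)\leqslant \max\{M_0(n,t),M_1(n,t)\}$.

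The main obstacle is the combinatorial bookkeeping in the $x_1=y_1$ branch: the four independent first-occurrence indices (for the symbols $1$ and $2$, one pair per sequence) generate a multi-way case tree, and some sub-cases — notably those where one paired index coincides at position $2$ while the other differs by one — are too tight for the summation of Lemma \ref{lm17} alone and must be closed by the sharper identities of Lemma \ref{lm18}. I will organize the analysis in parallel with Lemma \ref{lm14}, grouping sub-cases by which paired positions coincide and by the size of the gap, and in each group apply the matching inequality from Lemmas \ref{lm17}--\ref{lm18} together with the induction hypothesis. Combined with the $x_1\neq y_1$ branch handled by Lemma \ref{lm14}, this closes the induction and completes the proof.
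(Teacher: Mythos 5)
Your skeleton (first-symbol decomposition, induction, Lemma \ref{lm14} for the $x_1\neq y_1$ branch, Lemmas \ref{lm16}--\ref{lm18} to close the sums) matches the paper's, but there is a genuine quantitative gap in the $x_1=y_1$ branch. When $x_2=y_2=a\neq x_1$, you bound $|\mathcal{X}^a|=|D_{t-1}(\mathbf{x}_{[3,n]})\cap D_{t-1}(\mathbf{y}_{[3,n]})|$ by $N_3(n-2,1,t-1)$ via Theorem \ref{thm1}, using only that the suffixes are distinct. This is too weak: $N_3(n-2,1,t-1)=2D_3(n-4,t-2)+D_3(n-5,t-3)=\Theta(n^{t-2})$, the same order as $M_1(n,t)\sim 6n^{t-2}/(t-2)!$ itself, so $|\mathcal{X}^0|+|\mathcal{X}^a|$ already overshoots $\max\{M_0(n,t),M_1(n,t)\}$ asymptotically (roughly $8n^{t-2}/(t-2)!$ versus $6n^{t-2}/(t-2)!$). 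The suffix pair actually inherits $d_L\geqslant 2$ (equal prefixes were stripped), and the only way to close this case is to recurse with the distance-$2$ bound, giving $|\mathcal{X}^a|\leqslant\max\{M_0(n-2,t-1),M_1(n-2,t-1)\}=\Theta(n^{t-3})$ and then invoking the recursion $M_1(n-1,t)+M_1(n-2,t-1)+M_1(n-3,t-2)=M_1(n,t)$ of Eq.~(\ref{eq8}). This is exactly what the paper does.

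A second, related defect: your induction carries only the conclusion $N_3(n-1,2,t)\leqslant\max\{M_0(n-1,t),M_1(n-1,t)\}$ for the $\mathcal{X}^0$ term, but the recursion above needs the sharper bound $M_1(n-1,t)$ there whenever the reduced pair still has a nonempty common prefix. Since $M_0$ can strictly exceed $M_1$ (Lemma \ref{lm15} gives $f(3k+1,2k)=-1$), the sum $\max\{M_0(n-1,t),M_1(n-1,t)\}+M_1(n-2,t-1)+M_1(n-3,t-2)$ can exceed $M_1(n,t)$, and it is not covered by Lemmas \ref{lm17}--\ref{lm18} except in the prefix-length-one case. The paper resolves this by inducting on the length $p$ of the common prefix and proving the stronger statement that $|\mathcal{X}|\leqslant M_1(n,t)$ whenever $p\geqslant 1$, reserving the possibility of attaining $M_0(n,t)$ for the $p=0$ case handled by Lemma \ref{lm14}. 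You would need to strengthen your inductive statement in the same way for the argument to close.
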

\begin{proof}
Since $n\geqslant \max\{9,\lfloor \frac{3t}{2} \rfloor +1\}$, we have $n\geqslant t+4$. When $9\geqslant n\geqslant \max\{6,\lfloor \frac{3t}{2}\rfloor+1\}$ and $t\geqslant 2$, we can verify that $N_3(n,2,t)=M_1(n,t)$ by using a computerized search. Next, we consider the case where $n\geqslant 10$. Let $\mathbf{x}=(\mathbf{u},\mathbf{v},\mathbf{w}), \mathbf{y}=(\mathbf{u},\mathbf{v}',\mathbf{w})\in \mathbb{Z}_3^n$, and denote $\mathcal{X}=D_t(\mathbf{x})\cap D_t(\mathbf{y})$, where $|\mathbf{u}|=p,|\mathbf{w}|=q,|\mathbf{x}|=|\mathbf{y}|=n$, and $d_L(\mathbf{x,y})\geqslant 2$. If $p=0$ or $q=0$, then by Lemma $\ref{lm14}$ we have $|\mathcal{X}|\leqslant \max\{M_0(n,t),M_1(n,t)\}$. Next, we consider only the case where $p,q\geqslant 1$. For convenience, let $\mathbf{x}=(x_1,...,x_n), \mathbf{y}=(y_1,...,y_n)$. For some fixed $q\geqslant 1$, we will prove that $|\mathcal{X}|\leqslant M_1(n,t)$ by induction on the length of $\mathbf{u}$. Based on the value of $p$, for $2n\geqslant 3t+1$, we now prove that $|\mathcal{X}|\leqslant M_1(n,t)$ in the following three cases: \textbf{Case A:} $p=1$; \textbf{Case B:} $p=2$; \textbf{Case C:} $p=3$.

\textbf{Case A:} Given that $p=1$, it follows that $x_1=y_1$ and $x_2\neq y_2$. Let $\mathbb{Z}_3=\{x_1,a,b\}$. By Lemmas $\ref{lm1}$ and $\ref{lm2}$, we have
$$|\mathcal{X}|=|\mathcal{X}^{x_1}|+|\mathcal{X}^{a}|+|\mathcal{X}^{b}|$$
with $\mathcal{X}^{x_1}=x_1\circ \big(D_t(\mathbf{x}_{[2,n]})\cap D_t(\mathbf{y}_{[2,n]})\big)$, $\mathcal{X}^{a}=a\circ\big(D_{t-1-\ell_0}(\mathbf{x}_{[3+\ell_0,n]})\cap D_{t-1-\ell_0^*}(\mathbf{y}_{[3+\ell_0^*,n]})\big)$, $\mathcal{X}^{b}=b\circ\big(D_{t-1-\ell_1}(\mathbf{x}_{[3+\ell_1,n]})\cap D_{t-1-\ell_1^*}(\mathbf{y}_{[3+\ell_1^*,n]})\big)$, where $\ell_0,\ell_0^*,\ell_1,\ell_1^*\geqslant 0$. Moreover, since $d_L(\mathbf{x,y})\geqslant 2$ and $x_1=y_1$, we have  $d_L(\mathbf{x}_{[2,n]},\mathbf{y}_{[2,n]})\geqslant 2$. Thus, for $n\geqslant 10$, if $2(n-1)\geqslant 3t+1$, then by Lemma $\ref{lm14}$
\begin{align}
|\mathcal{X}^{x_1}|\leqslant \max\{M_0(n-1,t),M_1(n-1,t)\}.\label{eq18}
\end{align}
Under the condition of $2n\geqslant 3t+1$, if $2(n-1)\leqslant 3t$, then $|\mathcal{X}^{x_1}|\leqslant 3^{n-1-t}$. By Lemma $\ref{lm16}$, we also have Eq. $(\ref{eq18})$.

When $t=2$, we have that  $|\mathcal{X}^{a}|=|\mathcal{X}^{b}|=0$. For $t\geqslant 3$, to compute $|\mathcal{X}^{a}|$ and $|\mathcal{X}^{b}|$, based on the values of $\ell_0, \ell_0^*,\ell_1, \ell_1^*$, we analyze the following two cases: \textbf{Case A1:} $\ell_0\geqslant 2, \ell_0^*\geqslant 2,\ell_1\geqslant 2,$ or $\ell_1^*\geqslant 2$; \textbf{Case A2:} $\ell_0,\ell_0^*,\ell_1,\ell_1^*\leqslant 1$.

\textbf{Case A1:} Given that $\ell_0\geqslant 2, \ell_0^*\geqslant 2,\ell_1\geqslant 2,$ or $\ell_1^*\geqslant 2$, without loss of generality, we let $\ell_0\geqslant 2$. Then we have
\begin{align}
|\mathcal{X}^{a}|\leqslant |D_{t-1-\ell_0}(\mathbf{x}_{[3+\ell_0,n]})|\leqslant |D_{t-3}(\mathbf{x}_{[5,n]})|\leqslant D_3(n-4,t-3)
\overset{(a)}{\leqslant} N_3(n-3,t-1,t-2,1)\label{eq19},
\end{align}
where $(a)$ follows from Lemma $\ref{lm18}$. We discuss the value of $|\mathcal{X}^{b}|$. Since $\ell_0\geqslant 2$, we have $x_i\neq a$ for $i\in \{1,2,3\}$.
Moreover, since $p=1$, we have $x_2\neq y_2$. According to the values of $x_2$ and $y_2$, we consider the following four cases: \textbf{Case A1-1:} $x_2=b,y_2=a$; \textbf{Case A1-2:} $x_2=x_1,y_2=b$; \textbf{Case A1-3:} $x_2=b,y_2=x_1$; \textbf{Case A1-4:} $x_2=x_1,y_2=a$.

\textbf{Case A1-1:} We further discuss the value of $|\mathcal{X}^{b}|$ based on the value of $y_3$. 
Given $x_2=b$ and $y_2=a$, for $y_3=b$, we have $x_1=y_1$ and $x_2=y_3$. Thus, for $y_3=b$,  since $d_L(\mathbf{x,y})\geqslant 2$, it follows that $\mathbf{y}_{[4,n]}\notin D_1(\mathbf{x}_{[3,n]})$. Moreover, since $2n>3t$, we have $2(n-3)>3(t-2)$. Thus, for $y_3=b$, by Lemma $\ref{lm7}$ we have 
\begin{align}
|\mathcal{X}^{b}|=|D_{t-1}(\mathbf{x}_{[3,n]})\cap D_{t-2}(\mathbf{y}_{[4,n]})|\leqslant N_3(n-3,t-1,t-2,1).\nonumber
\end{align}
If $y_3\neq b$, then $\ell_1^*\geqslant 2$. Thus, for $y_3\neq b$,
\begin{align}
|\mathcal{X}^{b}|&=|D_{t-1}(\mathbf{x}_{[3,n]})\cap D_{t-1-\ell_1^*}(\mathbf{y}_{[3+\ell_1^*,n]})|\leqslant |D_{t-1-\ell_1^*}(\mathbf{y}_{[3+\ell_1^*,n]})|\leqslant |D_{t-3}(\mathbf{y}_{[5,n]})|\leqslant D_3(n-4,t-3)\nonumber\\
&\overset{(a)}{\leqslant} N_3(n-3,t-1,t-2,1),\nonumber
\end{align}
where $(a)$ follows from Lemma $\ref{lm18}$. 

Hence, given that $x_2=b$ and $y_2=a$, it follows that
\begin{equation}
|\mathcal{X}^{b}|\leqslant N_3(n-3,t-1,t-2,1)\label{eq20}.
\end{equation}
By Eqs. $(\ref{eq18})$-$(\ref{eq20})$ and Lemma $\ref{lm17}$, we have
\begin{align*}
|\mathcal{X}|=|\mathcal{X}^{x_1}|+|\mathcal{X}^a|+|\mathcal{X}^b|\leqslant \max\{M_0(n-1,t),M_1(n-1,t)\}+2N_3(n-3,t-1,t-2,1)\leqslant M_1(n,t).
\end{align*}

\textbf{Cases A1-2} and \textbf{A1-3}: When $x_2=x_1,y_2=b$ or $x_2=b,y_2=x_1$, we consider the value of $|\mathcal{X}^{b}|$ based on the value of $x_3$ or $y_3$. Similarly, by the method of discussing \textbf{Case A1-1}, we have
$$|\mathcal{X}|\leqslant \max\{M_0(n-1,t),M_1(n-1,t)\}+2N_3(n-3,t-1,t-2,1)\leqslant M_1(n,t).$$

\textbf{Case A1-4:} When $x_2=x_1$ and $y_2=a$, we will compute the value of $|\mathcal{X}^{b}|$ based on the values of $x_3$ and $y_3$. For $x_3=y_3=b$, it follows that 
\begin{align}
|\mathcal{X}^{b}|&=|D_{t-2}(\mathbf{x}_{[4,n]})\cap D_{t-2}(\mathbf{y}_{[4,n]})|\leqslant N_3(n-3,1,t-2)\leqslant N_3(n-3,t-1,t-2,1).\nonumber
\end{align}
since $\mathbf{x}_{[4,n]}\neq \mathbf{y}_{[4,n]}$.

For $y_3\neq b$ or $x_3\neq b$, we have  $\ell_1^*\geqslant 2$ or $\ell_1\geqslant2$ such that
\begin{align}
|\mathcal{X}^{b}|\leqslant \min\{|D_{t-1-\ell_1}(\mathbf{x}_{[3+\ell_1,n]})|,|D_{t-1-\ell_1^*}(\mathbf{y}_{[3+\ell_1^*,n]})|\}
\leqslant D_3(n-4,t-3)\overset{(a)}{\leqslant} N_3(n-3,t-1,t-2,1),\nonumber
\end{align}
where $(a)$ follows from Lemma $\ref{lm18}$.



Hence, if $x_2=x_1$ and $y_2=a$, then we have
\begin{equation}
|\mathcal{X}^{b}|\leqslant N_3(n-3,t-1,t-2,1)\label{eq21}.
\end{equation}
By Eqs. $(\ref{eq18})$, $(\ref{eq19})$, and $(\ref{eq21})$, we have
\begin{align*}
|\mathcal{X}|=|\mathcal{X}^{x_1}|+|\mathcal{X}^a|+|\mathcal{X}^b|\leqslant \max\{M_0(n-1,t),M_1(n-1,t)\}+2N_3(n-3,t-1,t-2,1)\leqslant M_1(n,t).
\end{align*}
By the above discussion, if $\ell_0\geqslant 2$, then we have $|\mathcal{X}|\leqslant M_1(n,t).$

\textbf{Cases A2:} Given that $\ell_0,\ell_0^*,\ell_1,\ell_1^*\leqslant 1$, $\ell_0\neq \ell_1$, $\ell_0\neq \ell_0^*$, $\ell_0^*\neq \ell_1^*$, without loss of generality, let $\ell_0=0, \ell_1=1$, $\ell_0^*=1$, and $\ell_1^*=0$. Then $x_1=y_1$, $x_2=y_3=a$, and $x_3=y_2=b$. Thus, we have $\mathbf{x}_{[4,n]}\notin D_1(\mathbf{y}_{[3,n]})$ and $\mathbf{y}_{[4,n]}\notin D_1(\mathbf{x}_{[3,n]})$ because $d_L(\mathbf{x,y})\geqslant 2$. Moreover, since $2n>3t$, we have $2(n-3)>3(t-2)$. Thus, by Lemma $\ref{lm7}$ we obtain
\begin{align}
|\mathcal{X}^{a}|&=|D_{t-1}(\mathbf{x}_{[3,n]})\cap D_{t-2}(\mathbf{y}_{[4,n]})|\leqslant N_3(n-3,t-1,t-2,1),\nonumber\\
|\mathcal{X}^{b}|&=|D_{t-2}(\mathbf{x}_{[4,n]})\cap D_{t-1}(\mathbf{y}_{[3,n]})|\leqslant  N_3(n-3,t-1,t-2,1).\nonumber
\end{align}
Thus,
\begin{align*}
|\mathcal{X}|=|\mathcal{X}^{x_1}|+|\mathcal{X}^a|+|\mathcal{X}^b|\leqslant \max\{M_0(n-1,t),M_1(n-1,t)\}+2N_3(n-3,t-1,t-2,1)\leqslant M_1(n,t).
\end{align*}

Therefore, when $p=1$, for $n\geqslant10$ and $t\geqslant 2$, we have
\begin{align}
|\mathcal{X}|\leqslant \max\{M_0(n-1,t),M_1(n-1,t)\}+2N_3(n-3,t-1,t-2,1)\leqslant M_1(n,t).\nonumber
\end{align}

\textbf{Case B:} Given that $p=2$, it follows that $x_1=y_1$, $x_2=y_2$, and $x_3\neq y_3$. For convenience, let $\mathbb{Z}_3=\{x_1,a,b\}$. By Lemmas $\ref{lm1}$ and $\ref{lm2}$, we have
$$|\mathcal{X}|=|\mathcal{X}^{x_1}|+|\mathcal{X}^{a}|+|\mathcal{X}^{b}|,$$
such that $\mathcal{X}^{x_1}=x_1\circ \big(D_t(\mathbf{x}_{[2,n]})\cap D_t(\mathbf{y}_{[2,n]})\big)$, $\mathcal{X}^{a}=a\circ\big(D_{t-1-\ell_0}(\mathbf{x}_{[3+\ell_0,n]})\cap D_{t-1-\ell_0^*}(\mathbf{y}_{[3+\ell_0^*,n]})\big)$, $\mathcal{X}^{b}=b\circ\big(D_{t-1-\ell_1}(\mathbf{x}_{[3+\ell_1,n]})\cap D_{t-1-\ell_1^*}(\mathbf{y}_{[3+\ell_1^*,n]})\big)$, where $\ell_0,\ell_0^*,\ell_1,\ell_1^*\geqslant 0$. Moreover, since $d_L(\mathbf{x,y})\geqslant 2$ and $x_1=y_1$, we have $d_L(\mathbf{x}_{[2,n]},\mathbf{y}_{[2,n]})\geqslant 2$.

For $n\geqslant 11$, under the condition of $2(n-1)\geqslant 3t+1$, by using the above result with $p=1$,
\begin{align}
|\mathcal{X}^{x_1}|\leqslant  M_1(n-1,t),\label{eq22}
\end{align} since $x_2=y_2$ and $x_3\neq y_3$. Under the condition of $2n\geqslant 3t+1$, if $2(n-1)\leqslant 3t$, then $|\mathcal{X}^{x_1}|\leqslant 3^{n-1-t}$. By Lemma $\ref{lm16}$ we also have Eq. $(\ref{eq22})$. When $n=10$, we also have Eq. $(\ref{eq22})$ since $|\mathcal{X}^{x_1}|\leqslant N_3(9,2,t)=M_1(9,t)$.

Based on the values of $x_2$ and $y_2$, we discuss the following two subcases: \textbf{Case B1:} $x_2\neq x_1$; \textbf{Case B2:} $x_2=x_1$.


\textbf{Case B1:} Since $x_2\neq x_1$, for convenience, let $x_2=y_2=a$. Then $d_L(\mathbf{x}_{[3,n]},\mathbf{y}_{[3,n]})\geqslant 2$ and $\ell_0=\ell_0^*=0$ since $d_L(\mathbf{x,y})\geqslant 2$, $x_1=y_1$, and $x_2=y_2$.  If $2(n-2)\leqslant 3(t-1)$, then $M_1(n-2,t-1)=3^{n-t-1}\geqslant |D_{t-1}(\mathbf{x}_{[3,n]})|\geqslant |\mathcal{X}^{a}|$. For $n\geqslant 11$, if $2(n-2)\geqslant 3(t-1)+1$, then by Lemma $\ref{lm14}$ we have
\begin{align}
|\mathcal{X}^{a}|=|D_{t-1}(\mathbf{x}_{[3,n]})\cap D_{t-1}(\mathbf{y}_{[3,n]})|\leqslant \max\{M_0(n-2,t-1),M_1(n-2,t-1)\}.\label{eq23}
\end{align}
When $n=10$, we also have Eq. $(\ref{eq23})$ since  $|\mathcal{X}^{a}|\leqslant N_3(8,2,t-1)=M_1(8,t-1)$.

Based on the values of $x_3$ and $y_3$, we discuss $|\mathcal{X}^{b}|$ in the following two cases: \textbf{Case B1-1:} $x_3\neq b$ and $y_3\neq b$; \textbf{Case B1-2} $x_3=b$ or $y_3=b$. 

\textbf{Case B1-1:} Given that $x_3\neq b$ and $y_3\neq b$, for $x_4=y_4=b$, by the method of discussing \textbf{Case A1-4}, we have 
\begin{align}
|\mathcal{X}^{b}|&=|D_{t-3}(\mathbf{x}_{[5,n]})\cap D_{t-3}(\mathbf{y}_{[5,n]})|\leqslant N_3(n-4,1,t-3)
\leqslant N_3(n-4,t-2,t-3,1).\nonumber
\end{align}

For $x_4\neq b$ or $y_4\neq b$, it follows that $\ell_1\geqslant 3$ or $\ell_1^*\geqslant 3$ such that
\begin{align}
|\mathcal{X}^{b}|\leqslant \min\{|D_{t-1-\ell_1}(\mathbf{x}_{[3+\ell_1,n]})|, |D_{t-1-\ell_1^*}(\mathbf{y}_{[3+\ell_1^*,n]})|\}\leqslant D_3(n-5,t-4)\overset{(a)}{\leqslant} N_3(n-4,t-2,t-3,1),\nonumber
\end{align}
where $(a)$ follows from Lemma $\ref{lm18}$.

\textbf{Case B1-2:} Given that $x_3=b$ or $y_3=b$, without loss of generality, let $x_3=b$ and $y_3\neq b$ under the condition $p=2$. 

If $y_4\neq b$, then $\ell_1^*\geqslant 3$. Thus, for $y_4\neq b$, we have 
\begin{align}
|\mathcal{X}^{b}|&\leqslant |D_{t-1-\ell_1^*}(\mathbf{y}_{[3+\ell_1^*,n]})|\leqslant D_3(n-5,t-4)\overset{(a)}{\leqslant} N_3(n-4,t-2,t-3,1),\nonumber
\end{align}
where  $(a)$ follows from Lemma $\ref{lm18}$. 

If $y_4=b$, then $x_3=y_4$. For $x_1=x_2$, $x_2=y_2$, $x_3=y_4$, and $d_L(\mathbf{x,y})\geqslant 2$, it follows that 
$\mathbf{y}_{[5,n]}\notin D_1(\mathbf{x}_{[4,n]})$.
Moreover, for $2n\geqslant 3t+1$, we have $2(n-4)\geqslant 3(t-3)+1$. For $y_4=b$, by Lemma $\ref{lm7}$ we have
\begin{align}
|\mathcal{X}^{b}|=|D_{t-2}(\mathbf{x}_{[4,n]})\cap D_{t-3}(\mathbf{y}_{[5,n]})|\leqslant N_3(n-4,t-2,t-3,1).\nonumber
\end{align}
Therefore, in \textbf{Case B1} where $x_2\neq x_1$, we obtain
\begin{align}
|\mathcal{X}^{b}|&\leqslant N_3(n-4,t-2,t-3,1).\label{eq24}
\end{align}

\textbf{Case B2:} Given that $x_2=x_1$, for  $x_3\neq y_3$, without loss of generality let $x_3=b\neq x_1$. Then $\ell_0\geqslant 2$. Thus, we have 
\begin{align}
|\mathcal{X}^{a}|\leqslant |D_{t-1-\ell_0}(\mathbf{x}_{[3+\ell_0,n]})|\leqslant D_3(n-4,t-3)\overset{(a)}{\leqslant} M_1(n-2,t-1),\label{eq25}
\end{align}
where $(a)$ follows from Lemma $\ref{lm18}$. Based on the value of $y_4$, we discuss $|\mathcal{X}^{b}|$ in the following two cases: \textbf{Case B2-1:} $y_4\neq b$; \textbf{Case B2-2:} $y_4=b$.

\textbf{Case B2-1:} Given that $y_4\neq b$, it follows that $\ell_1^*\geqslant 3$. Thus,
\begin{align}
|\mathcal{X}^{b}|\leqslant |D_{t-1-\ell_1^*}(\mathbf{y}_{[3+\ell_1^*,n]})|\leqslant D_3(n-5,t-4)
\overset{(a)}{\leqslant} N_3(n-4,t-2,t-3,1),\nonumber
\end{align}
where $(a)$ follows from Lemma $\ref{lm18}$. 

\textbf{Case B2-2:} Given that $y_4=b$, it follows that $x_3=y_4$. Combining with the conditions $x_1=x_2$, $x_2=y_2$, and  $d_L(\mathbf{x,y})\geqslant 2$, we have $\mathbf{y}_{[5,n]}\notin D_1(\mathbf{x}_{[4,n]})$. Moreover, for $2n\geqslant 3t+1$, it follows that $2(n-4)\geqslant 3(t-3)+1$.  By Lemma $\ref{lm7}$, we have
\begin{align}
|\mathcal{X}^{b}|=|D_{t-2}(\mathbf{x}_{[4,n]})\cap D_{t-3}(\mathbf{y}_{[5,n]})|\leqslant N_3(n-4,t-2,t-3,1).\nonumber
\end{align}

Thus, in \textbf{Case B2} where $x_2=x_1$, we have
\begin{align}
|\mathcal{X}^{b}|&\leqslant N_3(n-4,t-2,t-3,1).\label{eq26}
\end{align}

Therefore, when $p=2$, for $n\geqslant 10$, by Eqs. $(\ref{eq22})$-$(\ref{eq26})$ we have
\begin{align*}
|\mathcal{X}|&=|\mathcal{X}^{x_1}|+|\mathcal{X}^a|+|\mathcal{X}^b|\\
&\leqslant M_1(n-1,t)+\max\{M_0(n-2,t-1),M_1(n-2,t-1)\}+N_3(n-4,t-2,t-3,1)\\
&\leqslant M_1(n-1,t)+\max\{M_1(n-2,t-1)+N_3(n-4,t-2,t-3,1),M_0(n-2,t-1)+N_3(n-4,t-2,t-3,1)\}\\
&\overset{(a)}{\leqslant} M_1(n-1,t)+M_1(n-2,t-1)+M_1(n-3,t-2)\overset{(b)}{=}M_1(n,t),
\end{align*}
where $(a)$ follows from Lemma $\ref{lm18}$, and $(b)$ follows from Eq. $(\ref{eq8})$.

\textbf{Case C:} For $p=3$, it follows that $x_1=y_1$, $x_2=y_2$, $x_3=y_3$ and $x_4\neq y_4$. For convenience, let $\mathbb{Z}_3=\{x_1,a,b\}$. By Lemmas $\ref{lm1}$ and $\ref{lm2}$, we have
$$|\mathcal{X}|=|\mathcal{X}^{x_1}|+|\mathcal{X}^{a}|+|\mathcal{X}^{b}|,$$ where
$\mathcal{X}^{x_1}=x_1\circ \big(D_t(\mathbf{x}_{[2,n]})\cap D_t(\mathbf{y}_{[2,n]})\big)$, $\mathcal{X}^{a}=a\circ\big(D_{t-1-\ell_0}(\mathbf{x}_{[3+\ell_0,n]})\cap D_{t-1-\ell_0^*}(\mathbf{y}_{[3+\ell_0^*,n]})\big)$, $\mathcal{X}^{b}=b\circ\big(D_{t-1-\ell_1}(\mathbf{x}_{[3+\ell_1,n]})\cap D_{t-1-\ell_1^*}(\mathbf{y}_{[3+\ell_1^*,n]})\big)$ with $\ell_0,\ell_0^*,\ell_1,\ell_1^*\geqslant 0$. Moreover, since $d_L(\mathbf{x,y})\geqslant 2$ and $x_1=y_1$, we have $d_L(\mathbf{x}_{[2,n]},\mathbf{y}_{[2,n]})\geqslant 2$.
Thus, for $n\geqslant 11$, under the condition of $2(n-1)\geqslant 3t+1$,  by using the above result with $p=2$
\begin{align}
|\mathcal{X}^{x_1}|\leqslant  M_1(n-1,t),\label{eq27}
\end{align}
for $x_2=y_2,x_3=y_3,$ and $x_4\neq y_4$. Under the condition of $2n\geqslant 3t+1$, if $2(n-1)\leqslant 3t$, then $|\mathcal{X}^{x_1}|\leqslant 3^{n-1-t}$. By Lemma $\ref{lm16}$ we also have Eq. $(\ref{eq27})$. When $n=10$, we also have Eq. $(\ref{eq27})$ since $|\mathcal{X}^{x_1}|\leqslant N_3(9,2,t)=M_1(9,t)$.

Based on the values of $x_2$ and $x_3$, we discuss the following three cases: \textbf{Case C1:} $\{x_1,x_2,x_3\}=\mathbb{Z}_3=\{x_1,a,b\}$; \textbf{Case C2:} $\{x_i|i\in\{1,2,3\}\}=\{x_1,a\}$ or $\{x_1,b\}$; \textbf{Case C3:} $x_1=x_2=x_3$.

\textbf{Case C1:} Given that $\{x_1,x_2,x_3\}=\mathbb{Z}_3$, without loss of generality, let $x_2=a,x_3=b$. Since $d_L(\mathbf{x,y})\geqslant 2$, $x_1=y_1$, $x_2=y_2=a$, $x_3=y_3=b$, we have $d_L(\mathbf{x}_{[3,n]},\mathbf{y}_{[3,n]})\geqslant 2$, $d_L(\mathbf{x}_{[4,n]},\mathbf{y}_{[4,n]})\geqslant 2$, $\ell_0=\ell_0^*=0$, and $\ell_1=\ell_1^*=1$. If $2(n-2)\leqslant 3(t-1)$, then $M_1(n-2,t-1)=3^{n-t-1}\geqslant |D_{t-1}(\mathbf{x}_{[3,n]})|\geqslant |\mathcal{X}^{a}|$ and $M_i(n-3,t-1)+2N_3(n-5,t-2,t-3,1)=3^{n-t-2}+2\cdot 3^{n-t-2}=3^{n-t-1}$. For $n\geqslant12$, if $2(n-2)\leqslant 3(t-1)+1$, by using the result of $p=1$, we have
\begin{align}
|\mathcal{X}^{a}|\leqslant \max\{M_0(n-3,t-1),M_1(n-3,t-1)\}+2N_3(n-5,t-2,t-3,1) \leqslant  M_1(n-2,t-1).\label{eq28}
\end{align}
For $n\geqslant 12$ and $t\geqslant4$,  since $2(n-3)\geqslant 3(t-2)+1$, by Lemma $\ref{lm14}$ we have
\begin{align}
|\mathcal{X}^{b}|\leqslant \max\{M_0(n-3,t-2),M_1(n-3,t-2)\}.\label{eq29}
\end{align}
For $n=11$ or $10$, we also have Eqs. (\ref{eq28}) and (\ref{eq29}).

\textbf{Case C2:} Given that $\{x_i|i\in\{1,2,3\}\}=\{x_1,a\}$ or $\{x_1,b\}$, without loss of generality, let $\{x_i|i\in\{1,2,3\}\}=\{x_1,a\}$. If $x_2=a$, $x_i\neq b$, and $y_i\neq b$ for $i\in \{1,2,3\}$, then we have $\ell_0=\ell_0^*=0$ and $\ell_1,\ell_1^*\geqslant 2$. Similarly, we also have Eq. $(\ref{eq28})$. Moreover, for  $n\geqslant 12$ and $t\geqslant4$,  we have
\begin{align}
|\mathcal{X}^{b}|&=|D_{t-1-\ell_1}(\mathbf{x}_{[3+\ell_1,n]})\cap D_{t-1-\ell_1^*}(\mathbf{y}_{[3+\ell_1^*,n]})|\leqslant |D_{t-1-1}(\mathbf{x}_{[4,n]})\cap D_{t-1-1}(\mathbf{y}_{[4,n]})|\nonumber\\
&\overset{(a)}{\leqslant} \max\{M_0(n-3,t-2),M_1(n-3,t-2)\},\label{eq30}
\end{align}
where $(a)$ follows from Lemma $\ref{lm14}$ under the conditions $x_4\neq y_4$, $d_L(\mathbf{x}_{[4,n]},\mathbf{y}_{[4,n]})\geqslant 2$, and $2(n-3)\geqslant 3(t-2)+1$.
For $n=11$ or $10$, we also have Eq. (\ref{eq30}).

\textbf{Case C3:} Given $x_1=x_2=x_3$, we have $\ell_0,\ell_0^*,\ell_1,\ell_1^*\geqslant 2$. Thus, 
\begin{align*}
|\mathcal{X}^{a}|&=|D_{t-1-\ell_0}(\mathbf{x}_{[3+\ell_0,n]})\cap D_{t-1-\ell_0^*}(\mathbf{y}_{[3+\ell_0^*,n]})|\leqslant |D_{t-1}(\mathbf{x}_{[3,n]})\cap D_{t-1}(\mathbf{y}_{[3,n]})|,\\
|\mathcal{X}^{b}|&=|D_{t-1-\ell_1}(\mathbf{x}_{[3+\ell_1,n]})\cap D_{t-1-\ell_1^*}(\mathbf{y}_{[3+\ell_1^*,n]})|\leqslant |D_{t-2}(\mathbf{x}_{[4,n]})\cap D_{t-2}(\mathbf{y}_{[4,n]})|.
\end{align*}
Since $d_L(\mathbf{x,y})\geqslant 2$ and $x_1=y_1,x_2=y_2,x_3=y_3$, we have $d_L(\mathbf{x}_{[3,n]},\mathbf{y}_{[3,n]})\geqslant 2$ and $d_L(\mathbf{x}_{[4,n]},\mathbf{y}_{[4,n]})\geqslant 2$. Similarly, we also obtain Eqs. $(\ref{eq28})$ and $(\ref{eq29})$. 

Therefore, when $p=3$, for $n\geqslant 10$, by Eqs. $(\ref{eq27})$-$(\ref{eq30})$ we have
\begin{align*}
|&\mathcal{X}|=|\mathcal{X}^{x_1}|+|\mathcal{X}^a|+|\mathcal{X}^b|\\
&\leqslant \max\{M_1(n-1,t)+M_1(n-2,t-1)+M_1(n-3,t-2),M_1(n-1,t)+\max\{M_0(n-3,t-1),M_1(n-3,t-1)\}\\
&~~~~+2N_3(n-5,t-2,t-3,1)+M_0(n-3,t-2)\}\\
&\overset{(a)}{\leqslant} M_1(n-1,t)+M_1(n-2,t-1)+M_1(n-3,t-2)\overset{(b)}{=}M_1(n,t),
\end{align*}
where $(a)$ follows from Lemma $\ref{lm18}$, and $(b)$ follows from Eq. $(\ref{eq8})$.

For $p\in \{1,2,3\}$ and $n\geqslant10$, we have proved that $|\mathcal{X}|\leqslant M_1(n,t)$ with $2n\geqslant 3t+1$ which indicates the base of induction holds. Now for the induction step, assume the claim is true for cases of the structure of $p<s$ with $s\geqslant 4$ and $n\geqslant10$, and we look at $p=s$. For convenience, let $\mathbb{Z}_3=\{x_1,a,b\}$. By Lemmas $\ref{lm1}$ and $\ref{lm2}$, we have
$$|\mathcal{X}|=|\mathcal{X}^{x_1}|+|\mathcal{X}^{a}|+|\mathcal{X}^{b}|,$$
such that $\mathcal{X}^{x_1}=x_1\circ \big(D_t(\mathbf{x}_{[2,n]})\cap D_t(\mathbf{y}_{[2,n]})\big)$, $\mathcal{X}^{a}=a\circ\big(D_{t-1-\ell_0}(\mathbf{x}_{[3+\ell_0,n]})\cap D_{t-1-\ell_0^*}(\mathbf{y}_{[3+\ell_0^*,n]})\big)$, $\mathcal{X}^{b}=b\circ\big(D_{t-1-\ell_1}(\mathbf{x}_{[3+\ell_1,n]})\cap D_{t-1-\ell_1^*}(\mathbf{y}_{[3+\ell_1^*,n]})\big)$, where $\ell_0,\ell_0^*,\ell_1,\ell_1^*\geqslant 0$. Moreover, since $d_L(\mathbf{x,y})\geqslant 2$ and $x_1=y_1$, we have  $d_L(\mathbf{x}_{[2,n]},\mathbf{y}_{[2,n]})\geqslant 2$.
Thus, for $n\geqslant 11$, if $2(n-1)\geqslant 3t+1$, then by the assumption of $p=s-1$ 
\begin{align}
|\mathcal{X}^{x_1}|\leqslant  M_1(n-1,t),\label{eq31}
\end{align}
under the condition $\mathbf{x}_{[2,s]}=\mathbf{y}_{[2,s]}$ and $x_{s+1}\neq y_{s+1}$. 
If $2(n-1)\leqslant 3t$, then we have Eq. $(\ref{eq31})$. Furthermore, when $n=10$, we also have Eq. $(\ref{eq31})$ since $|\mathcal{X}^{x_1}|\leqslant N_3(9,2,t)=M_1(9,t)$.
Based on the value of $x_2$, we discuss the following two cases: \textbf{Case D1:} $x_2\neq x_1$; \textbf{Case D2:} $x_1=x_2$.

\textbf{Case D1:} When $x_2\neq x_1$, without loss of generality let $x_2=a$. Then $\ell_0=\ell_0^*=0$ and $\ell_1,\ell_1^*\geqslant 1$. Thus, for $n\geqslant12$, by using the method of proving Eq. $(\ref{eq28})$ and the assumption of $p=s-2$, we have
\begin{align}
|\mathcal{X}^{a}|&=D_{t-1}(\mathbf{x}_{[3,n]})\cap D_{t-1}(\mathbf{y}_{[3,n]})|\leqslant M_1(n-2,t-1)\label{eq32}.
\end{align}
For $n=11$ or $10$, we also have Eq. (\ref{eq32}).
Furthermore, for $n\geqslant13$,
\begin{align}
|\mathcal{X}^{b}|&=|D_{t-1-\ell_1}(\mathbf{x}_{[3+\ell_1,n]})\cap D_{t-1-\ell_1^*}(\mathbf{y}_{[3+\ell_1^*,n]})|\leqslant |D_{t-1-1}(\mathbf{x}_{[4,n]})\cap D_{t-1-1}(\mathbf{y}_{[4,n]})|\overset{(a)}{\leqslant} M_1(n-3,t-2),\label{eq33}
\end{align}
where $(a)$ follows from the assumption of $p=s-3$ under the conditions $d_L(\mathbf{x}_{[4,n]},\mathbf{y}_{[4,n]})\geqslant 2$. For $n=12, 11$, or $10$, we also have Eq. (\ref{eq33}).

\textbf{Case D2:} When $x_1=x_2$, then $\ell_0,\ell_0^*,\ell_1,\ell_1^*\geqslant 1$. Thus,
\begin{align*}
|\mathcal{X}^{a}|&=|D_{t-1-\ell_0}(\mathbf{x}_{[3+\ell_0,n]})\cap D_{t-1-\ell_0^*}(\mathbf{y}_{[3+\ell_0^*,n]})|\leqslant |D_{t-1}(\mathbf{x}_{[3,n]})\cap D_{t-1}(\mathbf{y}_{[3,n]})|,\\
|\mathcal{X}^{b}|&=|D_{t-1-\ell_1}(\mathbf{x}_{[3+\ell_1,n]})\cap D_{t-1-\ell_1^*}(\mathbf{y}_{[3+\ell_1^*,n]})|\leqslant |D_{t-2}(\mathbf{x}_{[4,n]})\cap D_{t-2}(\mathbf{y}_{[4,n]})|.
\end{align*}
For $2n\geqslant 3t+1$, if $2(n-2)\leqslant 3(t-1)$, by Lemma $\ref{lm16}$ we have $|\mathcal{X}^a|\leqslant 3^{n-t-1}=M_1(n-2,t-1)$. For $n\geqslant12$, if $2(n-2)\geqslant 3(t-1)+1$, then by the assumption of $p=s-2$ 
\begin{align}
|\mathcal{X}^{a}|\leqslant |D_{t-1}(\mathbf{x}_{[3,n]})\cap D_{t-1}(\mathbf{y}_{[3,n]})|\leqslant M_1(n-2,t-1).\label{eq34}
\end{align}
For $n=11$ or $10$, we also have Eq. (\ref{eq34}).
For $n\geqslant13$, since $2(n-3)\geqslant 3(t-2)+1$, by the assumption of $p=s-3$ 
\begin{align}                                    |\mathcal{X}^{b}|\leqslant |D_{t-2}(\mathbf{x}_{[4,n]})\cap D_{t-2}(\mathbf{y}_{[4,n]})|\leqslant M_1(n-3,t-2).\label{eq35}
\end{align}
For $n=12, 11$, or $10$, we also have Eq. (\ref{eq35}).

Therefore, for $p=s$, by Eqs. $(\ref{eq31})$-$(\ref{eq35})$ we have
\begin{align*}
|\mathcal{X}|=|\mathcal{X}^{x_1}|+|\mathcal{X}^a|+|\mathcal{X}^b|\leqslant M_1(n-1,t)+M_1(n-2,t-1)+M_1(n-3,t-2)\overset{(a)}{=}M_1(n,t),
\end{align*}
where $(a)$ follows from Eq. $(\ref{eq8})$. By the induction hypothesis, when $p\geqslant 1$ and $q\geqslant 1$, we have $|\mathcal{X}|\leqslant M_1(n,t)$. Combining Lemma $\ref{lm14}$, it follows that $N_3(n,2,t)=\max\{M_0(n,t),M_1(n,t)\}$ for $n\geqslant \max\{9,\lfloor \frac{3t}{2}\rfloor+1\}$ and $t\geqslant 2$.
\end{proof}

\section{Conclusion}
\label{sec5}

In this paper, we studied the sequence reconstruction problem over $3$-ary deletion channels. We determined the values of $N_3(n,2,t)$ for $t\geqslant 2$ and  $n\geqslant \max\{6,\lfloor \frac{3t}{2}\rfloor+1\}$. That is, $N_3(n,2,t)=\max\{M_0(n,t),M_1(n,t)\}$ with $M_0(n,t)=|D_t(\mathbf{x}_0)\cap D_t(\mathbf{y}_0)|$ and $M_1(n,t)=|D_t(\mathbf{x})\cap D_t(\mathbf{y})|$, where $\mathbf{x}_0=(0,1,2,\mathbf{a}_{n-5},\mathbf{a}_{n-3}(n-3),\mathbf{a}_{n-3}(n-4))$, $\mathbf{y}_0=(1,0,2,\mathbf{a}_{n-3})=(1,0,2,\mathbf{a}_{n-5},\mathbf{a}_{n-3}(n-4),\mathbf{a}_{n-3}(n-3))$, $\mathbf{x}=(0,1,2,0,1,2,\mathbf{a}_{n-6})$, and $\mathbf{y}=(1,0,2,1,0,2,\mathbf{a}_{n-6})$. Moreover, if $t=2,3,4,5$ and $n\geqslant \max\{6,\lfloor \frac{3t}{2}\rfloor+1\}$, or $n\geqslant 3t$, then $N_3(n,2,t)=M_1(n,t)$. When $t=2$ and $q\geqslant 4$, the values of $N_q(n,2,t)$ are not determined. For $q\geqslant 4$ and sufficiently large $n$, there is the following conjecture.
\begin{conjecture}
For $q\geqslant 4$ and sufficiently large $n$, we have
$$N_q(n,2,t)=|D_t(\mathbf{x})\cap D_t(\mathbf{y})|,$$
where $\mathbf{x}=(0,1,2,0,1,\mathbf{w})$, $\mathbf{y}=(1,0,2,1,0,\mathbf{w})$, and $\mathbf{w}=(w_1,...,w_{n-5})$ with $w_i\equiv i+1\mod q$.
\end{conjecture}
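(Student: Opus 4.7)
The plan is to mirror the two-part strategy used for $q=3$ in Theorems~\ref{thm5} and the preceding development. Writing $\mathbf{x}_q=(0,1,2,0,1,\mathbf{w})$ and $\mathbf{y}_q=(1,0,2,1,0,\mathbf{w})$, the first step is to verify $d_L(\mathbf{x}_q,\mathbf{y}_q)=2$: deleting positions $2$ and $4$ from $\mathbf{x}_q$ and positions $1$ and $3$ from $\mathbf{y}_q$ both produce $(0,2,0,1,\mathbf{w})$, giving $d_L\leqslant 2$, while an exhaustive check that $D_1(\mathbf{x}_q)\cap D_1(\mathbf{y}_q)=\emptyset$ via Lemmas~\ref{lm1}--\ref{lm2} (the $q$-ary analogue of Lemma~\ref{lm3}) gives $d_L\geqslant 2$.

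For the lower bound, compute $M_q(n,t):=|D_t(\mathbf{x}_q)\cap D_t(\mathbf{y}_q)|$ by decomposing according to the first symbol $a\in\mathbb{Z}_q$ via Lemmas~\ref{lm1}--\ref{lm2}, exactly as for $q=3$ in the proof of Theorem~\ref{thm5}. The three cases $a\in\{0,1,2\}$ reduce, through repeated application of Lemmas~\ref{lm1}, \ref{lm2}, and~\ref{lm4} (the latter generalized to arbitrary alphabet), together with Theorem~\ref{thm1}, to a sum of terms of the form $D_q(n-i,t-j)$; the cases $a\in\{3,\ldots,q-1\}$ contribute further such terms since the first occurrence of $a$ is at a deterministic offset common to both $\mathbf{x}_q$ and $\mathbf{y}_q$ (namely position $a+3$), so each $|\mathcal{X}^a|$ collapses to a single $q$-ary deletion-ball cardinality. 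This produces an explicit formula $M_q(n,t)$ generalizing the expression for $M_1(n,t)$.

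The upper bound is the harder half and requires reproducing the structure of Section~\ref{sec4} in the $q$-ary setting. First, extend Lemmas~\ref{lm6}--\ref{lm18}: the $q$-ary version of Lemma~\ref{lm7} on $N_q(n,t+1,t,1)$ is the key input for bounding intersections of balls of different radii, and follows by the same first-symbol decomposition. Then argue by induction on the length $p$ of the common prefix $\mathbf{u}$ in the decomposition $\mathbf{x}=(\mathbf{u},\mathbf{v},\mathbf{w}'),\mathbf{y}=(\mathbf{u},\mathbf{v}',\mathbf{w}')$, exactly as in Theorem~\ref{thm5}. The base case $p=0$ (with $x_1\neq y_1$) is again the heart of the argument: split by $\ell_a,\ell_a^*$ (the first-occurrence offsets of each $a\in\mathbb{Z}_q$) and bound each $|\mathcal{X}^a|$ by a deletion ball of a suitable shorter sequence, then collapse the resulting sum to $M_q(n,t)$ using the $q$-ary recurrence $D_q(n,t)=\sum_{i=0}^{q-1}D_q(n-1-i,t-i)$ and monotonicity in $n$ and $t$.

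The main obstacle is the explosion of cases in the base step. For $q=3$, Lemma~\ref{lm14} already required a four-way split with many subcases, and the existence of the secondary extremal pair witnessed by $M_0(n,t)$ showed that the extremal structure is delicate. For $q\geqslant 4$ there are many more values for each $\ell_a$, and one must rule out competing configurations analogous to those studied in Lemmas~\ref{lm12}--\ref{lm13}, where a different alignment of the first few positions against the periodic tail $\mathbf{w}$ might plausibly do better. This is precisely why the conjecture is stated for \emph{sufficiently large} $n$: one expects the $q=3$ phenomenon to recur, with the $M_0$-type competitors dominating on a finite range of $(n,t)$ and the $M_1$-type pair $(\mathbf{x}_q,\mathbf{y}_q)$ strictly dominating asymptotically. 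The argument will therefore need to combine an asymptotic comparison of the leading terms of the various candidate formulas (showing that the $M_q$ expression has the largest coefficient of $D_q(n-4,t-2)\sim \binom{n-4}{t-2}$) with explicit verification for a bounded range of small $n$, analogous to the role played by the $M_0$ versus $M_1$ dichotomy and Lemma~\ref{lm15} in the ternary case.
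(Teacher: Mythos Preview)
The statement you are addressing is a \emph{conjecture}, not a theorem: the paper offers no proof whatsoever, listing it in Section~\ref{sec5} as an open problem for $q\geqslant 4$. There is therefore no paper proof to compare your proposal against.

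Your proposal is a sensible roadmap that transplants the $q=3$ machinery to general $q$, and you correctly identify the central difficulty: the case analysis in the analogue of Lemma~\ref{lm14} blows up, and one must rule out competing extremal configurations (the $q$-ary counterparts of the $M_0$ pair). But as written this is a plan, not a proof. Two concrete gaps stand out. First, the claim that for $a\in\{3,\ldots,q-1\}$ the first occurrence of $a$ in both $\mathbf{x}_q$ and $\mathbf{y}_q$ is at position $a+3$ is not quite right: the prefix $(0,1,2,0,1)$ contains no symbol $\geqslant 3$, so the first occurrence of $a$ lies in $\mathbf{w}$, and its position depends on the residue of $a$ modulo $q$ relative to the offset of $\mathbf{w}$; you would need to compute these offsets carefully to get the correct $D_q(n-i,t-j)$ contributions. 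Second, and more seriously, the upper-bound half requires $q$-ary analogues of Lemmas~\ref{lm9}--\ref{lm13}, and you give no argument for why the specific pair $(\mathbf{x}_q,\mathbf{y}_q)$ dominates all other distance-$2$ pairs asymptotically; your final paragraph acknowledges this is exactly where the difficulty lies. Until that comparison is carried out---even at the level of identifying the leading coefficient of $n^{t-2}$ and showing no competitor matches it---the conjecture remains open.
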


\appendices
\section{Proofs of Lemmas $\ref{lm6}$ and $\ref{lm7}$}\label{APP-A}
The purpose of this appendix is to give the proofs of Lemmas $\ref{lm6}$ and $\ref{lm7}$.

\begin{proof}{ \rm (The proof of Lemma $\ref{lm6}$)}
Now, by induction on $n$, we prove that $N_3(n,2,1,1)\leqslant 3$ for any $n\geqslant 4$. When $n=4$, by using a computerized search, we find that $N_3(4,2,1,1)=3$, which indicates that the base case of the induction holds. 

Assume that for some fixed $m\geqslant 4$, the proposition
\begin{equation}
N_3(m,2,1,1)\leqslant 3\nonumber
\end{equation}
holds.
When $n=m+1$, let $\mathbf{x}=(x_1,...,x_{m+2})$, $\mathbf{y}=(y_1,...,y_{m+1})$, and $\mathcal{X}=D_{2}(\mathbf{x})\cap D_{1}(\mathbf{y})$. 

If $x_1\neq y_1$, then $|\mathcal{X}^{x_1}|=|D_{2}(\mathbf{x}_{[2,m+2]})\cap D_{0-\ell_1^*}(\mathbf{y}_{[3+\ell_1^*,m+1]})|$, $ |\mathcal{X}^{y_1}|=|D_{1-\ell_2}(\mathbf{x}_{[3+\ell_2,m+2]})\cap D_{1}(\mathbf{y}_{[2,m+1]})|$, and $ |\mathcal{X}^{a}|=|D_{1-\ell_3}(\mathbf{x}_{[3+\ell_3,m+2]})\cap D_{0-\ell_3^*}(\mathbf{y}_{[3+\ell_3^*,m+1]})|$, where $a\in \mathbb{Z}_3\backslash \{x_1,y_1\}$,  $\ell_1^*,\ell_2,\ell_3,\ell_3^*\geqslant 0$. Moreover, $\ell_1^*=0$  implies $y_2=x_1$;  $\ell_2=0$ implies $x_2=y_1$; $\ell_3=0$ and $\ell_3^*=0$ imply $x_2=a$ and $y_2=a$, respectively. Thus,
\begin{align*}
|\mathcal{X}^{x_1}|&\leqslant |D_{0-\ell_1^*}(\mathbf{y}_{[3+\ell_1^*,m+1]})|\leqslant |D_{0}(\mathbf{y}_{[3,m+1]})|=1,\\
|\mathcal{X}^{y_1}|&\leqslant |D_{1-\ell_2}(\mathbf{x}_{[3+\ell_2,m+2]})\cap D_{1}(\mathbf{y}_{[2,m+1]})|\leqslant \max\{|D_{1}(\mathbf{x}_{[3,m+2]})\cap D_{1}(\mathbf{y}_{[2,m+1]})|,|D_{0}(\mathbf{x}_{[4,m+2]})|\}\leqslant 2,\\
|\mathcal{X}^{a}|&\leqslant |D_{0-\ell_3^*}(\mathbf{y}_{[3+\ell_3^*,m+1]})|\leqslant |D_{0}(\mathbf{y}_{[3,m+1]})|=1.
\end{align*}
Here, $\ell_2=0$ implies $x_2=y_1$ and $\mathbf{x}_{[3,m+2]}\neq \mathbf{y}_{[2,m+1]}$. This follows from  $\mathbf{y}\notin D_1(\mathbf{x})$, which leads to the bound $|D_{1}(\mathbf{x}_{[3,m+2]})\cap D_{1}(\mathbf{y}_{[2,m+1]})|\leqslant N_3(m,1,1)=2$. By Lemmas $\ref{lm1}$ and $\ref{lm2}$, it follows that $|\mathcal{X}|=|\mathcal{X}^{x_1}|+|\mathcal{X}^{y_1}|+|\mathcal{X}^a|\leqslant 4$, where $|\mathcal{X}|=4$ implies $\ell_1^*=0,\ell_2=0,\ell_3^*=0$. If $\ell_1^*=\ell_3^*=0$, then $y_2=x_1$ and $y_2=a$ which results in a contradiction since $x_1\neq a$. So, if $x_1\neq y_1$, then we have
\begin{equation}
|\mathcal{X}|=|\mathcal{X}^{x_1}|+|\mathcal{X}^{y_1}|+|\mathcal{X}^a|\leqslant 3.\nonumber
\end{equation}

If $x_1=y_1$, then $|\mathcal{X}^{x_1}|=|D_{2}(\mathbf{x}_{[2,m+2]})\cap D_{1}(\mathbf{y}_{[2,m+1]})|$, $ |\mathcal{X}^{a}|=|D_{1-\ell_2}(\mathbf{x}_{[3+\ell_2,m+2]})\cap D_{0-\ell_2^*}(\mathbf{y}_{[3+\ell_2^*,m+1]})|$, and $ |\mathcal{X}^{b}|=|D_{1-\ell_3}(\mathbf{x}_{[3+\ell_3,m+2]})\cap D_{0-\ell_3^*}(\mathbf{y}_{[3+\ell_3^*,m+1]})|$, where $\{a,b\}=\mathbb{Z}_3\backslash \{x_1\}$,  $\ell_2,\ell_2^*,\ell_3,\ell_3^*\geqslant 0$. Moreover,  $\ell_2=0$ implies $x_2=a$; $\ell_2^*=0$ implies $y_2=a$;  $\ell_3=0$ and $\ell_3^*=0$ imply $x_2=b$ and $y_2=b$, respectively. Consider the value of $|\mathcal{X}^{a}|$. Here, we have
\begin{align*}
|\mathcal{X}^{a}|&\leqslant |D_{0-\ell_2^*}(\mathbf{y}_{[3+\ell_2^*,m+1]})|\leqslant |D_{0}(\mathbf{y}_{[3,m+1]})|=1.
\end{align*}
If $\ell_2\geqslant 2$ or $\ell_2^*\geqslant 1$, then $|\mathcal{X}^a|=0$. Moreover, if $\ell_2=0~\text{or}~1$,  and $\ell_2^*=0$, we can verify that $|\mathcal{X}^{a}|=0$. Thus, 
if $x_1=y_1$, then we have $|\mathcal{X}^{a}|=0$. Similarly, $|\mathcal{X}^{b}|=0$. Since $d_L(\mathbf{x,y})\geqslant 1$ and $x_1=y_1$, we have $d_L(\mathbf{x}_{[2,m+2]},\mathbf{y}_{[2,m+1]})\geqslant 1$. By the assumption of the length of sequences, we have $|\mathcal{X}^{x_1}|\leqslant 3$. If $x_1=y_1$, then we have
\begin{equation}
|\mathcal{X}|=|\mathcal{X}^{x_1}|+|\mathcal{X}^{a}|+|\mathcal{X}^b|\leqslant 3.\nonumber
\end{equation}

When $n=m+1$, the proposition
$$N_3(m+1,2,1,1)\leqslant 3$$
holds. By the inductive hypothesis, the proposition
$$N_3(n,2,1,1)\leqslant 3$$
holds for $n\geqslant 4$. 
\end{proof}

\begin{proof}{ \rm (The proof of Lemma $\ref{lm7}$)}
Let $\mathbf{x}\in \mathbb{Z}_3^{n+1}$ and $\mathbf{y}\in \mathbb{Z}_3^{n}$. Consider $t\geqslant \frac{2n}{3}$. The length of any subsequence of $\mathbf{y}$ obtained by deleting $t$ symbols is $n-t$, and $|D_t(\mathbf{y})|\leqslant 3^{n-t}$. Thus, $N_3(n,t+1,t,1)\leqslant 3^{n-t}$ for $t\geqslant \frac{2n}{3}$. Now we set $\mathbf{x}=(1,0,2,1,\mathbf{a}_{n-3})$ and $\mathbf{y}=(0,1,2,\mathbf{a}_{n-3})$.
Since the Levenshtein distance between  $\mathbf{x}_{[1,4]}$ and $\mathbf{y}_{[1,3]}$ is $1$, we have $d_L(\mathbf{x},\mathbf{y})\geqslant 1$.
For convenience, let $\mathcal{X}=D_{t+1}(\mathbf{x})\cap D_t(\mathbf{y})$. We verify that $|\mathcal{X}|=3^{n-t}$ for  $t\geqslant \frac{2n}{3}$. This shows that the bound is tight, so $N(n,t+1,t,1)= 3^{n-t}$ for $t\geqslant \frac{2n}{3}$.

Given $t<\frac{2n}{3}$, Lemmas $\ref{lm1}$ and $\ref{lm2}$, we have
$|\mathcal{X}|=D_3(n-2,t-1)+2D_3(n-3,t-1)+D_3(n-3,t-2)+D_3(n-4,t-2)$. For $\frac{2n}{3}> t\geqslant 1$ and $n\geqslant 4$, we have $d_L(\mathbf{x},\mathbf{y})\geqslant 1$,  and $|D_3(\mathbf{x},\mathbf{y};t+1,t)|=D_3(n-2,t-1)+2D_3(n-3,t-1)+D_3(n-3,t-2)+D_3(n-4,t-2)$. Therefore, for $\frac{2n}{3}> t\geqslant 1$ and $n\geqslant 4$,
\begin{equation}
N_3(n,t+1,t,1)\geqslant |D_3(\mathbf{x},\mathbf{y};t+1,t)|=D_3(n-2,t-1)+2D_3(n-3,t-1)+D_3(n-3,t-2)+D_3(n-4,t-2).\label{eq36}
\end{equation}

Next, we prove that $N_3(\lfloor\frac{3t}{2}\rfloor+1,t+1,t,1)\leqslant N_3(\lfloor\frac{3t}{2}\rfloor+1,t)$ for any $t\geq 2$.  A direct computation shows that $N_3(4,3,2,1)\leqslant |D_3(4,2)|=6=N_3(4,2)$. When $t\geqslant 3$, let $m=\lfloor\frac{3t}{2}\rfloor+1$. Note that $m\geqslant t+2$ holds for $t\geqslant 3$. Moreover,
\begin{align*}
N_3(\lfloor\frac{3t}{2}\rfloor+1,t+1,t,1)&\leqslant D_3(m,t)
\overset{(a)}{=}D_3(m-1,t)+D_3(m-2,t-1)+D_3(m-3,t-2)\\
&\overset{(b)}{=}D_3(m-1,t)+D_3(m-3,t-1)+D_3(m-4,t-2)+D_3(m-5,t-3)+D_3(m-3,t-2),
\end{align*}
where $(a)$ and $(b)$ follow from Eq. $(\ref{eq3})$ for $m\geqslant t+2$. 

If $t=2k+1$ and $k\geqslant 1$, then $m=3k+2$. By Eq. $(\ref{eq4})$ and the conditions $2(3k+1)\leqslant 3(2k+1)$ and $2(3k)=3(2k)$, it follows that
 $D_3(m-1,t)=D_3(3k+1,2k+1)=3^{k}=D_3(3k,2k)=D_3(m-2,t-1)$.   Similarly, under $2(3k-1)\leqslant 3(2k)$ and $2(3k-3)=3(2k-2)$, we have $D_3(m-3,t-1)=D_3(3k-1,2k)=3^{k-1}=D_3(3k-3,2k-2)=D_3(m-5,t-3)$. Therefore, for $t=2k+1$, 
\begin{align*}
N_3(\lfloor\frac{3t}{2}\rfloor+1,t+1,t,1)&\leqslant D_3(m,t)=D_3(m-2,t-1)+2D_3(m-3,t-1)+D_3(m-4,t-2)+D_3(m-3,t-2)\\
&=N_3(m,t).
\end{align*}

If $t=2k$ and $k\geqslant 2$, then $m=3k+1$. By Eq. $(\ref{eq4})$, $D_3(m-1,t)=D_3(3k,2k)=3^{k}$ and $D_3(m-3,t-1)=D_3(3k-2,2k-1)=3^{k-1}$. The values $D_3(m-2,t-1)=D_3(3k-1,2k-1)=3^{k}-1$ and $D_3(m-5,t-3)=D_3(3k-4,2k-3)=3^{k-1}-1$  follow from the explicit formula  $D_3(m,t)=\sum\limits_{i=0}^{t}\binom{m-t}{i}\sum\limits_{j=0}^{t-i}\binom{i}{j}$. Therefore, 
for $t=2k$, 
\begin{align*}
N_3(\lfloor\frac{3t}{2}\rfloor+1,t+1,t,1)&\leqslant D_3(m,t)=D_3(m-2,t-1)+2D_3(m-3,t-1)+D_3(m-4,t-2)+D_3(m-3,t-2)\\
&=N_3(m,t).
\end{align*}
Thus, for any $t\geqslant 2$,
$$N_3(\lfloor\frac{3t}{2}\rfloor+1,t+1,t,1)\leqslant N_3(\lfloor\frac{3t}{2}\rfloor+1,t).$$

By induction on $n$ and $t$, we prove that $N_3(n,t+1,t,1)\leqslant N_3(n,t)$ for $n\geqslant \max\{\lfloor\frac{3t}{2}\rfloor+1,4\}$ and $t\geq 1$. When $t=1$ or $n=\lfloor\frac{3t}{2}\rfloor+1$, by Lemma $\ref{lm6}$ and the above discussion, we have $N_3(n,t+1,t,1)\leqslant N_3(n,t)$.
When $(n,t)=(6,3)$ or $(5,2)$, it is easily verified that $N_3(6,4,3,1)=17$ and $N_3(5,3,2,1)=9$, which satisfies the inequality $N_3(n,t+1,t,1)\leqslant N_3(n,t)$. These results indicate that the base case of the induction holds. 

Assume that for some fixed $t=k+1$ and $m\geqslant \lfloor \frac{3t}{2} \rfloor +1$ with $k\geqslant 1$, the following inequalities hold:
\begin{align}
N_3(m,k+2,k+1,1)&\leqslant D_3(m-2,k)+2D_3(m-3,k)+D_3(m-3,k-1)+D_3(m-4,k-1),\nonumber\\
N_3(m-1,k+1,k,1)&\leqslant D_3(m-3,k-1)+2D_3(m-4,k-1)+D_3(m-4,k-2)+D_3(m-5,k-2),\nonumber\\
N_3(m-2,k,k-1,1)&\leqslant D_3(m-4,k-2)+2D_3(m-5,k-2)+D_3(m-5,k-3)+D_3(m-6,k-3),\nonumber
\end{align} 
where $2(m-1)>3k$, $2(m-2)>3(k-1)$, and the last inequality also holds for $k=1$ since both sides equal zero in this case. Consider $n=m+1$, we will prove that $N_3(m+1,k+2,k+1,1)\leqslant D_3(m-1,k)+2D_3(m-2,k)+D_3(m-2,k-1)+D_3(m-3,k-1)$. For convenience, let $\mathbf{x}=(x_1,...,x_{m+2})$, $\mathbf{y}=(y_1,...,y_{m+1})$, and define $\mathcal{X}=D_{k+2}(\mathbf{x})\cap D_{k+1}(\mathbf{y})$. Since $t\geqslant 2$, it follows that $m\geqslant t+2=k+3$. Based on the relationship between $x_1$ and $y_1$, we analyze the following two cases: \textbf{Case A}: $x_1=y_1$; \textbf{Case B}: $x_1\neq y_1$. 

\textbf{Case A}: If $x_1=y_1\in \mathbb{Z}_3$, then $|\mathcal{X}^{x_1}|=|D_{k+2}(\mathbf{x}_{[2,m+2]})\cap D_{k+1}(\mathbf{y}_{[2,m+1]})|$, $ |\mathcal{X}^{a}|=|D_{k+1-\ell_1}(\mathbf{x}_{[3+\ell_1,m+2]})\cap D_{k-\ell_1^*}(\mathbf{y}_{[3+\ell_1^*,m+1]})|$,  and $|\mathcal{X}^{b}|=|D_{k+1-\ell_2}(\mathbf{x}_{[3+\ell_2,m+2]})\cap D_{k-\ell_2^*}(\mathbf{y}_{[3+\ell_2^*,m+1]})|$,
where $\ell_1,\ell_1^*,\ell_2,\ell_2^*\geqslant 0$, and $\{a,b\}=\mathbb{Z}_3\backslash \{x_1\}$. Moreover, $\ell_1~ \text{or}~\ell_1^*=0$ implies $x_2=a$ or $y_2=a$, respectively; similarly, $\ell_2~ \text{or}~\ell_2^*=0$ implies $x_2=b$ or $y_2=b$, respectively.
By the induction hypothesis, we have
\begin{equation}
|\mathcal{X}^{x_1}|\leqslant N_3(m,k+2,k+1,1)\leqslant D_3(m-2,k)+2D_3(m-3,k)+D_3(m-3,k-1)+D_3(m-4,k-1)\nonumber.
\end{equation}
It is easily verified that $\ell_1\neq \ell_2$ and $\ell_1^*\neq \ell_2^*$. Without loss of generality, assume $\ell_1^*<\ell_2^*$. Based on the values of $\ell_1^*$ and $\ell_2^*$,  we consider the following three cases: \textbf{Case A1:} $\ell_1^*\geqslant 1$; \textbf{Case A2:} $\ell_1^*=0, \ell_2^*\geqslant 2$; \textbf{Case A3:} $\ell_1^*=0, \ell_2^*=1$.   

\textbf{Case A1:} Given that $\ell_1^*\geqslant 1$, it follows that
\begin{align}
|\mathcal{X}^{a}|&\leqslant |D_{k-\ell_1^*}(\mathbf{y}_{[3+\ell_1^*,m+1]})|\leqslant |D_3(m-2,k-1)|\nonumber\\
&\overset{(a)}{=}D_3(m-3,k-1)+D_3(m-4,k-2)+D_3(m-5,k-3)\nonumber\\
&\overset{(b)}{\leqslant}D_3(m-3,k-1)+D_3(m-4,k-2)+D_3(m-4,k-2)\nonumber\\
&\overset{(c)}{\leqslant}D_3(m-3,k-1)+D_3(m-4,k-2)+3D_3(m-5,k-2)\nonumber\\
&\overset{(d)}{\leqslant}D_3(m-3,k-1)+2D_3(m-4,k-1)+D_3(m-4,k-2)+D_3(m-5,k-2),\nonumber
\end{align}
where $(a)$ follows from Eq. $(\ref{eq3})$ under the condition $m\geqslant k+2$, $(b),(d)$ follow from Eq. $(\ref{eq6})$, $(c)$ follows from Eq. $(\ref{eq9})$ under the condition $m\geqslant k+3$.
Similarly, we obtain
\begin{align}
|\mathcal{X}^{b}|&\leqslant |D_{k-\ell_2^*}(\mathbf{y}_{[3+\ell_2^*,m+1]})|\leqslant |D_3(m-3,k-2)|\nonumber\\
&\leqslant D_3(m-4,k-2)+2D_3(m-5,k-2)+D_3(m-5,k-3)+D_3(m-6,k-3).\nonumber
\end{align}
Thus, we have
\begin{align}
|\mathcal{X}|&=|\mathcal{X}^{x_1}|+|\mathcal{X}^{a}|+|\mathcal{X}^{b}|\nonumber\\
&\leqslant \big(D_3(m-2,k)+D_3(m-3,k-1)+D_3(m-4,k-2)\big)+2\big(D_3(m-3,k)+D_3(m-4,k-1)+D_3(m-5,k-2)\big)\nonumber\\
& +\big(D_3(m-3,k-1)+D_3(m-4,k-2)+D_3(m-5,k-3)\big)+\big(D_3(m-4,k-1)+D_3(m-5,k-2)+D_3(m-6,k-3)\big)\nonumber\\
&=D_3(m-1,k)+2D_3(m-2,k)+D_3(m-2,k-1)+D_3(m-3,k-1)=N_3(m+1,k+1).\label{eq37}
\end{align}

\textbf{Case A2:} For $\ell_1^*=0$ and $\ell_2^*\geqslant 2$, 
 we analyze this case in three subcases: \textbf{Case A2-1}: $\ell_1=0$; \textbf{Case A2-2}: $\ell_1=1$; \textbf{Case A2-3}: $\ell_1\geqslant 2$. 

First, since $\ell_2^*\geqslant 2$, we have
\begin{align}
|\mathcal{X}^{b}|&\leqslant |D_{k-\ell_2^*}(\mathbf{y}_{[3+\ell_2^*,m+1]})|\leqslant |D_3(m-3,k-2)|\nonumber\\
&\leqslant D_3(m-4,k-2)+2D_3(m-5,k-2)+D_3(m-5,k-3)+D_3(m-6,k-3).\nonumber
\end{align}

\textbf{Case A2-1}: When $\ell_1^*=0$, $\ell_2^*\geqslant 2$, $\ell_1=0$, we have $\mathbf{y}_{[3,m+1]}\notin D_1(\mathbf{x}_{[3,m+2]})$ under the conditions $d_L(\mathbf{x,y})\geqslant 1$, $x_1=y_1$, and $x_2=y_2=a$. Thus, $|\mathbf{y}_{[3,m+1]}|=m-1$ and $2(m-1)\geqslant (3t-1)=3k+2$. By the induction hypothesis,
\begin{align}
|\mathcal{X}^a|=|D_{k+1}(\mathbf{x}_{[3,m+2]})\cap D_{k}(\mathbf{y}_{[3,m+1]})|&\leqslant N_3(m-1,k+1,k,1)\nonumber\\
&\leqslant D_3(m-3,k-1)+2D_3(m-4,k-1)+D_3(m-4,k-2)+D_3(m-5,k-2).\nonumber
\end{align}
Combining these results, we obtain $$|\mathcal{X}|\leqslant D_3(m-1,k)+2D_3(m-2,k)+D_3(m-2,k-1)+D_3(m-3,k-1).$$

\textbf{Case A2-2}: When $\ell_1^*=0$, $\ell_2^*\geqslant 2$, $\ell_1=1$, we have $\mathbf{y}_{[3,m]}\neq\mathbf{x}_{[4,m+1]}$ under the conditions $d_L(\mathbf{x,y})\geqslant 1$, $x_1=y_1$, and $x_3=y_2=a$. Thus, for $m\geqslant k+3$, by Eqs. $(\ref{eq3})$, $(\ref{eq5})$, and $(\ref{eq6})$ we have
\begin{align}
|\mathcal{X}^a|&=|D_{k+1}(\mathbf{x}_{[4,m+2]})\cap D_{k}(\mathbf{y}_{[3,m+1]})|\leqslant N_3(m-1,1,k)=2D_3(m-3,k-1)+D_3(m-4,k-2)\nonumber\\
&\leqslant D_3(m-3,k-1)+2D_3(m-4,k-1)+D_3(m-4,k-2)+D_3(m-5,k-2).\nonumber
\end{align}
Consequently, we also obtain Eq. $(\ref{eq37})$.

\textbf{Case A2-3}: Given that $\ell_1^*=0$, $\ell_2^*\geqslant 2$, $\ell_1\geqslant 2$, it follows that  
\begin{align*}
|\mathcal{X}^a|&\leqslant |D_{k+1-\ell_1}(\mathbf{x}_{[3+\ell_1,m+2]})|\leqslant D_3(m-2,k-1)\\
&\leqslant D_3(m-3,k-1)+2D_3(m-4,k-1)+D_3(m-4,k-2)+D_3(m-5,k-2),
\end{align*}
from Eqs. $(\ref{eq3})$ and $(\ref{eq6})$ when $m\geqslant k+3$. Consequently, we also obtain Eq. $(\ref{eq37})$.

Therefore, when $\ell_1^*=0$ and $\ell_2^*\geqslant 2$, we have
$$|\mathcal{X}|\leqslant D_3(m-1,k)+2D_3(m-2,k)+D_3(m-2,k-1)+D_3(m-3,k-1)=N_3(m+1,k+1).$$

\textbf{Case A3:} For $\ell_1^*=0$ and $\ell_2^*=1$, we analyze this scenario in four distinct cases: \textbf{Case A3-1}: $\ell_1=0$ and $\ell_2=1$; \textbf{Case A3-2}: $\ell_1=0$ and $\ell_2\geqslant 2$; \textbf{Case A3-3}: $\ell_1=1$; \textbf{Case A3-4}: $\ell_1\geqslant 2$.

\textbf{Case A3-1}: When $\ell_1^*=\ell_1=0$ and $\ell_2^*=\ell_2=1$, we have $d_L(\mathbf{x}_{[3,m+2]},\mathbf{y}_{[3,m+1]})\geqslant 1$ and $d_L(\mathbf{x}_{[4,m+2]},\mathbf{y}_{[4,m+1]})\geqslant 1$. Given that $2(m-1)\geqslant 3k+1 $, our assumption implies
\begin{align}
|\mathcal{X}^a|&=|D_{k+1}(\mathbf{x}_{[3,m+2]})\cap D_{k}(\mathbf{y}_{[3,m+1]})|\leqslant N_3(m-1,k+1,k,1)\nonumber\\
&\leqslant D_3(m-3,k-1)+2D_3(m-4,k-1)+D_3(m-4,k-2)+D_3(m-5,k-2).\nonumber
\end{align}
Similarly, since $2(m-2)\geqslant 3(k-1)+1 $, our assumption yields
\begin{align*}
|\mathcal{X}^b|&=|D_{k}(\mathbf{x}_{[4,m+2]})\cap D_{k-1}(\mathbf{y}_{[4,m+1]})|\leqslant N_3(m-2,k,k-1,1)\\
&\leqslant D_3(m-4,k-2)+2D_3(m-5,k-2)+D_3(m-5,k-3)+D_3(m-6,k-3).
\end{align*}
For the special case $k=1$, we have $|\mathcal{X}^b|=0$, which is consistent with the bound on $|\mathcal{X}^b|$. Consequently, we also obtain Eq. $(\ref{eq37})$.

\textbf{Case A3-2}: Consider the case where $\ell_1^*=\ell_1=0$, $\ell_2^*=1$, and $\ell_2\geqslant 2$. Under these conditions, we have $d_L(\mathbf{x}_{[3,m+2]},\mathbf{y}_{[3,m+1]})\geqslant 1$. By the assumption, it follows that 
\begin{align}
|\mathcal{X}^a|\leqslant D_3(m-3,k-1)+2D_3(m-4,k-1)+D_3(m-4,k-2)+D_3(m-5,k-2).\nonumber
\end{align}
If $\ell_2=2$, then $\mathbf{x}_{[5,m+2]}\neq \mathbf{y}_{[4,m+1]}$ under the conditions $x_1=y_1$, $x_2=y_2$, $x_4=y_3$, and $d_L(\mathbf{x,y})\geqslant 1$. For $m\geqslant k+3$, by Eqs. $(\ref{eq3})$ and $(\ref{eq5})$ we have
\begin{align}
|\mathcal{X}^b|&=|D_{k-1}(\mathbf{x}_{[5,m+2]})\cap D_{k-1}(\mathbf{y}_{[4,m+1]})|\leqslant N_3(m-2,1,k-1)\nonumber\\
&\leqslant D_3(m-4,k-2)+2D_3(m-5,k-2)+D_3(m-5,k-3)+D_3(m-6,k-3).\nonumber
\end{align}
If $\ell_2\geqslant 3$, then
\begin{align}
|\mathcal{X}^b|&\leqslant |D_{k+1-\ell_2}(\mathbf{x}_{[3+\ell_2,m+2]})|\leqslant |D_{k-2}(\mathbf{x}_{[6,m+2]})|\leqslant D_3(m-3,k-2)\nonumber\\
&\leqslant D_3(m-4,k-2)+2D_3(m-5,k-2)+D_3(m-5,k-3)+D_3(m-6,k-3),\nonumber
\end{align}
from Eq. $(\ref{eq3})$ under the condition $m\geqslant k+3$.
Therefore, we obtain Eq. $(\ref{eq37})$.

\textbf{Case A3-3}: When $\ell_1^*=0$, $\ell_1=1$, then $d_L(\mathbf{x}_{[4,m+2]},\mathbf{y}_{[3,m+1]})\geqslant 1$. Thus, 
\begin{equation}
|\mathcal{X}^a|=|D_{k}(\mathbf{x}_{[4,m+2]})\cap D_k(\mathbf{y}_{[3,m+1]})|\leqslant N_3(m-1,1,k)=2D_3(m-3,k-1)+D_3(m-4,k-2).\nonumber
\end{equation}
For $\ell_2^*=1$, we have
\begin{align}
|\mathcal{X}^b|=|D_{k+1-\ell_2}(\mathbf{x}_{[3+\ell_2,m+2]})\cap D_{k-1}(\mathbf{y}_{[4,m+1]})|\leqslant |D_{k-1}(\mathbf{y}_{[4,m+1]})|\leqslant D_3(m-2,k-1).\nonumber
\end{align}
For $m\geqslant k+3$, by Eq. $(\ref{eq3})$, we have
\begin{align}
N_3(m-1,k)+N_3(m-2,k-1)-\big(2D_3(m-3,k-1)+D_3(m-4,k-2)+D_3(m-2,k-1)\big)\nonumber\\
=D_3(m-5,k-2)-D_3(m-6,k-3)\geqslant 0.\nonumber
\end{align}
Thus, it follows that
\begin{equation}
|\mathcal{X}^a|+|\mathcal{X}^b|\leqslant N_3(m-1,k)+N_3(m-2,k-1).\nonumber
\end{equation}
Combining with the value of $|\mathcal{X}^{x_1}|$, we obtain Eq. $(\ref{eq37})$.

\textbf{Case A3-4}: Consider the case where $\ell_1^*=0$, $\ell_2^*=1, \ell_1\geqslant 2$. In this scenario, we have
\begin{equation}
|\mathcal{X}^a|\leqslant|D_{k+1-\ell_1}(\mathbf{x}_{[3+\ell_1,m+2]})|\leqslant |D_{k-1}(\mathbf{x}_{[5,m+2]})|\leqslant D_3(m-2,k-1)\leqslant 2D_3(m-3,k-1)+D_3(m-4,k-2).\nonumber
\end{equation}
For the case where $\ell_2^*=1$, we have
\begin{align}
|\mathcal{X}^b|=|D_{k+1-\ell_2}(\mathbf{x}_{[3+\ell_2,m+2]})\cap D_{k-1}(\mathbf{y}_{[4,m+1]})|\leqslant |D_{k-1}(\mathbf{y}_{[4,m+1]})|\leqslant D_3(m-2,k-1).\nonumber
\end{align}
Applying a similar method as in \textbf{Case A3-3},  we also obtain Eq. $(\ref{eq37})$.

Based on the preceding analysis involving $\ell_1,\ell_2,\ell_1^*,\ell_2^*$, for $x_1=y_1$, it follows that
\begin{align}
|\mathcal{X}|\leqslant D_3(m-1,k)+2D_3(m-2,k)+D_3(m-2,k-1)+D_3(m-3,k-1)=N_3(m+1,k+1).\nonumber
\end{align}

\textbf{Case B:} Given that $x_1\neq y_1$, it follows that $|\mathcal{X}^{x_1}|=|D_{k+2}(\mathbf{x}_{[2,m+2]})\cap D_{k-\ell_0^*}(\mathbf{y}_{[3+\ell_0^*,m+1]})|$, $ |\mathcal{X}^{y_1}|=|D_{k+1-\ell_1}(\mathbf{x}_{[3+\ell_1,m+2]})\cap D_{k+1}(\mathbf{y}_{[2,m+1]})|$,  and $|\mathcal{X}^{c}|=|D_{k+1-\ell_2}(\mathbf{x}_{[3+\ell_2,m+2]})\cap D_{k-\ell_2^*}(\mathbf{y}_{[3+\ell_2^*,m+1]})|$,
where $\ell_1,\ell_0^*,\ell_2,\ell_2^*\geqslant 0$, and $\{c\}=\mathbb{Z}_3\backslash \{x_1,y_1\}$. Furthermore,
it can be verified that $\ell_1\neq \ell_2$ and $\ell_0^*\neq \ell_2^*$.  Based on the value of $\ell_1$, we analyze in the following two cases: \textbf{Case B1:} $\ell_1=0$; \textbf{Case B2:} $\ell_1\geqslant 1$. 

 First, it follows that
\begin{align}
|\mathcal{X}^{x_1}|&\leqslant |D_{k-\ell_0^*}(\mathbf{y}_{[3+\ell_0^*,m+1]})|\leqslant D_3(m-1-\ell_0^*,k-\ell_0^*)\nonumber,\\
|\mathcal{X}^{c}|&\leqslant |D_{k-\ell_2^*}(\mathbf{y}_{[3+\ell_2^*,m+1]})|\leqslant D_3(m-1-\ell_2^*,k-\ell_2^*)\nonumber,
\end{align}
where $\ell_0^*,\ell_2^*\geqslant 0$ and $\ell_0^*\neq \ell_2^*$. By Eq. $(\ref{eq6})$, we have
\begin{equation}
|\mathcal{X}^{x_1}|+|\mathcal{X}^{c}|\leqslant D_3(m-1,k)+D_3(m-2,k-1).\nonumber
\end{equation}

\textbf{Case B1:} Given that $\ell_1=0$, it follows that $\mathbf{x}_{[3,m+2]}\neq \mathbf{y}_{[2,m+1]}$ since $d_L(\mathbf{x,y})\geqslant 1, x_2=y_1,$ and $x_1\neq y_1$. Thus, we obtain
\begin{align}
|\mathcal{X}^{y_1}|&\leqslant |D_{k+1}(\mathbf{x}_{[3,m+2]})\cap D_{k+1}(\mathbf{y}_{[2,m+1]})|\leqslant N_3(m,1,k+1)\overset{(a)}{=}2D_3(m-2,k)+D_3(m-3,k-1),\nonumber
\end{align}
where $(a)$ follows from Eq. $(\ref{eq5})$ under the condition $m\geqslant k+3$.

Thus, for $\ell_1=0$,
\begin{align}
|\mathcal{X}|&=|\mathcal{X}^{x_1}|+|\mathcal{X}^{y_1}|+|\mathcal{X}^{c}|\nonumber\\
&\leqslant D_3(m-1,k)+2D_3(m-2,k)+D_3(m-2,k-1)+D_3(m-3,k-1)=N_3(m+1,k+1).\nonumber
\end{align}
\textbf{Case B2:} Given that $\ell_1\geqslant 1$, we obtain
\begin{align}
|\mathcal{X}^{y_1}|&\leqslant |D_{k+1-\ell_1}(\mathbf{x}_{[3+\ell_1,m+2]})|\leqslant D_3(m-1,k)\overset{(a)}{\leqslant}2D_3(m-2,k)+D_3(m-3,k-1),\nonumber
\end{align}
where $(a)$ follows from Eqs. $(\ref{eq3})$ and $(\ref{eq6})$.
Therefore, for $\ell_1\geqslant 1$, we also have $$|\mathcal{X}|\leqslant D_3(m-1,k)+2D_3(m-2,k)+D_3(m-2,k-1)+D_3(m-3,k-1).$$ 

From the above analysis, we conclude that for $x_1\neq y_1$, 
\begin{align}
|\mathcal{X}|\leqslant D_3(m-1,k)+2D_3(m-2,k)+D_3(m-2,k-1)+D_3(m-3,k-1)=N_3(m+1,k+1).\nonumber
\end{align}
Thus, we have established the proposition
$$N_3(m+1,k+2,k+1,1)\leqslant D_3(m-1,k)+2D_3(m-2,k)+D_3(m-2,k-1)+D_3(m-3,k-1)$$ for $t=k+1$ and $n=m+1$.

By the inductive hypothesis, we have proved that
$N_3(n,t+1,t,1)\leqslant D_3(n-2,t-1)+2D_3(n-3,t-1)+D_3(n-3,t-2)+D_3(n-4,t-2)$ holds for any $\frac{2n}{3}> t\geqslant 1$. 
Furthermore, combining with Eq.~$(\ref{eq36})$, we conclude that $N_3(n,t+1,t,1)= D_3(n-2,t-1)+2D_3(n-3,t-1)+D_3(n-3,t-2)+D_3(n-4,t-2)$ holds for any $\frac{2n}{3}> t\geqslant 1$.
\end{proof}

\section{Proof of Lemma $\ref{lm9}$ }\label{APP-B}
The purpose of this appendix is to give the proof of Lemma $\ref{lm9}$.

\begin{proof}{ \rm (The proof of Lemma $\ref{lm9}$)}
If $|\mathbf{x}|=0$, then $(\mathbf{x},a,a,\mathbf{y})=(a,a,\mathbf{y})$. For convenience, let $\mathbf{y}=(y_1,y_2,...,y_n)$. By Lemma $\ref{lm1}$, 
\begin{align*}
|D_t(a,a,\mathbf{y})|&=|D_t(a,a,\mathbf{y})^a|+|D_t(a,a,\mathbf{y})^{a+1}|+|D_t(a,a,\mathbf{y})^{a+2}|\\
&=|D_t(a,\mathbf{y})|+|D_{t-2-\ell_1}(y_{2+\ell_1},...,y_n)|+|D_{t-2-\ell_2}(y_{2+\ell_2},...,y_n)|,
\end{align*}
where $\ell_1,\ell_2\geqslant 0$, $y_{1+\ell_1}=a+1$, and $y_{1+\ell_2}=a+2$. Since $c\neq 0$, we have $\{c,2c\}=\{1,2\}$. Thus,
\begin{align*}
|D_t(a,c+a,c+\mathbf{y})|&=|D_t(a,c+a,c+\mathbf{y})^{a}|+|D_t(a,c+a,c+\mathbf{y})^{a+c}|+|D_t(a,c+a,c+\mathbf{y})^{a+2c}|\\
&=|D_{t}(c+a,c+\mathbf{y})|+|D_{t-1}(c+\mathbf{y})|+|D_{t-2-\ell_3}(c+y_{2+\ell_3},...,c+y_n)|\\
&\overset{(a)}{=}|D_{t}(a,\mathbf{y})|+|D_{t-1}(\mathbf{y})|+|D_{t-2-\ell_3}(y_{2+\ell_3},...,y_n)|\\
&\overset{(b)}{\geqslant}|D_t(a,\mathbf{y})|+|D_{t-2-\ell_2}(y_{2+\ell_2},...,y_n)|+|D_{t-2-\ell_1}(y_{2+\ell_1},...,y_n)|=|D_t(a,a,\mathbf{y})|,
\end{align*}
where $\ell_3=\ell_1$ for $c=1$, or $\ell_3=\ell_2$ for $c=2$. Here, $(a)$ follows from Lemma $\ref{lm8}$, and $(b)$ holds because $(y_{2+\ell_1},...,y_n)$ or $(y_{2+\ell_2},...,y_n)$ is a subsequence of $\mathbf{y}$.

For convenience, let $\mathbf{x}=(x_1,...,x_m)$ and define the concatenated sequence $(\mathbf{x},a,a,\mathbf{y})=(x_1,x_2,...,x_m,x_{m+1},x_{m+2},\\x_{m+3},...,x_{m+n+2})$, where $x_{m+1}=x_{m+2}=a$ and $x_{m+2+i}=y_i$ for all $i\in [n]$. We define the parameter $c$ as follows:
\begin{itemize}
    \item If $x_m = a + 1$, then $c = 2$.
    \item If $x_m = a + 2$, then $c = 1$.
    \item If $x_m = a$, then we set $c = 1$ when $\mathbf{x}$ can be expressed as $\mathbf{x} = \mathbf{u} \circ (a+2) \circ a^t \text{ for some } t \geq 1 \text{ and } \mathbf{u} \in \mathbb{Z}_3^{m-t-1}$; otherwise, we set $c=2$.
\end{itemize}
Similarly, define the modified sequence $(\mathbf{x},a,c+a,c+\mathbf{y})=(x'_1,x'_2,...,x'_m,x'_{m+1},x'_{m+2},x'_{m+3},...,x'_{m+n+2})$, where $x'_{m+1}=a, x'_{m+2}=c+a$, $x'_{j}=x_j$ for all $j\in [m]$, $x'_{m+2+i}=c+y_i$ for all $i\in [n]$. By Lemma $\ref{lm1}$,
\begin{align*}
|D_t(\mathbf{x},a,a,\mathbf{y})|&=|D_t(\mathbf{x},a,a,\mathbf{y})^{x_1}|+|D_t(\mathbf{x},a,a,\mathbf{y})^{x_1+1}|+|D_t(\mathbf{x},a,a,\mathbf{y})^{x_1+2}|\\
&=|D_t(\mathbf{x}_{[2,m]},a,a,\mathbf{y})|+|D_{t-1-\ell_1}(x_{3+\ell_1},...,x_{m+2+n})|+|D_{t-1-\ell_2}(x_{3+\ell_2},...,x_{m+2+n})|,
\end{align*}
where $\ell_1, \ell_2 \geqslant 0$ are the smallest indices such that $x_{2+\ell_1}=x_1+1$ and $x_{2+\ell_2}=x_1+2$. Similarly, applying Lemma~$\ref{lm1}$ to the modified sequence yields
\begin{align*}
|D_t(\mathbf{x},a,c+a,c+\mathbf{y})|&=|D_t(\mathbf{x},a,c+a,c+\mathbf{y})^{x_1}|+|D_t(\mathbf{x},a,c+a,c+\mathbf{y})^{x_1+1}|+|D_t(\mathbf{x},a,c+a,c+\mathbf{y})^{x_1+2}|\\
&=|D_t(\mathbf{x}_{[2,m]},a,c+a,c+\mathbf{y})|+|D_{t-1-\ell_1^*}(x'_{3+\ell_1^*},...,x'_{m+2+n})|+|D_{t-1-\ell_2^*}(x'_{3+\ell_2^*},...,x'_{m+2+n})|,
\end{align*}
where $\ell_1^*, \ell_2^* \geqslant 0$ are the smallest indices such that
$x'_{2+\ell_1^*}=x_1+1$ and $x'_{2+\ell_2^*}=x_1+2$.

Next, we prove that $|D_t(\mathbf{x},a,a,\mathbf{y})|\leqslant |D_t(\mathbf{x},a,c+a,c+\mathbf{y})|$ by  induction on the length and the last element of $\mathbf{x}$.  

When $m=1$, if $x_m=a$, then $c=2$ such that  $\ell_1^*=\ell_2$, $\ell_2^*=1$, and $\ell_1\geqslant 2$. Moreover, we have
\begin{align*}
|D_t(\mathbf{x},a,c+a,c+\mathbf{y})|&=|D_t(\mathbf{x}_{[2,m]},a,c+a,c+\mathbf{y})|+|D_{t-1-\ell_1^*}(x'_{3+\ell_1^*},...,x'_{m+2+n})|+|D_{t-1-\ell_2^*}(x'_{3+\ell_2^*},...,
x'_{m+2+n})|\\
&=|D_t(a,c+a,c+\mathbf{y})|+|D_{t-1-\ell_2}(2+x_{3+\ell_2},...,2+x_{m+2+n})|+|D_{t-2}(2+\mathbf{y})|\\
&\geqslant |D_t(a,a,\mathbf{y})|+|D_{t-1-\ell_2}(x_{3+\ell_2},...,x_{m+2+n})|+|D_{t-1-\ell_1}(x_{3+\ell_1},...,x_{m+2+n})|=|D_t(\mathbf{x},a,a,\mathbf{y})|.
\end{align*}
Similarly, when $x_m=a+1$ or $x_m=a+2$, the same result holds. Thus, for $m=1$ and $x_1\in \{a,a+1,a+2\}$, the base case for the induction holds.

For the induction step, assume the claim holds for all cases where $|\mathbf{x}|<m$. We now consider the case $|\mathbf{x}|=m$  based on the last element of $\mathbf{x}$. When $x_m=a+1$ or $a+2$, $\mathbf{x}_{[k,m]}$ yields the same value $c$ for any $k\in [m]$. Moreover, it can be verified that $\{x'_i|i\in[m+2]\}=\mathbb{Z}_3$  for $x_m=a+1$ or $a+2$. When $x_m=a$, $\mathbf{x}_{[k,m]}$ also yields the same value $c$ by the structure of $\mathbf{x}$. Based on the value of $x_m$, we analyze the following three cases: \textbf{Case A:} $x_m=a+1$; \textbf{Case B:} $x_m=a+2$; \textbf{Case C:} $x_m=a$. 

\textbf{Case A:} Given $x_m=a+1$, it follows that $c=2$ and $\mathbf{x}_{[k,m]}$ ends with $a+1$ for $k\in [m]$. By the induction hypothesis on the length and last element of $\mathbf{x}_{[k,m]}$, we have
\begin{align*}
|D_t(\mathbf{x}_{[k,m]},a,2+a,2+\mathbf{y})|&\geqslant |D_t(\mathbf{x}_{[k,m]},a,a,\mathbf{y})|,
\end{align*}
for $k\in [2,m]$. We now consider the following four subcases: \textbf{Case A1}: $\ell_1\leqslant m-2$ and $\ell_2\leqslant m-2$; \textbf{Case A2}:  $\ell_1\geqslant m-1$ and $\ell_2\geqslant m-1$; \textbf{Case A3}:  $\ell_1\leqslant m-2$ and $\ell_2\geqslant m-1$; \textbf{Case A4}: $\ell_1\geqslant m-1$ and $\ell_2\leqslant m-2$.

\textbf{Case A1}: If $\ell_1\leqslant m-2$ and $\ell_2\leqslant m-2$, then $\mathbf{x}$ contains the elements $x_1+1$ and $x_1+2$. Thus $\ell_1^*=\ell_1$ and $\ell_2^*=\ell_2$. By the induction hypothesis on the length of $\mathbf{x}$, we have
\begin{align*}
|D_{t-1-\ell_1^*}(x_{3+\ell_1^*},...,x_{m+1},2+a,2+\mathbf{y})|&=|D_{t-1-\ell_1}(x_{3+\ell_1},...,x_{m+1},2+a,2+\mathbf{y})|\geqslant |D_{t-1-\ell_1}(x_{3+\ell_1},...,x_{m+2+n})|,\\
|D_{t-1-\ell_2^*}(x_{3+\ell_2^*},...,x_{m+1},2+a,2+\mathbf{y})|&=|D_{t-1-\ell_2}(x_{3+\ell_2},...,x_{m+1},2+a,2+\mathbf{y})|\geqslant |D_{t-1-\ell_2}(x_{3+\ell_2},...,x_{m+2+n})|.
\end{align*}
Therefore, $$|D_t(\mathbf{x},a,2+a,2+\mathbf{y})|\geqslant |D_t(\mathbf{x},a,a,\mathbf{y})|.$$
\textbf{Case A2}: If $\ell_1\geqslant m-1$ and $\ell_2\geqslant m-1$, then $\mathbf{x}$ does not contain the elements $x_1+1$ or $x_1+2$. Thus, $x_1=a+1$, $\ell_1^*=m$, $\ell_2^*=\ell_2=m-1$, $\ell_1\geqslant m+1$. \textbf{Case A3}: If $\ell_1\leqslant m-2$ and $\ell_2\geqslant m-1$, then $\ell_1^*=\ell_1$. For $x_1=a$, we have $\ell_2^*=m,\ell_2\geqslant m+1$; for $x_1=a+1$, we have  $\ell_2^*=\ell_2=m-1$. \textbf{Case A4}: If $\ell_1\geqslant m-1$ and $\ell_2\leqslant m-2$ then  $\ell_2^*=\ell_2$. For $x_1=a+1$, we have $\ell_1^*=m,\ell_1\geqslant m+1$; for $x_1=a+2$, we have $\ell_1^*=\ell_1=m-1$. Following the method used in \textbf{Case A1}, the same result holds for \textbf{Cases A2, A3, A4}.

\textbf{Case B:} For $x_m=a+2$, we have $c=1$. Similarly, it follows that $$|D_t(\mathbf{x},a,1+a,1+\mathbf{y})|\geqslant |D_t(\mathbf{x},a,a,\mathbf{y})|.$$

\textbf{Case C:} Consider the case where $x_m=a$. If $\mathbf{x}=\mathbf{u}\circ (a+2) \circ a^t$ for some $t\geqslant 1$ and $\mathbf{u}\in \mathbb{Z}_3^{m-t-1}$, then we set $c=1$; otherwise we set $c=2$. Based on the structure of $\mathbf{x}$, we consider the following three subcases: \textbf{Case C1:} $\mathbf{x}$ is of the form $\mathbf{u}\circ (a+2) \circ a^t$ for some $t\geqslant 1$ and $\mathbf{u}\in \mathbb{Z}_3^{m-t-1}$; 
\textbf{Case C2:} $\mathbf{x}$ is of the form $\mathbf{u}\circ (a+1) \circ a^t$ for some $t\geqslant 1$ and $\mathbf{u}\in \mathbb{Z}_3^{m-t-1}$; \textbf{Case C3:} $\mathbf{x}$ is of the form $a^m$.

\textbf{Case C1:}  When $\mathbf{x}$ is of the form $\mathbf{u}\circ (a+2) \circ a^t$ for some $t\geqslant 1$ and $\mathbf{u}\in \mathbb{Z}_3^{m-t-1}$, we have $\ell_2\leqslant m-2$ and $c=1$. We analyze the following two cases: \textbf{Case C1-1}: $\ell_1\leqslant m-2$ and $\ell_2\leqslant m-2$; \textbf{Case C1-2}: $\ell_1\geqslant m-1$ and $\ell_2\leqslant m-2$.
Here, the same result as in \textbf{Case C1-1} can be obtained by following the method used in \textbf{Case A1}.

\textbf{Case C1-2}: When $\ell_1\geqslant m-1$ and $\ell_2\leqslant m-2$, we have $\ell_2^*=\ell_2$, and  $\ell_1^*=m$, $\ell_1\geqslant m+1$. Here, $c=1$. Moreover, $\mathbf{x}_{[k,m]}$ is given by $\mathbf{u}_{[k,m-t-1]}\circ (a+2) \circ a^t$ for $k\in [m-t]$, or $a^{m-k+1}$ for $k\in [m-t+1,m]$. By the assumption on the length and structure of $\mathbf{x}_{[k,m]}$, we have
\begin{align*}
|D_t(\mathbf{x}_{[k,m]},a,1+a,1+\mathbf{y})|&\geqslant |D_t(\mathbf{x}_{[k,m]},a,a,\mathbf{y})|,
\end{align*}
for $k\in [2,m]$. Thus,
\begin{align*}
|D_{t-1-\ell_2^*}(x_{3+\ell_2^*},...,x_{m+1},1+a,1+\mathbf{y})|=|D_{t-1-\ell_2}(x_{3+\ell_2},...,x_{m+1},1+a,1+\mathbf{y})|\geqslant |D_{t-1-\ell_2}(x_{3+\ell_1},...,x_{m+2+n})|.
\end{align*}
Moreover, we have
\begin{align*}
|D_{t-1-\ell_1^*}(1+\mathbf{y})|&=|D_{t-1-m}(\mathbf{y})|\geqslant |D_{t-1-\ell_1}(x_{3+\ell_1},...,x_{m+2+n})|.
\end{align*}
Therefore, $$|D_t(\mathbf{x},a,1+a,1+\mathbf{y})|\geqslant |D_t(\mathbf{x},a,a,\mathbf{y})|.$$

\textbf{Case C2:} When $\mathbf{x}$ is of the form $\mathbf{u}\circ (a+1) \circ a^t$ for some $t\geqslant 1$ and $\mathbf{u}\in \mathbb{Z}_3^{m-t-1}$, we have $\ell_1\leqslant m-2$ and $c=2$. We analyze the following two cases: \textbf{Case C2-1}: $\ell_1\leqslant m-2$ and $\ell_2\leqslant m-2$; \textbf{Case C2-2}: $\ell_1\leqslant m-2$ and $\ell_2\geqslant m-1$.
Similarly, it follows that
$$|D_t(\mathbf{x},a,2+a,2+\mathbf{y})|\geqslant |D_t(\mathbf{x},a,a,\mathbf{y})|,$$
by using the method as for $\mathbf{x}=\mathbf{u}\circ (a+2)\circ a^t$.

\textbf{Case C3:} When $\mathbf{x}$ is of the form $a^m$, we have $\ell_1,\ell_2\geqslant m-2$ and $c=2$. Thus, $\ell_1^*=\ell_2$, $\ell_2^*=m$, and $\ell_1,\ell_2\geqslant m+1$. By the assumption of the length of $\mathbf{x}_{[k,m]}$ and its last element, we have
\begin{align*}
|D_t(\mathbf{x}_{[k,m]},a,2+a,2+\mathbf{y})|&\geqslant |D_t(\mathbf{x}_{[k,m]},a,a,\mathbf{y})|,
\end{align*}
for $k\in [2,m]$. Furthermore,
\begin{align*}
|D_{t-1-\ell_1^*}(x'_{3+\ell_1^*},...,x'_{m+2+n})|&=|D_{t-1-\ell_2}(2+x_{3+\ell_2},...,2+x_{m+2+n})|=|D_{t-1-\ell_2}(x_{3+\ell_2},...,x_{m+2+n})|\\
|D_{t-1-\ell_2^*}(2+\mathbf{y})|&=|D_{t-1-m}(\mathbf{y})|\geqslant |D_{t-1-\ell_1}(x_{3+\ell_1},...,x_{m+2+n})|.
\end{align*}
Therefore, $$|D_t(\mathbf{x},a,2+a,2+\mathbf{y})|\geqslant |D_t(\mathbf{x},a,a,\mathbf{y})|.$$

From the above discussion, when $|\mathbf{x}|=m$, there exists  $c\in\{1,2\}$ such that $$|D_t(\mathbf{x},a,c+a,c+\mathbf{y})|\geqslant |D_t(\mathbf{x},a,a,\mathbf{y})|.$$  By the inductive hypothesis, it follows that the proposition holds for any $\mathbf{x}\in \mathbb{Z}_3^m$ and $m\geqslant 0$.
\end{proof}

\section{Proofs of Lemmas $\ref{lm10}$ and $\ref{lm11}$}\label{APP-C}
This appendix provides the proofs of Lemmas $\ref{lm10}$ and $\ref{lm11}$.

\begin{proof}{ \rm (The proof of Lemma $\ref{lm10}$)}
Given that $(\mathbf{x},a,b)\in \mathbf{c}_3(s+2)$, it follows that 
\begin{equation*}
x_i=
\begin{cases}
c & \text{if } i\equiv s\mod 3, \\
b & \text{if } i\equiv s-1\mod 3, \\
a & \text{if } i\equiv s-2\mod 3.
\end{cases}
\end{equation*}
By interchanging $a$ and $c$ in $\mathbf{x}$, we have
\begin{equation*}
x'_i=
\begin{cases}
a & \text{if } i\equiv s\mod 3, \\
b & \text{if } i\equiv s-1\mod 3, \\
c & \text{if } i\equiv s-2\mod 3.
\end{cases}
\end{equation*}
Thus, $(\mathbf{x}',c,b,a)\in \mathbf{c}_3(s+3)$. We proceed by induction on the length of $\mathbf{x}$. If $|\mathbf{x}|=0$, then $(\mathbf{x},a,b,a,\mathbf{y})=(a,b,a,\mathbf{y})$. For convenience, let $\mathbf{y}=(y_1,y_2,...,y_p)$. By Lemma $\ref{lm1}$ we have
\begin{align*}
|D_t(a,b,a,\mathbf{y})|&=|D_t(a,b,a,\mathbf{y})^a|+|D_t(a,b,a,\mathbf{y})^{b}|+|D_t(a,b,a,\mathbf{y})^{c}|\\
&=|D_{t}(b,a,\mathbf{y})|+|D_{t-1}(a,\mathbf{y})|+|D_{t-3-\ell_1}(y_{2+\ell_1},...,y_p)|,
\end{align*}
where $\ell_1\geqslant 0$ and $y_{1+\ell_1}=c$. Moreover, we have
\begin{align*}
|D_t(c,b,a,\mathbf{y})|&=|D_t(c,b,a,\mathbf{y})^{c}|+|D_t(c,b,a,\mathbf{y})^{b}|+|D_t(c,b,a,\mathbf{y})^{a}|\\
&=|D_{t}(b,a,\mathbf{y})|+|D_{t-1}(a,\mathbf{y})|+|D_{t-2}(\mathbf{y})|\\
&\geqslant |D_{t}(b,a,\mathbf{y})|+|D_{t-1}(a,\mathbf{y})|+|D_{t-3-\ell_1}(y_{2+\ell_1},...,y_p)|=|D_t(a,b,a,\mathbf{y})|.
\end{align*}

For the induction step, assume the claim holds for all sequences with $|\mathbf{x}|<m$. We now consider the case where $|\mathbf{x}|=m$. Let $\mathbf{x}\in \mathbb{Z}_3^m$ be an arbitrary sequence. For convenience, we define the extended sequences as $(\mathbf{x},a,b,a,\mathbf{y})=(x_1,...,x_m,x_{m+1},...,x_{m+3+p})$ and $(\mathbf{x}',c,b,a,\mathbf{y})=(x'_1,...,x'_m,x'_{m+1},...,x_{m+3+p})$. Since $(x_1,...,x_m,x_{m+1},x_{m+2})\in \mathbf{c}_3(m+2)$, by Lemma $\ref{lm1}$ we have
\begin{align*}
|D_t(\mathbf{x},a,b,a,\mathbf{y})|&=|D_t(\mathbf{x},a,b,a,\mathbf{y})^{a}|+|D_t(\mathbf{x},a,b,a,\mathbf{y})^{b}|+|D_t(\mathbf{x},a,b,a,\mathbf{y})^{c}|\\
&=|D_{t}(x_{2},...,x_{m+2},a,\mathbf{y})|+|D_{t-1}(x_{3},...,x_{m+2},a,\mathbf{y})|+|D_{t-2}(x_{4},...,x_{m+2},a,\mathbf{y})|.
\end{align*}
Similarly, since $(x'_1,...,x'_m,x'_{m+1},x'_{m+2})\in \mathbf{c}_3(m+2)$, by Lemma $\ref{lm1}$ we have
\begin{align*}
|D_t(\mathbf{x}',c,b,a,\mathbf{y})|&=|D_t(\mathbf{x}',c,b,a,\mathbf{y})^{a}|+|D_t(\mathbf{x}',c,b,a,\mathbf{y})^{b}|+|D_t(\mathbf{x}',c,b,a,\mathbf{y})^{c}|\\
&=|D_{t}(x'_{2},...,x'_{m+2},a,\mathbf{y})|+|D_{t-1}(x'_{3},...,x'_{m+2},a,\mathbf{y})|+|D_{t-2}(x'_{4},...,x'_{m+2},a,\mathbf{y})|\\
&\overset{(a)}{\geqslant } |D_{t}(x_{2},...,x_{m+2},a,\mathbf{y})|+|D_{t-1}(x_{3},...,x_{m+2},a,\mathbf{y})|+|D_{t-2}(x_{4},...,x_{m+2},a,\mathbf{y})|=|D_t(\mathbf{x},a,b,a,\mathbf{y})|,
\end{align*}
where $(a)$ follows from the assumption of the length $|\mathbf{x}|<m$.

Therefore, by the induction hypothesis, we conclude that  $$|D_t(\mathbf{x}',c,b,a,\mathbf{y})|\geqslant |D_t(\mathbf{x},a,b,a,\mathbf{y})|,$$  for any $\mathbf{x}\in \mathbb{Z}_3^m$ and $m\geqslant 0$. 
\end{proof}

\begin{proof}{ \rm (The proof of Lemma $\ref{lm11}$)} 
By Lemma $\ref{lm9}$, it suffices to consider the case where $\mathbf{x}$ has exactly $n$ runs. Furthermore, Lemma $\ref{lm10}$ allows us to restrict our attention to sequences $\mathbf{x}$ with exactly $n$ runs that contain exactly one occurrence of the subsequence $(a,b,a)$ for some distinct $a,b\in \mathbb{Z}_3$. Assume without loss of generality that $a=1$ and $b=0$. 

In this case, $\mathbf{x}$ can be expressed as the concatenation of $\mathbf{c}_3(s,\sigma)$ and $\mathbf{c}_3(t,\gamma)$, where:
\begin{itemize}
\item $n-1\geqslant s\geqslant 2$ and $s+t=n$,
\item $\sigma$ is an ordering such that the last three elements of $\mathbf{c}_3(s,\sigma)$ are $2,1,0$ in sequence,
\item $\gamma$ is the ordering $1,2,0$.
\end{itemize}
For convenience, let $\gamma=(1,2,0)$. We define $\mathbf{c}_{n,i}$ as the concatenation of $\mathbf{c}_3(i+1,\sigma)$ and $\mathbf{c}_3(n-i-1,\gamma)$, where $n-2\geqslant i\geqslant 1$. For example, when $n=5$, we have $\mathbf{c}_{5,1}=(1,0,1,2,0), \mathbf{c}_{5,2}=(2,1,0,1,2), \mathbf{c}_{5,3}=(0,2,1,0,1)$.

We define the functions $f_i(n,t)$ for $i=1,2,3$ as follows:
\begin{align*}
f_1(n,t)&=D_3(n-1,t)+D_3(n-2,t-1)+D_3(n-4,t-3),\\
f_2(n,t)&=D_3(n-2,t)+D_3(n-2,t-1)+D_3(n-3,t-1)+D_3(n-3,t-2)+D_3(n-5,t-3),\\
f_3(n,t)&=D_3(n-3,t)+2D_3(n-3,t-1)+D_3(n-3,t-2)+D_3(n-4,t-1)+2D_3(n-4,t-2)\\
&~~~+D_3(n-6,t-3)+D_3(n-6,t-4).
\end{align*}
For any $4\leq i\leqslant n-2$, we recursively define
$$f_i(n,t)=f_{i-1}(n-1,t)+f_{i-2}(n-2,t-1)+f_{i-3}(n-3,t-2).$$
Next, we prove by induction that for all $1\leqslant i\leqslant n-2$,
$$|D_t(\mathbf{c}_{n,i})|=f_i(n,t).$$

When $i=1$, we have $\mathbf{c}_{n,1}=(1,0,1,2,0,\mathbf{c}_3(n-5,(1,2,0))).$ Thus,  $|D_t(\mathbf{c}_{n,1})|=|D_t(\mathbf{c}_{n,1})^0|+|D_t(\mathbf{c}_{n,1})^1|+|D_t(\mathbf{c}_{n,1})^2|=|D_{t-1}(\mathbf{c}_3(n-2,(1,2,0))|+|D_t(\mathbf{c}_3(n-1,(0,1,2))|+
|D_{t-3}(\mathbf{c}_3(n-4,(0,1,2))|=D_3(n-1,t)+D_3(n-2,t-1)+D_3(n-4,t-3)=f_1(n,t).$

When $i=2$, we have $\mathbf{c}_{n,2}=(2,1,0,1,2,0,\mathbf{c}_3(n-6,(1,2,0))).$ Hence, $|D_t(\mathbf{c}_{n,2})|=|D_t(\mathbf{c}_{n,2})^0|+|D_t(\mathbf{c}_{n,2})^1|+|D_t(\mathbf{c}_{n,2})^2|=|D_{t-2}(\mathbf{c}_3(n-3,(1,2,0))|+|D_{t-1}(\mathbf{c}_3(n-2,(0,1,2))|+
|D_{t}(\mathbf{c}_{n-1,1})|=D_3(n-3,t-2)+D_3(n-2,t-1)+f_1(n-1,t)=D_3(n-2,t)+D_3(n-2,t-1)+D_3(n-3,t-1)+D_3(n-3,t-2)+D_3(n-5,t-3)=f_2(n,t).$

When $i=3$, we have $\mathbf{c}_{n,3}=(0,2,1,0,1,2,0,\mathbf{c}_3(n-7,(1,2,0))).$ Thus, $|D_t(\mathbf{c}_{n,3})|=|D_t(\mathbf{c}_{n,3})^0|+|D_t(\mathbf{c}_{n,3})^1|+|D_t(\mathbf{c}_{n,3})^2|=|D_{t}(\mathbf{c}_{n-1,2})|+|D_{t-2}(\mathbf{c}_3(n-3,(0,1,2))|+
|D_{t-1}(\mathbf{c}_{n-2,1})|=f_2(n-1,t)+D_3(n-3,t-2)+f_1(n-2,t-1)=D_3(n-3,t)+2D_3(n-3,t-1)+D_3(n-3,t-2)+D_3(n-4,t-1)+2D_3(n-4,t-2)+D_3(n-6,t-3)+D_3(n-6,t-4)=f_3(n,t).$
From the above computations, we have  $|D_t(\mathbf{c}_{n,i})|=f_i(n,t)$ for $1\leqslant i\leqslant 3$, which indicate the base case of the induction holds. 

Assume that the claim holds for all $j<i$, and consider the sequence $\mathbf{c}_{n,i}$. For convenience, denote $D_t(\mathbf{c}_{n,i})=(c_1,c_2,...,c_n)$ where $\{c_1,c_2,c_3\}=\mathbb{Z}_3$. By decomposition, we have
\begin{align*}
|D_t(\mathbf{c}_{n,i})|&=|D_t(\mathbf{c}_{n,i})^{c_1}|+|D_t(\mathbf{c}_{n,i})^{c_2}|+|D_t(\mathbf{c}_{n,i})^{c_3}|\\
&=|D_t\big(\mathbf{c}_3(i,(c_2,c_3,c_1))\circ\mathbf{c}_3(n-i-1,(1,2,0))\big)|+|D_{t-1}\big(\mathbf{c}_3(i-1,(c_3,c_1,c_2))\circ\mathbf{c}_3(n-i-1,(1,2,0))\big)|\\
&~~~+|D_{t-2}\big(\mathbf{c}_3(i-2,(c_1,c_2,c_3))\circ\mathbf{c}_3(n-i-1,(1,2,0))\big)|\\
&=f_{i-1}(n-1,t)+f_{i-2}(n-2,t-1)+f_{i-3}(n-3,t-2)\\
&=f_i(n,t).
\end{align*}

Finally, we prove by induction that for all $1\leqslant i\leqslant n-2$, $f_i(n,t)\leqslant D_3(n-2,t)+D_3(n-2,t-1)+D_3(n-3,t-1)+D_3(n-3,t-2)+D_3(n-5,t-3)$. For the base case, equality holds for $f_2(n,t)$, and the inequality holds for $f_1(n,t)$ and $f_3(n,t)$, as shown below:
\begin{align*}
D_3&(n-2,t)+D_3(n-2,t-1)+D_3(n-3,t-1)+D_3(n-3,t-2)+D_3(n-5,t-3)-f_1(n,t)\\
&=D_3(n-2,t)+D_3(n-2,t-1)+D_3(n-3,t-1)+D_3(n-3,t-2)+D_3(n-5,t-3)-D_3(n-1,t)\\
&~~~-D_3(n-2,t-1)-D_3(n-4,t-3)\\
&=D_3(n-2,t)+D_3(n-3,t-1)+D_3(n-3,t-2)+D_3(n-5,t-3)-D_3(n-2,t)-D_3(n-3,t-1)\\
&~~~-D_3(n-4,t-2)-D_3(n-4,t-3)\\
&=D_3(n-3,t-2)+D_3(n-5,t-3)-D_3(n-4,t-2)-D_3(n-4,t-3)\\
&=2D_3(n-5,t-3)+D_3(n-6,t-4)-D_3(n-5,t-3)-D_3(n-6,t-4)-D_3(n-7,t-5)\\
&=D_3(n-5,t-3)-D_3(n-7,t-5)\geqslant 0,
\end{align*}
and
\begin{align*}
D_3&(n-2,t)+D_3(n-2,t-1)+D_3(n-3,t-1)+D_3(n-3,t-2)+D_3(n-5,t-3)-f_3(n,t)\\
&=D_3(n-2,t)+D_3(n-2,t-1)+D_3(n-3,t-1)+D_3(n-3,t-2)+D_3(n-5,t-3)-D_3(n-3,t)\\
&~~-2D_3(n-3,t-1)-D_3(n-3,t-2)-D_3(n-4,t-1)-2D_3(n-4,t-2)-D_3(n-6,t-3)-D_3(n-6,t-4)\\
&=D_3(n-8,t-5)-D_3(n-9,t-6)\geqslant 0.
\end{align*}
This follows from Eqs. $(\ref{eq3})$ and $(\ref{eq6})$ under the condition $n\geqslant t+3$. When $n=t+2$, the two inequalities also holds.

Assume that the claim holds for all $i\leq k$ and consider the case where $i=k+1$. Then
\begin{align*}
f_{k+1}(n,t)&=f_k(n-1,t)+f_{k-1}(n-2,t-1)+f_{k-2}(n-3,t-2)\\
&\leqslant D_3(n-3,t)+D_3(n-3,t-1)+D_3(n-4,t-1)+D_3(n-4,t-2)+D_3(n-6,t-3)\\
&~~~+D_3(n-4,t-1)+D_3(n-4,t-2)+D_3(n-5,t-2)+D_3(n-5,t-3)+D_3(n-7,t-4)\\
&~~~+D_3(n-5,t-2)+D_3(n-5,t-3)+D_3(n-6,t-3)+D_3(n-6,t-4)+D_3(n-8,t-5)\\
&\overset{(a)}{=}D_3(n-2,t)+D_3(n-2,t-1)+D_3(n-3,t-1)+D_3(n-3,t-2)+D_3(n-5,t-3),
\end{align*}
where $(a)$ follows from Eq. $(\ref{eq3})$ for $n\geqslant t+3$. When $n=t+2$, Equality $(a)$ becomes an inequality but still holds. Thus, the lemma follows by induction.
\end{proof}

\section{Proof of Lemma $\ref{lm12}$ }\label{APP-D}
The purpose of this appendix is to give the proof of Lemma $\ref{lm12}$.

\begin{proof}{ \rm (The proof of Lemma $\ref{lm12}$)} Since $n\geqslant \max\{9,\lfloor \frac{3t}{2}\rfloor +1\}$ and  $t\geqslant 2$, we have $n\geqslant t+5$  except for the cases where $(n,t)=(9,5), (10,6),$ or $(11,7)$. Let $\mathcal{X}=D_t(\mathbf{x})\cap D_t(\mathbf{y})$ denote the intersection of the two sets. 
When $(n,t)=(9,5), (10,6)$ or $(11,7)$, by a computerized search, it follows that $|\mathcal{X}|\leqslant M_0(n,t)$. Next, we restrict our attention to the case where $n\geqslant t+5$. Without loss of generality, let $a=0,b=1,c=2$ with $\mathbf{x}=(1,0,2,0,1,x_6,...,x_n)$ and $\mathbf{y}=(0,1,2,0,1,y_6,...,y_n)$.

Given that $d_L(\mathbf{x},\mathbf{y})\geqslant 2$, $\mathbf{x}_{[1,5]}=(1,0,2,0,1)$, and $\mathbf{y}_{[1,5]}=(0,1,2,0,1)$, it follows that $\mathbf{x}_{[6,n]}\neq \mathbf{y}_{[6,n]}$. By Lemmas $\ref{lm1}$ and $\ref{lm2}$, we have $|\mathcal{X}|=|\mathcal{X}^{(0,0)}|+|\mathcal{X}^{(0,1)}|+|\mathcal{X}^{(0,2)}|+|\mathcal{X}^{(1,0)}|+|\mathcal{X}^{(1,1)}|+|\mathcal{X}^{(1,2)}|+|\mathcal{X}^{(2,0)}|+|\mathcal{X}^{(2,1)}|+|\mathcal{X}^{(2,2)}|$, where
\begin{align}
|\mathcal{X}^{(0,0)}|&=|D_{t-2}(1,x_6,...,x_n)\cap D_{t-2}(1,y_6,...,y_n)|\leqslant N_3(n-4,1,t-2)=2D_3(n-6,t-3)+D_3(n-7,t-4),\nonumber\\
|\mathcal{X}^{(0,1)}|&=|D_{t-3}(x_6,...,x_n)\cap D_{t}(2,0,1,y_6,..,y_n)|\leqslant |D_{t-3}(x_6,...,x_n)|,\nonumber\\
|\mathcal{X}^{(0,2)}|&=|D_{t-1}(0,1,x_6,...,x_n)\cap D_{t-1}(0,1,y_6,..,y_n)|\leqslant N_3(n-3,1,t-1)=2D_3(n-5,t-2)+D_3(n-6,t-3),\nonumber\\
|\mathcal{X}^{(1,0)}|&=|D_t(2,0,1,x_6,...,x_n)\cap D_{t-2}(1,y_6,...,y_n)|,\nonumber\\
|\mathcal{X}^{(1,1)}|&=|D_{t-3}(x_6,...,x_n)\cap D_{t-3}(y_6,...,y_n)|\leqslant N_3(n-5,1,t-3)=2D_3(n-7,t-4)+D_3(n-8,t-5),\nonumber\\
|\mathcal{X}^{(1,2)}|&=|D_{t-1}(0,1,x_6,...,x_n)\cap D_{t-1}(0,1,y_6,...,y_n)|\leqslant N_3(n-3,1,t-1)=2D_3(n-5,t-2)+D_3(n-6,t-3),\nonumber\\
|\mathcal{X}^{(2,0)}|&=|D_{t-2}(1,x_6,...,x_n)\cap D_{t-2}(1,y_6,...,y_n)|\leqslant N_3(n-4,1,t-2)=2D_3(n-6,t-3)+D_3(n-7,t-4),\nonumber\\
|\mathcal{X}^{(2,1)}|&=|D_{t-3}(x_6,...,x_n)\cap D_{t-3}(y_6,...,y_n)|\leqslant N_3(n-5,1,t-3)=2D_3(n-7,t-4)+D_3(n-8,t-5),\nonumber\\
|\mathcal{X}^{(2,2)}|&=|D_{t-4-\ell}(x_{7+\ell},...,x_n)\cap D_{t-4-\ell^*}(y_{7+\ell^*},...,y_n)|,\label{eq38}
\end{align}
with $\ell,\ell^*\geqslant 0$. Next, we further decompose $\mathcal{X}^{(1,0)}$, yielding  
\begin{align}
|\mathcal{X}^{(1,0)}|&=|D_t(2,0,1,x_6,...,x_n)\cap D_{t-2}(1,y_6,...,y_n)|=|D_t(0,1,x_6,...,x_n)\cap D_{t-3-\ell_0}(y_{7+\ell_0},...,y_n)|\nonumber\\
&~~~+|D_{t-1}(1,x_6,...,x_n)\cap D_{t-3-\ell_1}(y_{7+\ell_1},...,y_n)|+|D_{t-2}(x_6,...,x_n)\cap D_{t-2}(y_{6},...,y_n)|\nonumber\\
&\leqslant |D_{t-3-\ell_0}(y_{7+\ell_0},...,y_n)|+|D_{t-3-\ell_1}(y_{7+\ell_1},...,y_n)|+N_3(n-5,1,t-2)\nonumber\\
&= |D_{t-3-\ell_0}(y_{7+\ell_0},...,y_n)|+|D_{t-3-\ell_1}(y_{7+\ell_1},...,y_n)|+2D_3(n-7,t-3)+D_3(n-8,t-4)\label{eq39}
\end{align}
under the condition $n\geqslant t+4$, where $\ell_0,\ell_1\geqslant 0$ and $\ell_0\neq \ell_1$. We analyze the
following three cases: \textbf{Case A}: $\ell\geqslant 1$ or $\ell^*\geqslant 1$; \textbf{Case B}:  $\ell=\ell^*=0$ and $\ell_1\geqslant 2$; \textbf{Case C}:   $\ell=\ell^*=0$ and $\ell_1<2$.

\textbf{Case A}: If $\ell\geqslant 1$ or $\ell^*\geqslant 1$, then $x_6\neq 2$ or $y_6\neq 2$. Under this condition, we obtain the following bound:
\begin{align}
|\mathcal{X}^{(2,2)}|\leqslant \min\{|D_{t-4-\ell}(x_{7+\ell},...,x_n)|,|D_{t-4-\ell^*}(y_{7+\ell^*},...,y_n)|\}\leqslant D_3(n-7,t-5).\label{eq40}    
\end{align}
Furthermore, 
\begin{align}
|\mathcal{X}^{(0,1)}|&\leqslant D_3(n-5,t-3), \label{eq41}\\
|\mathcal{X}^{(1,0)}|&\leqslant D_3(n-6,t-3)+D_3(n-7,t-4)+2D_3(n-7,t-3)+D_3(n-8,t-4),\label{eq42}
\end{align}
since $\ell_0\neq \ell_1$ and $\ell_0,\ell_1\geqslant 0$. Combining  Eqs. $(\ref{eq38})$-$(\ref{eq42})$, we have 
\begin{align*}
|\mathcal{X}|\leqslant M_0(n,t)+\big(D_3(n-10,t-5)-D_3(n-12,t-7)\big)-\big(D_3(n-8,t-5)-D_3(n-10,t-7)\big).    
\end{align*}
We analyze the term $D_3(n-8,t-5)-D_3(n-10,t-7)$ as follows:
\begin{align*}
D_3&(n-8,t-5)-D_3(n-10,t-7)\\
&\overset{(a)}{=}\big(D_3(n-9,t-5)-D_3(n-11,t-7)\big)+\big(D_3(n-10,t-6)-D_3(n-12,t-8)\big)+\big(D_3(n-11,t-7)-D_3(n-13,t-9)\big)\\
&\overset{(b)}{\geqslant} D_3(n-9,t-5)-D_3(n-11,t-7)\\
&\overset{(c)}{=}\big(D_3(n-10,t-5)-D_3(n-12,t-7)\big)+\big(D_3(n-11,t-6)-D_3(n-13,t-8)\big)+\big(D_3(n-12,t-7)-D_3(n-14,t-9)\big)\\
&\overset{(d)}{\geqslant} D_3(n-10,t-5)-D_3(n-12,t-7),
\end{align*}
where $(a),(c)$ follow from Eq. $(\ref{eq3})$ under the condition $n\geqslant t+5$, $(b),(d)$ follow from Eq. $(\ref{eq6})$. 

Therefore, when $\ell\geqslant 1$ or $\ell^*\geqslant 1$, we have $$|\mathcal{X}|\leqslant M_0(n,t).$$ 

\textbf{Case B}: If $\ell=\ell^*=0$, then $x_6=y_6=2$ and $\ell_0=0$. Thus,
\begin{align}
 |\mathcal{X}^{(2,2)}|&=|D_{t-4}(x_{7},...,x_n)\cap D_{t-4}(y_{7},...,y_n)|\leqslant N_3(n-6,1,t-4)=2D_3(n-8,t-5)+D_3(n-9,t-6),\label{eq43}   
\end{align}
for $\ell=\ell^*=0$.
Moreover, if $\ell_1\geqslant 2$, then 
\begin{align}
|\mathcal{X}^{(1,0)}|&\leqslant D_3(n-6,t-3)+D_3(n-8,t-5)+2D_3(n-7,t-3)+D_3(n-8,t-4),\label{eq44}
\end{align}
since $\ell_0=0$. Combining Eqs. $(\ref{eq38})$, $(\ref{eq41})$, $(\ref{eq43})$, and $(\ref{eq44})$, we have 
\begin{align*}
|\mathcal{X}|\leqslant M_0(n,t)+\big(D_3(n-10,t-5)-D_3(n-12,t-7)\big)-\big(D_3(n-7,t-4)-D_3(n-8,t-5)\big).    
\end{align*}
Since $\big(D_3(n-7,t-4)-D_3(n-8,t-5)\big)\geqslant \big(D_3(n-10,t-5)-D_3(n-12,t-7)\big)$, we have $$|\mathcal{X}|\leqslant M_0(n,t).$$ 

\textbf{Case C}:  If $\ell=\ell^*=0$, then $x_6=y_6=2$ and $\ell_0=0$. Since $\ell_1<2$ and $\ell_1\neq \ell_0$, we have $\ell_1=1$. Thus $y_7=0$. That is, $\mathbf{x}_{[1,6]}=(1,0,2,0,1,2)$ and $\mathbf{y}_{[1,7]}=(0,1,2,0,1,2,0)$. Since $d_L(\mathbf{x},\mathbf{y})\geqslant 2$, we have 
$\mathbf{x}_{[7,n]}\neq \mathbf{y}_{[7,n]}$. Here, we have
$$|\mathcal{X}^{(2,0)}|=|D_{t-2}(1,2,x_7,x_8,...,x_n)\cap D_{t-2}(1,2,0,y_8,...,y_n)|.$$
Furthermore,  $|\mathcal{X}^{(2,0)}|=|\mathcal{X}^{(2,0,0)}|+|\mathcal{X}^{(2,0,1)}|+|\mathcal{X}^{(2,0,2)}|$, where
\begin{align*}
|\mathcal{X}^{(2,0,0)}|&=|D_{t-4-\ell_2}(x_{8+\ell_2},...,x_n)\cap D_{t-4}(y_8,...,y_n)|,\\
|\mathcal{X}^{(2,0,1)}|&=|D_{t-2}(2,x_7,x_8,...,x_n)\cap D_{t-2}(2,0,y_8,...,y_n)|\leqslant N_3(n-5,1,t-2)=2D_3(n-7,t-3)+D_3(n-8,t-4),\\
|\mathcal{X}^{(2,0,2)}|&=|D_{t-3}(x_7,x_8,...,x_n)\cap D_{t-3}(0,y_8,...,y_n)|\leqslant N_3(n-6,1,t-3)=2D_3(n-8,t-4)+D_3(n-9,t-5),
\end{align*}
with $\ell_2\geqslant 0$.
In \textbf{Case C}, we consider the following two subcases: \textbf{Case C1}: with $\ell_2\geqslant 1$;  \textbf{Case C2}: with $\ell_2=0$.

\textbf{Case C1}: If $\ell_2\geqslant 1$, then $$|\mathcal{X}^{(2,0,0)}|\leqslant |D_{t-4-\ell_2}(x_{8+\ell_2},...,x_n)|\leqslant D_3(n-8,t-5).$$
Combining with Eqs. $(\ref{eq38})$ and $(\ref{eq41})$-$(\ref{eq43})$, we have 
\begin{align*}
|\mathcal{X}|\leqslant M_0(n,t)+\big(D_3(n-10,t-5)-D_3(n-12,t-7)\big)-\big(D_3(n-9,t-5)-D_3(n-11,t-7)\big).    
\end{align*}
Thus, if $\ell_2\geqslant 1$, then $$|\mathcal{X}|\leqslant M_0(n,t).$$ 

\textbf{Case C2}: If $\ell_2=0$, then $x_7=0$. That is, $\mathbf{x}_{[1,7]}=(1,0,2,0,1,2,0)$ and $\mathbf{y}_{[1,7]}=(0,1,2,0,1,2,0)$. Since $n\geqslant 9$ and  $\frac{2n}{3}> t\geqslant 2$, we have  $n\geqslant t+5$ except for $(n,t)=(9,5), (10,6),$ or $(11,7)$. Based on the relation between  $\mathbf{x}_{[6,n]}$ and $\mathbf{c}_3(n-5)$, we discuss the value of $|\mathcal{X}|$ in the following two cases: \textbf{Case C2-1}: $\mathbf{x}_{[6,n]}\notin \mathbf{c}_3(n-5)$; \textbf{Case C2-2}: $\mathbf{x}_{[6,n]}\in \mathbf{c}_3(n-5)$.

\textbf{Case C2-1}: In this case, when $(n,t)=(9,5)$, $(10,6)$, or  $(11,7)$, by using a computerized search, we obtain  
$$|D_{t-3}(x_6,...,x_n)| \leqslant D_3(n-7,t-3)+D_3(n-7,t-4)+D_3(n-8,t-4)+D_3(n-8,t-5)+D_3(n-10,t-6).$$
Since $(n-5)\geqslant (t-3)+2$ and $\mathbf{x}_{[6,n]}\notin \mathbf{c}_3(n-5)$, by Lemma $\ref{lm11}$  
\begin{align*}
|\mathcal{X}^{(0,1)}|&\leqslant |D_{t-3}(x_6,...,x_n)|\\ 
&\leqslant D_3(n-7,t-3)+D_3(n-7,t-4)+D_3(n-8,t-4)+D_3(n-8,t-5)+D_3(n-10,t-6).
\end{align*}
Combining Eqs. $(\ref{eq38})$, $(\ref{eq42})$, $(\ref{eq43})$, we have 
\begin{align*}
|\mathcal{X}|\leqslant M_0(n,t)+\big(D_3(n-10,t-5)-D_3(n-12,t-7)\big)-\big(D_3(n-9,t-5)-D_3(n-10,t-6)\big)\leqslant M_0(n,t).    
\end{align*}

\textbf{Case C2-2}: If $\mathbf{x}_{[6,n]}\in \mathbf{c}_3(n-5)$, $x_6=2$, and $x_7=0$, then we have $\mathbf{x}_{[6,n]}=(2,\mathbf{a}_{n-6})$. Thus, $x_8=1$ and 
\begin{align}
|\mathcal{X}^{(0,1)}|=D_3(n-5,t-3).\label{eq45}
\end{align}
Decomposing $\mathcal{X}^{(0,2)}$:   $|\mathcal{X}^{(0,2)}|=|\mathcal{X}^{(0,2,0)}|+|\mathcal{X}^{(0,2,1,2)}|+|\mathcal{X}^{(0,2,1,0)}|+|\mathcal{X}^{(0,2,1,1)}|+|\mathcal{X}^{(0,2,2)}|$, where
\begin{align}
|\mathcal{X}^{(0,2,0)}|&=|D_{t-1}(1,x_6,...,x_n)\cap D_{t-1}(1,y_6,...,y_n)|\leqslant N_3(n-4,1,t-1)=2D_3(n-6,t-2)+D_3(n-7,t-3),\nonumber\\
|\mathcal{X}^{(0,2,1,2)}|&=|D_{t-2}(0,1,x_9,...,x_n)\cap D_{t-2}(0,y_8,...,y_n)|\leqslant N_3(n-6,1,t-2)=2D_3(n-8,t-3)+D_3(n-9,t-4),\nonumber\\
|\mathcal{X}^{(0,2,1,0)}|&=|D_{t-3}(1,x_9,...,x_n)\cap D_{t-3}(y_8,...,y_n)|\leqslant N_3(n-7,1,t-3)=2D_3(n-9,t-4)+D_3(n-10,t-5),\nonumber\\
|\mathcal{X}^{(0,2,1,1)}|&=|D_{t-4}(x_9,...,x_n)\cap D_{t-4-\ell_3}(y_{9+\ell_3},...,y_n)|,\nonumber\\
|\mathcal{X}^{(0,2,2)}|&=|D_{t-3}(0,1,x_9,...,x_n)\cap D_{t-3}(0,y_8,...,y_n)|\leqslant N_3(n-6,1,t-3)=2D_3(n-8,t-4)+D_3(n-9,t-5),\label{eq46}
\end{align}
with $\ell_3\geqslant 0$.
Here, we discuss the value of $|\mathcal{X}|$ in the following two cases: \textbf{Case C2-2-1}: $\ell_3\neq 0$; \textbf{Case C2-2-2}: $\ell_3=0$.

\textbf{Case C2-2-1}: If $\ell_3\neq 0$, then 
\begin{align}
|\mathcal{X}^{(0,2,1,1)}|&\leqslant |D_{t-4-\ell_3}(y_{9+\ell_3},...,y_n)|\leqslant D_3(n-9,t-5).\label{eq47}   
\end{align}
By Eqs. $(\ref{eq46})$ and $(\ref{eq47})$, we obtain 
\begin{align}
|\mathcal{X}^{(0,2)}|\leqslant 2D_3(n-5,t-2)+D_3(n-6,t-3)-\big(D_3(n-10,t-5)-D_3(n-12,t-7)\big).\label{eq48}      
\end{align}
Combining Eqs. $(\ref{eq38})$, $(\ref{eq42})$, $(\ref{eq43})$, $(\ref{eq45})$, $(\ref{eq48})$, we have 
\begin{align*}
|\mathcal{X}|\leqslant M_0(n,t).    
\end{align*}

\textbf{Case C2-2-2}: If $\ell_3=0$ then $y_8=1$. That is, $\mathbf{x}=(1,0,2,0,1,2,\mathbf{a}_{n-6})$ and $\mathbf{y}_{[1,8]}=(0,1,2,0,1,2,0,1)$. When $\mathbf{y}_{[8,n]}\notin \mathbf{c}_3(n-7)$, then $\mathbf{y}_{[7,n]}\notin \mathbf{c}_3(n-6)$. Since $n\geqslant t+5$, it follows that $(n-6)\geqslant (t-3)+2$ and $(n-7)\geqslant (t-4)+2$. By Eq. $(\ref{eq39})$ and Lemma $\ref{lm11}$, we have 
\begin{align}
|\mathcal{X}^{(1,0)}|&\leqslant  |D_{t-3}(y_{7},...,y_n)|+|D_{t-4}(y_{8},...,y_n)|+2D_3(n-7,t-3)+D_3(n-8,t-4) \nonumber\\ 
&\leqslant D_3(n-8,t-3)+D_3(n-8,t-4)+D_3(n-9,t-4)+D_3(n-9,t-5)+D_3(n-11,t-6)\nonumber\\
&~~~+D_3(n-9,t-4)+D_3(n-9,t-5)+D_3(n-10,t-5)+D_3(n-10,t-6)+D_3(n-12,t-7)\nonumber\\
&~~~+2D_3(n-7,t-3)+D_3(n-8,t-4)\nonumber\\
&= D_3(n-6,t-3)+D_3(n-7,t-4)+2D_3(n-7,t-3)+D_3(n-8,t-4)-\big(D_3(n-10,t-5)-D_3(n-12,t-7)\big).\label{eq49}
\end{align}
Combining Eqs. $(\ref{eq38})$, $(\ref{eq43})$, $(\ref{eq45})$, $(\ref{eq49})$, we obtain 
\begin{align*}
|\mathcal{X}|\leqslant M_0(n,t).    
\end{align*}

When $\mathbf{y}_{[8,n]}\in \mathbf{c}_3(n-7)$, we have $\mathbf{y}_{[8,n]}=\mathbf{c}(n-7,(1,0,2))$ or $\mathbf{y}_{[8,n]}=\mathbf{c}(n-7,(1,2,0))$ under the condition $y_8=1$. Since $d_L(\mathbf{x},\mathbf{y})\geqslant 2$, we have $\mathbf{y}_{[8,n]}=\mathbf{c}(n-7,(1,0,2))$. Thus,  
\begin{align*}
|D_{t-3}(y_{7},...,y_n)|&=D_3(n-7,t-3)+D_3(n-8,t-4)+D_3(n-10,t-6),\\
|D_{t-4}(y_{8},...,y_n)|&=D_3(n-7,t-4).   
\end{align*}
If $\mathbf{y}_{[8,n]}\in \mathbf{c}_3(n-7)$, then 
\begin{align}
|\mathcal{X}^{(1,0)}|&\leqslant  |D_{t-3}(y_{7},...,y_n)|+|D_{t-4}(y_{8},...,y_n)|+2D_3(n-7,t-3)+D_3(n-8,t-4) \nonumber\\ 
&\leqslant D_3(n-7,t-3)+D_3(n-8,t-4)+D_3(n-10,t-6)+D_3(n-7,t-4)+2D_3(n-7,t-3)+D_3(n-8,t-4)\nonumber\\
&= D_3(n-6,t-3)+D_3(n-7,t-4)+2D_3(n-7,t-3)+D_3(n-8,t-4)-\big(D_3(n-9,t-5)-D_3(n-10,t-6)\big).\label{eq50}
\end{align}
Combining Eqs. $(\ref{eq38})$, $(\ref{eq43})$, and $(\ref{eq45})$, $(\ref{eq50})$, we obtain 
\begin{align*}
|\mathcal{X}|&\leqslant M_0(n,t)+\big(D_3(n-10,t-5)-D_3(n-12,t-7)\big)-\big(D_3(n-9,t-5)-D_3(n-10,t-6)\big)\\   
&\leqslant M_0(n,t).
\end{align*}
By the above discussion, the lemma follows.
\end{proof}

\section{Proof of Lemma $\ref{lm13}$ }\label{APP-E}
The purpose of this appendix is to give the proof of Lemma $\ref{lm13}$.

\begin{proof}{ \rm (The proof of Lemma $\ref{lm13}$)}
Since $n\geqslant \max\{9,\lfloor \frac{3t}{2}\rfloor +1\}$ and  $t\geqslant 2$, we have $n\geqslant t+4$. Without loss of generality, let $a=0,b=1,c=2$ with $\mathbf{x}=(1,0,2,1,0,x_6,...,x_n)$ and $\mathbf{y}=(0,1,2,0,1,y_6,...,y_n)$. Let $\mathcal{X}=D_t(\mathbf{x})\cap D_t(\mathbf{y})$.

By Lemmas $\ref{lm1}$ and $\ref{lm2}$, we have $|\mathcal{X}|=|\mathcal{X}^{(0,0)}|+|\mathcal{X}^{(0,1)}|+|\mathcal{X}^{(0,2)}|+|\mathcal{X}^{(1,0)}|+|\mathcal{X}^{(1,1)}|+|\mathcal{X}^{(1,2)}|+|\mathcal{X}^2|$, where
\begin{align}
|\mathcal{X}^{(0,0)}|=&|D_{t-3}(\mathbf{x}_{[6,n]})\cap D_{t-2}(1,\mathbf{y}_{[6,n]})|\leqslant |D_{t-3}(\mathbf{x}_{[6,n]})|\leqslant D_3(n-5,t-3)\nonumber,\\
|\mathcal{X}^{(0,1)}|=&|D_{t-2}(0,\mathbf{x}_{[6,n]})\cap D_{t}(2,0,1,\mathbf{y}_{[6,n]})|\leqslant |D_{t-2}(0,\mathbf{x}_{[6,n]})|\leqslant D_3(n-4,t-2)\nonumber,\\
|\mathcal{X}^{(0,2)}|=&|D_{t-1}(1,0,\mathbf{x}_{[6,n]})\cap D_{t-1}(0,1,\mathbf{y}_{[6,n]})|\leqslant N_3(n-3,1,t-1)=2D_3(n-5,t-2)+D_3(n-6,t-3)\nonumber,\\
|\mathcal{X}^{(1,0)}|=&|D_{t}(2,1,0,\mathbf{x}_{[6,n]})\cap D_{t-2}(1,\mathbf{y}_{[6,n]})|\leqslant |D_{t-2}(1,\mathbf{y}_{[6,n]})|\leqslant D_3(n-4,t-2)\nonumber,\\
|\mathcal{X}^{(1,1)}|=&|D_{t-2}(0,\mathbf{x}_{[6,n]})\cap D_{t-3}(\mathbf{y}_{[6,n]})|\leqslant |D_{t-3}(\mathbf{y}_{[6,n]})|\leqslant D_3(n-5,t-3)\nonumber,\\
|\mathcal{X}^{(1,2)}|=&|D_{t-1}(1,0,\mathbf{x}_{[6,n]})\cap D_{t-1}(0,1,\mathbf{y}_{[6,n]})|\leqslant N_3(n-3,1,t-1)=2D_3(n-5,t-2)+D_3(n-6,t-3)\nonumber,\\
|\mathcal{X}^2|=&|D_{t-2}(1,0,\mathbf{x}_{[6,n]})\cap D_{t-2}(0,1,\mathbf{y}_{[6,n]})|\leqslant N_3(n-3,1,t-2)=2D_3(n-5,t-3)+D_3(n-6,t-4).\label{eq51}
\end{align}
We analyze the
following three cases: \textbf{Case A}: $x_6\neq 2$ or $y_6\neq 2$; \textbf{Case B}:  $x_6=y_6=2$, and $\mathbf{x}_{[7,n]}\notin \mathbf{c}_3(n-6)$ or $\mathbf{y}_{[7,n]}\notin \mathbf{c}_3(n-6)$; \textbf{Case C}: $x_6=y_6=2$, and $\mathbf{x}_{[7,n]},\mathbf{y}_{[7,n]}\in \mathbf{c}_3(n-6)$.

\textbf{Case A}: First, consider $x_6\neq 2$. For convenience, let $A(n,t)=D_{t-2}(1,0,\mathbf{x}_{[6,n]})\cap D_{t-2}(0,1,\mathbf{y}_{[6,n]})$. Then we decompose $A(n,t)$ by $|A(n,t)|=|A(n,t)^0|+|A(n,t)^1|+|A(n,t)^2|$ such that
\begin{align*}
|A(n,t)^0|=&|D_{t-3}(\mathbf{x}_{[6,n]})\cap D_{t-2}(1,\mathbf{y}_{[6,n]})|\leqslant D_3(n-5,t-3),\\
|A(n,t)^1|=&|D_{t-2}(0,\mathbf{x}_{[6,n]})\cap D_{t-3}(\mathbf{y}_{[6,n]})|\leqslant D_3(n-5,t-3),\\
|A(n,t)^2|=&|D_{t-4-\ell_1}(\mathbf{x}_{[7+\ell_1,n]})\cap D_{t-4-\ell_1^*}(\mathbf{y}_{[7+\ell_1^*,n]})|\leqslant |D_{t-4-\ell_1}(\mathbf{x}_{[7+\ell_1,n]})|\leqslant D_{3}(n-7,t-5),
\end{align*}
where $\ell_1\geqslant 1$ and $\ell_1^*\geqslant 0$. Thus, $|A(n,t)|\leqslant 2D_3(n-5,t-3)+D_{3}(n-7,t-5)$. Similarly,
\begin{align*}
|\mathcal{X}^{(0,2)}|=|\mathcal{X}^{(1,2)}|\leqslant 2D_3(n-5,t-2)+D_{3}(n-7,t-4).\nonumber
\end{align*}
Therefore, when $x_6\neq 2$, we have
\begin{align*}
|\mathcal{X}|&\leqslant D_3(n-5,t-3)+D_3(n-4,t-2)+2D_3(n-5,t-2)+D_{3}(n-7,t-4)+D_3(n-5,t-3)\\
&~~~+D_3(n-4,t-2)+2D_3(n-5,t-2)+D_{3}(n-7,t-4)+2D_3(n-5,t-3)+D_{3}(n-7,t-5)\\
&=2D_3(n-4,t-2)+4D_3(n-5,t-2)+4D_3(n-5,t-3)+2D_{3}(n-7,t-4)+D_{3}(n-7,t-5)\\
&=M_1(n,t)-\big(D_3(n-6,t-3)-D_3(n-7,t-4)\big)-\big(D_3(n-7,t-3)-D_3(n-10,t-6)\big)\\
&~~~-\big(D_3(n-8,t-5)-D_3(n-10,t-7)\big)\leqslant M_1(n,t),
\end{align*}
from Eqs. $(\ref{eq3})$ and $(\ref{eq6})$ under the condition $n\geqslant t+4$. 

Similarly, in the case where $y_6\neq 2$, we also have $|\mathcal{X}|\leqslant M_1(n,t)$ by using the above method. 

\textbf{Case B}: We now consider the case where $x_6=y_6=2$ and $\mathbf{x}_{[7,n]}\notin \mathbf{c}_3(n-6)$. If $\mathbf{x}_{[7,n]}\notin \mathbf{c}_3(n-6)$, then $\mathbf{x}_{[6,n]}\notin \mathbf{c}_3(n-5)$ and $\mathbf{x}_{[5,n]}\notin \mathbf{c}_3(n-4)$. Thus, by Lemma $\ref{lm11}$
\begin{align*}
|\mathcal{X}^{(0,0)}|&\leqslant |D_{t-3}(\mathbf{x}_{[6,n]})|\leqslant D_3(n-7,t-3)+D_3(n-7,t-4)+D_3(n-8,t-4)+D_3(n-8,t-5)+D_3(n-10,t-6)\nonumber,\\
|\mathcal{X}^{(0,1)}|&\leqslant |D_{t-2}(\mathbf{x}_{[5,n]})|\leqslant D_3(n-6,t-2)+D_3(n-6,t-3)+D_3(n-7,t-3)+D_3(n-7,t-4)+D_3(n-9,t-5).
\end{align*}
Furthermore, we have $|\mathcal{X}^2|=|\mathcal{X}^{(2,0)}|+|\mathcal{X}^{(2,1)}|+|\mathcal{X}^{(2,2)}|$, where
\begin{align*}
|\mathcal{X}^{(2,0)}|=&|D_{t-3}(2,\mathbf{x}_{[7,n]})\cap D_{t-2}(1,2,\mathbf{y}_{[7,n]})|\leqslant |D_{t-3}(2,\mathbf{x}_{[7,n]})|\leqslant D_3(n-5,t-3)\nonumber,\\
|\mathcal{X}^{(2,1)}|=&|D_{t-2}(0,2,\mathbf{x}_{[7,n]})\cap D_{t-3}(2,\mathbf{y}_{[7,n]})|\leqslant |D_{t-3}(2,\mathbf{y}_{[7,n]})|\leqslant D_3(n-5,t-3)\nonumber,\\
|\mathcal{X}^{(2,2)}|=&|D_{t-4}(\mathbf{x}_{[7,n]})\cap D_{t-4}(\mathbf{y}_{[7,n]})|\leqslant |D_{t-4}(\mathbf{x}_{[7,n]})|\\
\leqslant& D_3(n-8,t-4)+D_3(n-8,t-5)+D_3(n-9,t-5)+D_3(n-9,t-6)+D_3(n-11,t-7).
\end{align*}
Thus, when $x_6=y_6=2$ and $\mathbf{x}_{[7,n]}\notin \mathbf{c}_3(n-6)$, we obtain
\begin{align*}
|\mathcal{X}&|\leqslant 2D_3(n-5,t-2)+D_{3}(n-6,t-3)+D_3(n-5,t-3)+D_3(n-4,t-2)+2D_3(n-5,t-2)+D_{3}(n-6,t-3)\\
&~~~+2D_3(n-5,t-3)+D_3(n-6,t-2)+D_3(n-6,t-3)+D_3(n-7,t-3)+D_3(n-7,t-4)+D_3(n-9,t-5)\\
&~~~+D_3(n-7,t-3)+D_3(n-7,t-4)+D_3(n-8,t-4)+D_3(n-8,t-5)+D_3(n-10,t-6)\\
&~~~+D_3(n-8,t-4)+D_3(n-8,t-5)+D_3(n-9,t-5)+D_3(n-9,t-6)+D_3(n-11,t-7)\\
&=M_1(n,t),
\end{align*}
from Eq. $(\ref{eq3})$ under the condition $n\geqslant t+4$. 

Similarly, when $x_6=y_6=2$ and $\mathbf{y}_{[7,n]}\notin \mathbf{c}_3(n-6)$, we also obtain $|\mathcal{X}|\leqslant M_1(n,t)$.

\textbf{Case C}: We now consider the case where $x_6=y_6=2$, and $\mathbf{x}_{[7,n]},\mathbf{y}_{[7,n]}\in \mathbf{c}_3(n-6)$.  
If $x_7=0$ or $2$, then
\begin{align}
|\mathcal{X}^{(0,1)}|&\leqslant |D_{t-2}(0,2,x_7,\mathbf{x}_{[8,n]})|\nonumber\\
&\leqslant |D_{t-2}(0,2,x_7,\mathbf{x}_{[8,n]})^0|+|D_{t-2}(0,2,x_7,\mathbf{x}_{[8,n]})^1|+|D_{t-2}(0,2,x_7,\mathbf{x}_{[8,n]})^2|\nonumber\\
&=|D_{t-2}(2,x_7,\mathbf{x}_{[8,n]})|+|D_{t-5-\ell_1}(\mathbf{x}_{[9+\ell_1,n]})|+|D_{t-3}(x_7,\mathbf{x}_{[8,n]})|\nonumber\\
&\leqslant D_3(n-5,t-2)+D_3(n-6,t-3)+D_3(n-8,t-5),\label{eq52}
\end{align}
where $x_{8+\ell_1}=1$ and $\ell_1\geqslant 0$. Thus, if $x_7=0$ or $2$, then by Eqs. $(\ref{eq51})$ and $(\ref{eq52})$ we have
$$|\mathcal{X}|\leqslant M_1(n,t).$$
Similarly, if $y_7=1$ or $2$, then we also obtain $|\mathcal{X}|\leqslant M_1(n,t).$ 

Thus, we only consider the case where $x_7=1$ and $y_7=0$. That is, $\mathbf{x}_{[1,7]}=(1,0,2,1,0,2,1)$ and $\mathbf{y}_{[1,7]}=(0,1,2,0,1,2,0)$. Furthermore, we have
$\mathbf{x}_{[7,n]}\neq \mathbf{y}_{[7,n]}$. Moreover, $|\mathcal{X}^2|=|\mathcal{X}^{(2,0)}|+|\mathcal{X}^{(2,1)}|+|\mathcal{X}^{(2,2)}|$, where
\begin{align*}
|\mathcal{X}^{(2,0)}|=&|D_{t-3}(2,\mathbf{x}_{[7,n]})\cap D_{t-2}(1,2,\mathbf{y}_{[7,n]})|\leqslant |D_{t-3}(2,\mathbf{x}_{[7,n]})|\leqslant D_3(n-5,t-3)\nonumber,\\
|\mathcal{X}^{(2,1)}|=&|D_{t-2}(0,2,\mathbf{x}_{[7,n]})\cap D_{t-3}(2,\mathbf{y}_{[7,n]})|\leqslant |D_{t-3}(2,\mathbf{y}_{[7,n]})|\leqslant D_3(n-5,t-3)\nonumber,\\
|\mathcal{X}^{(2,2)}|=&|D_{t-4}(\mathbf{x}_{[7,n]})\cap D_{t-4}(\mathbf{y}_{[7,n]})|\leqslant N_3(n-6,1,t-4)=2D_3(n-8,t-5)+D_3(n-9,t-6).
\end{align*}
Combining with Eq. $(\ref{eq51})$, we obtain
\begin{small}
\begin{align*}
|\mathcal{X}&|\leqslant D_3(n-5,t-3)+D_3(n-4,t-2)+2D_3(n-5,t-2)+D_{3}(n-6,t-3)+D_3(n-5,t-3)+D_3(n-4,t-2)\\
&~~~~+2D_3(n-5,t-2)+D_{3}(n-6,t-3)+2D_3(n-5,t-3)+2D_3(n-8,t-5)+D_3(n-9,t-6)\\
&=M_1(n,t),
\end{align*}
\end{small}
from Eq. $(\ref{eq3})$ under the condition $n\geqslant t+4$. By the above discussion, the lemma follows.
\end{proof}

\section{Proof of Lemma $\ref{lm15}$ }\label{APP-F}
The purpose of this appendix is to give the proof of Lemma $\ref{lm15}$.

\begin{proof}{ \rm (The proof of Lemma $\ref{lm15}$)}
For $k\geqslant 3$, we have
\begin{align*}
f(3k+1,2k)&=D_3(3k-4,2k-2)+D_3(3k-5,2k-2)+D_3(3k-7,2k-5)+D_3(3k-9,2k-5)-D_3(3k-6,2k-3)\\
&~~~~-2D_3(3k-6,2k-4)-D_3(3k-11,2k-7)\\
&\overset{(a)}{=}3^{k-2}+3^{k-3}+D_3(3k-7,2k-5)-3^{k-3}-2\cdot 3^{k-2}+D_3(3k-9,2k-5)-D_3(3k-11,2k-7)\\
&=D_3(3k-7,2k-5)-3^{k-2}\\
&\overset{(b)}{=}\sum\limits_{i=0}^{2k-5}\binom{k-2}{i}\sum\limits_{j=0}^{2k-5-i}\binom{i}{j}-3^{k-2},
\end{align*}
where $(a)$ follows from $D_3(m,s)=3^{m-s}$ under the condition $2m\leqslant 3s$, $(b)$ follows from $D_3(m,s)=\sum\limits_{i=0}^{s}\binom{m-s}{i}\sum\limits_{j=0}^{s-i}\binom{i}{j}$. Since $(2k-5)\geqslant k-2$, we have 
\begin{align*}
\sum\limits_{i=0}^{2k-5}\binom{k-2}{i}\sum\limits_{j=0}^{2k-5-i}\binom{i}{j}&=\sum\limits_{i=0}^{k-2}\binom{k-2}{i}\sum\limits_{j=0}^{2k-5-i}\binom{i}{j}=
\sum\limits_{i=0}^{k-3}\binom{k-2}{i}\sum\limits_{j=0}^{k-2}\binom{i}{j}+\sum\limits_{j=0}^{k-3}\binom{k-2}{j}\\
&=\sum\limits_{i=0}^{k-3}\binom{k-2}{i}\sum\limits_{j=0}^{k-2}\binom{i}{j}+\sum\limits_{j=0}^{k-2}\binom{k-2}{j}-1
=\sum\limits_{i=0}^{k-2}\binom{k-2}{i}\sum\limits_{j=0}^{k-2}\binom{i}{j}-1=3^{k-2}-1.
\end{align*}
Thus, for $k\geqslant 3$, we have $f(3k+1,2k)=-1$.

For $k\geqslant 3$, it follows that
\begin{align*}
f(3k+2,2k+1)&=D_3(3k-3,2k-1)+D_3(3k-4,2k-1)+D_3(3k-6,2k-4)+D_3(3k-8,2k-4)-D_3(3k-5,2k-2)\\
&~~~~-2D_3(3k-5,2k-3)-D_3(3k-10,2k-6)\\
&\overset{(a)}{=}3^{k-2}+3^{k-3}+3^{k-2}+3^{k-4}-3^{k-3}-2\cdot 3^{k-2}-3^{k-4}=0,
\end{align*}
where $(a)$ follows from $D_3(m,s)=3^{m-s}$ when $2m\leqslant 3s$. Thus, for $k\geqslant 3$, we conclude that $f(3k+2,2k+1)=0$.

For $t\geqslant 6$ and $n\geqslant 3t$, we have
\begin{align*}
f(n,t)&=D_3(n-5,t-2)+D_3(n-6,t-2)+D_3(n-8,t-5)+D_3(n-10,t-5)-D_3(n-7,t-3)\\
&~~~~-2D_3(n-7,t-4)-D_3(n-12,t-7)\\
&=\big(D_3(n-5,t-2)-D_3(n-7,t-4)\big)+\big(D_3(n-6,t-2)-D_3(n-7,t-3)\big)-\big(D_3(n-7,t-4)-D_3(n-8,t-5)\big)\\
&~~~~+\big(D_3(n-10,t-5)-D_3(n-12,t-7)\big)\\
&\overset{(a)}{\geqslant} \big(D_3(n-5,t-2)-D_3(n-6,t-3)\big)-\big(D_3(n-7,t-4)-D_3(n-8,t-5)\big)\\
&\overset{(b)}{=}\sum\limits_{i=0}^{t-2}\binom{n-t-3}{i}\sum\limits_{j=0}^{t-2-i}\binom{i}{j}-\sum\limits_{i=0}^{t-3}\binom{n-t-3}{i}\sum\limits_{j=0}^{t-3-i}\binom{i}{j}-
\big(\sum\limits_{i=0}^{t-4}\binom{n-t-3}{i}\sum\limits_{j=0}^{t-4-i}\binom{i}{j}-\sum\limits_{i=0}^{t-5}\binom{n-t-3}{i}\sum\limits_{j=0}^{t-5-i}\binom{i}{j}\big)\\
&=\sum\limits_{i=0}^{t-2}\binom{n-t-3}{i}\binom{i}{t-2-i}-\sum\limits_{i=0}^{t-4}\binom{n-t-3}{i}\binom{i}{t-4-i}\\
&=\sum\limits_{j=0}^{t-2}\binom{n-t-3}{t-2-j}\binom{t-2-j}{j}-\sum\limits_{j=0}^{t-4}\binom{n-t-3}{t-4-j}\binom{t-4-j}{j}\\
&\overset{(c)}{\geqslant} \sum\limits_{j=0}^{t-4}\binom{n-t-3}{t-2-j}\binom{t-2-j}{j}-\sum\limits_{j=0}^{t-4}\binom{n-t-3}{t-4-j}\binom{t-2-j}{j}\\
&=\sum\limits_{j=0}^{t-4}\big(\binom{n-t-3}{t-2-j}-\binom{n-t-3}{t-4-j}\big)\binom{t-2-j}{j}\\
&>0,
\end{align*}
where $(a)$ follows from Eq. $(\ref{eq6})$, $(b)$ follows from $D_3(m,s)=\sum\limits_{i=0}^{s}\binom{m-s}{i}\sum\limits_{j=0}^{s-i}\binom{i}{j}$, $(c)$ follows from $\binom{t-2-j}{j}\geqslant \binom{t-4-j}{j}$ for $j\in \{0,1,...,t-4\}$, and $\binom{n-t-3}{t-2-j}>\binom{n-t-3}{t-4-j}$ for $j\in \{0,1,...,t-4\}$.
\end{proof}

\section{Proofs of Lemmas $\ref{lm16}$-$\ref{lm18}$ }\label{APP-G}
The purpose of this appendix is to give the proofs of Lemmas $\ref{lm16}$-$\ref{lm18}$.

\begin{proof}{ \rm (The proof of Lemma $\ref{lm16}$)}
We now consider the case where $i=1$. Given $n\geqslant \lfloor \frac{3t}{2} \rfloor +1$, it follows that $2n\geqslant 3t+1$. If $2(n-1)\leqslant 3t$, then $2n=3t+1$ or $2n=3t+2$. 

In the first subcase where $2n=3t+1$, we parameterize this relationship by setting $t=2k+1$ and $n=3k+2$. The condition $n\geqslant 9$ implies $k\geqslant 3$. 
For $k\geqslant 3$, we have 
\begin{align*}
M_1(n-1,t)&=D_3(3k-3,2k-1)+5D_3(3k-4,2k-1)+4D_3(3k-4,2k-2)+3D_3(3k-5,2k-2)+D_3(3k-5,2k-3)\\
&~~~~+D_3(3k-7,2k-4)\\
&=3^{k-2}+5\cdot 3^{k-3}+4\cdot 3^{k-2}+3\cdot 3^{k-3}+3^{k-2}+3^{k-3}=3^k=3^{n-t-1}.
\end{align*}
Thus, when $2n=3t+1$ and $n\geqslant \max\{9,\lfloor \frac{3t}{2} \rfloor +1\}$, it follows that $M_1(n-1,t)=3^{n-t-1}$. 

The second subcase, $2n=3t+2$, can be handled similarly. Here, we set $t=2k$ and $n=3k+1$ for $k\geqslant 3$. An analogous calculation shows that
 $M_1(n-1,t)=3^{n-t-1}$. 

We now consider the case where $i=2$. From the condition $n\geqslant \lfloor \frac{3t}{2} \rfloor +1$, it follows that $2n\geqslant 3t+1$. If $2(n-2)\leqslant 3(t-1)$, then $2n=3t+1$. For the case $2n=3t+1$, we set $t=2k+1$ and $n=3k+2$ for some integer $k\geqslant 3$. 
For $k\geqslant 3$, we have
\begin{align*}
M_1(n-2,t-1)&=D_3(3k-4,2k-2)+5D_3(3k-5,2k-2)+4D_3(3k-5,2k-3)+3D_3(3k-6,2k-3)\\
&~~~~+D_3(3k-6,2k-4)+D_3(3k-8,2k-5)\\
&=3^{k-2}+5\cdot 3^{k-3}+4\cdot 3^{k-2}+3\cdot 3^{k-3}+3^{k-2}+3^{k-3}=3^k=3^{n-t-1}.
\end{align*}
Thus, when $2n=3t+1$ and $n\geqslant \max\{9,\lfloor \frac{3t}{2} \rfloor +1\}$, we have $M_1(n-2,t-1)=3^{n-t-1}$. 

\end{proof}

\begin{proof}{ \rm (The proof of Lemma $\ref{lm17}$)}
For $n\geqslant t+3$, by Eq. $(\ref{eq3})$ we have
\begin{align*}
D_3(n-2,t-2)=D_3(n-4,t-2)+2D_3(n-5,t-3)+3D_3(n-6,t-4)+2D_3(n-7,t-5)+D_3(n-8,t-6).
\end{align*}
In the following, we compare the two expressions by examining their difference, which is rearranged into pairs of the form $D_3(a,b)-D_3(c,d)$ for some integers $a,b,c,d$. By Eq. $(\ref{eq6})$, each of these pairs is non-negative, and thus all subsequent inequalities labeled $(a)$ or $(a_i)$ hold for $i\in [3]$. 

Thus, for $n\geqslant t+5$,
\begin{align*}
M_0&(n,t)-D_3(n-2,t-2)=\big(D_3(n-5,t-3)-D_3(n-6,t-4)\big)+3\big(D_3(n-5,t-2)-D_3(n-10,t-7)\big)\\
&~~~~+4\big(D_3(n-6,t-3)-D_3(n-9,t-6)\big)+2\big(D_3(n-7,t-4)-D_3(n-8,t-5)\big)+\big(2D_3(n-7,t-3)\\
&~~~~+D_3(n-8,t-4)-D_3(n-11,t-8)\big)+\big(D_3(n-12,t-7)-D_3(n-10,t-5)\big)\\
&\overset{(a_1)}{\geqslant}2D_3(n-7,t-3)+D_3(n-8,t-4)+D_3(n-12,t-7)-D_3(n-11,t-8)-D_3(n-10,t-5)\\
&\overset{(b_1)}{=}2D_3(n-7,t-3)+D_3(n-8,t-4)+D_3(n-12,t-7)-D_3(n-12,t-8)-D_3(n-13,t-9)\\
&~~~~-D_3(n-14,t-10)-D_3(n-10,t-5)\\
&=\big(D_3(n-7,t-3)-D_3(n-12,t-8)\big)+\big(D_3(n-7,t-3)-D_3(n-13,t-9)\big)\\
&~~~~+\big(D_3(n-8,t-4)-D_3(n-14,t-10)\big)+D_3(n-12,t-7)-D_3(n-10,t-5)\\
&\overset{(a_2)}{\geqslant}\big(D_3(n-7,t-3)-D_3(n-12,t-8)\big)+D_3(n-12,t-7)-D_3(n-10,t-5)\\
&\overset{(b_2)}{=}\big(D_3(n-8,t-3)+D_3(n-9,t-4)+D_3(n-10,t-5)-D_3(n-13,t-8)-D_3(n-14,t-9)\\
&~~~~-D_3(n-15,t-10)\big)+D_3(n-12,t-7)-D_3(n-10,t-5)\\
&=\big(D_3(n-8,t-3)-D_3(n-13,t-8)\big)+\big(D_3(n-9,t-4)-D_3(n-14,t-9)\big)+\big(D_3(n-12,t-7)\\
&~~~~-D_3(n-15,t-10)\big)\\
&\overset{(a_3)}{\geqslant}0,
\end{align*}
where $(a_i)$ follows from Eq. $(\ref{eq6})$ for $i\in [3]$, and $(b_i)$ follows from Eq. $(\ref{eq3})$ for $i\in [2]$. When $n=t+4$, the result also holds.

Similarly, for $n\geqslant t+3$, it follows that
\begin{align*}
M_1&(n,t)-D_3(n-2,t-2)=2\big(D_3(n-5,t-3)-D_3(n-6,t-4)\big)+2\big(D_3(n-5,t-2)-D_3(n-8,t-5)\big)\\
&~~~+3\big(D_3(n-5,t-2)-D_3(n-9,t-6)\big)+3\big(D_3(n-6,t-3)-D_3(n-10,t-7)\big)\\
&~~~+\big(D_3(n-8,t-5)-D_3(n-11,t-8)\big)\\
&\overset{(a)}{\geqslant}0,
\end{align*}
where $(a)$ follows from Eq. $(\ref{eq6})$.

Given that $n\geqslant \max\{9,\lfloor \frac{3t}{2} \rfloor +1\}$ and $t\geqslant 2$, it follows that $n\geqslant t+4$. 
For $n\geqslant t+6$, we have 
\begin{align*}
M_1&(n,t)-\big(M_0(n-1,t)+2N_3(n-3,t-1,t-2,1)\big)\\
&=M_1(n,t)-\bigg(D_3(n-5,t-2)+3D_3(n-6,t-2)+4D_3(n-6,t-3)+3D_3(n-7,t-3)+D_3(n-7,t-4)\\
&~~~+2D_3(n-8,t-3)+2D_3(n-8,t-4)+D_3(n-9,t-4)+D_3(n-13,t-7)-D_3(n-11,t-5)\\
&~~~+2\big(D_3(n-5,t-3)+2D_3(n-6,t-3)+D_3(n-6,t-4)+D_3(n-7,t-4)\big)\bigg)\\
&=\big(2D_3(n-7,t-2)+D_3(n-9,t-4)+D_3(n-11,t-7)-D_3(n-12,t-8)-D_3(n-10,t-5)\\
&~~~~-D_3(n-11,t-6)-D_3(n-12,t-7)\big)+(D_3(n-11,t-5)-D_3(n-13,t-7))\\
&=\big(D_3(n-7,t-2)-D_3(n-10,t-5)\big)+\big(D_3(n-7,t-2)-D_3(n-11,t-6)\big)+\big(D_3(n-9,t-4)-D_3(n-12,t-7)\big)\\
&~~~~+\big(D_3(n-11,t-7)-D_3(n-12,t-8)\big)+\big(D_3(n-11,t-5)-D_3(n-13,t-7)\big)\\
&\overset{(a)}{\geqslant} 0,
\end{align*}
where $(a)$ follows from Eq. $(\ref{eq6})$.
For the remaining boundary cases, $n=t+4$ and $n=t+5$, the inequality $M_0(n-1,t)+2N_3(n-3,t-1,t-2,1)\leqslant M_1(n,t)$ can be verified by direct computation.

Similarly, for $n\geqslant t+4$, we have 
\begin{align*}
M_1&(n,t)-(M_1(n-1,t)+2N_3(n-3,t-1,t-2,1))\\
&=M_1(n,t)-\bigg(D_3(n-5,t-2)+5D_3(n-6,t-2)+4D_3(n-6,t-3)+3D_3(n-7,t-3)+D_3(n-7,t-4)\\
&~~~~+D_3(n-9,t-5)+2\big(D_3(n-5,t-3)+2D_3(n-6,t-3)+D_3(n-6,t-4)+D_3(n-7,t-4)\big)\bigg)\\
&=\big(2D_3(n-8,t-4)+D_3(n-11,t-7)-2D_3(n-9,t-5)-D_3(n-12,t-8)\big)\\
&=2\big(D_3(n-8,t-4)-D_3(n-9,t-5)\big)+\big(D_3(n-11,t-7)-D_3(n-12,t-8)\big)\\
&\overset{(a)}{\geqslant} 0,
\end{align*}
where $(a)$ follows from Eq. $(\ref{eq6})$.
\end{proof}

\begin{proof}{ \rm (The proof of Lemma $\ref{lm18}$)} Given that $n\geqslant \max\{9,\lfloor \frac{3t}{2} \rfloor +1\}$ and $t\geqslant 2$, it follows that $n\geqslant t+4$. By the definition of $N_3(n-3,t-1,t-2,1)$ in Lemma $\ref{lm7}$, we have $N_3(n-3,t-1,t-2,1)=D_3(n-5,t-3)+2D_3(n-6,t-3)+D_3(n-6,t-4)+D_3(n-7,t-4)$. In the following, we compare the two expressions by examining their difference, which is rearranged into pairs of the form $D_3(a,b)-D_3(c,d)$ for some integers $a,b,c,d$. By Eq. $(\ref{eq6})$, each of these pairs is non-negative, and thus all subsequent inequalities labeled $(a)$ hold.

For $n\geqslant t+4$,
\begin{align*}
N_3(n-3&,t-1,t-2,1)-D_3(n-4,t-3)\\
&=D_3(n-5,t-3)+2D_3(n-6,t-3)+D_3(n-6,t-4)+D_3(n-7,t-4)-D_3(n-4,t-3)\\
&=2D_3(n-6,t-3)+D_3(n-7,t-4)-D_3(n-8,t-5)-D_3(n-9,t-6)-D_3(n-10,t-7)\\
&=\big(D_3(n-6,t-3)-D_3(n-8,t-5)\big)+\big(D_3(n-6,t-3)-D_3(n-9,t-6)\big)\\
&~~~~+\big(D_3(n-7,t-4)-D_3(n-10,t-7)\big)\\
&\overset{(a)}{\geqslant}0,
\end{align*}
where $(a)$ follows from Eq. $(\ref{eq6})$.

By the definitions of $N_3(n-4,t-2,t-3,1)$ and $M_1(n-3,t-2)$, for $n\geqslant t+4$, we have
\begin{align*}
M_1(n-&3,t-2)-N_3(n-4,t-2,t-3,1)=D_3(n-7,t-4)+5D_3(n-8,t-4)+4D_3(n-8,t-5)+3D_3(n-9,t-5)\\
&+D_3(n-9,t-6)+D_3(n-11,t-7)-\big(D_3(n-6,t-4)+2D_3(n-7,t-4)+D_3(n-7,t-5)+D_3(n-8,t-5)\big)\\
&=2\big(D_3(n-8,t-4)-D_3(n-10,t-6)\big)+\big(D_3(n-8,t-4)-D_3(n-12,t-8)\big)\\
&~~~~+\big(D_3(n-8,t-5)-D_3(n-9,t-6)\big)+\big(D_3(n-9,t-5)-D_3(n-13,t-9)\big)\\
&\overset{(a)}{\geqslant} 0,
\end{align*}
where $(a)$ follows from Eq. $(\ref{eq6})$.

For $n\geqslant t+6$, we have 
\begin{align*}
M_1&(n-3,t-2)+M_1(n-2,t-1)-M_0(n-2,t-1)-N_3(n-4,t-2,t-3,1)\\
&=3D_3(n-8,t-4)+D_3(n-8,t-5)+D_3(n-9,t-5)-\big(2D_3(n-10,t-6)+D_3(n-9,t-6)\\
&~~~~+D_3(n-12,t-8)+D_3(n-13,t-9)\big)+\big(M_1(n-2,t-1)-M_0(n-2,t-1)\big)\\
&=\big(D_3(n-8,t-4)-D_3(n-10,t-6)\big)+\big(D_3(n-8,t-4)-D_3(n-12,t-8)\big)+\big(D_3(n-8,t-4)-D_3(n-13,t-9)\big)\\
&~~~~+\big(D_3(n-8,t-5)-D_3(n-9,t-6)\big)+\big(D_3(n-7,t-3)-D_3(n-9,t-5)\big)+\big(D_3(n-8,t-3)-D_3(n-9,t-4)\big)\\
&~~~~+\big(D_3(n-12,t-6)-D_3(n-14,t-8)\big)\\
&\overset{(a)}{\geqslant} 0,
\end{align*}
where $(a)$ follows from Eq. $(\ref{eq6})$. 
For the boundary cases $n=t+4$ and $t+5$, direct substitution and computation confirm that the expression is also non-negative.

For $n\geqslant t+7$, we have
\begin{align*}
M_1&(n-3,t-2)+M_1(n-2,t-1)-M_0(n-3,t-1)-M_0(n-3,t-2)-2N_3(n-5,t-2,t-3,1)\\
&=\big(M_1(n-3,t-2)-M_0(n-3,t-2)\big)+\big(M_1(n-2,t-1)-M_0(n-3,t-1)-2N_3(n-5,t-2,t-3,1)\big)\\
&=\big(D_3(n-8,t-4)+D_3(n-9,t-4)+D_3(n-11,t-7)-D_3(n-10,t-5)-2D_3(n-10,t-6)\big)+\big(D_3(n-6,t-3)\\
&~~~~+4D_3(n-7,t-3)+2D_3(n-7,t-4)+D_3(n-10,t-6)-3D_3(n-8,t-3)-5D_3(n-8,t-4)\\
&~~~~-D_3(n-8,t-5)-3D_3(n-9,t-4)-3D_3(n-9,t-5)-2D_3(n-10,t-4)-2D_3(n-10,t-5)\\
&~~~~-D_3(n-11,t-5)\big)+\big(D_3(n-13,t-6)-D_3(n-15,t-8)\big)+\big(D_3(n-13,t-7)-D_3(n-15,t-9)\big)\\
&=\big(D_3(n-8,t-3)-D_3(n-11,t-6)\big)+\big(D_3(n-9,t-4)-D_3(n-12,t-7)\big)+\big(D_3(n-9,t-3)\\
&~~~~-D_3(n-10,t-4)\big)+\big(D_3(n-10,t-4)-D_3(n-13,t-7)\big)+\big(D_3(n-11,t-5)-D_3(n-14,t-8)\big)\\
&~~~~+\big(D_3(n-13,t-6)-D_3(n-15,t-8)\big)+\big(D_3(n-13,t-7)-D_3(n-15,t-9)\big)\\
&\overset{(a)}{\geqslant}0,
\end{align*}
where $(a)$ follows from Eq. $(\ref{eq6})$.
For the cases where $n=t+4, t+5$ or $t+6$, the result can be verified by direct computation.

For $n\geqslant t+6$, we have
\begin{align*}
M_1&(n-3,t-2)+M_1(n-2,t-1)-M_1(n-3,t-1)-M_0(n-3,t-2)-2N_3(n-5,t-2,t-3,1)\\
&=\big(M_1(n-3,t-2)-M_0(n-3,t-2)\big)+\big(M_1(n-2,t-1)-M_1(n-3,t-1)-2N_3(n-5,t-2,t-3,1)\big)\\
&=\big(D_3(n-8,t-4)+D_3(n-9,t-4)+D_3(n-11,t-7)-D_3(n-10,t-5)-2D_3(n-10,t-6)+D_3(n-13,t-7)\\
&~~~~-D_3(n-15,t-9)\big)+\big(D_3(n-6,t-3)+4D_3(n-7,t-3)+2D_3(n-7,t-4)+D_3(n-10,t-6)\\
&~~~~-5D_3(n-8,t-3)-5D_3(n-8,t-4)-D_3(n-8,t-5)-3D_3(n-9,t-4)-3D_3(n-9,t-5)-D_3(n-11,t-6)\big)\\
&=2\big(D_3(n-9,t-4)-D_3(n-11,t-6)\big)+\big(D_3(n-10,t-5)-D_3(n-11,t-6)\big)+\big(D_3(n-10,t-5)\\
&~~~~-D_3(n-12,t-7)\big)+\big(D_3(n-13,t-7)-D_3(n-15,t-9)\big)\\
&\overset{(a)}{\geqslant}0,
\end{align*}
where $(a)$ follows from Eq. $(\ref{eq6})$.
As with the previous cases, the result for $n=t+4$ and $t+5$
is established by direct verification. 
\end{proof}

\section*{Acknowledgments}
The work of X.~Wang is supported by the National Natural Science Foundation of China (Grant No. 12001134). The work of F.-W.~Fu is supported by the National Key Research and Development Program of China (Grant No. 2022YFA1005000), the National Natural Science Foundation of China (Grant Nos. 12141108, 62371259, 12226336),  the Fundamental Research Funds for the Central Universities of China (Nankai University), and the Nankai Zhide Foundation.

\end{document}